\documentclass[10pt]{article}

\usepackage{euscript}
\usepackage{etoolbox}
\usepackage{comment}
\newtoggle{colt}
\togglefalse{colt}
\newtoggle{icml}
\togglefalse{icml}
\newcommand{\colt}[1]{\iftoggle{colt}{#1}{}}
\newcommand{\arxiv}[1]{\iftoggle{colt}{}{#1}}

\newcommand{\icml}[1]{\iftoggle{icml}{#1}{}}

\arxiv{
\usepackage[letterpaper, left=1in, right=1in, top=1in,bottom=1in]{geometry}
  \usepackage{parskip}
  \usepackage[colorlinks=true, linkcolor=blue!70!black, citecolor=blue!70!black,urlcolor=black,breaklinks=true]{hyperref}
  \usepackage[dvipsnames]{xcolor}
}

\icml{
  \usepackage{parskip}
    \usepackage[colorlinks=true, linkcolor=blue!70!black, citecolor=blue!70!black,urlcolor=black,breaklinks=true, bookmarks=true]{hyperref}
}

\PassOptionsToPackage{hypertexnames=false}{hyperref}  %

\usepackage{amsmath}
\usepackage{microtype}
\usepackage{hhline}

\arxiv{
\usepackage{amsthm}
}

\usepackage{bbm}
\usepackage{amsfonts}
\usepackage{amssymb}
\usepackage[nameinlink,capitalize]{cleveref}
\usepackage{autonum}

\usepackage{mathtools}

\usepackage{xargs}

\arxiv{
\usepackage{algorithm}

\usepackage{natbib}
\bibliographystyle{plainnat}
\bibpunct{(}{)}{;}{a}{,}{,}
}

\usepackage{xpatch}

\AddToHook{env/lemma/begin}
  {\crefalias{theorem}{lemma}}

\AddToHook{env/proposition/begin}
  {\crefalias{theorem}{proposition}}

\AddToHook{env/corollary/begin}
  {\crefalias{theorem}{corollary}}

\AddToHook{env/definition/begin}
  {\crefalias{theorem}{definition}}

\arxiv{
\theoremstyle{plain}
\newtheorem{theorem}{Theorem}[section]
\newtheorem{proposition}[theorem]{Proposition}
\newtheorem{lemma}[theorem]{Lemma}
\newtheorem{corollary}[theorem]{Corollary}
\newtheorem{remark}{Remark}

\newtheorem{definition}{Definition}
\newtheorem{assumption}{Assumption}

\theoremstyle{definition}

}
\colt{
\newtheorem{assumption}{Assumption}

}

\newcommand{\pfref}[1]{Proof of \cref{#1}}

\renewcommand{\eqref}[1]{\texorpdfstring{\hyperref[#1]{(\ref*{#1})}}{(\ref*{#1})}}

\crefformat{equation}{#2Eq.\,(#1)#3}
\Crefformat{equation}{#2Eq.\,(#1)#3}
\Crefformat{figure}{#2Figure~#1#3}
\Crefformat{assumption}{#2Assumption~#1#3}
\Crefname{assumption}{Assumption}{Assumptions}
\crefname{fact}{Fact}{Facts}
\Crefformat{figure}{#2Figure #1#3}
\Crefformat{assumption}{#2Assumption #1#3}
\crefname{lemma}{Lemma}{Lemmas}
\Crefname{lemma}{Lemma}{Lemmas}

\usepackage{crossreftools}
\arxiv{
\makeatletter
\renewenvironment{proof}[1][Proof]%
{%
  \par\noindent{\bfseries\upshape {#1.}\ }%
}%
{\qed\newline}
\makeatother
}
\colt{
\makeatletter
\renewenvironment{proof}[1][Proof]%
{%
  \par\noindent{\bfseries\upshape {#1.}\ }%
}%
{\jmlrQED}
\makeatother
}

\xpatchcmd{\proof}{\itshape}{\normalfont\proofnameformat}{}{}
\newcommand{\proofnameformat}{\bfseries}

\usepackage[most]{tcolorbox}

\tcbset {
  base/.style={
    arc=0mm, 
    bottomtitle=0.5mm,
    boxrule=0mm,
    colbacktitle=!10!white, 
    coltitle=black, 
    fonttitle=\bfseries, 
    left=2.5mm,
    leftrule=1mm,
    right=3.5mm,
    title={#1},
    toptitle=0.75mm, 
  }
}

\newtcolorbox{mainbox}[1]{
  colframe=blue!10!black,
  colbacktitle=blue!50!black!30!white,
  colback=blue!2!white,
  enhanced,
  fonttitle=\bfseries,
  attach boxed title to top left={yshift=-2.5mm},
  boxed title style={size=small,colframe=blue!40!black,colback=blue!40!black},
  title={\small\textcolor{white}{\textsc{#1}}}
}

\newtcolorbox{minbox}[1]{
  colframe=blue!10!black,
  colbacktitle=blue!50!black!30!white,
  colback=blue!2!white,
  enhanced,
  fonttitle=\bfseries,
}

\DeclarePairedDelimiter{\abs}{\lvert}{\rvert} %
\DeclarePairedDelimiter{\brk}{[}{]}
\DeclarePairedDelimiter{\crl}{\{}{\}}
\DeclarePairedDelimiter{\prn}{(}{)}
\DeclarePairedDelimiter{\nrm}{\|}{\|}
\DeclarePairedDelimiter{\tri}{\langle}{\rangle}

\DeclarePairedDelimiter{\ceil}{\lceil}{\rceil}
\DeclarePairedDelimiter{\floor}{\lfloor}{\rfloor}

\def\ddefloop#1{\ifx\ddefloop#1\else\ddef{#1}\expandafter\ddefloop\fi}
\def\ddef#1{\expandafter\def\csname bb#1\endcsname{\ensuremath{\mathbb{#1}}}}
\ddefloop ABCDEFGHIJKLMNOPQRSTUVWXYZ\ddefloop
\def\ddefloop#1{\ifx\ddefloop#1\else\ddef{#1}\expandafter\ddefloop\fi}
\def\ddef#1{\expandafter\def\csname b#1\endcsname{\ensuremath{\mathbf{#1}}}}
\ddefloop ABCDEFGHIJKLMNOPQRSTUVWXYZ\ddefloop
\def\ddef#1{\expandafter\def\csname sf#1\endcsname{\ensuremath{\mathsf{#1}}}}
\ddefloop ABCDEFGHIJKLMNOPQRSTUVWXYZ\ddefloop
\def\ddef#1{\expandafter\def\csname c#1\endcsname{\ensuremath{\mathcal{#1}}}}
\ddefloop ABCDEFGHIJKLMNOPQRSTUVWXYZ\ddefloop
\def\ddef#1{\expandafter\def\csname h#1\endcsname{\ensuremath{\widehat{#1}}}}
\ddefloop ABCDEFGHIJKLMNOPQRSTUVWXYZ\ddefloop
\def\ddef#1{\expandafter\def\csname hc#1\endcsname{\ensuremath{\widehat{\mathcal{#1}}}}}
\ddefloop ABCDEFGHIJKLMNOPQRSTUVWXYZ\ddefloop
\def\ddef#1{\expandafter\def\csname t#1\endcsname{\ensuremath{\widetilde{#1}}}}
\ddefloop ABCDEFGHIJKLMNOPQRSTUVWXYZ\ddefloop
\def\ddef#1{\expandafter\def\csname tc#1\endcsname{\ensuremath{\widetilde{\mathcal{#1}}}}}
\ddefloop ABCDEFGHIJKLMNOPQRSTUVWXYZ\ddefloop
\def\ddef#1{\expandafter\def\csname #1#1\endcsname{\ensuremath{\mathbb{#1}}}}
\ddefloop ABCDEFGHIJKLMNOPQRSTUVWXYZ\ddefloop
\def\ddef#1{\expandafter\def\csname #1\endcsname{\ensuremath{\mathbb{#1}}}}
\ddefloop ABCDEFGHIJKLMNOPQRSTUVWXYZ\ddefloop
\def\ddef#1{\expandafter\def\csname D#1\endcsname{\ensuremath{\Delta(\mathcal{#1})}}}
\ddefloop ABCDEFGHIJKLMNOPQRSTUVWXYZ\ddefloop
\def\ddefloop#1{\ifx\ddefloop#1\else\ddef{#1}\expandafter\ddefloop\fi}
\def\ddef#1{\expandafter\def\csname scr#1\endcsname{\ensuremath{\mathscr{#1}}}}
\ddefloop ABCDEFGHIJKLMNOPQRSTUVWXYZ\ddefloop

\DeclareMathOperator{\En}{\mathbb{E}}

\DeclareMathOperator*{\argmin}{arg\,min} %

\newcommand{\mc}[1]{\mathcal{#1}}

\newcommand{\wt}[1]{\widetilde{#1}}
\newcommand{\wh}[1]{\widehat{#1}}
\newcommand{\wb}[1]{\widebar{#1}}

\renewcommand{\M}{M}

\NewDocumentCommand{\divprn}{s m}{%
    \IfBooleanTF{#1}{\bigl(#2\bigr)}{(#2)}%
}
\newcommand{\usedivprn}[2]{\IfBooleanTF{#1}{\divprn*{#2}}{\divprn{#2}}}

\NewDocumentCommand{\kl}{s m m}{D_{\mathsf{KL}}\usedivprn{#1}{#2\,\|\,#3}}
\NewDocumentCommand{\Dkl}{s m m}{D_{\mathsf{KL}}\usedivprn{#1}{#2\,\|\,#3}}
\NewDocumentCommand{\Dce}{s m m}{D_{\mathsf{CE}}\usedivprn{#1}{#2\,\|\,#3}}
\NewDocumentCommand{\Dcov}{s O{{\M}} m m}{\cE_{#2}\usedivprn{#1}{#3\,\|\,#4}}
\NewDocumentCommand{\Dcmp}{s O{{\M}} m m}{\mathsf{d}_{#2}\usedivprn{#1}{#3\,\|\,#4}}
\newcommand{\Dklshort}{D_{\mathsf{KL}}}
\NewDocumentCommand{\Dhel}{s m m}{D_{\mathsf{H}}\usedivprn{#1}{#2,#3}}
\NewDocumentCommand{\Dgen}{s m m}{D\usedivprn{#1}{#2\dmid{}#3}}
\NewDocumentCommand{\Dgent}{s m m}{D\ind{t}\usedivprn{#1}{#2\dmid{}#3}}
\NewDocumentCommand{\Dhels}{s m m}{D^{2}_{\mathsf{H}}\usedivprn{#1}{#2,#3}}
\NewDocumentCommand{\Ddel}{s m m}{D_{\Delta}\usedivprn{#1}{#2,#3}}

\NewDocumentCommand{\Dchis}{s m m}{D_{\chi^2}\usedivprn{#1}{#2\dmid{}#3}}

\NewDocumentCommand{\Dtris}{s m m}{D^{2}_{\Delta}\usedivprn{#1}{#2,#3}}
\NewDocumentCommand{\Dtri}{s m m}{D_{\Delta}\usedivprn{#1}{#2,#3}}
\NewDocumentCommand{\Dtv}{s m m}{D_{\mathsf{TV}}\usedivprn{#1}{#2,#3}}
\NewDocumentCommand{\DTV}{s m}{D_{\mathsf{TV}}\usedivprn{#1}{#2}}
\NewDocumentCommand{\Dtvs}{s m m}{D^2_{\mathsf{TV}}\usedivprn{#1}{#2,#3}}
\NewDocumentCommand{\Dbl}{s m m}{D_{\mathsf{BL}}\usedivprn{#1}{#2,#3}}
\NewDocumentCommand{\dkl}{s m m}{d_{\mathsf{KL}}\usedivprn{#1}{#2\,\|\,#3}}

\NewDocumentCommand{\Dren}{s O{\lambda} m m}{\mathsf{D}_{#2}\usedivprn{#1}{#3\,\|\,#4}}
\NewDocumentCommand{\Ren}{s O{\lambda} m m}{\mathsf{R}_{#2}\usedivprn{#1}{#3\,\|\,#4}}
\NewDocumentCommand{\Rsys}{s O{\lambda} m m}{\wb{\mathsf{R}}_{#2}\usedivprn{#1}{#3\,\|\,#4}}
\NewDocumentCommand{\Dsys}{s O{\lambda} m m}{\bar{\mathsf{D}}_{#2}\usedivprn{#1}{#3\,\|\,#4}}

\newcommand{\FI}[2]{\mathsf{FI}\prn*{#1\,\|\,#2}}

\newcommand{\ind}{\mathbbm{1}}    %
\newcommand{\eps}{\epsilon}

\newcommand{\pclip}[2][\B]{\mathsf{Clip}_{#1}(#2)}
\newcommand{\trunc}[2][\B]{\tau_{#1}(#2)}

\newcommandx{\gam}[3][1=x,2=z]{\gamma_{#2,#3}(#1)}
\newcommandx{\gamp}[3][1=x,2=z]{\dot\gamma_{#2,#3}(#1)}

\newcommandx{\gamz}[4][1=x,2=z,3=\lr,4=\xz]{\gamma_{#2,#3,#4}(#1)}
\newcommandx{\gamzp}[4][1=x,2=z,3=\lr,4=\xz]{\dot\gamma_{#2,#3,#4}(#1)}

\newcommand{\xz}{x_0}

\newcommand{\lr}{r}

\renewcommand{\B}{B}

\newcommand{\nrmF}[1]{\nrm{#1}_{\mathrm{F}}}
\newcommand{\tr}{\mathrm{tr}}

\newcommand{\Cov}{\mathrm{Cov}}

\newcommand{\CLSI}{C_{\mathsf{LSI}}}
\newcommand{\CPI}{C_{\mathsf{PI}}}

\newcommand{\Poi}{\mathsf{Poisson}}

\newcommand{\Bbar}{\wb{B}}

\newcommand{\nuhat}{\wh{\nu}}

\newcommand{\ud}{\mathrm d}

\newcommand{\note}[1]{{\textcolor{red}{[#1]}}}

\newcommand{\matt}[1]{\note{MATT: #1}}

\newcommand{\betaF}{\beta_{\mathsf{H}}}
\newcommand{\kappaF}{\kappa_{\mathsf{H}}}

\newcommand{\target}{\pi}

\newcommand{\Zbar}{\wb{Z}}
\newcommand{\nubar}{\wb{\nu}}

\newcommand{\cPt}[1]{\mathscr{P}\subs{#1}}

\renewcommand{\d}{\mathrm{d}}
\newcommand{\dt}{\d t}
\newcommand{\dBt}{\d B_t}
\newcommand{\That}{\widehat{\T}}
\newcommand{\bPhat}{\widehat{\bP}}
\newcommand{\bPstar}{\bP^\star}

\newcommand{\wstar}{w^\star}

\newcommand{\isq}{^{-1/2}}
\newcommand{\msf}[1]{{\mathsf{#1}}}
\newcommand{\inner}[1]{{\langle{#1}\rangle}}
\newcommand{\mmid}{\,\lVert\,}
\newcommand{\Breg}{{\mathsf{Breg}}}
\newcommand{\Tbar}{\wb{\T}}

\newcommand{\leqsim}{\approxleq}

\DeclarePairedDelimiter{\nrmop}{\|}{\|_{\mathrm{op}}}

\newcommand{\whp}[1][\delta]{with probability at least $1-#1$}

\newcommand{\Unif}{\mathsf{Unif}}

\newcommand{\var}{\mathrm{Var}}
\newcommand{\Var}{\var}

\newcommand{\subs}[1]{_{{\scriptscriptstyle#1}}}

\newcommand{\pihat}{\wh{\pi}}

\newcommand{\Rbar}{\wb{R}}

\newcommand{\approxleq}{\lesssim}

\newcommand{\Id}{I}

\renewcommand{\ind}[1]{^{{\scriptscriptstyle#1}}}

\newcommand{\indic}{\mathbb{I}}

\newcommand{\poly}{\mathrm{poly}}
\newcommand{\polylog}{\mathrm{polylog}}

\newcommand{\dmid}{\;\|\;}

\newcommand{\unif}{\mathsf{Unif}}

\newcommand{\deq}{\coloneqq}

\newcommand{\muhat}{\widehat{\mu}}

\def\multiset#1#2{\ensuremath{\left(\kern-.3em\left(\genfrac{}{}{0pt}{}{#1}{#2}\right)\kern-.3em\right)}}

\newcommand{\what}{\wh{w}}

\newcommand{\phat}{\wh{p}}

\newcommand{\norm}[1]{\left \lVert #1 \right \rVert}

\newcommand{\st}{\star}

\newcommand{\pstar}{p^\st}

\input{widebar}

\usepackage[utf8]{inputenc} %
\usepackage[T1]{fontenc}    %
\usepackage{url}            %
\usepackage{booktabs}       %
\usepackage{amsfonts}       %
\usepackage{nicefrac}       %
\usepackage{microtype}      %
\usepackage{makecell}
\usepackage{enumitem}
\usepackage{breakcites}
\usepackage{mathrsfs}
\usepackage[most]{tcolorbox}

\usepackage[normalem]{ulem}

\usepackage{algorithm}
\usepackage{verbatim}
\usepackage{algorithmic}

\icml{

}

\arxiv{
\newcommand{\jmlrQED}{\qed}
}

\usepackage{multicol}
\usepackage{colortbl}
\usepackage{setspace}
\usepackage{transparent}
\usepackage{upgreek}

\usepackage{inconsolata}
\usepackage[scaled=.90]{helvet}
\usepackage{xspace}

\icml{
\setlist[enumerate]{leftmargin=*}
\setlist[itemize]{leftmargin=*}
}

\usepackage{graphicx}
\icml{\usepackage{subfigure}}

\usepackage[suppress]{color-edits}
\addauthor{fc}{blue}
\addauthor{sc}{orange}
\addauthor{cd}{pink}
\addauthor{sr}{red}

\usepackage{parskip}

\let\oldparagraph\paragraph

\renewcommand{\paragraph}[1]{\oldparagraph{#1.}}

\newcommand{\normal}[1]{\mathsf{N}\prn*{#1}}

\makeatletter
\g@addto@macro\appendix{%
  \crefalias{section}{appendixsection}%
  \crefalias{subsection}{appendixsubsection}%
  \crefalias{subsubsection}{appendixsubsubsection}%
}
\makeatother
\crefname{appendixsection}{Appendix}{Appendices}
\Crefname{appendixsection}{Appendix}{Appendices}
\crefname{appendixsubsection}{Appendix}{Appendices}
\Crefname{appendixsubsection}{Appendix}{Appendices}
\crefname{appendixsubsubsection}{Appendix}{Appendices}
\Crefname{appendixsubsubsection}{Appendix}{Appendices}
\crefname{assumption}{Assumption}{Assumptions}
\Crefname{assumption}{Assumption}{Assumptions}

\arxiv{
\title{Exact simulation of diffusions and improved algorithms for log-concave sampling}

\author{
   Fan Chen \\ {\small MIT} \\ {\small \texttt{fanchen@mit.edu}} 
   \and Sinho Chewi \\ {\small Yale University} \\ {\small \texttt{sinho.chewi@yale.edu}} 
   \and Alexander Rakhlin \\ {\small MIT} \\ {\small \texttt{rakhlin@mit.edu}}
   \and Matthew S.\ Zhang \\ {\small MIT} \\ {\small \texttt{shuns436@mit.edu}} 
 }
}

\newtcolorbox{resultbox}{
  enhanced,
  width=0.96\textwidth,
  colback=white,
  colframe=black,
  boxrule=0.8pt,
  arc=2mm,
  left=6mm,
  right=6mm,
  top=4mm,
  bottom=4mm,
  before=\begin{center},
  after=\end{center},
  before upper={
  \centering
  \setlength{\baselineskip}{1.25\baselineskip}
  \parindent=0pt
}
}

\begin{document}

\maketitle

\begin{abstract}%
    We study exact simulation of diffusions via rejection sampling on path space using unbiased estimators of the density ratio obtained from Girsanov's theorem.
    When applied to the underdamped Langevin diffusion, it yields an algorithm for sampling from a strongly log-concave and log-smooth distribution with condition number $\kappa$, in dimension $d$, to accuracy $\varepsilon$ in R\'enyi divergence, in $\widetilde O(\kappa^{2/3} d^{1/3}\,\polylog(1/\varepsilon))$ queries.
    Under a third derivative bound, the dimension dependence improves to $d^{1/5}$.
    This improves substantially over the prior state-of-the-art complexity of $\widetilde O(\kappa d^{1/2}\,\polylog(1/\varepsilon))$ for the Metropolis-adjusted Langevin algorithm, and over the $d^{1/4}$ dimension dependence of Metropolized Hamiltonian Monte Carlo under the same third derivative bound.
    We also present applications to the mirror Langevin diffusion, and for obtaining Fisher information bounds in the non-log-concave case.
\end{abstract}

\section{Introduction}

Drawing samples efficiently from a high-dimensional distribution $\pi \propto \exp(-V)$ over $\R^d$ is a core algorithmic task that underlies Bayesian inference, machine learning, and scientific computing.
Recent progress in the field has largely been driven by deep and fruitful connections with optimization~\citep{JKO, Wib18SamplingOpt, Chewi26Book}, which have inspired new algorithms and increasingly refined convergence guarantees.
These developments point toward a broader goal: just as optimization is built upon a mature theory of complexity, with convex functions as the canonical benchmark class, we seek an analogous theory for sampling, with \emph{log-concave distributions} playing the corresponding role.

To fix the ideas, suppose that $V$ is well-conditioned: $0 \prec \alpha I \preceq \nabla^2 V \preceq \beta I$.
In a standard setting, the goal of sampling is to design an algorithm which uses first-order queries to $V$ and outputs a random variable whose law is $\varepsilon$-close to $\pi$ in, say, total variation distance.
We want to quantify the number of queries in terms of the dimension $d$, the condition number $\kappa$, and the target accuracy $\varepsilon$.
Despite the fundamental nature of this question, the optimal complexity remains unknown.

In this paper, we develop new algorithms for this problem that significantly improve upon the state-of-the-art in several important ways.
To contextualize our results, we first review the literature.

\subsection{Background}

\paragraph{Toward acceleration for sampling}
For the analogous problem of smooth convex optimization, one of the most celebrated results is the existence of \emph{accelerated} methods, as first shown in~\citet{Nes1983Accel}.
Namely, in the well-conditioned setting, whereas the iteration complexity of gradient descent---with constant step size, see~\citet{AltPar25Hedging} for recent developments---incurs linear dependence $\Omega(\kappa)$ on the condition number, there are accelerated methods which achieve $O(\sqrt\kappa)$ dependence, a square root speed-up.

In sampling, and in continuous time, the analogue of gradient descent is the \emph{Langevin diffusion} (LD), and the analogue of Nesterov's accelerated gradient method is the \emph{kinetic} or \emph{underdamped Langevin diffusion} (ULD); see \cref{ssec:uld}.
Although the analogies are formally clear, especially in light of the continuous-time Nesterov ODEs derived in~\citet{SuBoyCan16Nesterov}, proving the sharp accelerated convergence of~\eqref{eq:ULD} was a longstanding open problem until the breakthrough work of~\citet{CaoLuWan23Underdamped}.
Subsequently,~\citet{EbeLor24Lifts} cast their result into the framework of second-order lifts, which relates the acceleration question to classical work in the Markov chain literature on non-reversible lifts~\citep{CheLovPak1999Lifting, DiaHolNea00Nonreversible}. In this context, the square root speed-up is interpreted as a \emph{diffusive-to-ballistic} speed-up.
The original result of~\citet{CaoLuWan23Underdamped} was indeed accelerated but only gave exponential decay in $L^2$, which incurs poor dependence on the initialization.
Even more recently, this issue was resolved through the work of~\citet{Lu26SharpEntropy}, which established accelerated convergence in relative entropy.
See \cref{ssec:related} for further discussion.

These encouraging developments suggest that the moment is opportune for realizing acceleration in sampling.
However, translating the continuous-time accelerated rates into algorithmic gains is a challenging question in and of itself.
Thus far, the only progress in this direction has been the work of~\citet{altschuler2026sc4}.
Via the framework of shifted composition developed in~\citet{AltChe25SCI, AltChe26SCIII}, and combined with the result of~\citet{Lu26SharpEntropy}, it implies that a certain discretization of~\eqref{eq:ULD} achieves a complexity of $\widetilde O(\kappa^{5/6} d^{1/3}/\varepsilon^{2/3})$.
This result achieves sublinear $o(\kappa)$ dependence\footnote{\label{foot:1}We note that, similarly to the existence of cutting-plane methods in optimization, there are samplers for general log-concave distributions which incur $\poly(d,\log\kappa, \log(1/\varepsilon))$ dependence~\citep[see, e.g.,][]{Kook26Thesis}. Hence, a more precise formulation of the acceleration question is to achieve $\sqrt \kappa$ dependence simultaneously with ``good'' (e.g., sublinear) dimension dependence.}, but the approach of direct discretization seems inherently limited: for example, as discussed next, it yields a \emph{low-accuracy} $\poly(1/\varepsilon)$ guarantee.

\paragraph{High-accuracy vs.\ low-accuracy sampling}
Unlike in the setting of optimization, discretization incurs bias: the stationary distribution of a diffusion is generally not preserved by discretization.
Controlling the asymptotic bias generally requires choosing the step size as a function of $\varepsilon$, eventually incurring $\poly(1/\varepsilon)$ iteration complexity.
Typically, this is addressed by incorporating a Metropolis--Hastings filter~\citep{Met+1953MCMC, Has1970MCMC}.
Doing so, however, introduces new considerations, such as implementation of the filter---which necessitates ``simple'' proposal kernels---and control of the acceptance probability.

Applying the filter to the Euler--Maruyama discretization of the Langevin diffusion yields the Metropolis-adjusted Langevin algorithm (MALA), whose mixing time is $\widetilde O(\kappa d^{1/2}\, \polylog(1/\varepsilon))$ from a warm start~\citep{Chewi+21MALA, WuSchChe22MALA}.
In~\citet{AltChe24Warm}, it was shown how to algorithmically obtain such a warm start with comparable complexity, leading to an overall $\widetilde O(\kappa d^{1/2}\,\polylog(1/\varepsilon))$ guarantee.
This rate is matched by a piecewise deterministic Markov process called the zigzag sampler~\citep{LuWan22Zigzag}, and also via the proximal sampler algorithm of~\citet{LeeSheTia21RGO} which reduces sampling to a sequence of localized distributions, which in turn can be implemented via approximate rejection sampling~\citep{FanYuaChe23ImprovedProx, chen2026high}.
This is the current state-of-the-art complexity for high-accuracy samplers.
In particular, despite the improved dependence on $\varepsilon$, the complexity is worse in terms of both $d$ and $\kappa$.

Since MALA is obtained by correcting a discretization of the Langevin diffusion, it is natural to expect that faster algorithms arise by replacing LD with~\eqref{eq:ULD}.
This is closely related to the Metropolized Hamiltonian Monte Carlo (MHMC) algorithm, which is one of the most widely used samplers in practice; indeed, MHMC and its adaptive variants are the workhorse implementations of Markov chain Monte Carlo in many popular software packages such as Stan and TensorFlow.
However, satisfactory guarantees for MHMC have only been established under a higher-order smoothness assumption.
Namely, assume that $\nabla^2 V$ is $\betaF$-Lipschitz in the Frobenius norm, and write $\kappaF \deq \betaF^{2/3}/\alpha$ for the associated condition number.
Then, the mixing time result of~\citet{CheGatJia26MHMC} together with the algorithmic warm start result of~\citet{zhang2026algorithmic} yield a sampler with complexity $\widetilde O((\kappa + \kappaF)\,d^{1/4}\,\polylog(1/\varepsilon))$ in this setting.

\paragraph{Concurrent work}
The concurrent work of~\citet{LuLuo26Zigzag} obtains $\widetilde O(\kappa^{1/2} d^2\,\polylog(1/\varepsilon))$ complexity for the bouncy particle sampler.
In particular, the dependence on $\kappa$ is fully accelerated, albeit with a worsened dimension dependence.
Moreover, due to the obstruction presented in~\citet{MonWan26EntropicPDMP}, it is not possible to improve the dimension dependence by establishing a stronger entropic decay.
Nevertheless, it is a promising future direction to further study piecewise deterministic Markov processes and to compare their complexity guarantees with the ones we have obtained here.

\subsection{Contributions}

In this paper, we propose new algorithms for sampling based on a different approach. Rather than correcting a discrete-time step via a Metropolis--Hastings filter, we correct the law of an entire continuous-time path via rejection sampling. Namely, given a reference path measure $\mathbf P$, we propose a tractable path measure $\mathbf Q$. We then correct this measure using the density ratio formula supplied by Girsanov's theorem. In principle, this yields an exact unbiased scheme for generating sample paths from $\mathbf P$. However, the Girsanov reweighting factor is a functional of the entire path, so this is not immediately implementable.

Our principal observation is that it is nevertheless possible to build an \emph{unbiased estimator} of the density ratio using $O(1)$ queries---provided that the simulation time is chosen appropriately---which suffices for accurate simulation. The mechanism shares broad similarities with the first-order rejection sampling (FORS) method introduced recently in~\cite{chen2026high, chen2026stoc}. We also note that the idea of exact simulation of diffusions was considered in prior works such as~\citet{beskos2005exact}; see \cref{ssec:related} for more discussion.
However, to our knowledge, we are the first to apply this idea in a multi-dimensional context, and to obtain explicit complexity guarantees which improve upon the state-of-the-art.

We apply this scheme to the underdamped Langevin diffusion, which is defined by the SDE
\begin{align}\tag{$\msf{ULD}$}\label{eq:ULD}
    \d X_t = P_t \, \d t\,, \qquad \d P_t = \bigl(- \gamma P_t - \nabla V(X_t)\bigr) \, \d t + \sqrt{2\gamma} \, \d B_t\,.
\end{align}
Note that the SDE is posed in phase space, where $x$ denotes the position coordinate and $p$ the momentum coordinate.
As shown recently in~\citet{Lu26SharpEntropy}, the law of this SDE converges at an accelerated rate in KL divergence---and even R\'enyi divergences, see~\citet{LiLu26SpaceTimeLSI}---to its stationary distribution, which we again denote by $\pi$ in an abuse of notation. Explicitly, $\pi(x,p) \propto \exp(-V(x) - \|p\|^2/2)$, so that the position marginal is our distribution of interest. By various proposals and carefully analyzing the variance of the density ratio estimators, we arrive at the following results.

First, in the case where we only assume two-sided bounds on $\nabla^2 V$, we use the exponential Euler discretization as a proposal to obtain:
\begin{resultbox}
\textbf{Result 1:} Assuming $0 \prec \alpha I \preceq \nabla^2 V \preceq \beta I$, with $\kappa \deq \beta/\alpha$, there is an algorithm that achieves $\Ren[q]{\cdot}{\pi} \le \varepsilon$ using $\widetilde O_q(\kappa^{2/3} d^{1/3}\,\polylog(1/\varepsilon))$ queries in expectation.
\end{resultbox}
Our result improves upon both the $\kappa^{5/6}$ condition number dependence of~\citet{altschuler2026sc4}, and also the $d^{1/2}$ dimension dependence of all known high-accuracy samplers.
Thus, our guarantee is \emph{simultaneously} state-of-the-art with respect to all parameters, at least among algorithms with $O(d)$ dimension dependence (see Footnote~\ref{foot:1}).

Under a third-order derivative bound, we use a more sophisticated proposal that makes Hessian-vector queries.
\begin{resultbox}
\textbf{Result 2:} Assuming $0 \prec \alpha I \preceq \nabla^2 V \preceq \beta I$ and that $\nabla^2 V$ is $\betaF$-Lipschitz in the Frobenius norm, with $\kappa \deq \beta/\alpha$ and $\kappaF \deq \betaF^{2/3}/\alpha$, there is an algorithm that achieves $\Ren[q]{\cdot}{\pi} \le \varepsilon$ using $\widetilde O_q\bigl(\bigl(\kappa^{1/2}+\kappaF^{3/5}d^{1/5}+\kappa^{1/3}(\kappaF+\kappaF^{3/11})d^{2/11}\bigr)\polylog(1/\varepsilon)\bigr)$ queries in expectation.
\end{resultbox}
When $\kappaF = O(1)$, the dimension dependence of our algorithm is $d^{1/5}$.
In particular, this provably outperforms the previous state-of-the-art, MHMC, which by dimensional scaling arguments incurs $d^{1/4}$ complexity even for Gaussian targets.
On the other hand, our algorithm is adaptive in the sense that $\kappaF = 0$ for Gaussians, in which case only the $\kappa^{1/2}$ term survives.
This yields a $\widetilde O_q(\kappa^{1/2}\,\polylog(d, 1/\varepsilon))$ guarantee for sampling from Gaussians, which is near-optimal over this class~\citep{Chewi+23QueryLower}.

Finally, we also highlight two additional applications of potential interest.

\begin{enumerate}
    \item If we instead assume that $V$ is \emph{relatively} convex, Lipschitz, and smooth with respect to a self-concordant mirror map $\phi$, we can apply the diffusion rejection schema above to the mirror Langevin diffusion~\citep{chewi2020exponential}, assuming exact simulation of the mirror Brownian motion. This gives an algorithm which operates in the same setting and under the same assumptions of~\citet{ahn2021efficient}, but improves their low-accuracy complexity guarantee to a high-accuracy one.
    See \cref{sec:mirror}.

    \item Assuming that $V$ is $\beta$-smooth but possibly non-convex, the sampling analogue of approximate first-order stationarity is to generate a sample with small relative Fisher information: $\FI{\cdot}{\pi} \le \varepsilon^2$.
    Recently,~\cite{chewi2026complexity} used the proximal sampler to reduce this task to high-accuracy \emph{log-concave} sampling.
    As we have improved the complexity of the latter, this leads to a new state-of-the-art $\widetilde O(\beta d^{1/3} K_0/\varepsilon^2)$ complexity bound for the former, where $K_0$ is the initial KL divergence to $\pi$.
    Here, the improvement is dimensional: $d^{1/2} \to d^{1/3}$.
    See \cref{sec:fisher}.
    
\end{enumerate}

Overall, we believe that these results are surprising, and amply demonstrate the power of this algorithmic framework.
It is an interesting avenue for future research to develop further applications via this methodology, such as the simulation of general SDEs, and moreover for testing these methods numerically.

\subsection{Related work}\label{ssec:related}

\paragraph{Log-concave sampling}
The problem of sampling from distributions over $\R^d$ using Markov chains has an extensive history, see~\cite{brooks2011handbook}. More recent literature centers on the non-asymptotic complexity of such schemes in high dimensions following an optimization-inspired query complexity model. See~\cite{Chewi26Book} for a general exposition and for further references to the literature.

\paragraph{Accelerated hypocoercive convergence}
The study of convergence rates for degenerate processes, including~\eqref{eq:ULD} as a model case, is known as \emph{hypocoercivity}; see the monograph~\citet{Vil09Hypo}.
The first sharp rate was established in $L^2$ in~\citet{CaoLuWan23Underdamped}, building on the space-time approach of~\citet{Alb+24KineticFP}.
This result was cast in the language of second-order lifts in~\citet{EbeLor24Lifts}, and there have been many subsequent extensions~\citep[e.g.,][]{LuWan22PDMP, Bri23Kinetic, EbeLor24Divergence, Ebe+25FlowPI, Lehec25Kinetic, FanLiLu26SharpL2}.
The first sharp rate in entropy was given recently in~\citet{Lu26SharpEntropy} and extended to hypocoercive hypercontractivity~\citep{LiLu26SpaceTimeLSI}, to non-linear equations~\citep{Mon26Accel}, and to PDMPs~\citep{MonWan26EntropicPDMP}.
Recently,~\citet{Mit+26AccelRHMC} gave a different approach to acceleration for randomized Hamiltonian Monte Carlo.

\paragraph{High-accuracy sampling}
MALA has had a long history of study, with limiting arguments suggesting the dimensional scaling $d^{1/3}$ at stationarity---at least under higher-order smoothness~\citep{roberts1998optimal}.
Within recent literature, the non-asymptotic $d^{1/2}$ complexity was obtained in a line of works~\citep{dwivedi2019log, chen2020fast, LeeSheTia20MALA, Chewi+21MALA, WuSchChe22MALA, chen2023simple}, culminating in the algorithmic warm start result of~\citet{AltChe24Warm}.
On the other hand \cite{Chewi+21MALA} showed that under bounds on $\nabla^2 V$ alone, the $d^{1/2}$ complexity for MALA is tight.

Another route to $d^{1/2}$ high-accuracy sampling is via the proximal sampler~\citep{LeeSheTia21RGO}, together with approximate rejection sampling~\citep{FanYuaChe23ImprovedProx}.
In~\citet{chen2026high}, it was shown that this guarantee can be achieved using queries to $\nabla V$ alone (and not to $V$ itself), and that this is robust to noise in the stochastic gradients~\citep{chen2026stoc}.
The first-order rejection sampling mechanism introduced in these two works provides the basis of the present approach of this paper.
In another direction, we note that the $d^{1/2}$ high-accuracy guarantee can also be extended to composite sampling~\citep{LiuChe26ProxGradSampling}.

As for MHMC, \cite{beskos2013optimal} gave a scaling limit with $d^{1/4}$ dependence at stationarity. Non-asymptotically, a $d^{11/12}$ rate was proven in~\cite{chen2020fast}, while the $d^{1/4}$ rate was proven in~\citet{CheGatJia26MHMC}, both from a warm start. An algorithmic warm start was obtained in~\cite{zhang2026algorithmic}. These works all required $\nabla^3 V$ to be appropriately bounded, and we also adopt that assumption here for our $d^{1/5}$ high-accuracy sampler. %

\paragraph{Mirror Langevin}
The mirror Langevin algorithm has been studied in numerous prior works~\citep{zhang2020wasserstein,chewi2020exponential, ahn2021efficient, jiang2021mirror, li2022mirror, gatmiry2022convergence}. In our work, we most closely follow the analysis of~\cite{ahn2021efficient}.
A Metropolis-adjusted version of mirror Langevin was proposed in~\citet{SriWibWil24MAMLA}, yielding a high-accuracy algorithm; we defer a comparison to \cref{sec:mirror}.

\paragraph{Fisher information complexity}
The study of relative Fisher information bounds as approximate first-order stationarity was initiated in~\cite{Bal+22NonLogConcave}. The algorithmic upper bound was recently improved in~\citet{chewi2026complexity} via a reduction to high-accuracy log-concave sampling. We also mention that lower bounds for this problem were studied in~\cite{chewi2023fisher}.

\paragraph{Exact simulation of diffusions}
The pioneering work of \citet{Wag1988SDE} proposed a method of \emph{exactly} simulating one-dimensional SDEs through rejection sampling, utilizing Girsanov's theorem which expresses the log-density as an integral. Extensions of this idea have been studied in follow-up work on simulating SDEs~\citep[e.g.,][]{Wag1988SDE,beskos2005exact,beskos2006retrospective, Pap11Diffusion,gonccalves2023exact} and also Markov chain Monte Carlo~\citep[etc.]{gonccalves2017barker,vats2022efficient,kakkad2024exact}. In this work, we revisit this idea and apply it to the underdamped Langevin diffusion through the framework of \citet{chen2026high,chen2026stoc}.

\subsection{AI usage}

\scedit{The main idea of the algorithm and its analysis were developed by the authors, building upon the prior works~\citet{chen2026high, chen2026stoc}. LLM tools were used to check for errors, provide the implementation details in~\cref{app:implementation}, and to prepare parts of this manuscript.}

\section{Preliminaries}\label{sec:prelim}

We collect the notation and analytic preliminaries that will be used throughout.  We write points in phase space as $z=(x,p)\in\cZ\deq \RR^d\times \RR^d$.  The position target is $\mu_V(dx)\propto e^{-V(x)}\,dx$, and the invariant distribution of the underdamped Langevin dynamics is
\begin{align}
    \target(dx,dp)\deq \mu_V(dx)\otimes \normal{0,\Id}(dp)\,.
\end{align}
When a divergence is applied to position marginals, $\target$ refers to $\mu_V$; when it is applied to phase-space distributions, $\target$ denotes the product law above.  This convention keeps the notation light.

We assume that $V$ is twice continuously differentiable and $\beta$-smooth in the sense that
\begin{align}\label{eq:beta-smooth}
    -\beta\Id \preceq \nabla^2 V(x)\preceq \beta\Id\,,\qquad \forall x\in\RR^d\,.
\end{align}

\paragraph{Isoperimetric conditions}

\begin{definition}[Poincar\'e]
    A distribution $\pi$ satisfies a \emph{Poincar\'e inequality (PI)} with constant $C$ if for all compactly supported and smooth test functions $h : \R^d\to\R$,
    \begin{align}
        \Var_{X\sim \pi}(h(X)) \le C\,\En_{X\sim\pi}[\nrm{\nabla h(X)}^2]\,.
    \end{align}
    We let $\CPI(\pi)$ be the smallest constant $C$ such that $\pi$ satisfies PI with constant $C$.
\end{definition}

\begin{definition}[Log-Sobolev]
    A distribution $\pi$ satisfies a \emph{log-Sobolev inequality (LSI)} with constant $C$ if for all compactly supported and smooth test functions $h : \R^d \to \R$,
    \begin{align}
        \mathrm{Ent}_{X\sim \pi}(h^2(X))
        \deq \E_{X\sim \pi}\brk[\Big]{h^2(X) \log \frac{h^2(X)}{\E_{X\sim \pi}[h^2(X)]}} \le 2C\,\En_{X\sim\pi}[\nrm{\nabla h(X)}^2]\,.
    \end{align}
    We let $\CLSI(\pi)$ be the smallest constant $C$ such that $\pi$ satisfies LSI with constant $C$.
\end{definition}

It is well known~\citep[e.g.,][]{BGL14} that if $\pi$ is $\alpha$-strongly log-concave (SLC), i.e., $-\log\pi$ is $\alpha$-strongly convex, then it satisfies LSI with constant $1/\alpha$, and if $\pi$ satisfies LSI with constant $1/\alpha$, then it satisfies PI with constant $1/\alpha$.
These represent meaningful enlargements of the class of SLC measures that still permit tractable sampling. For example, unlike SLC, LSI is robust to bounded perturbations of the log-density; unlike LSI\@, PI can capture measures without sub-Gaussian tails (e.g., the two-sided exponential).
See~\citet{Chewi26Book} for further background in the context of sampling.

\paragraph{Oracles}

\begin{definition}[First-order oracle]
Given input $x$, the oracle returns $\nabla f(x)$.
\end{definition}

\begin{definition}[Generalized first-order oracle]\label{def:gen_first_order}
Given input $(x,v)$, the oracle returns $(\nabla f(x), \nabla^2 f(x)\,v)$.
\end{definition}

Note that the generalized first-order oracle can be implemented to arbitrary accuracy via finite differences using the first-order oracle.

\paragraph{Notation}
For any function $f:\RR^d\to\RR$ such that $Z_f\deq \int_{\RR^d} e^{-f(x)}dx<+\infty$, we define $\mu_f$ to be the distribution over $\RR^d$ with density $\mu_f(x)=\frac{1}{Z_f}\,e^{-f(x)}$.

For $\nu \ll \mu$, let $\Ren[\lambda]{\nu}{\mu} \deq \frac{1}{\lambda-1} \log \En_\mu[(\frac{d\nu}{d\mu})^\lambda]$ denote the R\'enyi divergence of order $\lambda > 1$, and let $\Rsys[\lambda]{\nu}{\mu} \deq \max\crl*{\Ren[\lambda]{\nu}{\mu},\Ren[\lambda]{\mu}{\nu}}$ denote the corresponding symmetrized divergence. In addition, for $\lambda\geq1$, we define
\begin{align}
    \Dren{\nu}{\mu} \deq \En_\mu\bigl[(\frac{d\nu}{d\mu})^\lambda\bigr]-1\,,\qquad
    \Dsys{\nu}{\mu}=\max\crl{\Dren{\nu}{\mu},\Dren{\mu}{\nu}}\,.
\end{align}
We further consider the divergences
\begin{align}
    \Dtv{\nu}{\mu}
    &\deq \sup_A{\abs{\nu(A)-\mu(A)}}\,,\qquad
    \Dkl{\nu}{\mu}
    \deq \En_\nu\brk[\Big]{\log\frac{\d\nu}{\d\mu}}\,,\qquad
    \Dchis{\nu}{\mu}
    \deq \En_\mu\brk[\Big]{\prn[\Big]{\frac{\d\nu}{\d\mu}-1}^2}\,.
\end{align}
In~\cref{sec:fisher}, we will also make use of the relative Fisher information $\FI{\nu}{\mu} \deq \En_\nu\brk*{\nrm{\nabla\log(\d\nu/\d\mu)}^2}$.

For an interval $I=[a,b]$, define $\pclip[I]{u}\deq \max\{a,\min\{b,u\}\}$. For $B>0$, we abbreviate $\pclip[B]{u}\deq \pclip[{[-B,B]}]{u}$ and define $\trunc[B]{u}\deq (|u|-B)_+$.

\subsection{Underdamped Langevin dynamics}\label{ssec:uld}

We will use the kinetic, or underdamped, Langevin diffusion on $\cZ$ with friction parameter $\gamma>0$, given in~\eqref{eq:ULD}.
The measure $\target=\mu_V\otimes \normal{0,\Id}$ is invariant for this diffusion.  
For any distribution $\nu$ over $\cZ$, we define $\cPt{T}\nu$ to be the distribution of $(X_T,P_T)$ under~\eqref{eq:ULD} with $(X_0,P_0)\sim \nu$.

We make use of the following results, due to~\citet{LiLu26SpaceTimeLSI, CaoLuWan23Underdamped} respectively.

\begin{theorem}[Accelerated R\'enyi decay]\label{thm:ST-LSI}
Assume that $\mu_V$ is log-concave and satisfies LSI with constant $1/\alpha$. There are universal constants $c,C>0$ such that, if $\gamma \asymp\sqrt\alpha$, for every $q>1$ and every phase-space law $\nu\ll\target$,
\begin{align}
    \Ren*[q]{\cPt{t}\nu}{\target}
    \leq Cq\exp\prn*{-c\gamma t}\,\Ren[q]{\nu}{\target}\,,\qquad t\geq C/\gamma\,.
\end{align}
Consequently, for $K\gtrsim (T\gamma)^{-1}\log(q\Ren[q]{\nu}{\target}/\delta)$,
\begin{align}
    \Ren*[q]{(\cPt{T})^K\nu}{\target}\leq \delta\,.
\end{align}
\end{theorem}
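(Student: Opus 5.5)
The theorem has two parts. The first is a black-box citation of the space-time log-Sobolev / hypocoercive hypercontractivity result of~\citet{LiLu26SpaceTimeLSI}, which itself builds on~\citet{CaoLuWan23Underdamped, Lu26SharpEntropy}. The second part is a consequence of iterating the first, and that is the only thing requiring an argument here.

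For the first part, I would check that our hypotheses match theirs. We need $\mu_V$ log-concave with LSI constant $1/\alpha$, and the friction chosen as $\gamma\asymp\sqrt\alpha$. Under these conditions their result gives, for $t\ge C/\gamma$, the bound $\Ren[q]{\cPt{t}\nu}{\target}\le Cq\,e^{-c\gamma t}\Ren[q]{\nu}{\target}$. The burn-in time $t\ge C/\gamma$ absorbs the hypocoercive transient. This time also lets one pass from an $L^2$ / space-time Poincaré statement to Rényi decay, since hypercontractivity lets the order $q$ be handled with only the polynomial prefactor $Cq$. I would state this as imported and not re-derive it.

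For the consequence, note that $(\cPt{T})^K=\cPt{KT}$ by the semigroup property of the Markov process~\eqref{eq:ULD}. I then apply the first display once, with $t=KT$. This requires $KT\ge C/\gamma$, which holds when $K\gtrsim (T\gamma)^{-1}$; it is implied by the stated lower bound on $K$ once the log factor is at least a constant. If not, I would add $1$ inside the logarithm, or note that $\delta$ is small. With $t=KT$ we get
\begin{align}
\Ren[q]{(\cPt{T})^K\nu}{\target}\le Cq\,e^{-c\gamma KT}\,\Ren[q]{\nu}{\target}\le\delta\,,
\end{align}
provided $c\gamma KT\ge \log(Cq\Ren[q]{\nu}{\target}/\delta)$. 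That is exactly the condition $K\gtrsim(T\gamma)^{-1}\log(q\Ren[q]{\nu}{\target}/\delta)$, with the constant $C$ absorbed into $\gtrsim$.

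The main subtlety is a design choice. Applying the contraction once to the whole horizon $KT$ avoids the prefactor $Cq$ compounding over $K$ steps. Compounding would occur if each $\cPt{T}$ were treated separately, and it would be lossy when $T<C/\gamma$. Beyond this, and the bookkeeping of the burn-in condition, everything rests on the imported theorem. Its genuine difficulty, the accelerated $\sqrt\alpha$ rate in Rényi divergence, lies outside the scope of this paper.
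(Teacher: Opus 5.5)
Your proposal matches the paper: it gives no proof of this theorem, importing the decay bound directly from Li and Lu. The ``consequently'' part is exactly the single application at $t=KT$, via $(\cPt{T})^K=\cPt{KT}$, that you describe, and your handling of the burn-in condition $KT\ge C/\gamma$ is right: read the logarithm as bounded below by a constant, which holds in every use of the theorem in the paper.
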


\begin{theorem}[Accelerated $L^2$ decay]\label{thm:ST-PI}
Assume that $\mu_V$ is log-concave and satisfies PI with constant $1/\alpha$. There are universal constants $c,C>0$ such that, if $\gamma\asymp \sqrt\alpha$, then for every phase-space law $\nu\ll \pi$,
\begin{align}
    \Dchis*{\cPt{t}\nu}{\target}
    \leq C\exp\prn*{-c\gamma t}\,\Dchis{\nu}{\target}\,,\qquad t\geq 0\,.
\end{align}
\end{theorem}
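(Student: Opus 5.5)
The statement immediately above is \cref{thm:ST-PI}, the accelerated $L^2$ decay of \citet{CaoLuWan23Underdamped}. The paper cites it rather than proving it, so the following is a sketch of how I would derive it. The plan is the space-time Poincaré (variational hypocoercivity) approach.

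\textbf{Reduction.} Write $h_t \deq \d(\cPt{t}\nu)/\d\target$, so that $\Dchis{\cPt{t}\nu}{\target}=\Var_\target(h_t)$. The function $h_t$ solves the dual kinetic equation $\partial_t h = \gamma(\Delta_p h - p\cdot\nabla_p h) - (p\cdot\nabla_x - \nabla V\cdot\nabla_p)h$. The antisymmetric transport part preserves the $L^2(\target)$ norm. Hence $\frac{\d}{\d t}\Var_\target(h_t) = -2\gamma\,\En_\target\nrm{\nabla_p h_t}^2$, so the $L^2$ norm is nonincreasing. The $t\ge0$ version of the bound with constant $C$ then follows from a time-averaged estimate by a standard iteration argument.

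\textbf{Key step: space-time Poincaré inequality.} Fix a window $[0,\tau]$ and equip $[0,\tau]\times\R^{2d}$ with the measure $\lambda \deq \frac{\d t}{\tau}\otimes\target$. The aim is the inequality
\begin{align}
\int_0^\tau\!\Var_\target(h_t)\,\frac{\d t}{\tau} \le C_\tau\Bigl(\int_0^\tau\!\En_\target\nrm{\nabla_p h_t}^2\,\frac{\d t}{\tau} + \int_0^\tau\!\nrm{(\partial_t + p\cdot\nabla_x - \nabla V\cdot\nabla_p)h_t}_{L^2_p H^{-1}}^2\,\frac{\d t}{\tau}\Bigr)
\end{align}
for mean-zero $h$. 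The last term is controlled by $\gamma^2\int\En_\target\nrm{\nabla_p h}^2$ through the equation itself, since the generator's symmetric part has the form $\nabla_p^*\nabla_p$. The constant should satisfy $C_\tau \lesssim \tau^2 + 1/\alpha$ in the $x$-variable.

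The inequality is proved in two parts.
\begin{enumerate}
\item Project onto $p$-independent functions, writing $h = \Pi h + (h-\Pi h)$ with $\Pi$ the averaging in $p$. The Gaussian Poincaré inequality bounds the part $h-\Pi h$ by $\nrm{\nabla_p h}^2$.
\item Control $\Pi h =: g(t,x)$ by duality. Solve a divergence equation $\partial_t F_0 + \nabla_x^*F = g$ on $[0,\tau]\times\R^d$, with bounds $\nrm{F_0}_{H^1}+\nrm{F}_{H^1}\lesssim (\tau + \alpha^{-1/2} + \dots)\,\nrm g$ in the weighted space $L^2(\frac{\d t}{\tau}\otimes\mu_V)$. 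Test the kinetic equation against these fields, exploiting that $p\cdot\nabla_x$ maps the $\Pi$-part into a first-order Hermite mode in $p$.
\end{enumerate}

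Constructing the divergence solution with these bounds is the main obstacle. It is a Lions-type lemma for the operator $(\partial_t,\nabla_x^*)$ on $\frac{\d t}{\tau}\otimes\mu_V$. I would attempt it through an elliptic problem $-\partial_t^2 u + \nabla_x^*\nabla_x u = g$ with Neumann conditions in $t$, setting $F_0 = -\partial_t u$ and $F = \nabla_x u$. The needed $H^2$-type regularity of $u$ comes from a Bochner identity. Its curvature term is $\nabla^2 V\succeq0$, and this is exactly where log-concavity is used. The Poincaré inequality with constant $1/\alpha$ supplies the zero-order bound.

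\textbf{Assembling the rate.} The result is $\int_0^\tau\Var(h_t)\,\frac{\d t}{\tau} \lesssim \bigl(\tfrac{1}{\gamma}(\tau^2+\alpha^{-1}) + \gamma\bigr)\int_0^\tau\En_\target\nrm{\nabla_p h_t}^2\,\frac{\d t}{\tau}$. Choose $\tau\asymp\alpha^{-1/2}$ and $\gamma\asymp\sqrt\alpha$, so the prefactor is of order $1/\gamma$. Combining this with the dissipation identity gives $\Var(h_\tau) \le (1-c)\,\Var(h_0)$ over each window of length $\tau\asymp 1/\gamma$. Iterating over windows, and using monotonicity within each window, yields $\Dchis{\cPt{t}\nu}{\target}\le C e^{-c\gamma t}\Dchis{\nu}{\target}$.
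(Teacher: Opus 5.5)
The paper does not prove this statement; it imports it from \citet{CaoLuWan23Underdamped}. Your outline follows that reference's space-time Poincar\'e strategy, and the plan is right. As you set it up, however, the divergence step does not close.

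\textbf{The gap: endpoint terms.} To extract $\tri{\nabla_x g,F}$ you pair $(\partial_t+p\cdot\nabla_x-\nabla V\cdot\nabla_p)h$ with $\tri{p,F}$. Only this combination is controlled (in $H^{-1}_p$), so the time derivative must be moved onto $F$. With the normalized measure $\frac{\d t}{\tau}\otimes\target$, this produces an endpoint term:
\begin{align}
\int_0^\tau\En_\target\bigl[\partial_t h_t\,\tri{p,F(t,x)}\bigr]\frac{\d t}{\tau}
=-\int_0^\tau\En_\target\bigl[h_t\,\tri{p,\partial_tF(t,x)}\bigr]\frac{\d t}{\tau}
+\frac1\tau\Bigl[\En_{\mu_V}\tri{j_t(x),F(t,x)}\Bigr]_{t=0}^{t=\tau}\,,
\end{align}
where $j_t(x)\deq\En_{p\sim\normal{0,\Id}}[p\,h_t(x,p)]$.

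Your Neumann problem makes $F_0=-\partial_t u$ vanish at $t\in\{0,\tau\}$. But $F=\nabla_x u$ is generically nonzero there. For an arbitrary mean-zero $h$, the endpoint values $j_0,j_\tau$ are not controlled by the time-integrated quantities on the right-hand side of your space-time inequality. So that inequality does not follow from your construction. Switching to Dirichlet conditions only moves the same endpoint problem onto $F_0$ and $g$.

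What is missing is a divergence solution with both $F_0$ and $F$ in $H^1_0$ in time, with the same $H^1$ bounds. Producing it takes more than a single Neumann or Dirichlet solve.

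Alternatively, since you only need the estimate along the flow, you can keep the Neumann solution and average over the starting time of the window:
\begin{enumerate}
\item Gaussian integration by parts gives $\nrm{j_t}_{L^2(\mu_V)}^2\leq\En_\target\nrm{\nabla_p h_t}^2$.
\item Your $H^2$ bound plus a one-dimensional trace inequality gives $\sup_t\nrm{\nabla_x u(t)}_{L^2(\mu_V)}^2\lesssim(\tau^2+\alpha^{-1})\nrm{g}^2$, with $\nrm{g}$ taken in $L^2(\frac{\d t}{\tau}\otimes\mu_V)$.
\item After Young's inequality and averaging, the endpoint terms become integrated dissipation, with prefactor $(\tau^2+\alpha^{-1})/\tau^2=O(1)$ at $\tau\asymp\alpha^{-1/2}$.
\end{enumerate}

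\textbf{Constants.} You also need to keep two constants apart, since acceleration hinges on it. The duality argument gives
\begin{align}
\nrm{g}\lesssim\nrm{\nabla_p h}+\sqrt{\tau^2+\alpha^{-1}}\,\nrm{(\partial_t+p\cdot\nabla_x-\nabla V\cdot\nabla_p)h}_{L^2_{t,x}H^{-1}_p}\,.
\end{align}
So the constant in front of $\int\En_\target\nrm{\nabla_p h}^2$ is $O(1)$, and $\tau^2+\alpha^{-1}$ appears only in front of the $H^{-1}_p$ term. A single $C_\tau\lesssim\tau^2+\alpha^{-1}$ in front of both, as you wrote, is dimensionally inconsistent and would not give the $\sqrt\alpha$ rate.

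Correspondingly, insert $\nrm{(\partial_t+p\cdot\nabla_x-\nabla V\cdot\nabla_p)h}_{H^{-1}_p}\leq\gamma\nrm{\nabla_p h}$ and the dissipation identity. The correct bound is
\begin{align}
\int_0^\tau\Var_\target(h_t)\,\d t\lesssim\Bigl(\frac1\gamma+\gamma(\tau^2+\alpha^{-1})\Bigr)\bigl(\Var_\target(h_0)-\Var_\target(h_\tau)\bigr)\,.
\end{align}
This prefactor is $\asymp1/\gamma$ at $\tau\asymp\alpha^{-1/2}$, $\gamma\asymp\sqrt\alpha$. Your expression $\frac1\gamma(\tau^2+\alpha^{-1})+\gamma$ is not, unless $\alpha\asymp1$. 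With these two repairs, the window contraction and the iteration go through as you describe.
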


\subsection{FORS}

In this work, we use first-order rejection sampling (FORS)~\citep{chen2026high} to simulate diffusion processes.

Consider a proposal distribution $\bQ$ over a \emph{continuous-time} stochastic process $\bZ=(Z_t)_{t\in[0,T]}$.
Given a tilt function $w(\bZ)$, the goal of \cref{alg:fors} is to produce a sample from $\bPstar$, defined via $\frac{\d\bPstar}{\d\bQ}(\bZ)\propto e^{\wstar(\bZ)}$, without access to the \emph{value} of $\wstar(\bZ)$. Specifically, in our instantiation, we always construct a function $W(\xi;\bZ)$ depending on only $O(1)$ entries of $\bZ$ such that $\wstar(\bZ)\approx \En_{\xi\sim \Xi }[W(\xi;\bZ)]$.

\begin{algorithm}
\caption{First-order rejection sampling (FORS)}\label{alg:fors}
\begin{algorithmic}
\STATE \textbf{Input:} Parameter $B > 0$, proposal distribution $\bQ$, estimator function $W$, and sampling distribution $\Xi$ for the auxiliary variable $\xi$.
\FOR{$i=1,2,3,\dotsc$}
\STATE Sample $J\sim \Poi(2\B)$.
\STATE Sample $\xi_1,\cdots,\xi_J\sim \Xi$.
\STATE Sample $(W(\xi_1;\bZ),\cdots,W(\xi_J;\bZ),Z_T)$ following $\bZ\sim \bQ$.
\STATE Output $Z_T$ with probability $\prod_{j=1}^J \pclip[{[0,1]}]{\frac{B+W(\xi_j;\bZ)}{2B}}$.
\ENDFOR
\end{algorithmic}
\end{algorithm}

The following guarantee is from~\citet{chen2026high}.

\begin{theorem}[FORS guarantee]\label{thm:fors}
    \cref{alg:fors} outputs a random point $Z_T$ whose distribution $\phat$ is the marginal of $\bPhat$ such that
    \begin{align}
        \frac{\d \bPhat}{\d \bQ}\propto e^{\what(\bZ)}\,, \qquad
        \what(\bZ)\deq \En_{\xi\sim \Xi} \pclip[B]{W(\xi;\bZ)}\,.
    \end{align}
    The number of queries through $W$ is bounded, with probability at least $1-\delta$, by $3Be^{2B}\log(2/\delta)$.
\end{theorem}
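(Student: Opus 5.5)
The plan is to work with the Poissonized structure directly and compute the law of the accepted path. Fix one round of \cref{alg:fors}. Conditionally on the proposal path $\bZ\sim\bQ$, the number $J\sim\Poi(2B)$ is independent of $\bZ$, and the $\xi_j$ are i.i.d.\ from $\Xi$ and independent of everything else. Write $a(\xi;\bZ)\deq \pclip[{[0,1]}]{\frac{B+W(\xi;\bZ)}{2B}}=\frac{B+\pclip[B]{W(\xi;\bZ)}}{2B}$. The equality holds because clipping $(B+u)/(2B)$ to $[0,1]$ is the same as clipping $u$ to $[-B,B]$. The conditional acceptance probability given $\bZ$ is then
\begin{align}
    \En\Bigl[\prod_{j=1}^J a(\xi_j;\bZ)\Bigm|\bZ\Bigr]
    = \sum_{k\ge0} e^{-2B}\frac{(2B)^k}{k!}\bigl(\En_\xi a(\xi;\bZ)\bigr)^k
    = \exp\bigl(-2B+2B\,\En_\xi a(\xi;\bZ)\bigr).
\end{align}
Substituting $2B\,\En_\xi a=B+\what(\bZ)$, this equals $e^{-B}e^{\what(\bZ)}$. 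This is the Poisson generating-function identity that drives FORS.

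By Bayes' rule, the law of the path in a round that accepts has density proportional to the acceptance probability with respect to $\bQ$. That is, $\frac{\d\bPhat}{\d\bQ}\propto e^{\what(\bZ)}$. Since $|\what|\le B$, the normalization is finite. The rounds are i.i.d., so the output of the first accepting round has this law, and $Z_T$ is its time-$T$ marginal, giving $\phat$.

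For the query count, the per-round acceptance probability is $p\deq e^{-B}\En_\bQ e^{\what}\ge e^{-2B}$. Hence the number of rounds $N$ is geometric and satisfies $\Pr(N>n)\le(1-e^{-2B})^n\le \exp(-ne^{-2B})$. Taking $n=e^{2B}\log(2/\delta)$ makes this at most $\delta/2$.

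The total number of queries is $S_N=\sum_{i\le n}J_i$ on the event $N\le n$. Here $\sum_{i\le n}J_i\sim\Poi(2Bn)$. A Poisson Chernoff bound gives
\begin{align}
    \Pr\bigl(\Poi(\lambda)\ge \tfrac32\lambda\bigr)\le e^{-c\lambda}.
\end{align}
Take $\lambda=2Be^{2B}\log(2/\delta)$. The remaining step is to check that $e^{-c\lambda}\le\delta/2$. If $2Be^{2B}$ is at least a constant, this follows directly. Otherwise $B$ is small, and the bound should be adjusted by a cruder count, for instance using $\Pr(\Poi(\lambda)\ge m)\le (e\lambda/m)^m$. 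A union bound then gives at most $3Be^{2B}\log(2/\delta)$ queries with probability $1-\delta$.

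The main obstacle is this constant bookkeeping in the tail bound, including the small-$B$ regime. The exact form of the constant $3$ may need a slightly different split of $\delta$, or a direct bound on the compound sum $\sum_{i\le N}J_i$ through its moment generating function $\En e^{sS_N}$, which is computable in closed form. The distributional identity itself is straightforward.
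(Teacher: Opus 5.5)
Your derivation of the output law is correct, and it is the standard FORS argument; the paper has no proof of its own and imports the statement from \citet{chen2026high}. Given $\bZ$, the Poisson generating function gives acceptance probability $e^{-B}e^{\what(\bZ)}$. Bayes' rule over i.i.d.\ rounds then gives $\d\bPhat/\d\bQ\propto e^{\what}$.

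The gap is in the query bound, and it is not a matter of constant bookkeeping. The small-$B$ branch, which you propose to handle with a cruder Poisson tail or the exact MGF of $\sum_{i\le N}J_i$, cannot be handled by any argument, because the displayed bound is false there. Take $B=0.1$ and $\delta=0.15$:
\begin{itemize}
\item Then $3Be^{2B}\log(2/\delta)\approx0.95<1$, so the statement asserts that with probability at least $0.85$ no query is made at all.
\item But the first round always runs and makes $J_1\sim\Poi(0.2)$ queries, so at least one query occurs with probability $1-e^{-0.2}\approx0.18>0.15$.
\item More generally, for all sufficiently small $B$, every $\delta$ with $2\exp(-1/(3Be^{2B}))<\delta<1-e^{-2B}$ is a counterexample.
\end{itemize}
So the statement needs a hypothesis such as $B\ge1$, or else an additive $O(\log(2/\delta))$ term in the bound. $B\ge1$ is the only regime in which the paper uses the result: $B\in[1,2]$ in \cref{prop:ULA-error-prop} and $B=\Theta(1)$ elsewhere.

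Under $B\ge1$ your geometric-plus-Chernoff route does close, but the constants need care. Write $h(u)=(1+u)\log(1+u)-u$ for the Poisson Chernoff exponent. Your condition ``$2Be^{2B}$ at least a constant'' must mean $2Be^{2B}\ge1/h(1/2)\approx9.2$. Also, rounding $n$ up would push the ratio of threshold to Poisson mean below $3/2$. Instead take $n=\lfloor e^{2B}\log(2/\delta)\rfloor$, so that $\Pr(N>n)\le\exp(-e^{-2B}n)\le e^{e^{-2B}}\delta/2\le0.58\,\delta$. Set $\lambda_0\deq2Be^{2B}\log(2/\delta)\ge2Bn$. By stochastic monotonicity of the Poisson family and the Chernoff bound,
\begin{align}
\Pr\Bigl(\sum_{i\le n}J_i\ge\tfrac32\lambda_0\Bigr)\le e^{-h(1/2)\lambda_0}=(\delta/2)^{2h(1/2)Be^{2B}}\le(\delta/2)^{1.59}\le0.34\,\delta,
\end{align}
using $Be^{2B}\ge e^2$. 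A union bound then gives at most $\tfrac32\lambda_0=3Be^{2B}\log(2/\delta)$ queries with probability at least $1-\delta$.
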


\begin{remark}\label{rmk:FORS-efficient}
Note that in \cref{alg:fors}, if $W(\xi;\cdot)$ depends on only $O(1)$ coordinates, then, with high probability, we need to sample only a set $\cT$ of $O(B)$ coordinates of $\bZ$. In particular, after sorting $\cT$, the problem reduces to sampling from the transition kernel $\bQ(Z_{t+h}=\cdot\mid Z_t)$ for certain given pairs $(t,h)$. This may be easier than constructing an entire trajectory $\bZ$, as in the setting of \cref{sec:ULA2}.
\end{remark}

\paragraph{Error analysis}

The following lemma controls the error of the FORS kernel in terms of exponential moments.

\begin{lemma}[Error of FORS]\label{lem:FORS-error}
Suppose that $q\geq2$, $\bPstar$ is a path measure such that $\frac{\d \bPstar}{\d \bQ}\propto e^{\wstar(\bZ)}$, and let $\pstar$ be the marginal of $Z_T$ under $\bPstar$. Then
\begin{align}
    (1+\Dsys*[q]{\pstar}{\phat})^2\leq \En_{\bPstar}\exp\prn*{ 4q\,\abs{\wstar(\bZ)-w(\bZ)} }\cdot \En_{\bZ\sim \bPstar,\xi\sim \Xi}\exp\prn*{ 4q\,\trunc[B]{W(\xi;\bZ)} }\,,
\end{align} 
where we denote $w(\bZ)\deq \En_{\xi\sim\Xi} W(\xi;\bZ)$. 
If $\wstar=w$, then the sharper bound
\begin{align}
    1+\Dsys*[q]{\pstar}{\phat}
    \leq \En_{\bZ\sim \bPstar,\xi\sim\Xi}
    \exp\prn*{2q\,\trunc[B]{W(\xi;\bZ)}}
\end{align} 
holds.
\end{lemma}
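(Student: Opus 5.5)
We reduce everything to the path level and then compare the two tilts. Since $\pstar$ and $\phat$ are the $Z_T$-marginals of $\bPstar$ and $\bPhat$, and $\Dren{\cdot}{\cdot}$ is an $f$-divergence ($f(u)=u^q-1$, convex), the data-processing inequality gives $\Dsys[q]{\pstar}{\phat}\le\Dsys[q]{\bPstar}{\bPhat}$. So it suffices to bound $\En_{\bPhat}[(\d\bPstar/\d\bPhat)^q]$ and $\En_{\bPstar}[(\d\bPhat/\d\bPstar)^q]$ by the right-hand sides. By \cref{thm:fors},
\[
\frac{\d\bPstar}{\d\bPhat}(\bZ)=\frac{e^{\wstar(\bZ)-\what(\bZ)}}{\En_{\bPstar}[e^{\what-\wstar}]}.
\]
Writing $\Delta\deq\wstar-\what$, one direction gives
\[
\En_{\bPhat}[(\d\bPstar/\d\bPhat)^q]=\En_{\bPstar}[e^{(q-1)\Delta}]\,\bigl(\En_{\bPstar}[e^{-\Delta}]\bigr)^{q-1},
\]
and the other gives $\En_{\bPstar}[e^{-q\Delta}]\,\bigl(\En_{\bPstar}[e^{-\Delta}]\bigr)^{-q}$. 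Jensen bounds the reciprocal factor by $\En_{\bPstar}[e^{q\Delta}]$. In both cases we get at most $\bigl(\En_{\bPstar} e^{q|\Delta|}\bigr)^2$, or a product of two such moments. Hence
\[
1+\Dsys[q]{\pstar}{\phat}\le\bigl(\En_{\bPstar} e^{q|\Delta|}\bigr)^2,
\]
up to adjusting the constant in the exponent.

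Next we control $\Delta$. Split it as $\Delta=(\wstar-w)+(w-\what)$, where
\[
w-\what=\En_\xi\bigl[W-\pclip[B]{W}\bigr]
\quad\text{and}\quad
\bigl|W-\pclip[B]{W}\bigr|=\trunc[B]{W}.
\]
Thus $|\Delta|\le|\wstar-w|+\En_\xi\trunc[B]{W(\xi;\bZ)}$. Applying Cauchy--Schwarz to $e^{q|\Delta|}$ separates the two terms, giving factors $\En e^{2q|\wstar-w|}$ and $\En e^{2q\En_\xi\trunc[B]{W}}$. Jensen in $\xi$ (convexity of $\exp$) moves $\En_\xi$ outside: $e^{2q\En_\xi \trunc[B]{W}}\le\En_\xi e^{2q\trunc[B]{W}}$. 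Squaring, and using Jensen again to absorb the square into a doubled exponent, yields the stated factors with $4q$. Bookkeeping the constants so that exactly $4q$ appears is where care is needed. In particular, one should handle both directions of the symmetrized divergence with the $e^{q|\Delta|}$ bound before squaring.

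For the sharper case $\wstar=w$, we exploit the sign: $\Delta=w-\what$ is a $\xi$-average of $W-\pclip[B]{W}$, which need not have a fixed sign. So we keep only $|\Delta|\le\En_\xi\trunc[B]{W}$ and avoid the Cauchy--Schwarz split, since there is a single term. We then bound each direction directly by $\En e^{2q|\Delta|}$ instead of the square. For the normalization factors, $(\En e^{-\Delta})^{q-1}\le(\En e^{|\Delta|})^{q-1}$ and $(\En e^{-\Delta})^{-q}\le \En e^{q|\Delta|}$. By Hölder, the product $\En e^{(q-1)|\Delta|}\cdot(\En e^{|\Delta|})^{q-1}$ is at most $\En e^{2q|\Delta|}$; the other direction is handled the same way. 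Finally, Jensen over $\xi$ gives $\En_{\bPstar,\xi}e^{2q\trunc[B]{W}}$.

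The main obstacle is the normalization-constant factor. The ratio $\d\bPstar/\d\bPhat$ involves $\En_{\bPstar}e^{-\Delta}$ raised to positive or negative powers, and it must be converted, via Jensen and Hölder, into moments under $\bPstar$ with the precise exponent claimed.
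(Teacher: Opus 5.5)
Your argument is correct and is essentially the paper's proof. The paper packages your normalization-constant bookkeeping into \cref{lem:Renyi-to-diff}, which gives $1+\Dsys[q]{\bPstar}{\bPhat}\le\En_{\bPstar}e^{2q|\wstar-\what|}$ (your bound $(\En_{\bPstar}e^{q|\wstar-\what|})^2$ implies this); it then uses data processing, the split $|\wstar-\what|\le|\wstar-w|+\En_\xi\trunc[B]{W(\xi;\bZ)}$, Cauchy--Schwarz and Jensen just as you do. One slip: the displayed ratio should read $\frac{\d\bPstar}{\d\bPhat}=e^{\wstar-\what}\,\En_{\bPstar}[e^{\what-\wstar}]$ (a product, not a quotient), which is the form your subsequent moment formulas actually use.
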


\begin{proof}
Applying \cref{lem:Renyi-to-diff} on path space with common base $\bQ$,
$f=-\wstar$, and $g=-\what$, and then applying data processing,
\begin{align}
    \Dsys*[q]{\pstar}{\phat}
    \leq \Dsys*[q]{\bPstar}{\bPhat}
    \leq \En_{\bPstar} \exp\prn*{2q\,\abs{\wstar(\bZ)-\what(\bZ)}}-1\,,
\end{align}
where we recall $\what(\bZ)\deq \En_{\xi\sim \Xi} \pclip[B]{W(\xi;\bZ)}$. Then, we can decompose
\begin{align}
    \abs{\wstar(\bZ)-\what(\bZ)}
    \leq&~ \abs{\wstar(\bZ)-w(\bZ)}+\abs{\En_{\xi\sim \Xi} \pclip[B]{W(\xi;\bZ)}-\En_{\xi\sim \Xi} W(\xi;\bZ)} \\
    \leq&~ \abs{\wstar(\bZ)-w(\bZ)}+\En_{\xi\sim \Xi} \trunc[B]{W(\xi;\bZ)}\,.
\end{align}
The desired upper bound then follows from Jensen's and the Cauchy--Schwarz inequalities. If $\wstar=w$, the same argument without Cauchy--Schwarz gives the sharper bound.
\end{proof}

\section{ULD without additional smoothness}\label{sec:ULA}

In this section, we describe a simple proposal which nevertheless suffices to establish Result 1.
This also elucidates the mechanism of the diffusion sampling approach, which will lay the foundation for the more complicated proposal in \cref{sec:ULA2}.

\subsection{Proposal}

Fix any starting point $z=(x,p)$.
Consider the following exponential Euler proposal with $X_0=x$ and $P_0=p$:
\begin{align}\label{eq:ULA-base}
    \d X_t=P_t\,\dt, \qquad \d P_t=-(\nabla V(X_0)+\gamma P_t)\,\dt+\sqrt{2\gamma}\,\dBt\,.
\end{align}
Let $\bQ\deq \bQ_z$ be the law of $\bZ=(X_t,P_t)_{t\in[0,T]}$. Let $\mu(x)=\frac{1}{\sqrt{2\gamma}}\,(\nabla V(x)-\nabla V(x_0))$. By Girsanov's theorem (under mild assumptions on $T$; see \cref{rmk:Girsanov}), there exists a probability distribution $\bP\deq \bP_z$ defined via
\begin{align}
    \frac{\d \bP}{\d \bQ}=\exp\prn[\bigg]{ -\int_{0}^T \tri{\mu(X_t),\,\dBt}-\frac12\int_0^T \nrm{\mu(X_t)}^2\,\dt  }\,,
\end{align}
such that the process $\Bbar_t=B_t+\int_0^t \mu(X_s)\,\d s$ is a Brownian motion under $\bP$, and the process $\bZ=(X_t,P_t)_{t\in[0,T]}$ satisfies
\begin{align}\label{eq:ULA-shift}
    \d X_t=P_t\,\dt, \qquad \d P_t=-(\nabla V(X_t)+\gamma P_t)\,\dt+\sqrt{2\gamma}\,\d \Bbar_t\,.
\end{align}

\scedit{According to the FORS framework, we need to design an unbiased estimator for $\log \frac{\d \bP}{\d \bQ}$. We can obtain an unbiased estimator for the second integral $\frac12\int_0^T \nrm{\mu(X_t)}^2\,\dt$ simply by evaluating the integrand at a random time in $[0,T]$, so we focus on the first term $I\deq \int_{0}^T \tri{\mu(X_t),\,\dBt}$.} Consider an integer $N\geq 1$ and let $h=\frac{T}{N}$. Note that
\begin{align}
    I\approx \sum_{j=0}^{N-1} \tri{\mu(X_{jh}), B_{(j+1)h}-B_{jh} }
    =&~\sum_{j=1}^{N} \tri{ B_{T}-B_{jh}, \mu(X_{jh})-\mu(X_{(j-1)h}) } \\
    =&~ \frac{T}{N}\,\sum_{j=1}^{N} \tri*{ B_{T}-B_{jh}, \frac{\nabla V(X_{jh})-\nabla V(X_{(j-1)h})}{\sqrt{2\gamma}\,h} }\,.
\end{align}
Then, we can consider the following estimator. In the FORS construction, take $\Xi=\Unif([0,T])$ and write its draw as $t$.
For any $t\in[(j-1)h,jh)$, we set
\begin{align}
    -W(t;\bZ)=\frac{T}{\sqrt{2\gamma}\,h}\,\tri{B_T-B_{jh}, \nabla V(X_{jh})-\nabla V(X_{(j-1)h})}+\frac{T}{2}\,\nrm{\mu(X_t)}^2\,.
\end{align}

With this estimator, FORS reweights the exponential Euler proposal using a discretized approximation to the exact ULD path law; the resulting error is analyzed below.

\subsection{Analysis}

Let $\T$ and $\That$ denote the exact and FORS-corrected transition kernels, respectively: $\T_z$ and $\That_z$ are the marginal distributions of $Z_T$ under $\bP_z$ and $\bPhat_z$.
We denote $R(z)\deq \nrm{p}^2+\beta^{-1}\,\nrm{\nabla V(x)}^2+d$ for $z=(x,p)$.
The key step of the analysis is to analyze the following R\'enyi error.

\begin{proposition}[R\'enyi errors of FORS]\label{prop:ULA-error-prop}
Suppose that $q\geq2$, $\delta\in(0,\frac12]$, $B\in[1,2]$, and $\gamma T\leq1$. Furthermore, suppose that
\begin{align}\label{eq:ULA-T-cond}
    T^3\leq \frac{\gamma}{C\beta^2 R(z)\,(q+\log(1/\delta))}\,, \qquad T^2\leq \frac{1}{C\beta\,(q+\log(1/\delta))}\,,
\end{align}
and 
\begin{align}\label{eq:ULA-N-cond}
    h=\frac{T}{N}\leq
    \frac{\delta\gamma}{Cq\beta R(z)\,\prn*{d^{-1/2}+\beta T^2}}\,.
\end{align}
Then, $\Dsys*[q]{\That_z}{\T_z}\leq \delta$.
\end{proposition}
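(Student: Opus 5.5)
We apply \cref{lem:FORS-error} with $\bPstar=\bP_z$, so $\wstar=\log\frac{\d\bP}{\d\bQ}$ up to normalization, and $w(\bZ)=\En_{t\sim\Unif[0,T]}W(t;\bZ)$. The plan is to show that each of the two exponential-moment factors in the first bound of \cref{lem:FORS-error} is at most $1+O(\delta)$. Since $B\in[1,2]$ is a constant, the second factor (the truncation term) must be controlled through the tails of $W$ rather than through its typical size. All expectations are taken under $\bP$, where $\Bbar$ is a Brownian motion and $\bZ$ follows the exact ULD~\eqref{eq:ULA-shift}. The natural first step is therefore to rewrite everything in terms of $\Bbar$, using $B_t=\Bbar_t-\int_0^t\mu(X_s)\,\d s$.

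\textbf{Step 1: the discretization error $\wstar-w$.} By the summation-by-parts identity, $-w(\bZ)=\sum_{j}\tri{\mu(X_{(j-1)h}),B_{jh}-B_{(j-1)h}}+\frac12\int_0^T\nrm{\mu(X_t)}^2\dt$. Hence
\begin{align}
\wstar-w=-\sum_j\int_{(j-1)h}^{jh}\tri{\mu(X_t)-\mu(X_{(j-1)h}),\dBt}\,.
\end{align}
Substituting $\dBt=\d\Bbar_t-\mu(X_t)\dt$ gives a $\bP$-martingale plus a drift term $\int\tri{\mu(X_t)-\mu(X_{\lfloor t\rfloor_h}),\mu(X_t)}\dt$. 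Along the exact ULD on the short window, $\nrm{X_t-X_s}\lesssim h\,\sup\nrm{P}$, and $\nrm{\mu(X_t)}\lesssim \beta\nrm{X_t-x}/\sqrt\gamma\lesssim \beta T\sup\nrm P/\sqrt\gamma$. For the martingale part we use the standard exponential-martingale trick: Cauchy--Schwarz against $\exp(c q^2\int\nrm{\cdot}^2)$. This reduces matters to exponential moments of $\frac{q^2\beta^2h^2}{\gamma}\int_0^T\nrm{P_t}^2\dt$ and of the drift, which is of size $\frac{q\beta^2 hT}{\gamma}\int\nrm{P_t}^2$. We then need a lemma: under \eqref{eq:ULA-T-cond} and $\gamma T\le1$, $\sup_{t\le T}\nrm{P_t}^2\lesssim R(z)+(q+\log(1/\delta))$ with sub-Gaussian tails. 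This should follow from Grönwall, since $\nrm{\nabla V(X_t)}\le\nrm{\nabla V(x)}+\beta\nrm{X_t-x}$ and $\beta T^2\ll1$. Plugging in and using \eqref{eq:ULA-N-cond} makes the first factor $1+O(\delta)$; the $d^{-1/2}$ in \eqref{eq:ULA-N-cond} accounts for fluctuation terms scaling like $\sqrt d$ instead of $d$.

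\textbf{Step 2: truncation tails of $W$.} We must show $\En\exp(4q\,\trunc[B]{W(t;\bZ)})\le1+O(\delta)$. Here $W$ consists of two parts. The first is $\frac{T}{2}\nrm{\mu(X_t)}^2\lesssim \frac{\beta^2T^3}{\gamma}\sup\nrm P^2$, which is $\ll 1/(q+\log(1/\delta))$ times $R(z)$-scale by the first condition in \eqref{eq:ULA-T-cond}. The second is $\frac{T}{\sqrt{2\gamma}h}\tri{B_T-B_{jh},\nabla V(X_{jh})-\nabla V(X_{(j-1)h})}$. Conditional on the increment, $B_T-B_{jh}$ is roughly Gaussian with variance $T$ (after again replacing $B$ by $\Bbar$ up to a small drift). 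The gradient difference has norm $\le\beta h\sup\nrm P$, so this term is Gaussian-like with scale $\sqrt{T^3\beta^2 R/\gamma}$, again $\ll(q+\log(1/\delta))^{-1/2}$ by \eqref{eq:ULA-T-cond}. The one subtlety is that $B_T-B_{jh}$ is future-dependent and correlated with $X_{jh}$ only through the past, so we should condition on $\cF_{jh}$. Thus $W$ is sub-Gaussian with variance proxy $\sigma^2\le c/(q+\log(1/\delta))$, so $\Pr(|W|>B)\le e^{-1/(2\sigma^2)}\ll\delta$. Integrating the tail beyond $B$ against $e^{4qu}$ then gives $O(\delta)$, again using $\sigma^2 q\ll1$. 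The second condition in \eqref{eq:ULA-T-cond} handles the Grönwall growth of $X$ and $P$.

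\textbf{Main obstacle.} The delicate part is Step 1's martingale exponential-moment bound. It requires uniform-in-time moment control of $\nrm{P_t}$ and $\nrm{X_t-x}$ under the exact dynamics, with only $R(z)$ and $q+\log(1/\delta)$ dependence. It also requires the right bookkeeping so that the error is linear in $h$, matching \eqref{eq:ULA-N-cond}, rather than $\sqrt h$. I would first try to bound the quadratic variation deterministically by $\sup\nrm P^2$ and then use the Gaussian concentration of $\sup_t\nrm{\Bbar_t}$ to close everything.
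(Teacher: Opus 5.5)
Your architecture is the paper's. It uses the Cauchy--Schwarz form of \cref{lem:FORS-error} and the identity $\wstar-w=-\int_0^T\tri{\mu(X_t)-\mu(X_{\lfloor t/h\rfloor h}),\dBt}$, split under $\bP$ into a $\Bbar$-martingale $M$ plus a drift (this is \cref{lem:ULA-bias}). It also uses a tail bound for $W$ obtained by conditioning on $\cF_{jh}$ (this is \cref{lem:ULA-clip}). The gap is in Step~1. You need $\En_{\bP}\exp(4q\abs{\wstar-w})\le1+O(\delta)$, and with an absolute value inside the exponential, ``exponential martingale plus Cauchy--Schwarz'' loses a multiplicative constant. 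Since $e^{\abs{x}}\le e^x+e^{-x}$, one only has $\En\exp(2\lambda\abs{M}-2\lambda^2\langle M\rangle_T)\le2$. Your reduction therefore yields $\En e^{\lambda\abs{M}}\le\sqrt2\,\bigl(\En e^{2\lambda^2\langle M\rangle_T}\bigr)^{1/2}$, which is never below $\sqrt2$ and cannot make the first factor $1+O(\delta)$. This is not a matter of constants. The quantity $\En e^{\lambda\abs{M}}-1\ge\lambda\En\abs{M}$ is of first order in $h$, whereas what you reduce to, the exponential moment of $q^2\beta^2h^2\int_0^T\nrm{P_t}^2\dt/\gamma$, is of second order in $h$. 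The lost constant hides exactly this first-order term, and that term is what the $d^{-1/2}$ part of \eqref{eq:ULA-N-cond} pays for. The drift term causes no trouble, since it is dominated by a nonnegative quantity whose exponential moment is $1+O(\delta)$.

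The missing step is to pay for the absolute value with an exponent close to one. The paper uses H\"older with a large exponent $k$. With $\mathcal R\deq\sup_{t\le T}\nrm{\Bbar_t}+(\nrm{p}+T\nrm{\nabla V(x)})/\sqrt{2\gamma}$, one has $\En e^{\lambda\abs{M}}\le2^{1/k}\bigl(\En e^{C\lambda^2kh^2\beta^2T\mathcal R^2}\bigr)^{1-1/k}$. Choosing $k\asymp1/(\lambda h\beta T\sqrt d)$ balances the cost $\log2/k$ against the Brownian part $\lambda^2kh^2\beta^2T^2d$ of the exponent, leaving $\log\En e^{\lambda\abs{M}}\lesssim\lambda h\beta R(z)/(\gamma\sqrt d)$. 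A simpler fix also works. By $e^x-1\le xe^x$ and Cauchy--Schwarz, $\En e^{\lambda\abs{M}}-1\le\lambda(\En\langle M\rangle_T)^{1/2}(\En e^{2\lambda\abs{M}})^{1/2}$, and the last factor is $O(1)$ by your own exponential-martingale bound. For $\lambda\asymp q$ this is $\lesssim qh\beta\sqrt{TR(z)/\gamma}$, which \eqref{eq:ULA-N-cond} makes at most $\delta/C$ because $\gamma Td\le R(z)$.

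A smaller correction to Step~2: $W$ is not sub-Gaussian. The Brownian part of $\sup_t\nrm{P_t}$ enters both $\frac{T}{2}\nrm{\mu(X_t)}^2$ and the coefficient of $\Bbar_T-\Bbar_{jh}$. This produces squares and products of Gaussians whose tails are only exponential, at scale $\asymp\beta T^2$. Hence $\En e^{\lambda\abs{W}}$ is controlled only for $\lambda\lesssim1/(\beta T^2)$. It is here, not in the Gr\"onwall step, that the full strength of the second condition in \eqref{eq:ULA-T-cond} is used. With sub-Gaussian and sub-exponential parameters both $\lesssim1/(q+\log(1/\delta))$, your tail integration then goes through.
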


This allows us to establish the following guarantees, which provide $\widetilde O(d^{1/3})$ query complexity bounds in the strongly convex and smooth setting.

\scedit{
\begin{theorem}[High-accuracy simulation of ULD]\label{thm:ULA-main}
Suppose that $V$ is convex and $\beta$-smooth. Let $\delta\in(0,\frac12]$ and $p\in(0,1]$. The following guarantees hold.
\begin{enumerate}[label=\textup{(\roman*)}]
    \item Suppose that $\mu_V$ satisfies PI with constant $1/\alpha$, and let $\gamma\asymp\sqrt\alpha$ and $\kappa\deq \beta/\alpha$. Set $\Delta_2\deq d+\Ren[2]{\nu}{\target}$.
    If $K\in\mathbb N$ satisfies
    \begin{align}
        K\gtrsim\frac{1}{T\gamma}\,\bigl(\Ren[2]{\nu}{\target}+\log(1/\delta)\bigr) \qquad\text{and}\qquad
        \frac{1}{T\gamma}
        \gg \kappa^{2/3}\Delta_2^{1/3}\log^{2/3}(K/\delta)\,,
    \end{align}
    then the output distribution $\nuhat$ of FORS satisfies
    \begin{align}
        \Ren*[3/2]{\nuhat}{\target}\leq\delta\,.
    \end{align}

    \item Suppose that $\mu_V$ satisfies LSI with constant $1/\alpha$, let $\gamma\asymp\sqrt\alpha$ and $\kappa\deq\beta/\alpha$, and let $q\geq2$. Set $\Delta_q\deq d+\Ren[2q]{\nu}{\target}$.
    If $K\in\mathbb N$ satisfies
    \begin{align}
        K\gtrsim\frac{1}{T\gamma}\log\Bigl(\frac{q\Delta_q}{\delta}\Bigr)\qquad\text{and}\qquad
        \frac{1}{T\gamma}
        \gg \kappa^{2/3}\Delta_q^{1/3}\,\bigl(q+\log(Kq/\delta)\bigr)^{2/3}\,,
    \end{align}
    then the output distribution $\nuhat$ of FORS satisfies
    \begin{align}
        \Ren*[q]{\nuhat}{\target}\leq\delta\,.
    \end{align}
\end{enumerate}
In both cases, with probability at least $1-p$, the number of first-order queries is bounded by $O(K+\log(1/p))$, with a universal implicit constant.
\end{theorem}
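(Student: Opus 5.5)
The plan is to combine the one-step error bound of \cref{prop:ULA-error-prop} with the continuous-time contraction of \cref{thm:ST-PI} (case (i)) or \cref{thm:ST-LSI} (case (ii)). A standard perturbation argument in R\'enyi divergence then controls the $K$-step composition of $\That$ against $\T$.

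\textbf{Step 1: controlling $R(z)$ along the trajectory.} \cref{prop:ULA-error-prop} requires a bound on $R(z)=\nrm{p}^2+\beta^{-1}\nrm{\nabla V(x)}^2+d$ at the current iterate, which is random. I would introduce a threshold $\Rbar\asymp \Delta\log(K/\delta)$, with $\Delta=\Delta_2$ or $\Delta_q$, and show that under $\target$ one has $\Pr_\target(R(Z)>\Rbar)\le \delta/K$. This uses Gaussian concentration of $\nrm{p}^2$ together with $\En_{\mu_V}\nrm{\nabla V}^2= \En\Delta V\le \beta d$. For the sub-exponential tail of $\nrm{\nabla V(x)}^2/\beta$, I would use Lipschitz concentration under PI/LSI, since $x\mapsto \nrm{\nabla V(x)}$ is $\beta$-Lipschitz. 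Then I transfer the bound from $\target$ to the law of the $k$-th iterate. The cheap route is a change of measure through a R\'enyi bound, $\Pr_{\nu_k}(A)\le (\Pr_\target(A))^{(q-1)/q}\exp(\tfrac{q-1}{q}\Ren[q]{\nu_k}{\target})$. This is where $\Ren[2]{\nu}{\target}$ (resp.\ $\Ren[2q]{\nu}{\target}$) enters $\Delta$: the iterates stay within $O(\Ren{\nu}{\target}+1)$ of $\target$ in R\'enyi, by data processing for $\T$ and induction for $\That$.

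\textbf{Step 2: parameter choice.} With $R(z)\le\Rbar$, the conditions \eqref{eq:ULA-T-cond} read $T^3\lesssim \gamma/(\beta^2\Rbar(q+\log(K/\delta)))$, using per-step accuracy $\delta/K$. Writing $T=\eta/\gamma$ with $\gamma^2\asymp\alpha$ gives $\eta^3\lesssim \kappa^{-2}/(\Rbar\,\cdot\log)$, i.e.\ exactly $1/(T\gamma)\gg\kappa^{2/3}\Delta^{1/3}(q+\log(Kq/\delta))^{2/3}$ after absorbing the log in $\Rbar$. The second condition $T^2\beta\lesssim 1/\log$ is weaker. I would then set $h$ via \eqref{eq:ULA-N-cond}; $N$ only affects the discretization inside the estimator, not the query count. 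Each FORS call with $B\in[1,2]$ uses $O(1)$ queries in expectation by \cref{thm:fors}, and a sum of $K$ such geometric-type counts concentrates. This gives $O(K+\log(1/p))$ queries with probability $1-p$.

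\textbf{Step 3: composition.} This is the main obstacle. Let $\nu_k=\That^k\nu$ and $\nu_k^\star=\T^k\nu$. I would use a weak triangle inequality for R\'enyi divergence, splitting $\Ren[q]{\nu_K}{\target}$ into $\Ren[2q]{\nu_K}{\nu^\star_K}$ plus $\Ren[2q]{\nu_K^\star}{\target}$. The second term is $\le\delta$ by \cref{thm:ST-LSI} given $K\gtrsim (T\gamma)^{-1}\log(q\Delta_q/\delta)$. For the first term, the per-step kernel error $\sup_{z:R(z)\le\Rbar}\Dsys[q]{\That_z}{\T_z}\le\delta/K$ combines with the convexity/data-processing chain rule $\Ren[q]{\mu\That}{\mu'\T}\le \Ren[q]{\mu}{\mu'}+\sup_z\Ren[q]{\That_z}{\T_z}$ (with a slight order loss). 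Summing over $K$ steps gives $O(\delta)$. The bad set $\{R>\Rbar\}$ is handled by coupling: with probability $\le\delta$ in total the chain visits it, and I would stop or modify the analysis there, bounding its effect by mixing with a small-mass event. Order inflation from repeated triangle inequalities is the delicate part, which is why case (i) settles for $\Ren[3/2]{}{}$ via $\chi^2$ contraction (\cref{thm:ST-PI}, using $\Ren[2]{}{}=\log(1+\Dchis{}{})$) and case (ii) pays $\Ren[2q]{\nu}{\target}$ in $\Delta_q$. If the chain rule loses too much order, I would instead apply $\Dsys$ directly on the joint path law of all $K$ steps, with a Cauchy--Schwarz splitting.
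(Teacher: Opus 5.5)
Your architecture matches the paper's: one-step control from \cref{prop:ULA-error-prop} on $\cE=\{R\le\Rbar\}$, the same parameter calculus, and a weak triangle inequality against \cref{thm:ST-LSI} at order $2q$ (or \cref{thm:ST-PI} with orders $3$ and $2$ in case (i)). There are, however, two gaps.

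The first is local but affects the rate. Lipschitz concentration of $x\mapsto\nrm{\nabla V(x)}$ under LSI (resp.\ PI) only gives fluctuations of size $\beta\alpha^{-1/2}\sqrt{\log(K/\delta)}$ (resp.\ $\beta\alpha^{-1/2}\log(K/\delta)$). So $\beta^{-1}\nrm{\nabla V}^2$ concentrates only at level $d+\kappa\log(K/\delta)$ (resp.\ $d+\kappa\log^2(K/\delta)$). Your $\Rbar$ then carries an additive $\kappa$-term, and $(T\gamma)^{-3}\gtrsim\kappa^2\Rbar\log$ forces $(T\gamma)^{-1}\gtrsim\kappa$ up to logs. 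This destroys the $\kappa^{2/3}$ rate whenever $\kappa\gtrsim d$. The right tool uses no isoperimetry. For any $\beta$-smooth $V$, integration by parts shows that $\nabla V(X)$, $X\sim\mu_V$, is $\sqrt\beta$-sub-Gaussian (\cref{lem:ULA-MGF}). Hence $\beta^{-1}\nrm{\nabla V}^2$ has $\chi^2_d$-type tails, and $\Rbar\lesssim d+\Ren[2]{\nu}{\target}+\log(K/\delta)$ after the change of measure.

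The second gap is the main one: your Step 3 handling of the bad set does not work in R\'enyi divergence. Coupling only yields TV bounds. Splitting the untruncated output as $(1-\epsilon)\nu^{G}+\epsilon\nu^{B}$ leaves a term $\epsilon^{q}\En_{\nu_K^\star}[(\d\nu^{B}/\d\nu_K^\star)^{q}]$. That term is small only if $\nu^B$ is itself controlled relative to $\nu_K^\star$. But $\nu^B$ is produced by steps from points where the clipped reweighting no longer corrects the proposal, so no such bound is available.

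Averaging the one-step error over $z$ instead (as in \cref{lem:Renyi-kernel}) does not rescue the rate either. With the exponential-in-$R(z)$ bounds of \cref{lem:ULA-clip}, this needs $K$-th exponential moments and forces roughly $K\beta^2T^3/\gamma\lesssim1$, i.e.\ $(T\gamma)^{-1}\gtrsim\kappa$. Your induction keeping $\That^k\nu$ warm has the same flaw, since it presupposes per-step control on the whole support.

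Stopping the chain is the right instinct, but it must be an algorithmic change with the output taken conditionally on success. After each step, test whether the new point lies in $\cE$; this is computable from $p$ and $\nabla V(x)$, which is needed anyway. Declare failure and restart otherwise, as in \cref{thm:trunc-alg}. The chain rule
\begin{align}
\Ren[q]{\mu\That}{\mu'\T}\le\Ren[q]{\mu}{\mu'}+\sup_{z\in\mathrm{supp}\,\mu}\Ren[q]{\That_z}{\T_z}
\end{align}
involves only the support of the first argument, and it has no order loss. So one-step errors are needed only on $\cE$, and leaving $\cE$ costs just $\frac{r}{r-1}\log(1/\nubar_k(\cE))$ per step, a conditioning cost rather than a small-mass mixture. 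An inductive warmness bound for the truncated chain keeps $\nubar_k(\cE)$ close to $1$ via your tail transfer. Restarting preserves the output law and adds only $O(1)$ expected overhead to the query count.
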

}

\screplace{\matt{We should track $q$ dependence if we are going to bound R\'enyi-$q$; otherwise we might as well say R\'enyi-$3$.}}{}

The proof is presented in \cref{app:ULA}.

The exponential Euler proposal, although simple, is fundamentally limited; see \cref{app:exponential-euler-limitation}.
This motivates the more refined proposal developed in \cref{sec:ULA2}.

\section{ULD with additional smoothness}\label{sec:ULA2}

In this section, we describe a refined proposal which better takes advantage of higher-order smoothness, allowing us to achieve Result 2.

\subsection{Proposal}

\scedit{The proposal in \cref{sec:ULA} is based on replacing the non-linear term $\nabla V(X_t)$ in~\eqref{eq:ULD} with $\nabla V(X_0)$. A higher-order approximation is obtained by replacing $\nabla V(X_t)$ with, say, $\nabla V(X_0) + \nabla^2 V(X_0)\,(X_t - X_0)$. A downside of this proposal is that implementing it requires expensive matrix operations (e.g., matrix exponentials) involving $\nabla^2 V(X_0)$. Thus, we present an alternative higher-order proposal which is inspired by Picard iteration.}

Fix an integer $L\geq 1$ and a starting point $z=(x,p)$.
Let $\bQ$ be the distribution of the following linearized SDE:
\begin{align}\label{eq:ULA2-iterative}
    &\d X^\ell_t=P^\ell_t\,\dt, \qquad \d P^\ell_t=-(\nabla V(X_0)+\nabla^2 V(X_0)(X^{\ell-1}_t-X_0)+\gamma P^\ell_t)\,\dt+\sqrt{2\gamma}\,\dBt\,,
\end{align}
with $(X^0_t,P^0_t)\equiv(x,0)$, $X^\ell_0=x$, $P^\ell_0=p$ for $\ell\geq 1$, and $(X_t,P_t)=(X^L_t,P^L_t)$. Note that \cref{eq:ULA-base} corresponds to $L=1$.

\Cref{app:refined-uld-implementation} gives an explicit implementation of the refined ULD proposal which only uses the generalized first-order oracle (\cref{def:gen_first_order}).

\paragraph{FORS instantiation}
Let
\begin{align}
    \mu_t=\frac{1}{\sqrt{2\gamma}}\,\brk*{ \nabla V(X_t)-\nabla V(X_0)-\nabla^2 V(X_0)\,(X^{L-1}_t-X_0) }\,.
\end{align}
Then, the ULD path measure $\bP$ satisfies
\begin{align}
    \frac{\d \bP}{\d \bQ}=\exp\prn[\bigg]{ -\int_{0}^T \tri{\mu_t,\,\dBt}-\frac12\int_0^T \nrm{\mu_t}^2\,\dt  }\,,
\end{align}
such that the process $\Bbar_t=B_t+\int_0^t \mu_s\,\d s$ is a Brownian motion under $\bP$.

We write the stochastic integral term as
\begin{align}
    I\deq \int_{0}^T \tri{\mu_t,\,\dBt}
    =\int_{0}^T \tri{\partial_t \mu_t, B_T-B_t}\,\dt\,,
\end{align}
where 
\begin{align}
    \partial_t \mu_t=\frac{1}{\sqrt{2\gamma}}\,\brk*{ \nabla^2 V(X_t)\,P_t-\nabla^2 V(X_0)\,P^{L-1}_t }\,.
\end{align}
Therefore, we can use the following estimator:
\begin{align}
    W(t;\bZ)\deq -T\,\tri{\partial_t \mu_t, B_T-B_t}-\frac{T}{2}\,\nrm{\mu_t}^2\,.
\end{align}
The Girsanov density is justified by an argument similar to 
\cref{rmk:Girsanov}.

\subsection{Analysis}

We introduce the following standard assumption~\citep{zhang2026algorithmic}.
\begin{assumption}[Lipschitz Hessian under Frobenius norm]\label{asmp:Lip-Hess}
$V$ is three times continuously differentiable, and there is a constant $\betaF>0$ such that, for all $x,y\in\RR^d$,
\begin{align}
    \nrmF{\nabla^2 V(x)-\nabla^2 V(y)}\leq \betaF\,\nrm{x-y}\,.
\end{align}
Equivalently, the third derivative tensor satisfies $\nrm{\nabla^3 V(x)}_{\{1,2\},\{3\}}\leq \betaF$ for any $x\in\R^d$, where
\begin{align}
    \nrm{T}_{\{1,2\},\{3\}}
    \deq \sup_{\nrmF{A}\leq 1,\,\nrm{u}\leq 1}\sum_{i,j,k=1}^d T_{ijk}A_{ij}u_k\,.
\end{align}
\end{assumption}
We define $\kappaF=\CLSI\betaF^{2/3}$. 
Similar to prior analyses such as~\citet{zhang2026algorithmic}, we first present an error bound under a $\Delta$-warm initialization.
This is due to the use of change-of-measure arguments in order to apply the refined probabilistic tail inequalities implied by~\cref{asmp:Lip-Hess}.

\begin{theorem}[Simulation of ULD under higher-order smoothness]\label{thm:ULA2-warm}
Suppose that $V$ is convex and $\beta$-smooth, that $\pi$ satisfies PI with constant $1/\alpha$ and $\gamma \asymp \sqrt\alpha$, and that \cref{asmp:Lip-Hess} holds. Suppose that initial distribution $\nu$ satisfies $\Ren[2]{\nu}{\target}\leq \Delta$, $q\geq 2$ is a constant, $K,L\in\mathbb N$, $U\geq 1$, $\delta\in(0,\frac12]$, and $\Delta\geq 1$. Define $\iota=\log(K\kappa d\Delta/\delta)$ and $\wt{d}=d+\Delta$. If $K\leq \frac{U}{\gamma T}$ and
\begin{align}\label{eq:ULA2-T-cond-warm}
    \frac{1}{T\gamma}\gg&~ \kappaF^{3/5}\,(d^{1/5}+\Delta^{2/5})\,\iota^{1/3}+\kappa^{\frac{1}{2}+\frac{1}{4L-1}}\,(\wt{d}\,\iota^3)^{\frac{1}{4L-1}} +\kappa^{\frac12}\,\iota^{1/2} \\
    &~+\kappa^{\frac{3}{11}}\kappaF^{\frac{3}{11}}\,\wt{d}^{\frac{2}{11}}\,\iota^{\frac{1}{11}}+\kappa^{\frac13}\kappaF^{\frac12}\,\wt{d}^{\frac16}\,U^{\frac13}\iota
    +\min\crl*{\kappa^{\frac35}\wt{d}^{\frac15}\iota^{\frac25},\,\kappa^{\frac17}\kappaF^{\frac37}\prn{\wt{d}\Delta}^{\frac17}\iota^{\frac27}}\,,
\end{align}
then running FORS as in \cref{thm:trunc-alg} for $K$ iterations gives a distribution $\nuhat$ with $\Ren*[q]{\nuhat}{\cPt{KT}\nu}\leq \delta$.
\end{theorem}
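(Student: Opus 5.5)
We will compare the FORS chain with the exact ULD chain one block at a time and then aggregate the one-step errors along the trajectory. Write $\nu_k\deq(\cPt{T})^k\nu$ for the exact iterates and $\nuhat_k\deq \That^k\nu$ for the FORS iterates. A weak triangle inequality for R\'enyi divergences (or a direct coupling on the joint path space of all $K$ blocks) reduces the claim to controlling $\E_{z\sim\nu_k}[\Dsys*[q']{\That_z}{\T_z}]$ for each $k<K$, at a slightly larger order $q'$. The per-block errors then sum: we need each to be at most $\delta/K$ in the averaged sense. Under the $\bP$-path measures started from $\nu_k$, the fact that $\nu_k$ stays $\Delta$-warm is what lets us change measure to $\pi$. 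By \cref{thm:ST-PI} and data processing, $\Ren[2]{\nu_k}{\pi}\lesssim\Delta$ for every $k$. Hence any event or exponential moment bounded under $\pi$-stationary ULD paths transfers, via H\"older, to $\nu_k$-started paths at the cost of a factor $e^{O(\Delta)}$ in probabilities, or an additive $\Delta$ in logarithms. This is the source of $\wt d=d+\Delta$ and of the $\Delta^{2/5}$ term.

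The per-block bound follows the pattern of \cref{prop:ULA-error-prop}, using \cref{lem:FORS-error} with $\wstar=w$, since the estimator $W$ is exactly unbiased for the Girsanov log-density after the integration by parts. So it suffices to bound $\E\exp(2q\,\trunc[B]{W(t;\bZ)})$ under $\bP_z$. The main steps are the following.

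\textbf{(a) Picard contraction.} Show that $X^\ell-X$ contracts geometrically: $\sup_{t\le T}\nrm{X^{L}_t-X^{L-1}_t}\lesssim (\beta T^2)^{L}\times(\text{size of }X-x)$. As a result, the Hessian-linearization error in $\mu_t$ and $\partial_t\mu_t$ splits into two pieces. The first is a Taylor remainder, controlled by $\betaF$ through $\nabla^2V(X_t)-\nabla^2V(X_0)$ acting on $P_t$ and on $X_t-X_0$. The second is a Picard remainder of order $\beta(\beta T^2)^{L-1}$ times the path displacement.

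\textbf{(b) Sub-Gaussian control.} Bound $T\tri{\partial_t\mu_t,B_T-B_t}$ using that $B_T-B_t$ is independent Gaussian noise of variance $(T-t)$ given the past. The estimator is then sub-Gaussian with variance proxy $T^3\nrm{\partial_t\mu_t}^2/\gamma$. The Frobenius assumption gives a dimension-improved bound: $\nrm{(\nabla^2V(X_t)-\nabla^2V(X_0))P_t}$ with $P_t$ nearly Gaussian is of order $\betaF\nrm{X_t-X_0}$ rather than $\betaF\sqrt d\,\nrm{X_t-X_0}$. This uses Hanson--Wright-type concentration at stationarity, transferred via warmness. This is what yields the $\kappaF^{3/5}d^{1/5}$ scaling.

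\textbf{(c) Case balancing.} Balance the remaining contributions to get the conditions in \eqref{eq:ULA2-T-cond-warm}. These are the Picard term (giving the $\kappa^{1/2+1/(4L-1)}$ term), the $\nrm{\mu_t}^2$ term, and cross terms between $\beta$ and $\betaF$ (giving the $\kappa^{3/11}\kappaF^{3/11}$ and $\kappa^{1/3}\kappaF^{1/2}$ terms). The $U^{1/3}$ factor comes from a union bound over all $K\le U/(\gamma T)$ blocks, which is needed for the high-probability trajectory events. The final $\min\{\cdot\}$ term reflects two alternative ways of bounding $\nrm{\nabla V(X_0)}$-dependent pieces: crudely through $\beta$, or through $\betaF$ together with warmness.

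The main obstacle is step (b) combined with the change of measure. The tail bounds using \cref{asmp:Lip-Hess} hold only for near-stationary paths, so the trajectory must be split into a good event, whose probability under $\bP_{\nu_k}$ is controlled via $\Ren[2]{\nu_k}{\pi}\le\Delta$, and its complement. The trouble is that the FORS lemma requires exponential moments rather than probabilities. My first attempt would be to define a truncated estimator that sets $W$ to its Euler-level (\cref{sec:ULA}) bound off the good event, presumably what \cref{thm:trunc-alg} does. I would control the bad-event contribution by Cauchy--Schwarz against the crude exponential moment bounds of \cref{prop:ULA-error-prop}. This is where the $\iota$ logarithmic factors enter.
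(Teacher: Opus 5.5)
The gap is in your first reduction. Summing first-moment, per-block errors $\En_{z\sim\nu_k}[\Dsys[q']{\That_z}{\T_z}]$ at a slightly larger order $q'$ does not control the R\'enyi divergence here. The reason is that $a(z)\deq1+\Dren[q]{\That_z}{\T_z}$ is not uniformly bounded in $z$. Every available bound deteriorates as $R(z)\to\infty$, and the refined bound also depends on $z$ through the term $R_E(z)$ in the proof of \cref{prop:ULA2-clip-full}, which has only an $O(1)$ exponential moment under a warm law.

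Iterating the weak triangle inequality leaves two options. Either the order of the accumulated divergence inflates by a factor $2/\eps$ at each block, or, with the $(1+\eps)$-inflation placed on the accumulated term and $\eps\asymp1/K$, the one-block term must be controlled at order $\asymp qK$. Through \cref{lem:ULA2-clip}, the latter puts a factor $K^2$ in front of $T^3\nrm{\partial_t\mu_t}^2$ in the exponent. Then already the constraint $r\lesssim1/(\beta^2T)$ of \cref{prop:ULA2-clip-full} forces $1/(\gamma T)\gtrsim\kappa$ once $K\asymp1/(\gamma T)$. A joint path-space coupling does not rescue this either: conditioning on the last block integrates $a(Z^{K-1})$ against a tilted path law, not against $\nu_{K-1}$.

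What is needed is \cref{lem:Renyi-kernel} with $\eps\asymp1/K$, as packaged in \cref{thm:trunc-alg}. This is a \emph{$K$-th moment} bound $\En_{z\sim\nu'}\bigl(1+\Dren[q]{\That_z}{\T_z}\bigr)^{5K}\le e^{\delta}$ for every $\nu'$ supported on $\cE=\{R(z)\le\Rbar\}$ with $\Ren[1.5]{\nu'}{\target}\le2\Delta$. Here the $\nu'$ are the conditioned \emph{approximate} iterates, whose warmness \cref{thm:trunc-alg} propagates by induction. This $K$-th moment is exactly the source of the $\kappa^{1/3}\kappaF^{1/2}\wt{d}^{1/6}U^{1/3}\iota$ term. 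Raising $\exp\bigl(Cr(G+M_1R_E(z)+\cdots)\bigr)$ to the $K$-th power and averaging over $\nu'$ requires $rKM_1\lesssim1$ with $M_1=\gamma^{-1}\betaF\beta T\sqrt{\Rbar}$. This is the $K$ in the first hypothesis of \cref{cor:ULA2-Renyi-onestep}, and $K\le U/(\gamma T)$ turns it into that term. A union bound over blocks costs only $\log K$, so it cannot produce a polynomial factor $U^{1/3}$.

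Your handling of non-stationarity also misreads \cref{thm:trunc-alg}. That theorem does not alter the estimator; it truncates the \emph{state space}. Each block starts in $\cE$, the chain declares failure on exit, and conditioning is paid for through $1-\target(\cE)\le(p/64K)^3e^{-2\Delta}$. This is what makes $R(z)\le\Rbar$ deterministic in the one-block analysis.

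A truncated estimator that switches $W$ off a trajectory-level good event could not be run. Such events (bounds on $\max_{t\le T}\nrm{\Bbar_t}$, on $\int_0^T\nrm{\nabla^3V(X_t)[P_t,P_t]}\,\d t$, on stochastic integrals) depend on the whole path, while FORS samples only $O(1)$ coordinates. It would also change the tilt. In the paper these events enter only the analysis, where high-probability bounds at every level are integrated into an exponential moment of $\max_t\nrm{\partial_t\mu_t}^2$.

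Finally, in (b) you cannot apply Hanson--Wright to $(\nabla^2V(X_t)-\nabla^2V(X_0))P_t$ as if $P_t$ were independent noise. Since $X_t-X_0\approx tP_t$, its leading part is the chaos $t\,\nabla^3V[P,P]$. The paper applies It\^o's formula to $Y_t=(H_t-H_0)P_t$, and recursively to $(H_t-H_0)\nabla V(X_t)$ and $(H_t-H_0)H_tP_t$. This way $\nabla^3V(X_s)[P_s,P_s]$, $\nabla^3V(X_s)[P_s,H_sP_s]$ and $\nabla^3V(X_s)[P_s,\nabla V(X_s)]$ are evaluated at single times, where $X_s\perp P_s$ under $\target$ (\cref{lem:Gaussian-chaos-chis}). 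The martingale remainders are handled by \cref{lem:Ito-Freedman}. Note also that under $\bP_z$ it is $\Bbar$, not $B$, that is Brownian, so $B_T-B_t$ carries a drift correction.
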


Note that due to the $\Delta^{2/5}$ term, the overall dimension dependence is $d^{1/5}$ only under a ``crude'' warm start $\Delta = O(\sqrt d)$.
Similarly to~\citet{zhang2026algorithmic}, we are able to show that FORS itself can produce this crude warm start (see~\cref{prop:ULA2-warm-gen}).
A two-stage analysis then yields the following query complexity bound starting from a general initial distribution with $O(d\log \kappa)$ R\'enyi divergence, which can be achieved in the strongly log-concave case with a suitable Gaussian initialization.

\begin{theorem}[Simulation of ULD from a cold start]\label{thm:ULA2-cold}
Suppose that $V$ is convex and $\beta$-smooth, that $\pi$ satisfies LSI with constant $1/\alpha$ and $\gamma \asymp \sqrt\alpha$, and that \cref{asmp:Lip-Hess} holds. Let $\Ren[w]{\nu}{\target}=O(d\log \kappa)$ for a constant $w\geq \frac{5}{2}$. Then, using
\begin{align}
    \wt{O}\Bigl(&\bigl(\kappa^{1/2}+\kappaF^{3/5}d^{1/5}
    +\kappa^{1/3}\,(\kappaF+\kappaF^{3/11})\,d^{2/11}\bigr)\log^2(1/\delta)\Bigr)
\end{align}
queries, we can generate a sample from a distribution $\nuhat$ with $\Ren*[w-1/2]{\nuhat}{\target}\leq \delta^2$.
\end{theorem}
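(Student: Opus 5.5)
The argument runs in two stages, mirroring the warm-start strategy of \citet{zhang2026algorithmic}: a first stage produces a crude warm start with $\Delta=O(\sqrt d)$, and a second stage applies \cref{thm:ULA2-warm} from that start. Throughout, the choice $\gamma\asymp\sqrt\alpha$ makes the number of FORS iterations $K\asymp (T\gamma)^{-1}\log(\cdot)$.

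\textbf{Stage 1.} Start from $\nu$ with $\Ren[w]{\nu}{\target}=O(d\log\kappa)$ and invoke the warm-start generation result (\cref{prop:ULA2-warm-gen}). I expect this proposition to work with the cold-start R\'enyi bound directly, since the initial law already has controlled moments. It should run FORS with a more conservative step so that $1/(T\gamma)$ carries the $\wt d=O(d\log\kappa)$ terms. Its conclusion should be a law $\nu_1$ with $\Ren[2]{\nu_1}{\target}\lesssim\sqrt d$ (up to logs) after $K_1\asymp (T_1\gamma)^{-1}\log(d\log\kappa)$ iterations.

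Accelerated R\'enyi contraction (\cref{thm:ST-LSI}) needs only a logarithmic number of epochs of length $\asymp1/\gamma$ to bring $d\log\kappa$ down to $\sqrt d$. So the cost of Stage 1 is $\wt O(1/(T_1\gamma))$. I would check that, with $\Delta$ replaced by $d\log\kappa$, the terms of \eqref{eq:ULA2-T-cond-warm} other than $\Delta^{2/5}$ still fit inside the claimed budget. Equivalently, I would check that the proposition's step size only needs to reach accuracy $\sqrt d$, not $\delta$. The weakening of the R\'enyi order from $w$ to $w-1/2$ enters here: we compose the FORS error $\Ren*[q]{\nuhat}{\cPt{KT}\nu}$ with the exact-dynamics bound via a weak triangle inequality for R\'enyi divergences, which loses a constant factor in the order.

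\textbf{Stage 2.} Apply \cref{thm:ULA2-warm} from $\nu_1$ with $\Delta\asymp\sqrt d\,\polylog$ and $\wt d\asymp d$. Take $K\asymp (T\gamma)^{-1}\log(d/\delta)$ and $U\asymp\log(d/\delta)$, so that $K\le U/(\gamma T)$ holds. Then simplify each term of \eqref{eq:ULA2-T-cond-warm}:
\begin{itemize}
\item $\Delta^{2/5}\asymp d^{1/5}$, giving the $\kappaF^{3/5}d^{1/5}$ term.
\item Choosing $L\asymp\log(\kappa d)$ makes $\kappa^{1/(4L-1)}(\wt d\iota^3)^{1/(4L-1)}=O(1)$, so the second term becomes $\kappa^{1/2}$.
\item The $\kappa^{3/11}\kappaF^{3/11}d^{2/11}$ and $\kappa^{1/3}\kappaF^{1/2}d^{1/6}$ terms are dominated by $\kappa^{1/3}(\kappaF+\kappaF^{3/11})d^{2/11}$. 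For this, use $\kappa\ge1$ and compare $\kappaF^{1/2}\le\kappaF+\kappaF^{3/11}$.
\item For the final $\min$, take the second branch: $\kappa^{1/7}\kappaF^{3/7}d^{3/14}$. Bound it by weighted AM--GM against the other terms. This is where I expect the main obstacle, since the exponent bookkeeping must land exactly within $\kappa^{1/2}+\kappaF^{3/5}d^{1/5}+\kappa^{1/3}(\kappaF+\kappaF^{3/11})d^{2/11}$. If this fails, use the first branch $\kappa^{3/5}d^{1/5}$ in the complementary regime.
\end{itemize}
Multiplying by $K$ and the per-iteration $O(L)$ query cost gives the bound up to $\log^2(1/\delta)$ factors. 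One factor comes from $K$ and one from $\iota$ and $U$.

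\textbf{Conclusion.} Combine $\Ren*[q]{\nuhat}{\cPt{KT}\nu_1}\le\delta^2$ (applying the theorem with accuracy $\delta^2$, which only changes logarithms) with $\Ren{\cPt{KT}\nu_1}{\target}\le\delta^2$ from \cref{thm:ST-LSI}. Then use the R\'enyi weak triangle inequality with $q$ chosen large in terms of $w$ to get $\Ren*[w-1/2]{\nuhat}{\target}\lesssim\delta^2$, absorbing constants into $\delta$. Beyond the AM--GM bookkeeping, the other delicate point is that Stage 1's output must satisfy the warmness hypothesis in Rényi-$2$ while the final order $w-1/2$ is inherited through the triangle inequality. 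So the orders must be threaded consistently through both stages.
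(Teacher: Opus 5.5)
Your two-stage architecture matches the paper's: a warm start from \cref{prop:ULA2-warm-gen}, then \cref{thm:ULA2-warm}, then a weak triangle inequality. However, the Stage~2 bookkeeping does not close with the warm start you posit.

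\textbf{The warmness is too weak.} If Stage~1 only yields $\Delta\asymp\sqrt d$, then $\wt d\,\Delta\asymp d^{3/2}$. Up to logarithms, the last term of \eqref{eq:ULA2-T-cond-warm} becomes $\min\{\kappa^{3/5}d^{1/5},\,\kappa^{1/7}\kappaF^{3/7}d^{3/14}\}$. Take $\kappa\asymp1$ and $\kappaF=d^{-1/30}$. Both branches then equal $d^{1/5}$, whereas $\kappa^{1/2}+\kappaF^{3/5}d^{1/5}+\kappa^{1/3}(\kappaF+\kappaF^{3/11})d^{2/11}\asymp d^{9/50}$. So neither weighted AM--GM nor switching branches rescues this step. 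The remark after \cref{thm:ULA2-warm} about a crude $O(\sqrt d)$ warm start only concerns the $\Delta^{2/5}$ term, not the $\min$ term.

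\textbf{The missing ingredient.} You need a quantitative bound on the Stage~1 output as a function of $M=1/(\gamma T)$. The warmness floor is not set by the contraction in \cref{thm:ST-LSI}. It is set by the per-step bias of the proposal kernel accumulated over $K\asymp M\Lambda$ steps. In \cref{prop:ULA2-warm-gen}, Stage~1 runs the uncorrected kernel $\Tbar$, controlled via \cref{cor:ULA2-Renyi-onestep-crude}, and uses the same $M$ as Stage~2; no more conservative step is needed, since the proposition's condition costs only $\kappa^{1/3}(\kappaF+\kappaF^{3/10})d^{1/6}$. From this the paper extracts two bounds:
\begin{itemize}
\item the coarse bound $D_\star\lesssim\iota^3(d^{1/3}+\sqrt\kappa)$, which handles every term of \eqref{eq:ULA2-T-cond-warm} outside the $\min$;
\item the sharper \eqref{eq:ULA2-cold-Dstar}, where $D_\star-1$ decays like $\kappaF^3d/M^4+\kappa\kappaF^3d/M^6+\kappa^3\kappaF^3d^2/M^{10}+\kappa^2/M^3$.
\end{itemize}
The sharper bound is substituted into the second branch using $((d+D_\star)D_\star)^{1/7}\le(dD_\star)^{1/7}+D_\star^{2/7}$. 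Each resulting power of $M$ is then balanced, giving the list \eqref{eq:ULA2-cold-balances}, all of which are $\lesssim A_1+A_2+A_3$. Because $D_\star$ depends on $M$ while the condition constrains $M$, this self-consistent balancing is the real content of the proof, and your plan does not contain it.

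\textbf{Order threading.} You state the Stage~1 output in R\'enyi-$2$. But $w-\frac12\ge2$, so the final composition needs control of the warm start at an order strictly above $w-\frac12$. The paper keeps order $s=w-\frac14$ from Stage~1 by taking $c=\frac14$ in \cref{prop:ULA2-warm-gen}. It then:
\begin{itemize}
\item sets $q=w-\frac12$ and applies \cref{thm:ULA2-warm} at the large order $r=q(s-1)/(s-q)=4(w-\frac12)(w-\frac54)$;
\item uses \cref{thm:ST-LSI} at order $s$;
\item concludes with \cref{lem:Renyi-composition-ULA}, whose prefactor is at most $2$.
\end{itemize}
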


The proof is presented in \cref{appdx:ULA2-cold}.

We remark that by using this result to implement the RGO in the proximal sampler and using the convergence under PI in~\citet{Chen+22ProxSampler}, it would yield a high-accuracy sampler in the PI setting with dimension dependence $d^{6/5}$, although we omit the details for brevity.

With this approach, it also seems possible to further improve the dimension dependence under additional smoothness, namely bounds on $\nabla^k V$ ($k\geq 3$). We leave this as an open question.

\section{Mirror Langevin}\label{sec:mirror}

Consider a mirror map $\phi: \mc X \to\RR\cup\crl{\infty}$. \sredit{Assume that $\phi$ is Legendre, so that $\nabla\phi$ and $\nabla\phi^\star$ are inverse maps on the interiors of their domains.} \screplace{Let $\phi^\star$ be the dual and $\cT=\nabla \phi^\star$.}{Let $\phi^\star$ be the dual, set $\mc X^\star\deq\operatorname{int}(\operatorname{dom}\phi^\star)$, and let $\cT=\nabla \phi^\star:\mc X^\star\to\mc X$.} We let $\norm{\cdot}_x$ denote the $\nabla^2 \phi(x)$-weighted norm at $x$, and $\norm{\cdot}_{x, *}$ be the dual norm, i.e., the $(\nabla^2 \phi(x))^{-1}$ weighted norm.

The mirror Langevin diffusion can be written as follows:
\begin{align}\label{eq:mirror}
    \d Y_t=-\nabla V(\cT(Y_t))\,\dt+\sqrt{2}\,\nabla \cT(Y_t)\isq\,\dBt\,.
\end{align}
\sredit{Throughout this section, $Y_t$ denotes the dual process, $X_t\deq\cT(Y_t)$ denotes its primal image, and $\nu_t\deq\operatorname{Law}(X_t)$. All divergences from $\pi$ are taken between primal laws.}

\subsection{Proposal}

Let $\bP$ be the measure corresponding to the process~\eqref{eq:mirror} above. As the proposal, we consider the measure $\bQ$ which describes the law of
\begin{align}
    \d Y_t=\sqrt{2}\,\nabla \cT(Y_t)\isq\,\dBt\,.
\end{align}
This proposal is the so-called mirror Brownian motion, and the assumption that it can be exactly simulated was used in~\citet{ahn2021efficient}. We will adopt this assumption in the present work.

Fix a step length $h>0$ and regard both path measures over the interval $[0,h]$.
Now, define $\mu_t = \frac{1}{\sqrt{2}}\,\nabla \cT(Y_t)^{1/2}\,\nabla V(\cT(Y_t))$. By Girsanov's theorem, we have
\begin{align}
    \frac{\d \bP}{\d \bQ}=\screplace{\exp\prn[\bigg]{ -\int_{0}^T \tri{\mu_t,\,\dBt}-\frac12\int_0^T \nrm{\mu_t}^2\,\dt  }}{\exp\prn[\bigg]{ -\int_{0}^h \tri{\mu_t,\,\dBt}-\frac12\int_0^h \nrm{\mu_t}^2\,\dt  }}\,.
\end{align}
To produce an estimator for $\int_{0}^{\screplace{T}{h}} \tri{\mu_t,\,\dBt}$, first consider the generator of the proposal:
\begin{align}
    (\msf L f)(y) \deq \tr\bigl(\nabla \cT(y)^{-1}\,\nabla^2 f(y)\bigr)\,.
\end{align}
We apply It\^o's formula to $f(Y)=V(\cT(Y))$:
\begin{align}
    \d f(Y_t)=&~ 2\inner{\mu_t, \,\d B_t} + \msf L f(Y_t) \, \d t\,.
\end{align}
Thus,
\begin{align}
    \int_0^h \inner{\mu_t, \, \d B_t} = \frac{1}{2}\,\Bigl( f(Y_h) - f(Y_0) - \int_0^h \msf L f(Y_t)\, \d t \Bigr)\,.
\end{align}
The natural estimator for the Girsanov tilt factor is to take $t \sim \operatorname{unif}[0, h]$ and then use
\begin{align}
    W(t, Y) = -\frac{1}{2}\,\bigl( f(Y_h) - f(Y_0)\bigr) + \frac{h}{2}\,\bigl(\msf L f(Y_t) - \norm{\mu_t}^2 \bigr)\,.
\end{align}
It is clearly unbiased. We will need to show that it has a bounded subexponential norm (which should scale with $d$ even when $T$ is the identity, because $\abs{\Delta V}\leq d\beta$). By the assumptions and argument of \citet{ahn2021efficient}, we should be able to show that $\abs{\tri{ \nabla^2 f(Y_t), \nabla \cT(Y_t)^{-1} }}\leq O(d)$, which in turn implies that \screplace{$\abs{f(Y_T)-f(Y_0)} \leq O(Td)$}{$\abs{f(Y_h)-f(Y_0)} \leq O(hd)$} in subexponential norm. It then suffices to choose \screplace{$T\ll \frac{1}{d}$}{$h\ll \frac{1}{d}$} to ensure the high-accuracy guarantee of FORS.

The generator term causes Hessians of $f$ to appear. If we do not want a Hessian oracle, we can apply the finite-difference machinery as with \screplace{ULA}{ULD}: we aim to simulate an ultrafine discretization of \screplace{ULA}{ULD}, replacing \screplace{ULA}{ULD} here with the standard mirror Langevin discretization.

\subsection{Analysis}

\begin{assumption}[Ahn--Chewi]\label{as:ahn-chewi}
    Suppose that $\phi$ is $M_\phi$-self-concordant: \sredit{ for all $x\in\mc X$ and $u,v,w\in\R^d$,}
    \begin{align}
        \sredit{\abs{\nabla^3\phi(x)[u,v,w]}
        \leq 2M_\phi\norm{u}_{x}\norm{v}_{x}\norm{w}_{x}.}
    \end{align}
    Furthermore, assume $V$ is convex and $\beta$-smooth relative to $\phi$, and $L$-relatively Lipschitz, \sredit{meaning in particular that
    \begin{align}
        \sup_{x\in\mc X}\norm{\nabla V(x)}_{x,*}\leq L.
    \end{align}}
\end{assumption}

We now quote the following continuous-time result from~\cite{chewi2020exponential}.
\begin{lemma}[{Adapted from~\cite[Theorem 1]{chewi2020exponential}}]\label{lem:mirror-chi2}
    If, in addition to convexity, we assume that $V$ is $\alpha$-strongly convex relative to $\phi$, \sredit{i.e., $\nabla^2V\succeq\alpha\nabla^2\phi$,} and if \sredit{$Y_t$ solves~\eqref{eq:mirror} and $\nu_t=\operatorname{Law}(\cT(Y_t))$}, then
    \begin{align}
        \screplace{\chi^2(\mu_t \, \lVert\, \pi)}{\Dchis{\nu_t}{\pi}} \leq e^{-2\alpha t} \screplace{\chi^2(\mu_0 \, \lVert\, \pi)}{\Dchis{\nu_0}{\pi}}\,.
    \end{align}
\end{lemma}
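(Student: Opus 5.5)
The plan is the standard Dirichlet-form argument: identify the primal process $X_t=\cT(Y_t)$ as a Markov diffusion on $\mc X$ that is reversible with respect to $\pi$, with a weighted Dirichlet form; prove a matching \emph{mirror Poincar\'e inequality} for $\pi$ from relative strong convexity; and finish with Gr\"onwall.

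\textbf{Step 1 (primal generator).} Apply It\^o's formula to $X_t=\nabla\phi^\star(Y_t)$ under~\eqref{eq:mirror}, using $\nabla\cT(y)=\nabla^2\phi^\star(y)=(\nabla^2\phi(x))^{-1}$ with $x=\cT(y)$. The diffusion matrix of $X_t$ is then $2\,(\nabla^2\phi(X_t))^{-1}$. The drift is $-(\nabla^2\phi)^{-1}\nabla V$ plus an It\^o correction involving $\nabla^3\phi^\star$. I will check that this correction equals $\nabla\cdot\bigl((\nabla^2\phi)^{-1}\bigr)$, where the divergence is taken row-wise. Granting this, the generator of $X_t$ is
\begin{align}
    \msf L_\pi g \deq \frac{1}{\pi}\,\nabla\cdot\bigl(\pi\,(\nabla^2\phi)^{-1}\nabla g\bigr)\,,
\end{align}
which is symmetric in $L^2(\pi)$ with Dirichlet form $\cE(g,g)=\En_\pi\bigl[\tri{\nabla g,(\nabla^2\phi)^{-1}\nabla g}\bigr]$. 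This identification, together with the regularity needed to integrate by parts on $\mc X$ without boundary terms, is the main technical obstacle. I would handle it by working with smooth, compactly supported densities and using the Legendre property of $\phi$, in particular that $\nabla\phi$ is a bijection from $\mc X$ onto $\mc X^\star$, so the dual process does not reach the boundary. A density argument then removes the smoothness assumptions. This is the content of \cite{chewi2020exponential}, which I would follow.

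\textbf{Step 2 (entropy dissipation).} Set $\rho_t\deq \d\nu_t/\d\pi$. By reversibility, $\partial_t\rho_t=\msf L_\pi\rho_t$, and hence
\begin{align}
    \frac{\d}{\dt}\Dchis{\nu_t}{\pi}=\frac{\d}{\dt}\En_\pi[\rho_t^2]=2\,\En_\pi[\rho_t\,\msf L_\pi\rho_t]=-2\,\cE(\rho_t,\rho_t)\,.
\end{align}

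\textbf{Step 3 (mirror Poincar\'e).} Relative strong convexity gives $\nabla^2V\succeq\alpha\nabla^2\phi\succ0$. The Brascamp--Lieb inequality for the strictly log-concave $\pi\propto e^{-V}$ then gives, for all test functions $g$,
\begin{align}
    \Var_\pi(g)\leq \En_\pi\bigl[\tri{\nabla g,(\nabla^2V)^{-1}\nabla g}\bigr]\leq \frac{1}{\alpha}\,\En_\pi\bigl[\tri{\nabla g,(\nabla^2\phi)^{-1}\nabla g}\bigr]=\frac{1}{\alpha}\,\cE(g,g)\,.
\end{align}
The second inequality uses operator monotonicity of matrix inversion: $(\nabla^2V)^{-1}\preceq\alpha^{-1}(\nabla^2\phi)^{-1}$.

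\textbf{Step 4 (Gr\"onwall).} Apply Step 3 with $g=\rho_t$, noting that $\Var_\pi(\rho_t)=\Dchis{\nu_t}{\pi}$. Combined with Step 2, this gives
\begin{align}
    \frac{\d}{\dt}\Dchis{\nu_t}{\pi}\leq -2\alpha\,\Dchis{\nu_t}{\pi}\,,
\end{align}
and Gr\"onwall's inequality yields $\Dchis{\nu_t}{\pi}\leq e^{-2\alpha t}\,\Dchis{\nu_0}{\pi}$.
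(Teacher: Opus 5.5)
Your proposal is correct and follows the same route as the paper, which simply invokes the $\chi^2$ dissipation calculation of \citet[Theorem 1]{chewi2020exponential} and concludes with Gr\"onwall's inequality. You have unpacked that citation, and your two checks are exactly the ingredients behind that theorem: the It\^o correction equals the row-wise divergence of $(\nabla^2\phi)^{-1}$, and Brascamp--Lieb combined with $(\nabla^2V)^{-1}\preceq\alpha^{-1}(\nabla^2\phi)^{-1}$ gives the mirror Poincar\'e inequality with constant $1/\alpha$.
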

\scedit{
\begin{proof}
This is \citet[Theorem 1]{chewi2020exponential}, specialized to the relative strong convexity parameter $\alpha$. Their dissipation calculation gives
\begin{align}
    \frac{\d}{\d t}\Dchis*{\nu_t}{\pi}\leq-2\alpha\Dchis{\nu_t}{\pi}\,.
\end{align}
The claim follows from Gr\"onwall's inequality.
\end{proof}
}
As remarked in~\cite{ahn2021efficient} and subsequent works, one might hope for exponential convergence in R\'enyi divergences via an analogue of the log-Sobolev inequality. Unfortunately, such an inequality will not hold generally, even under the strong relative convexity assumptions.

An alternative convergence bound controls the \screplace{$\msf{KL}$}{$\Dklshort$} divergence in terms of the initial Bregman divergence, defined as follows.
\begin{definition}
    The $\phi$-Bregman divergence between two points is given by
    \begin{align}
        D_\phi(x \mmid y) = \phi(x) - \phi(y) - \screplace{\inner{\nabla f(y), x-y}}{\inner{\nabla \phi(y), x-y}}\,,
    \end{align}
    \screplace{\sccomment{The linear term in the $\phi$-Bregman divergence is $\inner{\nabla\phi(y),x-y}$, not $\inner{\nabla f(y),x-y}$. With $\nabla f(y)$ the object defined here is not $D_\phi$ and the identity $\nabla_yD_\phi(x\mmid y)=-\nabla^2\phi(y)(x-y)$ used in the proof need not hold.}}{}
    and the $\phi$-Bregman transport cost is
    \begin{align}
        \Breg_\phi(\mu \mmid \nu) = \inf_{\gamma \in \Gamma(\mu, \nu)} \E_{(x,y) \sim \gamma} D_\phi(x \mmid y)\,,
    \end{align}
    where $\Gamma(\mu, \nu)$ is as usual the set of admissible couplings between the two measures.
\end{definition}

We now state the following infinitesimal version of~\cite[]{ahn2021efficient}.
\begin{lemma}\label{lem:mirror-breg}
    Suppose that \screplace{$\mu_t$}{$\nu_t$} is the law of an iterate following the mirror Langevin diffusion in continuous time, that $\pi \propto \exp(-V)$ is the stationary measure in primal coordinates, and that $V$ is $\alpha$-strongly convex relative to $\phi$. Then,
    \begin{align}
        \screplace{\msf{KL}(\mu_t \mmid \pi)}{\Dkl{\nu_t}{\pi}} \leq -\partial_t \Breg_\phi(\pi \mmid \screplace{\mu_t}{\nu_t}) - \alpha \Breg_\phi(\pi \mmid \screplace{\mu_t}{\nu_t})\,,
    \end{align}
    for all $t \geq 0$. Integrating, we obtain the bounds
    \begin{align}
        \screplace{\msf{KL}(\mu_t \mmid \pi)}{\Dkl{\nu_T}{\pi}} \leq \frac{\Breg_\phi(\pi \mmid \screplace{\mu_0}{\nu_0})}{T}\,, \text{if $\alpha = 0$,} \qquad\screplace{\msf{KL}(\mu_t \mmid \pi)}{\Dkl{\nu_T}{\pi}} \leq \frac{\alpha \Breg_\phi(\pi \mmid \screplace{\mu_0}{\nu_0})}{e^{\alpha T} - 1}\,, \text{if $\alpha > 0$.}
    \end{align}
\end{lemma}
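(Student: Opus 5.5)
We will follow the continuous-time analogue of the one-step argument of \citet{ahn2021efficient}. The plan is to couple a sample of $\pi$ with the mirror Langevin iterate, compute the time derivative of the Bregman transport cost along the flow, and then integrate.

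\textbf{Step 1: coupling and time derivative.} Fix $t$ and let $\gamma_t$ be an optimal coupling of $(\pi,\nu_t)$ for $\Breg_\phi(\pi\mmid\nu_t)$. Since $\pi$ is stationary, we draw $(X^\pi,Y_t)$ from this coupling and keep $X^\pi$ fixed while $Y$ evolves by~\eqref{eq:mirror}. This gives the upper Dini derivative bound
\begin{align}
    \partial_t^+\Breg_\phi(\pi\mmid\nu_t)\le \frac{\d}{\d s}\Big|_{s=t}\E\,D_\phi(X^\pi\mmid \cT(Y_s))\,.
\end{align}
We write $D_\phi(x\mmid \cT(y))=\phi(x)-\langle y,x\rangle+\phi^\star(y)$, using $\nabla\phi(\cT(y))=y$. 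Applying It\^o's formula in the dual variable, with $\nabla_y[\cdot]=\cT(y)-x$ and $\nabla_y^2[\cdot]=\nabla\cT(y)$, gives
\begin{align}
    \frac{\d}{\d s}\E D_\phi = -\E\langle \nabla V(X_s), X_s - X^\pi\rangle + \E\,\tr\bigl(\nabla\cT(Y_s)^{1/2}\,\nabla\cT(Y_s)\,\nabla\cT(Y_s)^{1/2}\bigr)\cdots
\end{align}
The diffusion coefficient is $\sqrt2\,\nabla\cT^{-1/2}$, so the It\^o correction is $\tr(\nabla\cT^{-1}\nabla\cT)=d$. The drift contributes $-\E\langle\nabla V(X_s),X_s-X^\pi\rangle$.

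\textbf{Step 2: strong convexity.} Relative $\alpha$-strong convexity gives
\begin{align}
    V(X^\pi)\ge V(X_s)+\langle\nabla V(X_s),X^\pi-X_s\rangle+\alpha D_\phi(X^\pi\mmid X_s)\,.
\end{align}
Hence the drift term is at most $\E V(X^\pi)-\E V(X_s)-\alpha\,\E D_\phi$. It remains to absorb the constant $d$ into an entropy term. This is the main obstacle: the Bregman/It\^o computation yields the potential-energy part of $\Dkl{\nu_t}{\pi}$, but not the entropy part.

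\textbf{Step 3: handling the entropy.} First we try an evolution-variational-inequality route. The term $d$ comes from the Brownian part, which is exactly the heat-like flow of the proposal $\bQ$ in dual coordinates. We therefore aim to show that $-\mathcal H(\nu_t)+\mathcal H(\pi)$, where $\mathcal H$ is the negative entropy in primal coordinates, is bounded by the Brownian contribution to $-\partial_t\Breg_\phi$. The tool is displacement convexity of entropy along the mirror/Hessian geometry, as in Ahn--Chewi's analysis of the diffusion step. If $\phi$ is self-concordant, the diffusion part alone should give $\partial_t\Breg_\phi\le \mathcal H(\pi)-\mathcal H(\nu_t)$ infinitesimally. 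Combined with Step 2, this yields
\begin{align}
    \partial_t\Breg_\phi(\pi\mmid\nu_t)\le -\bigl(\E_{\nu_t}V+\mathcal H(\nu_t)-\E_\pi V-\mathcal H(\pi)\bigr)-\alpha\Breg_\phi(\pi\mmid\nu_t)\,,
\end{align}
and the bracket equals $\Dkl{\nu_t}{\pi}$. If this pathwise route fails, we fall back to taking the limit $h\to0$ in Ahn--Chewi's discrete one-step inequality, with their forward (gradient) step and backward (mirror Brownian) step, after dividing by $h$.

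\textbf{Step 4: integration.} Set $b_t=\Breg_\phi(\pi\mmid\nu_t)$. When $\alpha>0$, multiply the inequality by $e^{\alpha t}$, which gives $e^{\alpha t}\Dkl{\nu_t}{\pi}\le-\partial_t(e^{\alpha t}b_t)$. Integrate over $[0,T]$, drop $b_T\ge0$, and use that $t\mapsto\Dkl{\nu_t}{\pi}$ is nonincreasing because KL decreases along the diffusion. This gives $\Dkl{\nu_T}{\pi}\,(e^{\alpha T}-1)/\alpha\le b_0$. That matches the stated bound after rearranging, reading the constant as $\alpha b_0/(e^{\alpha T}-1)$. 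The case $\alpha=0$ is the same computation without the weight.
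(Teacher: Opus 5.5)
There is a genuine gap, and it is exactly the ``main obstacle'' you flagged. Your Step~1 coupling freezes $X^\pi$ and moves $Y$ by the SDE. It gives a valid but too-weak bound,
\[
\partial_t^+\Breg_\phi(\pi\mmid\nu_t)\le \E\tri{\nabla V(X_t),X^\pi-X_t}+d\,.
\]
The It\^o correction $d$ comes from a Brownian increment that is independent of $X^\pi$, so it equals $d$ no matter which coupling you use.

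No entropy inequality can absorb this term. At stationarity $\nu_t=\pi$, the optimal coupling is the identity and the drift term vanishes, so your bound reads $\partial_t\Breg_\phi\le d$. The lemma requires $\partial_t\Breg_\phi\le 0$ there. Hence the assertion in Step~3 that the diffusion part ``should give'' $\partial_t\Breg_\phi\le\mathcal H(\pi)-\mathcal H(\nu_t)$ cannot come out of the coupling you set up. That assertion is also the whole content of the lemma.

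The fallback (letting $h\to0$ in Ahn--Chewi's discrete inequality) is only named. It would need convergence of their splitting scheme to the diffusion and uniform control of the per-step error, and neither is supplied.

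The missing idea, which is what the paper does, is to differentiate along deterministic characteristics instead of along the SDE. In primal coordinates, $\nu_t$ solves $\partial_t\nu_t+\nabla\cdot(\nu_tv_t)=0$ with $v_t=-[\nabla^2\phi]^{-1}(\nabla V+\nabla\log\nu_t)$. Transporting $y\sim\nu_t$ along $v_t$ while holding $x\sim\pi$ fixed is again an admissible coupling. Since $\nabla_yD_\phi(x\mmid y)=-\nabla^2\phi(y)(x-y)$, the metric factors cancel:
\[
\partial_t\Breg_\phi(\pi\mmid\nu_t)\le\E\tri{\nabla V(y)+\nabla\log\nu_t(y),\,x-y}\,,
\]
with no It\^o term.

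Your Step~2 handles the $\nabla V$ part. The score term is now paired with $x$ through the Bregman-optimal coupling. Ahn--Chewi's generalized displacement convexity of entropy (their Theorem~4: $\msf{Ent}(\pi)\ge\msf{Ent}(\nu)+\E\tri{\nabla\log\nu(y),x-y}$ for a $\Breg_\phi$-optimal pair) bounds it by $\msf{Ent}(\pi)-\msf{Ent}(\nu_t)$.

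The contrast with your route is instructive. If $x$ were independent of $y$, integration by parts would give $\E\tri{\nabla\log\nu_t(y),x-y}=d$, which is exactly your It\^o correction. All the gain comes from the correlation between $x$ and the score, which an independent Brownian increment cannot see.

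Your Step~4 integration (weight $e^{\alpha t}$, drop $\Breg_\phi(\pi\mmid\nu_T)\ge0$, use monotonicity of KL) is correct and matches the paper.
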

The proof is presented in \cref{appdx:proof-mirror-breg}.
Putting this together, we obtain our final result for this section.
\begin{theorem}\label{thm:mirror}
    Suppose that \cref{as:ahn-chewi} holds and that, additionally, $V$ is $\alpha$-strongly convex relative to $\phi$. \scedit{Set $\bar d\deq (\beta + 2LM_\phi)\,d+L^2$, and let $\widehat\nu$ denote the output of the FORS-corrected mirror Langevin process started at $\nu_0$.} \sredit{Assume also that the null mirror diffusion can be sampled at every finite collection of requested times and that the quantities within the estimator $W$ can be evaluated exactly. If $\alpha>0$,} then, with probability $1-\zeta$, the query complexity required to obtain \screplace{$\msf R_{3/2}(\widehat \mu \mmid \pi)$}{$\Ren*[3/2]{\widehat\nu}{\pi}$} $\leq \varepsilon^2$ is
    \begin{align}
        N_{\operatorname{query}} = \widetilde O\Bigl(\frac{\bar d}{\alpha}\,\operatorname{polylog}(\varepsilon^{-2}, \screplace{\chi^2(\mu_0 \mmid \pi)}{\Dchis{\nu_0}{\pi}}, \zeta^{-1}) \Bigr)\,.
    \end{align}

    Alternatively, if $\alpha>0$, a measure \screplace{$\widehat \mu$}{$\widehat\nu$} obtained using a procedure similar to the one above satisfies \screplace{$\msf{TV}^2(\widehat \mu \mmid \pi)$}{$\Dtv*{\widehat\nu}{\pi}^2$} $\leq \varepsilon^2$, with query complexity bounded, with probability $1-\zeta$, by
    \begin{align}
        N_{\operatorname{query}} = \widetilde O\Bigl( \frac{\bar d}{\alpha}\,\operatorname{polylog}(\varepsilon^{-2}, \Breg_\phi(\pi \mmid \screplace{\mu_0}{\nu_0}), \zeta^{-1})\Bigr)\,,
    \end{align}
    where $\bar d$ is defined as above. If $\alpha \equiv 0$, then the complexity is instead linear in the initialization:
    \begin{align}
        N_{\operatorname{query}} = \widetilde O\Bigl( \sredit{\frac{\bar d\,\Breg_\phi(\pi \mmid \nu_0)}{\varepsilon^2}}\,\operatorname{polylog}(\zeta^{-1})\Bigr)\,,
    \end{align}

    The FORS parameters---the step size, clipping parameter, and number of iterations---are given in the proof below.
\end{theorem}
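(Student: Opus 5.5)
The plan follows the template of \cref{thm:ULA-main}: first bound the one-step error of the FORS-corrected mirror Langevin kernel, then chain the steps and combine with the continuous-time convergence results.

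\textbf{One-step error.} Here the estimator is exactly unbiased, $\En_t W(t,Y)=\log\frac{\d\bP}{\d\bQ}$, so the sharper bound of \cref{lem:FORS-error} applies. It then suffices to show that $\En_{\bP}\exp(2q\,\trunc[B]{W})\le 1+\delta$ once $h\ll 1/(\bar d\log(1/\delta))$ and $B\asymp 1$.

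We bound the three terms of $W$ separately.
\begin{itemize}
\item[(a)] Write $\mathsf Lf=\tr(\nabla^2\phi(x)^{-1}\nabla^2 V(x))+(\text{term from }\nabla^2\cT)$. Relative smoothness bounds the first part by $\beta d$. Using $\nabla\cT=(\nabla^2\phi)^{-1}$ and self-concordance, the second part is at most $2M_\phi\,d\,\norm{\nabla V}_{x,*}\le 2LM_\phi d$. Hence $\abs{\mathsf Lf}\le(\beta+2LM_\phi)d$ deterministically.
\item[(b)] $\norm{\mu_t}^2=\frac12\norm{\nabla V(X_t)}_{X_t,*}^2\le L^2/2$.
\item[(c)] Under $\bP$, Itô gives $f(Y_h)-f(Y_0)=-\int_0^h\norm{\nabla V}_{*}^2\,\d t+\int_0^h\mathsf Lf\,\d t+\sqrt2\int_0^h\inner{\nabla\cT^{1/2}\nabla V,\d\Bbar_t}$. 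The martingale has quadratic variation at most $2L^2h$, so it is sub-Gaussian with variance proxy $L^2h$ by the exponential martingale inequality.
\end{itemize}
Together these give $\abs{W}\lesssim h\bar d+L\sqrt h\,\abs{G}$ with $G$ sub-Gaussian. When $h\bar d\ll1$, $\trunc[B]{W}$ is nonzero only on a tail event of probability $\exp(-c/(L^2h))$, which is at most $\delta$ once $h\lesssim 1/(\bar d\log(q/\delta))$ (recall $L^2\le\bar d$). Integrating the tail then yields $\En\exp(2q\trunc[B]{W})\le1+\delta$.

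\textbf{Chaining and query count.} Each FORS step costs $O(Be^{2B})=O(1)$ queries with high probability by \cref{thm:fors}, and over $K$ steps we take a union bound with $\log(K/\zeta)$. For the Rényi statement, the one-step $\Dsys[q]{}{}$ errors combine across $K$ steps by the weak triangle inequality / composition for Rényi divergence. The order loss $q\to 3/2$ is paid once with accuracy $\delta\asymp\varepsilon^2/K$. We combine this with \cref{lem:mirror-chi2} at time $Kh\asymp\alpha^{-1}\log(\Dchis{\nu_0}{\pi}/\varepsilon^2)$, using $\Ren[2]{}{}=\log(1+\chi^2)$ and monotonicity in order. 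This gives $K\asymp\bar d\alpha^{-1}\polylog$.

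For the TV statements we use \cref{lem:mirror-breg} with Pinsker. The horizon is $T\asymp\alpha^{-1}\log(\Breg_\phi(\pi\mmid\nu_0)/\varepsilon^2)$ when $\alpha>0$, and $T\asymp\Breg_\phi(\pi\mmid\nu_0)/\varepsilon^2$ when $\alpha=0$. The TV errors of the steps add, since one-step TV is controlled by the Rényi bound, and with $\delta=\varepsilon/K$ this only costs logs.

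\textbf{Main obstacle.} The hard part is justifying (a) and the Girsanov change of measure on the constrained domain. This requires showing that the third-derivative term of $\cT$ contracted with $\nabla V$ is controlled by $M_\phi L d$ via self-concordance (following \citet{ahn2021efficient}), and that $Y_t$ stays in $\mc X^\star$ so that Novikov holds. Novikov is immediate from the bounded $\norm{\mu_t}$.
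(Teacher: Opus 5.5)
Your proposal follows the paper's proof essentially step for step: a deterministic bound on the drift terms of $W$ plus a sub-Gaussian bound on its martingale term, uniform in the starting point (the content of \cref{lem:mirror-subexp,lem:mirror-one-step-clipping}); then fixed-order composition of the uniformly close dual kernels (\cref{lem:mirror-kernel-composition}), the weak triangle inequality against \cref{lem:mirror-chi2}, and Pinsker together with \cref{lem:mirror-breg} for TV. The uniformity of your bounds (a)--(c) is what lets the composition lemma pay the order loss only once, which iterating the weak triangle inequality would not. One small correction: Pinsker applied to a one-step R\'enyi bound $\delta$ only gives one-step TV of order $\sqrt\delta$, so with additive TV chaining you need $\delta\asymp\varepsilon^2/K^2$ rather than $\varepsilon/K$ (alternatively, as the paper does, compose in R\'enyi first and apply Pinsker once with $\delta\asymp\varepsilon^2/K$); since $h$ depends on $\delta$ only logarithmically, the stated complexities are unaffected.
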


The proof is presented in \cref{appdx:proof-mirror-FORS}. 
We briefly compare against the result of~\citet{SriWibWil24MAMLA}, which obtains a polylogarithmic $\varepsilon$-dependence under similar assumptions based on a simple Metropolization of a mirror Langevin discretization. They do not assume access to an exact simulator for the null potential mirror diffusion. Their dimension dependence is $d^3$, whereas we improve this to near-linear in our setting. Finally, their $\varepsilon$-dependence remains polylogarithmic in the $\alpha = 0$ setting, whereas we incur polynomial dependence.

\section{Fisher information bounds for non-log-concave sampling}\label{sec:fisher}

The complexity of \emph{non-convex} optimization is often measured in terms of the number of queries to reach an approximate stationary point. Analogously, as put forth in~\citet{Bal+22NonLogConcave}, we can view a measure $\mu$ with small $\FI{\mu}{\pi}$ as a notion of approximate first-order stationarity for sampling, and thereby develop a quantitative theory for non-log-concave sampling.

Recently, the work of~\cite{chewi2026complexity} reduced the problem of obtaining Fisher information guarantees to high-accuracy log-concave sampling.
Namely, they relied on the following assumption.

\begin{assumption}\label{as:alg-exist}
    There exists a sampling algorithm $\mathtt{Alg}$ with the following property: for any $1$-strongly log-concave and \sredit{$3$-log-smooth} distribution $\pi$ with mode at the origin and any \sredit{tolerance $\delta>0$}, the algorithm returns a sample with distribution $\widehat \pi$ such that \sredit{$\Ren*[3]{\widehat\pi}{\pi}\leq\delta^2$}.
\end{assumption}

They also assumed access to a proximal oracle for $f$.

\begin{definition}
    The proximal oracle for a function $f$ with step size $h$ outputs the minimizer:
    \begin{align}
        \operatorname{prox}_{f,h}(x) = \argmin_{y \in \mathbb R^d}{\bigl\{ f(y) + \frac{1}{2h}\norm{y-x}^2\bigr\}}\,.
    \end{align}
\end{definition}
When $f$ is $\beta$-smooth and we take $h = \frac{1}{2\beta}$, then the computation of $\operatorname{prox}_{f,h}$ is a strongly convex optimization problem, so this is not a restrictive assumption.

They established the following theorem.

\begin{theorem}[{Adapted from~\citet[Theorem 2]{chewi2026complexity}}]\label{thm:FI-reduction}
    Suppose that the target distribution $\pi \propto \exp(-V)$ is $\beta$-smooth. Suppose that we have access to an initial distribution $\mu_0$ with $\msf{KL}_0 = \Dkl{\mu_0}{\pi}$ and $\Delta_0 = \sredit{\Ren[2]{\mu_0}{\pi}}$. Under \cref{as:alg-exist}, there exists an algorithm that returns a sample from $\widehat \pi$ with \sredit{$\FI{\widehat\pi}{\pi}\leq\varepsilon^2$}, using $O(\beta\, \msf{KL}_0/\varepsilon^2)$ queries each to the proximal oracle with step size $\frac{1}{2\beta}$ and to $\mathtt{Alg}$. \sredit{Each call to $\mathtt{Alg}$ is made with a tolerance $\delta_{\mathrm{RGO}}$ satisfying
    \begin{align}
        \delta_{\mathrm{RGO}}^{-1}
        =\operatorname{poly}\prn*{1+\Delta_0,\,\beta/\varepsilon^2,\,d}.
    \end{align}
    The reduction uses the smoothed approximate RGO and returns an iterate from the proximal sampler.}
\end{theorem}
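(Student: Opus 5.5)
This result is quoted from \citet[Theorem 2]{chewi2026complexity}, so the plan is to re-derive it from their argument rather than from the tools developed here. We run the proximal sampler on $\pi$ with step size $h=\frac{1}{2\beta}$. Each iteration has a forward step $y_k\sim \mathsf N(x_k,hI)$ and a backward step that draws $x_{k+1}$ from the restricted Gaussian oracle (RGO)
\begin{align}
\pi^{X\mid Y=y}\propto \exp\bigl(-V(x)-\tfrac{1}{2h}\|x-y\|^2\bigr)\,.
\end{align}
With $h=\frac{1}{2\beta}$, the potential $V+\frac{1}{2h}\|\cdot-y\|^2$ is $\beta$-strongly convex and $3\beta$-smooth, since $-\beta I\preceq\nabla^2V\preceq\beta I$. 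We locate its mode with one proximal-oracle call, recenter there, and rescale by $\sqrt\beta$. This gives a $1$-strongly log-concave, $3$-log-smooth target with mode at the origin, which is exactly the class covered by \cref{as:alg-exist}. Hence each iteration costs one proximal query and one call to $\mathtt{Alg}$.

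The core estimate is the exact (idealized) proximal sampler analysis. The forward step is the heat flow. Along it, $\frac{\d}{\d t}\Dklshort=-\mathsf{FI}$ for the pair $(\mu_t,\pi_t)$ run from $\mu$ and $\pi$ under the same flow. The backward step does not increase KL, by data processing applied to the reverse channel. Telescoping over $k$ iterations gives
\begin{align}
\sum_{k} h\cdot \min_{t\in[0,h]}\FI{\mu_k P_t}{\pi P_t}\le \msf{KL}_0\,.
\end{align}
The heat-flow Fisher information then has to be transferred back to $\FI{\cdot}{\pi}$ for the actual backward iterate $x_{k+1}$. This uses the fact that the RGO smooths over scale $h\asymp 1/\beta$: write $\nabla\log\frac{\mu_{k+1}}{\pi}$ as a conditional expectation over the posterior and apply Jensen, together with $\beta$-smoothness. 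The result is $\FI{\mu_{k+1}}{\pi}\lesssim\FI{\mu_k Q_h}{\pi Q_h}$ up to constants. Averaging, a uniformly random iterate among $N\asymp \beta\,\msf{KL}_0/\varepsilon^2$ iterations achieves Fisher information at most $\varepsilon^2$.

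Finally, we handle the inexact RGO, and this is the main obstacle, since Fisher information is not controlled by Rényi closeness of laws. We replace the exact RGO by a \emph{smoothed} approximate RGO: the output of $\mathtt{Alg}$, convolved with a small Gaussian, so that the output density has a controlled score. The error is then propagated by comparing with the exact chain through Rényi-$3$ bounds. Per step, the Rényi error is at most $\delta_{\mathrm{RGO}}^2$, and weak triangle inequalities for Rényi divergences accumulate this over $N$ steps. Hölder's inequality converts the accumulated error into an additive change in KL and in Fisher information. This requires a priori moment bounds, polynomial in $1+\Delta_0$ and $d$, on the scores of the iterates, and the Rényi-$2$ initialization $\Delta_0$ enters precisely here. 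Choosing $\delta_{\mathrm{RGO}}^{-1}=\mathrm{poly}(1+\Delta_0,\beta/\varepsilon^2,d)$ makes the total perturbation smaller than $\varepsilon^2/2$, which completes the argument.
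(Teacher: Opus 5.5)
The paper gives no proof of this statement; it is imported from \citet[Theorem 2]{chewi2026complexity}, so your outline has to stand on its own. The architecture is right: proximal sampler with $h=\frac{1}{2\beta}$, recentering at $\operatorname{prox}_{V,h}(y)$ and rescaling by $\sqrt\beta$ to land in the class of \cref{as:alg-exist}, KL/Fisher telescoping for the exact chain, then a smoothed inexact RGO. However, several steps do not go through as written.

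In the exact analysis, the telescoping controls $\int_0^h\FI{\mu_kP_t}{\pi P_t}\,\d t$, hence only the minimum over $t\in[0,h]$. Your transfer step, however, produces the endpoint value $\FI{\mu_kP_h}{\pi P_h}$. Since $\pi$ is not log-concave, you cannot simply invoke monotonicity of relative Fisher information along the simultaneous heat flow. You need $\FI{\mu P_h}{\pi P_h}\le 4\,\FI{\mu P_t}{\pi P_t}$ for $0\le t\le h$, and this is where $\beta$-smoothness is really used. Brascamp--Lieb on the posterior gives $-\nabla^2\log(\pi P_t)\succeq-\frac{\beta}{1-\beta t}\Id$, so the posterior of the remaining convolution by $\normal{0,(h-t)\Id}$ is strongly log-concave. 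The covariance inequality for strongly log-concave measures then bounds $\nrm{\nabla\frac{\mu P_h}{\pi P_h}}$ by $\bigl(1-\frac{(h-t)\beta}{1-\beta t}\bigr)^{-1}\le 2$ times a posterior average of $\nrm{\nabla\frac{\mu P_t}{\pi P_t}}$, and Cauchy--Schwarz gives the factor $4$. The backward step, by contrast, needs no smoothness at all: $\frac{\mu_{k+1}}{\pi}(x)=\En\,r(x+\sqrt h\,\xi)$ with $r=\frac{\mu_kP_h}{\pi P_h}$ and $\xi\sim\normal{0,\Id}$, so Cauchy--Schwarz yields exactly $\FI{\mu_{k+1}}{\pi}\le\FI{\mu_kP_h}{\pi P_h}$.

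In the inexact part, chaining weak triangle inequalities over $N\asymp\beta\,\msf{KL}_0/\varepsilon^2$ steps fails. Each application raises the R\'enyi order of one of the two terms by a factor of at least $2$, which cannot be iterated polynomially many times when you only have order $2$ at initialization and order $3$ per step. Because the guarantee of $\mathtt{Alg}$ is uniform in $y$, use instead the fixed-order path-space chain rule (the one-sided version of \cref{lem:mirror-kernel-composition}). This gives $\Ren[3]{\hat\mu_k}{\mu_k}\le k\,\delta_{\mathrm{RGO}}^2$ against the exact chain started from $\mu_0$. Together with $\Ren[2]{\mu_k}{\pi}\le\Delta_0$ (stationarity plus data processing), this is what supplies your a priori moment bounds.

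More importantly, H\"older handles only the change of measure, not the change of score, which is precisely the obstruction you identified. The missing lemma is the explicit smoothed score: if $\d\hat\nu/\d\nu=r$ and $X\sim\nu$, $\xi\sim\normal{0,\Id}$ are independent, then
\[
\nabla\log\frac{\hat\nu*\normal{0,\sigma^2\Id}}{\nu*\normal{0,\sigma^2\Id}}(x)=\frac{\Cov\bigl(r(X),X\mid X+\sigma\xi=x\bigr)}{\sigma^2\,\En\bigl[r(X)\mid X+\sigma\xi=x\bigr]}\,.
\]
By H\"older and posterior moments of $\nrm{X-x}$ (of size $\sigma\sqrt d$), this makes the relative Fisher information between the two smoothed laws at most $\poly(d)/\sigma^2$ times a power of the R\'enyi closeness.

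You must also pay for the smoothing itself. $\FI{\nu*\normal{0,\sigma^2\Id}}{\pi}$ exceeds a constant multiple of $\FI{\nu}{\pi}$ by $O(\beta^2\sigma^2 d+\beta^2\sigma^4\,\En\nrm{\nabla V}^2)$. If the smoothing is applied in every RGO call, its per-step R\'enyi bias must also be added to $\delta_{\mathrm{RGO}}^2$. Balancing these terms forces $(\beta\sigma^2)^{-1}$, and then $\delta_{\mathrm{RGO}}^{-1}$, to be polynomial in $\beta/\varepsilon^2$, $d$ and $1+\Delta_0$, which is the tolerance in the statement.
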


Since we have improved the complexity of high-accuracy log-concave sampling, this reduction can be combined directly with the bounds in \cref{thm:ULA-main} to give the following new state-of-the-art Fisher information guarantee for non-log-concave sampling.

\begin{theorem}\label{thm:FI-main}
Suppose that the target distribution $\pi\propto \exp(-V)$ is $\beta$-smooth. Suppose that we have access to an initial distribution $\mu_0$ with $\msf{KL}_0=\Dkl{\mu_0}{\pi}$ and $\Delta_0=\Ren[2]{\mu_0}{\pi}$. \sredit{Let $M\geq1$ denote the number of approximate RGO calls prescribed by \cref{thm:FI-reduction}, so that $M=O(1+\beta\,\msf{KL}_0/\varepsilon^2)$.} Then, there is an algorithm that returns a sample from $\widehat\pi$ with
\begin{align}
    \sredit{\FI{\widehat\pi}{\pi}}\leq \varepsilon^2
\end{align}
using
\begin{align}
    \sredit{\wt{O}\prn[\Big]{M\prn*{d^{1/3}+\log(M/p)}}}
\end{align}
first-order oracle calls with probability $1-p$ for \sredit{$p\in(0,1]$}, and $M$ calls to the proximal oracle. The $\wt{O}(\cdot)$ notation hides polylogarithmic factors in $d$, $\beta$, \sredit{$1+\Delta_0$}, $\msf{KL}_0$, and $1/\varepsilon$.
\end{theorem}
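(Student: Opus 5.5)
The plan is to instantiate \cref{as:alg-exist} with the FORS-corrected ULD sampler of \cref{thm:ULA-main}(ii) and then invoke \cref{thm:FI-reduction}. Each call to $\mathtt{Alg}$ is made on a RGO target $\pi_y\propto\exp(-V(x)-\beta\nrm{x-y}^2)$ with step size $h=\frac{1}{2\beta}$. After rescaling $x\mapsto\sqrt{\beta}\,x$ and shifting by the mode, this target is $1$-strongly log-concave and $3$-log-smooth, since $\nabla^2 V+2\beta\Id$ lies between $\beta\Id$ and $3\beta\Id$. The rescaling costs nothing in queries, because a gradient of the rescaled potential is one gradient query to $V$. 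So in that subproblem $\alpha=1$, $\beta=3$ and $\kappa=3$, and LSI with constant $1$ holds by strong log-concavity. The mode is computed by the proximal oracle, which is exactly $\operatorname{prox}_{V,h}(y)$; the $M$ proximal calls account for this.

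\textbf{Initialization.} For each RGO call, initialize at $\nu=\normal{0,\Id}\otimes\normal{0,\Id}$ in phase space, centred at the mode. For a $1$-strongly log-concave, $3$-smooth target with mode at the origin, a standard Gaussian comparison gives $\Ren[2q]{\nu}{\target}=O(d)$ for constant $q$. For instance, with $q=3$ one can use the sandwich $\exp(-\frac32\nrm{x}^2)\lesssim\pi\lesssim\exp(-\frac12\nrm{x}^2)$ up to normalizing constants, whose ratio is $O(d)$ in log scale; if needed, take the initial covariance $\frac13\Id$ to keep the density ratio bounded. This gives $\Delta_q=O(d)$.

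\textbf{Apply \cref{thm:ULA-main}(ii).} Take $q=3$ and target accuracy $\delta_{\mathrm{RGO}}^2$, with $\delta_{\mathrm{RGO}}^{-1}=\poly(1+\Delta_0,\beta/\varepsilon^2,d)$ as in \cref{thm:FI-reduction}. Choose $\frac{1}{T\gamma}\asymp d^{1/3}\,\polylog(d,1/\delta_{\mathrm{RGO}})$ and $K\asymp\frac{1}{T\gamma}\log(d/\delta_{\mathrm{RGO}})$. One call then outputs the position marginal with $\Ren[3]{\cdot}{\pi_y}\le\delta_{\mathrm{RGO}}^2$, where the divergence on marginals is bounded by data processing from phase space. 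The query count of this call is $O(K+\log(1/p'))$ with probability $1-p'$. Since $\log(1/\delta_{\mathrm{RGO}})$ is polylogarithmic in $d$, $\beta$, $1+\Delta_0$, $\msf{KL}_0$ and $1/\varepsilon$, this gives $K=\wt O(d^{1/3})$.

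\textbf{Union bound.} Set $p'=p/M$ and take a union bound over the $M$ calls. With probability $1-p$ the total number of first-order queries is $\sum_{m\le M}O(K+\log(M/p))=\wt O(M(d^{1/3}+\log(M/p)))$. The Fisher information guarantee then follows directly from \cref{thm:FI-reduction}.

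The main obstacle is technical bookkeeping rather than analysis. First, one must check that \cref{thm:ULA-main}'s hypothesis $\frac{1}{T\gamma}\gg\kappa^{2/3}\Delta_q^{1/3}(q+\log(Kq/\delta))^{2/3}$ is consistent with the choice of $K$, since $K$ appears inside the logarithm. Fixing $T$ first and noting the condition is only logarithmic in $K$ resolves this circularity. Second, the initial R\'enyi bound must hold uniformly over all RGO centres $y$, which holds because the bound depends only on the conditioning, not on $y$.
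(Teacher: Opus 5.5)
Your argument is essentially the paper's: instantiate \cref{as:alg-exist} by \cref{thm:ULA-main}(ii) at $q=3$ on the rescaled RGO potential, so that $\kappa=O(1)$. You then pass to the position marginal by data processing, and union bound over the $M$ calls with per-call failure probability $p/M$.

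One correction: the covariance-$\frac13\Id$ start is required, not a fallback. Take the position target $\varpi=\normal{0,\Id/3}$, which is $1$-strongly log-concave and $3$-log-smooth. Then $\Ren[\lambda]{\normal{0,\Id}}{\varpi}=\infty$ for every $\lambda\geq 3/2$, so with your first choice $\Delta_q$ would be infinite. By contrast, $\normal{0,\Id/3}\otimes\normal{0,\Id}$ has density ratio at most $3^{d/2}$ by the quadratic sandwich, which bounds all the needed R\'enyi divergences by $\frac d2\log 3$. This is exactly the paper's initialization. As in the paper, also set the accuracy parameter to $\min\{1/2,\delta_{\mathrm{RGO}}^2\}$ to respect the hypothesis $\delta\le\frac12$ of \cref{thm:ULA-main}.
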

The proof is presented in \cref{appdx:FI}. If the additional Hessian Lipschitz condition in \cref{asmp:Lip-Hess} is available for the RGO subproblems, one can replace the use of \cref{thm:ULA-main} in the preceding paragraph by \cref{thm:ULA2-cold}, yielding the corresponding improved dimension dependence from \cref{thm:ULA2-cold}. \sredit{In particular, if the rescaled RGO potentials have uniformly bounded normalized parameter $\kappa_F$, the factor $d^{1/3}$ above can be replaced by $d^{1/5}$. For an original Hessian-Lipschitz constant $\beta_F$, this normalized condition is $\beta_F^{2/3}/\beta=O(1)$.}

\bibliography{ref}

\appendix

\section{Implementation}\label{app:implementation}
\subsection{Implementation of the refined ULD proposal}\label{app:refined-uld-implementation}

We write
\[
    g_0\deq \nabla V(X_0),\qquad
    H_0\deq \nabla^2V(X_0),\qquad
    U_t^\ell\deq X_t^\ell-X_0\,.
\]
The iterative proposal in~\eqref{eq:ULA2-iterative} is
\[
    \d U_t^\ell=P_t^\ell\,\d t,\qquad
    \d P_t^\ell=-(g_0+H_0U_t^{\ell-1}+\gamma P_t^\ell)\,\d t
        +\sqrt{2\gamma}\,\d B_t\,,
\]
with $U_t^0\equiv 0$, $U_0^\ell=0$, and $P_0^\ell=P_0$. We now give an
explicit formula that avoids exponentiating any matrix depending on $H_0$.

Let
\[
    a_t\deq \frac{1-e^{-\gamma t}}{\gamma}\,,
\]
and let $\cK$ denote the scalar convolution operator
\[
    (\cK f)(t)\deq \int_0^t a_{t-s}f(s)\,\d s\,.
\]
For $j\geq0$, define scalar functions
\[
    A_j(t)\deq (\cK^j a)(t),\qquad
    B_j(t)\deq \int_0^t A_j(s)\,\d s,\qquad
    C_j(t)\deq \dot A_j(t)\,.
\]
These coefficients are completely explicit, namely,
\begin{align}
    A_j(t)
    &=
    \sum_{k=1}^{j+1}
    \binom{2j+1-k}{j+1-k}
    \frac{
        (-1)^{j+1-k}+(-1)^{j+1}e^{-\gamma t}
    }{\gamma^{2j+2-k}}
    \,\frac{t^{k-1}}{(k-1)!}\,,
    \label{eq:Aj-explicit}
\end{align}
and
\begin{align}
    B_j(t)
    &=
    \sum_{k=1}^{j+2}
    (-1)^{j+2-k}
    \binom{2j+2-k}{j+2-k}
    \frac{t^{k-1}}{\gamma^{2j+3-k}(k-1)!}
    \notag\\
    &\qquad
    +
    e^{-\gamma t}
    \sum_{k=1}^{j+1}
    (-1)^{j+2}
    \binom{2j+2-k}{j+1-k}
    \frac{t^{k-1}}{\gamma^{2j+3-k}(k-1)!}\,.
    \label{eq:Bj-explicit}
\end{align}
Finally, define the Gaussian convolution variables
\begin{align}
    \Xi_j^X(t)
    &\deq
    \sqrt{2\gamma}\int_0^t A_j(t-s)\,\d B_s\,,\\
    \Xi_j^P(t)
    &\deq
    \sqrt{2\gamma}\int_0^t C_j(t-s)\,\d B_s\,.
\end{align}
Then, for every $\ell\geq1$,
\begin{align}
    X_t^\ell
    &=
    X_0+
    \sum_{j=0}^{\ell-1}
    (-H_0)^j
    \brk*{
        A_j(t)\,P_0
        -
        B_j(t)\,g_0
        +
        \Xi_j^X(t)
    }\,,
    \label{eq:Xell-explicit}\\
    P_t^\ell
    &=
    \sum_{j=0}^{\ell-1}
    (-H_0)^j
    \brk*{
        C_j(t)\,P_0
        -
        A_j(t)\,g_0
        +
        \Xi_j^P(t)
    }\,.
    \label{eq:Pell-explicit}
\end{align}
For any
finite set of times, the family
\[
    \bigl\{\Xi_j^X(t),\Xi_j^P(t):0\leq j\leq L-1\bigr\}
\]
is jointly Gaussian with covariances
\begin{align}
    \Cov\prn*{\Xi_i^X(t),\Xi_j^X(s)}
    &=
    2\gamma\int_0^{t\wedge s}
    A_i(t-r)\,A_j(s-r)\,\d r\,\Id\,,\\
    \Cov\prn*{\Xi_i^P(t),\Xi_j^P(s)}
    &=
    2\gamma\int_0^{t\wedge s}
    C_i(t-r)\,C_j(s-r)\,\d r\,\Id\,,\\
    \Cov\prn*{\Xi_i^X(t),\Xi_j^P(s)}
    &=
    2\gamma\int_0^{t\wedge s}
    A_i(t-r)\,C_j(s-r)\,\d r\,\Id\,.
\end{align}
These covariance matrices are scalar multiples of $\Id$ and depend only on the
one-dimensional kernels above, not on $H_0$. Hence, conditional sampling of any
finite collection of coordinates of the path reduces to Gaussian conditioning
for these scalar convolution variables, followed by the deterministic polynomial
map \eqref{eq:Xell-explicit}--\eqref{eq:Pell-explicit}.

\subsection{Examples of implementable mirror diffusions}\label{app:mirror-diffusion-examples}

In this section, we give examples of mirror maps $\phi$ for which we can implement the exact mirror diffusion.

\paragraph{One-dimensional examples}
For $1 \le m \le 2$, consider the family
\begin{align}
    \phi_m : \R_{>0} \to \R\,, \qquad \phi_m(x) \deq \begin{cases}
        x\log x - x\,, & m=1\,, \\
        -x^{2-m}/[(m-1)\,(2-m)]\,, & m \in (1,2)\,, \\
        -\log x\,, & m = 2\,.
    \end{cases}
\end{align}
Then, $\phi_m''(x) = x^{-m}$.
In primal coordinates, the mirror diffusion becomes
\begin{align}
    \ud X_t = m\,X_t^{m-1}\,\ud t + \sqrt 2\,X_t^{m/2}\,\ud B_t\,.
\end{align}
After the transformation \screplace{$W \deq 2\,X^{2-m}/(2-m)^2$}{$S \deq 2\,X^{2-m}/(2-m)^2$}, this becomes
\begin{align}
    \ud \screplace{W_t}{S_t} = \frac{2}{2-m}\,\ud t + 2\,\sqrt{\screplace{W_t}{S_t}}\,\ud B_t\,,
\end{align}
which is called a squared Bessel process.
Its transition densities are non-central $\chi^2$ distributions and are therefore explicit.
In the case $m=2$, we instead have
\begin{align}
    \ud X_t = 2\,X_t\,\ud t + \sqrt 2\,X_t\,\ud B_t\,,
\end{align}
which is a geometric Brownian motion with
\begin{align}
    X_t = X_0 \exp(t+\sqrt 2\,B_t)\,.
\end{align}

\paragraph{Direct sum}
If $\phi_1,\dotsc,\phi_k$ are mirror maps, then the mirror diffusion corresponding to $(x_1,\dotsc,x_k) \mapsto \sum_{i=1}^k \phi_i(x_i)$ consists of coordinatewise mirror diffusions corresponding to the maps $\phi_i$.

\paragraph{Invertible linear transformation}
If $A : \R^d\to\R^d$ is invertible and $b \in \R^d$, then the mirror
diffusion corresponding to $\phi(x)=\psi(Ax+b)$ is obtained by running
the mirror diffusion for $\psi$ in the coordinate $U=Ax+b$ (up to an
orthogonal rotation of Brownian motion) and returning
$X=A^{-1}(U-b)$.

Combined with the preceding cases, a constraint set
$\{x:Ax>b\}$, with $A$ square and invertible, is handled by taking
$\phi(x)=\psi(Ax-b)$ and returning $X=A^{-1}(U+b)$. %

\section{Technical tools}

\subsection{Bounds for underdamped Langevin dynamics}

The following lemma is standard~\citep[e.g.,][Lemma 6.2.7]{Chewi26Book}.
\begin{lemma}\label{lem:ULA-MGF}
Under the distribution $\target$, it holds that
\begin{align}
    &~ \En_{(x,p)\sim \target}\En_{\bZ\sim \bP_z} \exp\prn*{ \frac{1}{4}\nrm{P_t}^2 }\leq \exp\prn*{ d }, \quad
    \En_{(x,p)\sim \target}\exp\prn*{ \frac{1}{8\beta}\nrm{\nabla V(x)}^2 }\leq \exp\prn*{ d }.
\end{align}
In particular, for $R(z)=\nrm{p}^2+\beta^{-1}\nrm{\nabla V(x)}^2+d$, it holds that \screplace{for $p\in(0,1)$}{for $\delta\in(0,1)$},
\begin{align}
    \pi\prn*{ z: R(z)\geq 20(d+\log(1/\screplace{p}{\delta})) }\leq \screplace{p}{\delta}.
\end{align}
\end{lemma}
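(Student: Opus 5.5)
The plan is to prove the two exponential-moment bounds separately and then obtain the tail bound for $R(z)$ by Chernoff combined with a union bound. Throughout, I use that $\target=\mu_V\otimes\normal{0,\Id}$ is invariant for \eqref{eq:ULD}.

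For the momentum bound, stationarity gives that if $(X_0,P_0)\sim\target$ and $\bZ\sim\bP_z$, then $P_t\sim\normal{0,\Id}$ for every $t$. The Gaussian moment generating function then gives
\begin{align}
\En\exp\bigl(\tfrac14\nrm{P_t}^2\bigr)=(1-\tfrac12)^{-d/2}=2^{d/2}\le e^{d}.
\end{align}

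For the gradient bound, I would use the standard Langevin integration-by-parts argument. Set $g(x)=\exp(\lambda\nrm{\nabla V(x)}^2)$ with $\lambda=1/(8\beta)$. Integration by parts against $e^{-V}$ gives
\begin{align}
\En_{\mu_V}\bigl[\tri{\nabla V,\nabla V}\, g\bigr]=\En_{\mu_V}\bigl[\Delta V\, g+\tri{\nabla V,\nabla g}\bigr].
\end{align}
Here $\nabla g=2\lambda g\,\nabla^2V\,\nabla V$, so $\tri{\nabla V,\nabla g}\le 2\lambda\beta\nrm{\nabla V}^2 g=\tfrac14\nrm{\nabla V}^2 g$. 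Also $\Delta V\le\beta d$ by \eqref{eq:beta-smooth}. Together these give $\tfrac34\En[\nrm{\nabla V}^2g]\le\beta d\,\En g$. The main technical point is turning this into an MGF bound. I would do it via Jensen or a differential inequality in $\lambda$. Set $\Phi(\lambda)=\log\En e^{\lambda\nrm{\nabla V}^2}$. Then $\Phi'(\lambda)=\En[\nrm{\nabla V}^2g]/\En g\le \tfrac43\beta d$ for all $\lambda\le 1/(8\beta)$. Integrating from $0$ to $1/(8\beta)$ gives $\Phi\le d/6\le d$.

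To make the integration by parts rigorous when the weight is unbounded, I would first truncate $g$, or else assume the finiteness that holds under smoothness, and then pass to the limit by monotone convergence. I expect this step to be the main obstacle, though it is routine.

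For the tail bound, note that $R(z)=\nrm{p}^2+\beta^{-1}\nrm{\nabla V(x)}^2+d$. By Markov's inequality, $\pi(\nrm{p}^2\ge 8(d+\log(2/\delta)))\le e^{d-2d-2\log(2/\delta)}\le\delta/2$, using the $P_0$ bound at $t=0$. Similarly, $\pi(\beta^{-1}\nrm{\nabla V}^2\ge 8(d+\log(2/\delta)))\le\delta/2$. A union bound then shows that, with probability at least $1-\delta$,
\begin{align}
R\le 17d+16\log(2/\delta)\le 20(d+\log(1/\delta)).
\end{align}
The last inequality holds for $\delta<1$ after adjusting constants, using $16\log2\le 3d$ when $d\ge 4$; small-$d$ cases are absorbed by slightly sharper constants from the gradient MGF, which gives $d/6$.
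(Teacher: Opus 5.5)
Your argument is correct, and your momentum bound is exactly the paper's. The gradient bound and the tail bound, however, go by different routes.

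\textbf{Gradient bound.} For $\mathbb{E}_{\mu_V} e^{\|\nabla V\|^2/(8\beta)}$, the paper uses that $\nabla V(X)$, $X\sim\mu_V$, is $\sqrt{\beta}$-sub-Gaussian. This follows in one line from $V(x-u)\le V(x)-\langle\nabla V(x),u\rangle+\frac{\beta}{2}\|u\|^2$ and translation invariance of Lebesgue measure. The paper then linearizes the square with an independent $G\sim\mathsf{N}(0,I)$: $\mathbb{E}\,e^{\|\nabla V\|^2/(8\beta)}=\mathbb{E}\,\mathbb{E}_G\,e^{\langle G,\nabla V\rangle/(2\sqrt{\beta})}\le\mathbb{E}_G\,e^{\|G\|^2/8}=(4/3)^{d/2}$. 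All integrands are nonnegative, so there are no integrability or boundary terms to control.

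Your Herbst-type argument is self-contained. If you integrate $\Phi'(\lambda)\le\beta d/(1-2\lambda\beta)$ exactly, instead of using the uniform bound $\frac{4}{3}\beta d$, it reproduces the same $(1-2\lambda\beta)^{-d/2}$ bound. Its cost is the justification you flag. That works if you truncate the weight itself, e.g.\ $g_n=\exp(\lambda\min\{\|\nabla V\|^2,n\})$ together with a spatial cutoff. Then $\langle\nabla V,\nabla g_n\rangle\le 2\lambda\beta\|\nabla V\|^2 g_n$ still holds and $\Phi_n'\le\mathbb{E}[\|\nabla V\|^2 g_n]/\mathbb{E} g_n$.

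\textbf{Tail bound.} The paper avoids a union bound. By independence of $x$ and $p$ under $\pi$, it bounds the joint moment $\mathbb{E}_\pi\exp(\frac{1}{8}(\|p\|^2+\beta^{-1}\|\nabla V\|^2))\le e^{5d/4}$ and applies Markov once. This gives the constant $20$ for every $d\ge 1$ and $\delta\in(0,1)$, with no $\log 2$ loss.

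In your version, the final inequality $17d+16\log(2/\delta)\le 20(d+\log(1/\delta))$ fails for $d\le 3$ and $\delta$ close to $1$. Sharpening the gradient constant alone does not fix this. The real problem is that your $\|p\|^2$ threshold $8(d+\log(2/\delta))$ is wasteful: it yields $e^{-d}(\delta/2)^2$ rather than $\delta/2$. Instead take thresholds $4(d+\log(2/\delta))$ for $\|p\|^2$ and $8(d/6+\log(2/\delta))$ for $\beta^{-1}\|\nabla V\|^2$. Then, with probability at least $1-\delta$, $R<\frac{19}{3}d+12\log(2/\delta)\le 20(d+\log(1/\delta))$ for all $d\ge 1$.
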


\scedit{
\begin{proof}
Since $\target$ is invariant for ULD, $P_t\sim\normal{0,\Id}$ under the stationary path law. Hence
\begin{align}
    \En_{z\sim\target}\En_{\bP_z}\exp\prn*{\frac14\nrm{P_t}^2}
    =2^{d/2}\leq e^d\,.
\end{align}
For the second estimate, \citet[Lemma 6.2.7]{Chewi26Book} shows that $\nabla V(X)$ is $\sqrt\beta$-sub-Gaussian when $X\sim\mu_V$. If $G\sim\normal{0,\Id}$ is independent of $X$, this sub-Gaussian estimate give
\begin{align}
    \En_{\mu_V}\exp\prn*{\frac{1}{8\beta}\nrm{\nabla V(X)}^2}
    &=\En_{\mu_V}\En_G\exp\prn*{\frac{1}{2\sqrt\beta}\tri{G,\nabla V(X)}}\\
    &\leq \En_G\exp\prn*{\frac18\nrm{G}^2}
    =\prn*{\frac43}^{d/2}\leq e^d\,.
\end{align}
Finally, $X$ and $P$ are independent under $\target$, and the estimates above imply
\begin{align}
    \En_\target\exp\prn*{\frac18\prn*{\nrm{P}^2+\beta^{-1}\nrm{\nabla V(X)}^2}}
    \leq e^{5d/4}\,.
\end{align}
If $A\deq d+\log(1/\delta)$, then $R\geq20A$ implies
$\nrm{P}^2+\beta^{-1}\nrm{\nabla V(X)}^2\geq20A-d$. Markov's inequality therefore bounds the probability of this event by
\begin{align}
    \exp\prn*{\frac54d-\frac{20A-d}{8}}
    \leq\delta\,.
\end{align}
\end{proof}
}

We note that under $z\sim \target$, we have $X_t\perp P_t$ marginally, and $P_t\sim \normal{0,\Id}$. Therefore, we may combine \cref{lem:Gaussian-chaos} with the standard change-of-measure argument to prove the following lemma.

\begin{lemma}\label{lem:Gaussian-chaos-chis}
There is an absolute constant $C>0$ such that, for every $t\geq0$, every $\nu\ll\target$ with $\Ren[1.5]{\nu}{\target}<\infty$, and every measurable $F:\RR^d\to\RR^{d\times d}$,
\begin{align}
    &~\En_{z\sim \nu} \En_{\bP_z}\exp\prn*{ \frac{1}{C\betaF}\nrm{\nabla^3 V(X_t)[P_t,P_t]} }
    \leq \exp\prn*{C\sqrt d+\frac13\Ren[1.5]{\nu}{\target}},\\
    &~\En_{z\sim \nu} \En_{\bP_z}\exp\prn*{ \frac{1}{C\betaF\beta}\nrm{\nabla^3 V(X_t)[P_t,\nabla^2 V(X_t)P_t]} }
    \leq \exp\prn*{C\sqrt d+\frac13\Ren[1.5]{\nu}{\target}},\\
    &~\En_{z\sim \nu} \En_{\bP_z}\exp\prn*{ \frac{1}{C\nrmF{F(X_t)}^2}\nrm{F(X_t)P_t}^2 }
    \leq \exp\prn*{C+\frac13\Ren[1.5]{\nu}{\target}}.
\end{align}
In the last display, the quotient is defined to be zero when $\nrmF{F(X_t)}=0$.
\end{lemma}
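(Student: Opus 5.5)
The proof has two ingredients: a change of measure from $\nu$ to the stationary law, and a Gaussian-chaos bound under that stationary law. For the change of measure, write $G(z)\deq \En_{\bP_z}\exp(\Phi(X_t,P_t))$, where $\Phi$ is the relevant nonnegative functional. By H\"older's inequality with exponents $(3/2,3)$,
\begin{align}
    \En_{\nu} G=\En_{\target}\Bigl[\frac{\d\nu}{\d\target}\,G\Bigr]\leq \Bigl(\En_\target\Bigl[\Bigl(\frac{\d\nu}{\d\target}\Bigr)^{3/2}\Bigr]\Bigr)^{2/3}\bigl(\En_\target G^3\bigr)^{1/3}\,.
\end{align}
The first factor equals $\exp(\frac13\Ren[1.5]{\nu}{\target})$, because $\frac{2}{3}\cdot\frac{1}{2}\Ren[1.5]{\nu}{\target}=\frac13\Ren[1.5]{\nu}{\target}$. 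For the second factor, Jensen's inequality gives $G^3\leq \En_{\bP_z}\exp(3\Phi)$. Since $\target$ is invariant for ULD, $(X_t,P_t)\sim\target$ whenever $z\sim\target$. Hence $\En_\target G^3\leq \En_{(X,P)\sim\target}\exp(3\Phi(X,P))$, where $X\sim\mu_V$ and $P\sim\normal{0,\Id}$ are independent. The factor $3$ is absorbed into $C$. It therefore suffices to prove stationary bounds of the form $\En_{\target}\exp(3\Phi)\leq \exp(3C\sqrt d)$, or $\exp(3C)$ for the third display. We may condition on $X$ and treat $P$ as a standard Gaussian.

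For the first two displays, fix $X=x$. The maps $P\mapsto \nabla^3V(x)[P,P]$ and $P\mapsto \nabla^3V(x)[P,\nabla^2V(x)P]$ are vector-valued quadratic forms in a Gaussian. We invoke \cref{lem:Gaussian-chaos}, which presumably controls the exponential moments of $\nrm{\mathcal A[P,P]}$ for a tensor $\mathcal A$ using the norms $\nrm{\mathcal A}_{\{1,2\},\{3\}}$ and $\nrm{\mathcal A}_{\{1\},\{2\},\{3\}}$. \cref{asmp:Lip-Hess} gives $\nrm{\nabla^3V(x)}_{\{1,2\},\{3\}}\leq\betaF$, which implies the same bound on the injective norm. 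For the second display, set $\mathcal A[u,v]=\nabla^3V(x)[u,\nabla^2V(x)v]$. Its $\{1,2\},\{3\}$ norm is at most $\beta\betaF$, because composing one slot with $\nabla^2V(x)$ (operator norm $\le\beta$) multiplies Frobenius test matrices $A\mapsto A\nabla^2V(x)$ by at most $\beta$.

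The key scaling fact is the following. The mean satisfies
\begin{align}
    \nrm{\En\mathcal A[P,P]}=\nrm{\mathcal A[\Id]}\leq \sqrt d\,\nrm{\mathcal A}_{\{1,2\},\{3\}}\,,
\end{align}
since $\nrmF{\Id}=\sqrt d$. The fluctuation is sub-exponential at scale $\nrm{\mathcal A}_{\{1,2\},\{3\}}$, and this is exactly what the $\{1,2\},\{3\}$ norm controls via Hanson--Wright in vector form. This yields $\En\exp(\nrm{\mathcal A[P,P]}/(C\betaF))\leq\exp(C\sqrt d)$ uniformly in $x$. Integrating over $X$ gives the claim.

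For the third display, fix $x$ and let $F=F(x)$. Then $\nrm{FP}^2=P^\top F^\top F P$ has mean $\nrmF{F}^2$. Its Gaussian MGF is $\det(\Id-2sF^\top F)^{-1/2}$, and $\nrmop{F^\top F}\leq\nrmF F^2$. Taking $s=1/(C\nrmF F^2)$ and using $-\log(1-u)\leq 2u$ for small $u$ bounds the log-MGF by $2s\,\tr(F^\top F)\leq 2/C$, which is an absolute constant. When $F(x)=0$ the quotient is defined to be zero, so the integrand equals $1$ and the bound is trivial.

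The main obstacle is the vector-chaos step. We need the deviation of $\nrm{\mathcal A[P,P]}$ around $\sqrt d\,\betaF$ to be sub-exponential at scale $\betaF$ with no extra dimension factors, and this depends on the exact form of \cref{lem:Gaussian-chaos}. If that lemma only gives scalar Hanson--Wright, I would instead write $\nrm{\mathcal A[P,P]}=\sup_{\nrm u\le1}P^\top M_uP$, where $M_u=\mathcal A[\cdot,\cdot,u]$ satisfies $\nrmF{M_u}\leq\betaF$. Gaussian concentration of the Lipschitz-like map $P\mapsto\nrm{\mathcal A[P,\cdot]}_{\mathrm{op}}$ then controls the supremum.
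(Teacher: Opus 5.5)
Your overall route is the paper's. You use H\"older with exponents $(3/2,3)$ to pay the factor $\exp(\frac13\Ren[1.5]{\nu}{\target})$, and invariance of $\target$ so that $(X_t,P_t)\sim\mu_V\otimes\normal{0,\Id}$ with $X_t\perp P_t$. You then condition on $X_t$, use a Gaussian-chaos tail bound for the first two displays, and use the Gaussian quadratic-form MGF for the third. The paper applies H\"older at time $t$ after data processing gives $\Ren*[3/2]{\cPt{t}\nu}{\target}\leq\Ren[3/2]{\nu}{\target}$; your H\"older at time $0$ plus Jensen on $\En_{\bP_z}$ is equivalent. The first and third displays go through as you wrote them.

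The second display, however, rests on a false general claim. You rewrite it as $\mathcal A[P,P]$ with $\mathcal A[u,v]=\nabla^3V(x)[u,\nabla^2V(x)v]$ and verify only $\nrm{\mathcal A}_{\{1,2\},\{3\}}\leq\beta\betaF$. You then invoke the ``key scaling fact'' that the fluctuations are sub-exponential at scale $\nrm{\mathcal A}_{\{1,2\},\{3\}}$. This fails once $\mathcal A$ is not fully symmetric. The tensor with $\mathcal A[u,v]=u_1v$ has $\nrm{\mathcal A}_{\{1,2\},\{3\}}=1$, yet $\nrm{\mathcal A[P,P]}=\abs{P_1}\nrm{P}\approx\sqrt d\,\abs{P_1}$, whose exponential moment at scale $O(1)$ is $e^{\Omega(d)}$.

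In vector Hanson--Wright, the Gaussian-linear parameter is governed by the mixed norms $\sup_{\nrm{y}\leq1}\nrmF{\mathcal A[y,\cdot,\cdot]}$ and $\sup_{\nrm{y}\leq1}\nrmF{\mathcal A[\cdot,y,\cdot]}$. The $\{1,2\},\{3\}$ norm controls these only under symmetry. Your fallback via $P\mapsto\nrmop{\mathcal A[P,\cdot]}$ has the same blind spot, since it differentiates in only one of the two Gaussian slots.

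The fix is to use \cref{lem:Gaussian-chaos} as it is actually stated, for $\cT[\xi,A\xi]$ with $\cT$ and $A$ symmetric, taking $A=\Id$ and $A=\nabla^2V(x)$. Its proof checks precisely the missing quantities for unit $y$, namely $\nrmF{\nabla^3V(x)[y,\cdot,\cdot]\,\nabla^2V(x)}\leq\beta\betaF$ and $\nrmF{\nabla^3V(x)[\nabla^2V(x)y,\cdot,\cdot]}\leq\beta\betaF$, using full symmetry of $\nabla^3V(x)$. The resulting tail bound $\beta\betaF(\sqrt d+\log(1/\delta))$ then integrates to the claimed $\exp(C\sqrt d)$.
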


\scedit{
\begin{proof}
Let $\nu_t\deq\cPt{t}\nu$ and $r_t\deq\d\nu_t/\d\target$. By stationarity and data processing,
\begin{align}
    \Ren*[3/2]{\nu_t}{\target}\leq\Ren[3/2]{\nu}{\target}\,.
\end{align}
Consequently, for every non-negative measurable $H$, H\"older's inequality gives
\begin{align}\label{eq:chaos-change-measure}
    \En_{\nu_t}e^H
    \leq \prn*{\En_{\target}r_t^{3/2}}^{2/3}
        \prn*{\En_{\target}e^{3H}}^{1/3}
    \leq \exp\prn*{\frac13\Ren[3/2]{\nu}{\target}}
        \prn*{\En_{\target}e^{3H}}^{1/3}\,.
\end{align}
Under $\target$, $X_t$ and $P_t$ are independent and $P_t\sim\normal{0,\Id}$. Conditional on $X_t=x$, \cref{lem:Gaussian-chaos} with $A=\Id$ gives
\begin{align}
    \bbP\prn*{\nrm{\nabla^3V(x)[P_t,P_t]}
        \gtrsim\betaF\prn*{\sqrt d+u}}\leq e^{-u}\,,
    \qquad u\geq1\,.
\end{align}
The same result with $A=\nabla^2V(x)$, together with
$\nrmop{\nabla^2V(x)}\leq\beta$, gives the analogous bound with scale
$\betaF\beta$. Integrating these two tail bounds shows that the corresponding exponential moments under $\target$ are at most $\exp(C\sqrt d)$ after enlarging by a universal constant $C$. Finally, conditional on $X_t=x$, the standard Gaussian quadratic-form identity gives
\begin{align}
    \En\exp\prn*{\frac{\nrm{F(x)P_t}^2}{C\nrmF{F(x)}^2}}\leq e^C
\end{align}
for a universal $C$, with the stated convention when $F(x)=0$. Applying \eqref{eq:chaos-change-measure} in the three cases proves the claim.
\end{proof}
}

\subsection{Concentration inequalities}

\begin{lemma}\label{lem:max-B}
Let $(B_t)_{t\geq0}$ be standard $d$-dimensional Brownian motion. For every $T\geq0$ and $\delta\in(0,1)$, with probability at least $1-\delta$,
\begin{align}
    \max_{0\leq t\leq T}\nrm{B_t}^2\leqsim T(d+\log(1/\delta)).
\end{align}
\end{lemma}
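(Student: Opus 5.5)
The plan is to bound the maximum of $\nrm{B_t}$ rather than its square, using Lipschitz concentration for Gaussian processes together with a bound on the expectation. By Brownian scaling, $(B_t)_{t\in[0,T]}$ has the same law as $(\sqrt T\,B_{s})_{s\in[0,1]}$, so it suffices to treat $T=1$ and multiply by $T$ at the end (the case $T=0$ is trivial).

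\textbf{Step 1 (Lipschitz concentration).} Consider the functional $F(B)\deq\max_{0\leq t\leq1}\nrm{B_t}$. Write it as a supremum of the centered Gaussian process $\tri{u,B_t}$ over $u$ in the unit sphere and $t\in[0,1]$. Each of these variables has variance $t\leq 1$. By Borell--TIS (equivalently, Gaussian concentration for the $1$-Lipschitz functional of the driving Gaussian in Wiener space), we get
\begin{align}
    \bbP\bigl(F\geq \En F+u\bigr)\leq e^{-u^2/2}\,.
\end{align}

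\textbf{Step 2 (the mean).} Bound $\En F\leq \sqrt{\En F^2}$. Since $\nrm{B_t}$ is a nonnegative submartingale, Doob's $L^2$ maximal inequality gives
\begin{align}
    \En F^2\leq 4\,\En\nrm{B_1}^2=4d\,.
\end{align}
Hence $\En F\leq 2\sqrt d$.

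\textbf{Step 3 (combine).} Take $u=\sqrt{2\log(1/\delta)}$. Then with probability at least $1-\delta$,
\begin{align}
    F\leq 2\sqrt d+\sqrt{2\log(1/\delta)}\,,
\end{align}
and squaring with $(a+b)^2\leq 2a^2+2b^2$ gives $F^2\lesssim d+\log(1/\delta)$. Rescaling by $T$ finishes the proof.

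The only step needing care is the justification of Borell--TIS for a supremum over an uncountable index set. This is handled by path continuity, which lets us restrict to a countable dense set of $(u,t)$.

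As an alternative avoiding Borell--TIS, apply Doob's maximal inequality to the submartingale $\exp(\lambda\nrm{B_t}^2)$ with $\lambda=1/4$. This gives
\begin{align}
    \bbP\bigl(F^2\geq a\bigr)\leq e^{-a/4}\,\En e^{\nrm{B_1}^2/4}=e^{-a/4}\,2^{d/2}\,,
\end{align}
which directly yields $a\lesssim d+\log(1/\delta)$. I would likely present this shorter route.
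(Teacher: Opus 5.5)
Your proof is correct, and the route you say you would present (Doob's maximal inequality for the nonnegative submartingale $\exp(\nrm{B_t}^2/4)$ on $[0,1]$, then a Chernoff step using $\En e^{\nrm{B_1}^2/4}=2^{d/2}$) is essentially the paper's argument. The paper quotes the exponential-moment bound $\En\exp\bigl(\frac{1}{4T}\max_{0\leq t\leq T}\nrm{B_t}^2\bigr)\leq 3^d$ (Lemma 6.2.4 of Chewi's book) and then applies Markov's inequality.

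Your primary Borell--TIS route, with Doob's $L^2$ inequality for the mean, is also valid. However, it uses heavier machinery to reach the same order $d+\log(1/\delta)$. The exponential-moment form has one practical advantage: later proofs (e.g., of the bias and clipping lemmas for ULD) need an exponential moment of $\max_t\nrm{\Bbar_t}^2$ at scale $\lesssim 1/T$. Your alternative and the paper's argument give this directly, whereas from the Borell--TIS tail you would first have to integrate.
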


\scedit{
\begin{proof}
This is the estimate in \citet[Lemma 6.2.4]{Chewi26Book}. Indeed, that result with $\lambda=1/(4T)$ gives
\begin{align}
    \En\exp\prn*{\frac{1}{4T}\max_{0\leq t\leq T}\nrm{B_t}^2}
    \leq 3^d\,.
\end{align}
Markov's inequality proves the result. The case $T=0$ is immediate.
\end{proof}
}

The following result is a special case of \citet[Theorem 3.5]{pinelis2012optimum}.
\begin{lemma}\label{lem:Ito-Freedman}
Fix $T>0$. Let $(B_t)_{t\geq0}$ be standard $d$-dimensional Brownian motion, and let $(J_t)_{t\geq0}$ be an $\RR^{d\times d}$-valued process that is predictable with respect to the filtration of $(B_t)_{t\geq0}$ and satisfies $\int_0^T\nrmF{J_s}^2\,\screplace{ds}{\d s}<\infty$ almost surely. Denote $Z_t\deq\int_0^tJ_s\,\screplace{dB_s}{\d B_s}$ and $M_t\deq\int_0^t\nrmF{J_s}^2\,\screplace{ds}{\d s}$. Then, for every $L>0$ and $\delta\in(0,1)$,
\begin{align}
    \bbP\prn*{ \max_{0\leq t\leq T}\nrm{Z_t}^2\geq 2L\log(2/\delta),\ M_T\leq L }\leq \delta\,.
\end{align}
\end{lemma}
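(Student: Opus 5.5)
The plan is to derive the statement directly from Pinelis' exponential supermartingale for Hilbert-space-valued martingales. We regard $Z_t=\int_0^tJ_s\,\d B_s$ as an $\RR^d$-valued continuous martingale whose quadratic variation satisfies $\d\langle Z\rangle_t\le \nrmF{J_t}^2\,\d t=\d M_t$ in trace; this follows from It\^o's isometry, since $\d\,\tr[Z]_t=\nrmF{J_t}^2\,\d t$. The key tool, which is the content of \citet[Theorem 3.5]{pinelis2012optimum} specialized to continuous martingales in a $(2,1)$-smooth space such as Euclidean $\RR^d$, is the following. For every $\lambda>0$, the process
\begin{align}
    G_t\deq \cosh\bigl(\lambda\nrm{Z_t}\bigr)\exp\Bigl(-\frac{\lambda^2}{2}M_t\Bigr)
\end{align}
is a nonnegative supermartingale with $G_0=1$. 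The reason is that It\^o's formula applied to $\cosh(\lambda\nrm{z})$ produces a second-order term bounded by $\frac{\lambda^2}{2}\cosh(\lambda\nrm{z})$ times the trace of the quadratic variation. This holds because the Hessian of $\cosh(\lambda\nrm z)$ has operator norm at most $\lambda^2\cosh(\lambda\nrm z)$: the radial part is $\lambda^2\cosh$, and the tangential part is $\lambda\sinh(\lambda r)/r\le\lambda^2\cosh(\lambda r)$. One could worry about a constant, since Pinelis uses $D=1$ for Hilbert spaces. Because the Hessian bound is exactly $\lambda^2\cosh$, the drift is dominated by $\frac{\lambda^2}{2}\cosh\cdot\d M_t$, and the exponential compensator absorbs it. To avoid integrability issues, we localize with stopping times $\tau_n=\inf\{t:M_t\ge n \text{ or } \nrm{Z_t}\ge n\}$ and then let $n\to\infty$.

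Next comes the maximal inequality on the event. Let $\tau\deq\inf\{t\le T:\nrm{Z_t}\ge r\}$ with $r^2=2L\log(2/\delta)$. On the event $\{\max_{t\le T}\nrm{Z_t}\ge r,\ M_T\le L\}$ we have $\tau\le T$, $\nrm{Z_\tau}=r$, and $M_\tau\le L$, so $G_{\tau\wedge T}\ge\cosh(\lambda r)e^{-\lambda^2L/2}\ge\frac12e^{\lambda r-\lambda^2L/2}$. Optional stopping gives $\E G_{\tau\wedge T}\le1$, and Markov's inequality then bounds the probability by $2\exp(-\lambda r+\lambda^2L/2)$. Choosing $\lambda=r/L$ gives $2\exp(-r^2/(2L))=\delta$, which is exactly the claim. (The problem is scale-invariant in $\lambda$, so this choice is optimal.)

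The only delicate step is verifying the supermartingale property, in particular the smoothness of $\cosh(\lambda\nrm z)$ at $z=0$ and the Hessian bound above. At the origin the function is smooth because $\cosh$ is even, so $\cosh(\lambda\nrm z)$ is a power series in $\nrm z^2$. The predictability and finiteness assumption $\int_0^T\nrmF{J_s}^2\,\d s<\infty$ ensures that $Z$ is a continuous local martingale, so the localization argument applies. Alternatively, one can simply cite Pinelis' Theorem 3.5 with the $(2,1)$-smoothness constant of Euclidean space and the continuous-time version of the conditional variance bound.
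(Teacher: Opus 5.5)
Your proof is correct and is essentially the paper's argument. The paper stops $Z$ when $M_t$ first exceeds $L$ and cites Pinelis's Theorem~3.5 for the stopped martingale. You instead re-derive that inequality inline via the supermartingale $\cosh(\lambda\|Z_t\|)e^{-\lambda^2 M_t/2}$, using the Hessian bound $\lambda\sinh(\lambda r)/r\le\lambda^2\cosh(\lambda r)$ and $\mathrm{tr}(HJJ^\top)\le\|H\|_{\mathrm{op}}\|J\|_F^2$. Because the random compensator $M_t$ sits inside that supermartingale, optional stopping at the hitting time of level $r$ alone handles the event $\{M_T\le L\}$, so no second stopping time is needed, and $\lambda=r/L$ gives exactly $\delta$.
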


\scedit{
\begin{proof}
Stop the Hilbert-valued continuous martingale $Z$ when $M_t$ first exceeds $L$. The stopped martingale has quadratic characteristic at most $L$, so \citet[Theorem 3.5]{pinelis2012optimum} gives
\begin{align}
    \bbP\prn*{\max_{0\leq t\leq T}\nrm{Z_t}\geq u,\ M_T\leq L}
    \leq 2\exp\prn*{-\frac{u^2}{2L}}\,.
\end{align}
Taking $u=\sqrt{2L\log(2/\delta)}$ proves the claim.
\end{proof}
}

\begin{lemma}[Hilbert-valued Gaussian Hanson--Wright inequality]
\label{lem:hilbert-HW}
Let $H$ be a Hilbert space and let $(a_{ij})_{i,j=1}^d$ be a symmetric matrix with
entries in $H$. Let $g \sim \mathcal N(0,\Id)$ and define
\[
    Z:=\left\|\sum_{i,j=1}^d a_{ij}(g_i g_j-\mathbf 1_{i=j})\right\|_H.
\]
Then, for all $\delta\in(0,1/2)$, with probability at least $1-\delta$,
\[
    Z \lesssim \mathbb E Z + R
        + U \sqrt{\log\frac1\delta}
        + V \log\frac1\delta ,
\]
where
\[
    R:=\mathbb E\left\|\sum_{i\ne j} a_{ij}G_{ij}\right\|_H ,
\]
with $(G_{ij})_{i\ne j}$ i.i.d. $\mathcal N(0,1)$, and
\[
    U
    :=
    \sup_{\|x\|\le 1}
    \mathbb E\left\|\sum_{i,j=1}^d a_{ij}x_i g_j\right\|_H
    +
    \sup_{\|B\|_F\le 1}
    \left\|\sum_{i,j=1}^d a_{ij}B_{ij}\right\|_H ,
\]
\[
    V
    :=
    \sup_{\|x\|\le 1,\ \|y\|\le 1}
    \left\|\sum_{i,j=1}^d a_{ij}x_i y_j\right\|_H.
\]
\end{lemma}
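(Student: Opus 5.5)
The plan is to prove this directly by Gaussian concentration for a function whose Lipschitz constant is itself random, in the spirit of the Banach-space Hanson--Wright inequalities of Adamczak--Lata{\l}a--Meller. Write $Q(g)\deq\sum_{i,j}a_{ij}(g_ig_j-\mathbf 1_{i=j})$ and $F(g)\deq\nrm{Q(g)}_H=\sup_{\nrm{h}_H\le1}\tri{h,Q(g)}$. Since $a$ is symmetric, for each fixed $h$ the map $g\mapsto\tri{h,Q(g)}$ has gradient $2\bigl(\sum_j\tri{h,a_{ij}}g_j\bigr)_i$. The local Lipschitz modulus of $F$ at $g$ is therefore controlled by
\begin{align}
    G(g)\deq\sup_{\nrm{x}\le1}\Bigl\|\sum_{i,j}a_{ij}x_ig_j\Bigr\|_H\,,
\end{align}
and $F$ is $2\sup_{g'\in[g,g'']}G(g')$-Lipschitz along segments.

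The key steps, in order, are as follows.

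\emph{Step 1: $G$ is $V$-Lipschitz.} The map $g\mapsto G(g)$ is a supremum of linear functionals of $g$, so it is $V$-Lipschitz. Gaussian concentration then gives $G\le\En G+V\sqrt{2\log(2/\delta)}$ with probability at least $1-\delta/2$.

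\emph{Step 2: bound $\En G$ by $U$.} Write $G$ as the Gaussian supremum $\sup_{x,h}\sum_j\bigl(\sum_i\tri{h,a_{ij}}x_i\bigr)g_j$. A Chevet-type / Sudakov--Fernique comparison gives
\begin{align}
    \En G\lesssim\sup_{\nrm{x}\le1}\En\Bigl\|\sum_{i,j}a_{ij}x_ig_j\Bigr\|_H+\sup_{\nrm{h}\le1}\nrmF{(\tri{h,a_{ij}})_{ij}}\,.
\end{align}
The last term equals $\sup_{\nrmF{B}\le1}\nrm{\sum a_{ij}B_{ij}}_H$, so $\En G\lesssim U$. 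This is exactly why $U$ has the stated two-term form.

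\emph{Step 3: truncate and extend.} Let $m\deq C(U+V\sqrt{\log(1/\delta)})$ and consider the convex-ish set $A\deq\{G\le m\}$. Since $G$ is a supremum of norms of linear maps, it is convex, so $A$ is convex and segments inside it are allowed. On $A$, $F$ is $2m$-Lipschitz; extend it via a McShane extension to $\tilde F$, which is $2m$-Lipschitz on all of $\R^d$. Gaussian concentration yields $\tilde F\le\En\tilde F+2m\sqrt{2\log(2/\delta)}$ with probability $1-\delta/2$. On $A$ we have $F=\tilde F$, and $\bbP(A^c)\le\delta/2$ by Steps 1--2. Multiplying out, $m\sqrt{\log(1/\delta)}\lesssim U\sqrt{\log(1/\delta)}+V\log(1/\delta)$, which gives the deviation terms.

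\emph{Step 4: compare $\En\tilde F$ with $\En Z$.} I expect this to be the main obstacle. Because $\tilde F$ agrees with $F$ only on $A$, I will bound $\En\tilde F\le\En F+\En\abs{\tilde F-F}\ind{A^c}$. The error term is handled with Cauchy--Schwarz, using the moments of $F$ and $\tilde F$ (polynomial tails) together with $\bbP(A^c)$ small when $\delta$ is replaced by a constant level. Alternatively, I will compare medians: apply the Step 3 construction at a constant confidence level, so that $\operatorname{Med}\tilde F\le\operatorname{Med}F\le2\En Z$. For the concentration, I will then use the median of $\tilde F$ instead of its mean.

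If controlling the mean of $Z$ itself proves delicate for the off-diagonal part, the fallback is a decoupling. I would split $Q$ into the diagonal sum $\sum_ia_{ii}(g_i^2-1)$, a sum of independent centered sub-exponential $H$-valued variables, and the off-diagonal chaos. The off-diagonal part is decoupled via de la Pe\~na--Montgomery-Smith, and then conditioned on one copy. This route is where the term $R=\En\nrm{\sum_{i\ne j}a_{ij}G_{ij}}_H$ naturally enters, as the Gaussian-process bound for the conditional mean. Since the statement allows $R$ on the right-hand side, either route suffices.
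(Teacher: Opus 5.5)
Your Step~2 is false, and it is the crux of the whole argument. Take $H=\mathbb R^d$ and $a_{ij}=\mathbf 1_{i=j}e_i$, which is symmetric. Then $\sum_{i,j}a_{ij}x_ig_j=(x_ig_i)_i$, so $G(g)=\max_i\abs{g_i}$ and $\En G\asymp\sqrt{\log d}$, whereas $U\le 2$ and $V=1$.

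More generally, $G=\nrmop{\sum_j g_jM_j}$ with $M_jx\deq\sum_i a_{ij}x_i$. The two halves of $U$ are comparable to $\sup_x(\sum_j\nrm{M_jx}^2)^{1/2}$ and $\sup_{\nrm{h}\le1}(\sum_j\nrm{M_j^*h}^2)^{1/2}$. So $U$ is the matrix variance parameter $\sigma$, and your claim $\En G\lesssim U$ is noncommutative Khintchine without its $\sqrt{\log d}$ factor. That fails, as the diagonal example shows. Chevet's inequality needs an i.i.d.\ Gaussian matrix (tensor structure). Sudakov--Fernique would need a process $Y_x+Y'_h$ dominating the increments of the general bilinear process $\sum_{i,j}\tri{h,a_{ij}}x_ig_j$, and no such process exists in general.

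Consequently, in Step~3 the set $A=\{G\le C(U+V\sqrt{\log(1/\delta)})\}$ can have probability tending to $0$ for fixed $\delta$ as $d\to\infty$. The truncation argument then collapses.

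What Steps~1, 3 and 4 actually prove, using the correct threshold $m=\En G+V\sqrt{2\log(2/\delta)}$, is the classical tail bound for Gaussian chaoses: $Z\lesssim\En Z+\En G\,\sqrt{\log(1/\delta)}+V\log(1/\delta)$. This holds once the median comparison in Step~4 is repaired, e.g.\ by comparing a lower quantile of $\tilde F$ with $\operatorname{Med}F$.

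The lemma is strictly stronger. It requires showing, for every $t\ge1$, that
\[
\sqrt t\,\En G\lesssim\En Z+R+\sqrt t\,U+tV\,,
\]
i.e.\ that the gap between $\En\sup_x$ and $\sup_x\En$ is always absorbed by $\En Z$, $R$ and the $tV$ term. In the example above it is absorbed only because $\En Z\asymp\sqrt d$. This is precisely the nontrivial content of Adamczak--Lata{\l}a--Meller (Theorem~6 and Remark~9), which the paper simply cites, and your proposal contains no argument for it.

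The decoupling fallback does not escape the problem. Conditionally on $g'$, the map $g\mapsto\nrm{\sum_{i,j}a_{ij}g_ig'_j}_H$ is $G(g')$-Lipschitz, so the same $\En G$ reappears in the deviation term. Moreover, the conditional mean, as a function of $g'$, is controlled by the first half of $U$, not by $R$. So $R$ does not enter the way you suggest, and nothing in that route removes the $\sqrt{\log d}$ loss.
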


\begin{proof}
See \citet[Theorem 6 and Remark 9]{adamczak2020hanson}.
\end{proof}

The following lemma generalizes \citet[Lemma 3.2]{zhang2026algorithmic}.

\begin{lemma}[Gaussian chaos bound]\label{lem:Gaussian-chaos}
Let $\cT\in(\RR^d)^{\otimes 3}$ be a symmetric $3$-tensor and $A\in\RR^{d\times d}$ be a symmetric matrix. Assume $\nrm{\cT}_{\{1,2\},\{3\}}\leq \bar\beta$ and $\nrmop{A}\leq L$. Then for $\xi\sim\normal{0,\Id}$ and all $\delta\in(0,1/2)$, with probability at least $1-\delta$,
\begin{align}
    \nrm{\cT[\xi,A\xi]}
    \leqsim \bar\beta L\prn*{\sqrt d+\log\frac1\delta}.
\end{align}
More generally,
\begin{align}\label{eq:tensor-chaos-general-A}
    \nrm{\cT[\xi,A\xi]}
    \leqsim \bar\beta\prn*{\nrmF{A}+\nrmop{A}\log\frac1\delta}.
\end{align}
\end{lemma}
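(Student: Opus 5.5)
We apply \cref{lem:hilbert-HW} with $H=\RR^d$ and coefficients $a_{ij}\deq \tfrac12\bigl(\cT[e_i,Ae_j]+\cT[e_j,Ae_i]\bigr)\in\RR^d$, where $\cT[u,v]$ denotes the vector $(\sum_{a,b}\cT_{abk}u_av_b)_k$. This array is symmetric in $(i,j)$, and $\sum_{i,j}a_{ij}\xi_i\xi_j=\cT[\xi,A\xi]$. The diagonal correction is the deterministic vector $m\deq\sum_i a_{ii}=\sum_i\cT[e_i,Ae_i]$. We will handle it separately using the triangle inequality
\[
\nrm{\cT[\xi,A\xi]}\le Z+\nrm{m}.
\]
The basic estimate used throughout is the duality bound from \cref{asmp:Lip-Hess}: for any matrix $M$, $\nrm{\sum_{i,j}M_{ij}\cT[e_i,e_j]}=\sup_{\nrm{u}\le1}\sum_{ijk}\cT_{ijk}M_{ij}u_k\le\bar\beta\nrmF{M}$, and likewise after symmetrizing $M$.

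We bound each term for the general statement \eqref{eq:tensor-chaos-general-A}.

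\textbf{Mean.} We have $m=\sum_{a,b}A_{ba}\cT[e_a,e_b]$, i.e.\ $M=A^\top$ in the duality bound, so $\nrm{m}\le\bar\beta\nrmF{A}$.

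\textbf{$\mathbb E Z$ and $R$.} By Jensen, $(\mathbb E Z)^2\le\mathbb E Z^2$, and for a symmetric Gaussian chaos with vector coefficients, $\mathbb E Z^2\lesssim\sum_{i,j}\nrm{a_{ij}}^2$. The same bound controls $R$. Writing $\sum_{i,j}\nrm{a_{ij}}^2\le\sum_{i,j}\nrm{\cT[e_i,Ae_j]}^2=\sum_{i,j}\sum_k(\cT_k A)_{ij}^2=\sum_k\nrmF{\cT_kA}^2$, where $\cT_k$ is the $k$-th slice, this is at most $\nrmop{A}^2\sum_k\nrmF{\cT_k}^2$.

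This is the main obstacle. The quantity $\sum_k\nrmF{\cT_k}^2$ is the full Frobenius norm of $\cT$, which can be as large as $d\bar\beta^2$. That would give $\bar\beta L\sqrt d$, which suffices for the first claim, but not the $\nrmF{A}$ form of the second.

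For the general bound we instead compute more carefully. We have $\sum_{i,j}\nrm{\cT[e_i,Ae_j]}^2=\sum_j\nrm{\cT[\cdot,Ae_j]}_{\mathrm{F}}^2$. For fixed $v=Ae_j$, the matrix $\cT[\cdot,v]$ has Frobenius norm $\le\bar\beta\nrm{v}$: by symmetry of $\cT$, $\cT[\cdot,v]=\cT[\cdot,\cdot]$ contracted with $v$ in a slot, and the $\{1,2\},\{3\}$ norm is exactly $\sup_{\nrm{v}\le1}\nrmF{\cT[v]}$. Summing over $j$ gives $\le\bar\beta^2\sum_j\nrm{Ae_j}^2=\bar\beta^2\nrmF{A}^2$. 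Hence $\mathbb E Z+R\lesssim\bar\beta\nrmF{A}$. Verifying this reinterpretation of the tensor norm is the key check.

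\textbf{$U$.} For fixed $x$, $\sum_{i,j}a_{ij}x_ig_j$ is a Gaussian vector whose second moment is bounded by $\sum_j\nrm{\sum_i x_ia_{ij}}^2\lesssim\bar\beta^2\nrmop{A}^2$, using the same slot-contraction argument with $\cT[x]$ of Frobenius norm $\le\bar\beta$. The second part of $U$, with $\nrmF{B}\le1$, is $\le\bar\beta\nrmF{BA^\top}\le\bar\beta\nrmop{A}$ by duality. So $U\lesssim\bar\beta\nrmop{A}$.

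\textbf{$V$.} Since $\nrmF{xy^\top A^\top}\le\nrmop{A}$, we get $V\lesssim\bar\beta\nrmop{A}$.

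Combining these bounds, with probability at least $1-\delta$,
\[
\nrm{\cT[\xi,A\xi]}\lesssim\bar\beta\Bigl(\nrmF{A}+\nrmop{A}\bigl(\sqrt{\log(1/\delta)}+\log(1/\delta)\bigr)\Bigr),
\]
which is \eqref{eq:tensor-chaos-general-A} since $\delta<1/2$. The first claim then follows from $\nrmF{A}\le\sqrt d\,\nrmop{A}\le\sqrt d\,L$.
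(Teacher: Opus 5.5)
Your proposal is correct and follows essentially the same route as the paper. It uses the same symmetrized vector coefficients $\tfrac12(\cT[e_i,Ae_j]+\cT[e_j,Ae_i])$, the same duality bound $\bar\beta\nrmF{A}$ on the mean, and the Hilbert-valued Hanson--Wright inequality with $\En Z, R\lesssim\bar\beta\nrmF{A}$ (via $\sum_j\nrmF{\cT[Ae_j,\cdot,\cdot]}^2\le\bar\beta^2\nrmF{A}^2$) and $U,V\lesssim\bar\beta\nrmop{A}$; the slot-contraction reinterpretation you flag as the key check is exactly the paper's step, and it holds because $\nrm{\cT}_{\{1,2\},\{3\}}=\sup_{\nrm{u}\le1}\nrmF{\cT[u,\cdot,\cdot]}$ and $\cT$ is symmetric.
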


\begin{proof}
For $j,\ell\in[d]$, define the vector
\begin{align}
    d_{j\ell}\deq \cT[\cdot,e_j,Ae_\ell]\in\RR^d,
    \qquad
    c_{j\ell}\deq \frac12(d_{j\ell}+d_{\ell j}).
\end{align}
Since $\xi_j\xi_\ell$ is symmetric in $(j,\ell)$,
\begin{align}
    \cT[\xi,A\xi]=\sum_{j,\ell=1}^d c_{j\ell}\xi_j\xi_\ell.
\end{align}
Let $\bar c\deq \sum_{j=1}^d c_{jj}=\En\cT[\xi,A\xi]$. For any unit vector $u$,
\begin{align}
    \tri{u,\bar c}=\tr(\cT[u,\cdot,\cdot]A),
\end{align}
and hence $\nrm{\bar c}\leq \bar\beta\nrmF{A}$. It remains to control the centered chaos
\begin{align}
    Z\deq \nrm*{\sum_{j,\ell=1}^d c_{j\ell}(\xi_j\xi_\ell-\indic\crl{j=\ell})}.
\end{align}
We apply \cref{lem:hilbert-HW} with $H=\RR^d$ and coefficients $a_{j\ell}=c_{j\ell}$.

First, by symmetrization and the symmetry of $\cT$,
\begin{align}
    \sum_{j,\ell=1}^d \nrm{c_{j\ell}}^2
    \leq \sum_{j,\ell=1}^d \nrm{d_{j\ell}}^2
    =\sum_{\ell=1}^d \nrmF{\cT[Ae_\ell,\cdot,\cdot]}^2
    \leq \bar\beta^2\nrmF{A}^2.
\end{align}
Therefore $\En Z\leq (\En Z^2)^{1/2}\lesssim \bar\beta\nrmF{A}$ and $R\lesssim \bar\beta\nrmF{A}$ in the notation of \cref{lem:hilbert-HW}.

We next bound the two terms entering $U$. For the deterministic part, for any matrix $B$ with $\nrmF{B}\leq1$ and any unit vector $u$,
\begin{align}
    \tri*{u,\sum_{j,\ell}d_{j\ell}B_{j\ell}}
    =\tri*{\cT[u,\cdot,\cdot]A,B}_F
    \leq \bar\beta\nrmop{A}.
\end{align}
The same bound holds with $d_{j\ell}$ replaced by $c_{j\ell}$. For the Gaussian-linear part, for any $\nrm{x}\leq1$,
\begin{align}
    \En\nrm*{\sum_{j,\ell}c_{j\ell}x_j\xi_\ell}
    &\leq \prn*{\sum_{\ell=1}^d\nrm*{\sum_{j=1}^d c_{j\ell}x_j}^2}^{1/2}  \\
    &\leq \frac12\prn*{\nrmF{\cT[x,\cdot,\cdot]A}+\nrmF{\cT[Ax,\cdot,\cdot]}}
    \leq \bar\beta\nrmop{A}.
\end{align}
Thus $U\lesssim\bar\beta\nrmop{A}$.
Finally, for unit vectors $x,y$,
\begin{align}
    \nrm*{\sum_{j,\ell}c_{j\ell}x_jy_\ell}
    \leq \frac12\nrm{\cT[x,Ay]}+\frac12\nrm{\cT[y,Ax]}
    \leq \bar\beta\nrmop{A},
\end{align}
so $V\leq \bar\beta\nrmop{A}$.

Plugging these estimates into \cref{lem:hilbert-HW} gives
\begin{align}
    Z\leqsim \bar\beta\nrmF{A}
    +\bar\beta\nrmop{A}\sqrt{\log\frac1\delta}
    +\bar\beta\nrmop{A}\log\frac1\delta.
\end{align}
Since $\delta<1/2$, the middle term is absorbed into the last one. Adding the mean bound proves \cref{eq:tensor-chaos-general-A}, and the displayed corollary follows from $\nrmF{A}\leq \sqrt d\,\nrmop{A}\leq \sqrt d L$.
\end{proof}

\subsection{Information-theoretic inequalities}

\begin{lemma}\label{lem:Renyi-triangle}
Let $\nu\ll\mu\ll\pi$ and $\eps\in(0,1]$. For every $\lambda\geq1$,
\begin{align}
    \Ren[1+\lambda]{\nu}{\pi}
    \leq \frac{(1+\eps)(1+\lambda)}{1+(1+\eps)\lambda}
    \Ren[1+(1+\eps)\lambda]{\nu}{\mu}+\Ren[4\lambda/\eps]{\mu}{\pi}.
\end{align}
For every $\lambda>1$,
\begin{align}
    \Ren[\lambda]{\nu}{\pi}
    \leq \prn*{1+\frac{\eps}{\lambda-1}}\Ren[(1+\eps)\lambda]{\nu}{\mu}+\Ren[2\lambda/\eps]{\mu}{\pi}.
\end{align}
\end{lemma}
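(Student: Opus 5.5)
Both inequalities follow from one H\"older step under $\mu$, using the chain rule $\frac{\d\nu}{\d\pi}=\frac{\d\nu}{\d\mu}\cdot\frac{\d\mu}{\d\pi}$. We bound the resulting orders by those in the statement using monotonicity of $\lambda\mapsto\Ren[\lambda]{\cdot}{\cdot}$.

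For a general order $r>1$, write $f=\frac{\d\nu}{\d\mu}$ and $g=\frac{\d\mu}{\d\pi}$. Then
\begin{align}
    \En_\pi\Bigl[\Bigl(\frac{\d\nu}{\d\pi}\Bigr)^{r}\Bigr]=\En_\mu\bigl[f^{r}\,g^{r-1}\bigr]\,.
\end{align}
Applying H\"older with conjugate exponents $a,b$ gives the bound $(\En_\mu f^{ra})^{1/a}(\En_\mu g^{(r-1)b})^{1/b}$. Both factors are exponentials of R\'enyi divergences:
\begin{align}
    \En_\mu f^{ra}=\exp\bigl((ra-1)\Ren[ra]{\nu}{\mu}\bigr)\,,\qquad
    \En_\mu g^{(r-1)b}=\En_\pi g^{(r-1)b+1}=\exp\bigl((r-1)b\,\Ren[(r-1)b+1]{\mu}{\pi}\bigr)\,.
\end{align}
Taking logarithms and dividing by $r-1$ yields
\begin{align}
    \Ren[r]{\nu}{\pi}\le \frac{ra-1}{a(r-1)}\,\Ren[ra]{\nu}{\mu}+\Ren[(r-1)b+1]{\mu}{\pi}\,.
\end{align}
What remains is choosing $a$ in each case.

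For the second inequality, take $r=\lambda$, $a=1+\eps$, and $b=(1+\eps)/\eps$. The coefficient is
\begin{align}
    \frac{(1+\eps)\lambda-1}{(1+\eps)(\lambda-1)}=1+\frac{\eps}{(1+\eps)(\lambda-1)}\le 1+\frac{\eps}{\lambda-1}\,.
\end{align}
The second order is $(\lambda-1)(1+\eps)/\eps+1=(\lambda-1)/\eps+\lambda$, which is at most $2\lambda/\eps$ because $\eps\le1$. Monotonicity in the order then finishes this case.

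For the first inequality, take $r=1+\lambda$ and $a=\frac{1+(1+\eps)\lambda}{1+\lambda}$, so that $ra=1+(1+\eps)\lambda$ and $b=\frac{1+(1+\eps)\lambda}{\eps\lambda}$. The coefficient is
\begin{align}
    \frac{(1+\eps)\lambda}{\lambda}\cdot\frac{1+\lambda}{1+(1+\eps)\lambda}=\frac{(1+\eps)(1+\lambda)}{1+(1+\eps)\lambda}\,,
\end{align}
exactly as stated. The second order is $\lambda b+1=1+\frac1\eps+\frac{(1+\eps)\lambda}{\eps}$. Using $\lambda\ge1$ and $\eps\le1$, this is at most $\frac{2\lambda}{\eps}+\frac{2\lambda}{\eps}=\frac{4\lambda}{\eps}$, and monotonicity again concludes.

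The only delicate point is choosing the H\"older exponent so the coefficient in the first inequality matches exactly; the rest is bookkeeping. Infinite right-hand sides make the claims trivial, so we may assume all divergences are finite.
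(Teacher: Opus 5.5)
Your proposal is correct and takes essentially the same route as the paper: the paper cites the R\'enyi weak triangle inequality (Chewi's book, Lemma 6.2.5), which is exactly the H\"older bound $\Ren[r]{\nu}{\pi}\le \frac{ra-1}{a(r-1)}\Ren[ra]{\nu}{\mu}+\Ren[(r-1)b+1]{\mu}{\pi}$ that you derive. Your choices of $a$ for the two instantiations, and the monotonicity bounds on the $\mu\|\pi$ orders, check out.
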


\scedit{
\begin{proof}
These are all instantiations of the R\'enyi weak triangle inequality \citep[see, e.g.,][Lemma 6.2.5]{Chewi26Book}.
\end{proof}
}

\begin{lemma}\label{lem:KL-triangle}
For probability distributions $\nu,\mu,\muhat$,
\begin{align}
    \Dkl*{\nu}{\muhat}\leq 2\Dkl{\nu}{\mu}+\log\prn*{1+\Dchis*{\mu}{\muhat}}\,.
\end{align}
\end{lemma}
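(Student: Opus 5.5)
Write $\rho\deq\frac{\d\nu}{\d\muhat}$ and $r\deq\frac{\d\mu}{\d\muhat}$, and assume $\nu\ll\mu\ll\muhat$, since otherwise the right-hand side is $+\infty$ and there is nothing to prove. The plan is to change measure from $\muhat$ to $\mu$ in the KL divergence and then control the resulting log-ratio with the Donsker--Varadhan variational formula, paying for it with the $\chi^2$ divergence $\Dchis{\mu}{\muhat}$.

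The first step is the decomposition
\begin{align}
    \Dkl*{\nu}{\muhat}=\En_\nu\Bigl[\log\frac{\d\nu}{\d\mu}\Bigr]+\En_\nu\Bigl[\log\frac{\d\mu}{\d\muhat}\Bigr]=\Dkl{\nu}{\mu}+\En_\nu[\log r]\,.
\end{align}
Everything then reduces to showing $\En_\nu[\log r]\le \Dkl{\nu}{\mu}+\log(1+\Dchis{\mu}{\muhat})$.

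The second step applies Donsker--Varadhan with reference measure $\mu$ and test function $g=\log r$:
\begin{align}
    \En_\nu[g]\le \Dkl{\nu}{\mu}+\log\En_\mu[e^{g}]\,.
\end{align}
Here $\En_\mu[r]=\En_{\muhat}[r^2]=1+\Dchis{\mu}{\muhat}$. Substituting this into the decomposition gives the claimed bound.

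The only delicate point is justifying Donsker--Varadhan when $g$ is unbounded or takes the value $-\infty$ on part of the space. For this I would use the Gibbs inequality form: for any measurable $g$ with $\En_\mu e^g<\infty$, set $\d\mu_g\propto e^g\,\d\mu$. Then $0\le\Dkl{\nu}{\mu_g}=\Dkl{\nu}{\mu}-\En_\nu g+\log\En_\mu e^g$. Finiteness of $\En_\nu[g_+]$ follows from this identity, since the $\chi^2$ term is finite by assumption. If $\Dkl{\nu}{\mu}=\infty$, the bound is trivial.
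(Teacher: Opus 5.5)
Your proposal is correct and is essentially the paper's proof: the same decomposition $\Dkl{\nu}{\muhat}=\Dkl{\nu}{\mu}+\En_\nu\log\frac{\d\mu}{\d\muhat}$, followed by Donsker--Varadhan with reference measure $\mu$ and test function $\log\frac{\d\mu}{\d\muhat}$, together with the identity $\En_\mu\frac{\d\mu}{\d\muhat}=1+\Dchis{\mu}{\muhat}$. Your Gibbs-inequality argument for unbounded or $-\infty$-valued test functions is a sound justification of a step the paper leaves implicit.
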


\scedit{
\begin{proof}
If the right-hand side is finite, then $\nu\ll\mu\ll\muhat$; otherwise there is nothing to prove. Apply the Donsker--Varadhan variational principle, with reference measure $\mu$ and test function $\log(\d\mu/\d\muhat)$. It yields
\begin{align}
    \Dkl*{\nu}{\muhat}
    &=\Dkl{\nu}{\mu}+\En_\nu\log\frac{\d\mu}{\d\muhat}\\
    &\leq 2\Dkl{\nu}{\mu}
        +\log\En_\mu\frac{\d\mu}{\d\muhat}
    =2\Dkl{\nu}{\mu}+\log\prn*{1+\Dchis*{\mu}{\muhat}}\,.
\end{align}
\end{proof}
}

\begin{lemma}\label{lem:Renyi-to-diff}
Let $(\Omega,\mathcal F,\rho)$ be a $\sigma$-finite measure space, and let
$f,g:\Omega\to\RR$ be measurable functions such that
\begin{align}
    0<Z_f\deq\int e^{-f}\,\d\rho<\infty,
    \qquad
    0<Z_g\deq\int e^{-g}\,\d\rho<\infty.
\end{align}
Define
\begin{align}
    \d\mu_f\deq Z_f^{-1}e^{-f}\,\d\rho,
    \qquad
    \d\mu_g\deq Z_g^{-1}e^{-g}\,\d\rho.
\end{align}
Then, for every $q>1$,
\begin{align}
    \Dsys[q]{\mu_f}{\mu_g}
    \leq \En_{\mu_f}\brk{e^{2q\,\abs{f-g}}-1}\,.
\end{align}
\end{lemma}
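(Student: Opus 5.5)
Write $\Delta\deq f-g$ and let $r\deq \d\mu_f/\d\mu_g=(Z_g/Z_f)\,e^{-\Delta}$. The plan is to bound both directions $\Dren[q]{\mu_f}{\mu_g}$ and $\Dren[q]{\mu_g}{\mu_f}$ by the single quantity $\En_{\mu_f}[e^{2q\abs{\Delta}}]-1$. Throughout we may assume $\En_{\mu_f}e^{2q\abs{\Delta}}<\infty$, since otherwise there is nothing to prove.

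\textbf{Normalizing constants.} The key identity is $Z_g/Z_f=\En_{\mu_f}[e^{\Delta}]$, which follows from $e^{-g}=e^{-f}e^{\Delta}$. By Jensen's inequality this gives
\begin{align}
    e^{\En_{\mu_f}\Delta}\le Z_g/Z_f\le \En_{\mu_f}e^{\abs{\Delta}}\,.
\end{align}
Since $r=\En_{\mu_f}[e^{\Delta}]\,e^{-\Delta}$, the density ratio is a product of a normalization factor and a pointwise factor.

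\textbf{Direction $\mu_f$ versus $\mu_g$.} Write
\begin{align}
    \En_{\mu_g}r^q=\En_{\mu_f}r^{q-1}=(Z_g/Z_f)^{q-1}\,\En_{\mu_f}e^{-(q-1)\Delta}\le \bigl(\En_{\mu_f}e^{\abs{\Delta}}\bigr)^{q-1}\,\En_{\mu_f}e^{(q-1)\abs{\Delta}}\,.
\end{align}
Apply Hölder (or Jensen with the power functions $x^{q-1}$ and $x^{2(q-1)}$). This gives $(\En_{\mu_f} e^{\abs{\Delta}})^{q-1}\le (\En_{\mu_f} e^{2q\abs{\Delta}})^{(q-1)/(2q)}$, and similarly $\En_{\mu_f}e^{(q-1)\abs{\Delta}}\le (\En_{\mu_f} e^{2q\abs{\Delta}})^{(q-1)/(2q)}$. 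The product has total exponent $(q-1)/q\le1$. Hence $\En_{\mu_g}r^q\le \En_{\mu_f}e^{2q\abs{\Delta}}$, using that this expectation is at least $1$.

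\textbf{Direction $\mu_g$ versus $\mu_f$.} Write
\begin{align}
    \En_{\mu_f}(1/r)^q=(Z_f/Z_g)^q\,\En_{\mu_f}e^{q\Delta}\le e^{-q\En_{\mu_f}\Delta}\,\En_{\mu_f}e^{q\Delta}\,.
\end{align}
By Cauchy--Schwarz, $\En_{\mu_f}[e^{q(\Delta-\En_{\mu_f}\Delta)}]$ is at most $(\En_{\mu_f} e^{2q\Delta})^{1/2}\, e^{-q\En_{\mu_f}\Delta}$. Jensen gives $e^{-q\En_{\mu_f}\Delta}\le (\En_{\mu_f} e^{-2q\Delta})^{1/2}$. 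Both square-root factors are at most $(\En_{\mu_f} e^{2q\abs{\Delta}})^{1/2}$, so the product is at most $\En_{\mu_f}e^{2q\abs{\Delta}}$.

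\textbf{Conclusion and main obstacle.} Subtracting $1$ in each direction and taking the maximum gives $\Dsys[q]{\mu_f}{\mu_g}\le \En_{\mu_f}[e^{2q\abs{f-g}}-1]$. The only delicate point is the normalizing constant $Z_f/Z_g$, which is not controlled pointwise. The plan handles it by expressing it as an expectation under $\mu_f$ and sacrificing the factor $2$ in the exponent, once via Cauchy--Schwarz and once via Jensen. I expect this factor-$2$ bookkeeping in the second direction to be the main check.
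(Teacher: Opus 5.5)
Your proof is correct and follows essentially the same route as the paper. You write $Z_g/Z_f=\En_{\mu_f}e^{f-g}$, express $1+\Dren[q]{\mu_f}{\mu_g}$ and $1+\Dren[q]{\mu_g}{\mu_f}$ as a normalization factor times a pointwise moment under $\mu_f$, and bound each factor by a fractional power of $\En_{\mu_f}e^{2q\abs{f-g}}$ with exponents summing to at most $1$. The only cosmetic difference is that the paper controls $(Z_f/Z_g)^q$ via $(\En_{\mu_f}e^{f-g})^{-1}\le\En_{\mu_f}e^{g-f}$ and H\"older, whereas you pass through $e^{-q\En_{\mu_f}(f-g)}$ by Jensen, which is the same estimate.
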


\begin{proof}
Set $h\deq f-g$, $A\deq\En_{\mu_f}e^h=Z_g/Z_f$, and
$M\deq\En_{\mu_f}e^{2q|h|}$. The result is immediate if $M=\infty$.
Otherwise, H\"older's inequality and
$(\En_{\mu_f}e^h)(\En_{\mu_f}e^{-h})\geq1$ give
\begin{align}
    A\leq M^{1/(2q)},
    \qquad A^{-1}\leq\En_{\mu_f}e^{-h}\leq M^{1/(2q)}.
\end{align}
Since $\d\mu_g/\d\mu_f=e^h/A$, we have
\begin{align}
    1+\Dren[q]{\mu_g}{\mu_f}
    &=A^{-q}\En_{\mu_f}e^{qh}\leq M,\\
    1+\Dren[q]{\mu_f}{\mu_g}
    &=A^{q-1}\En_{\mu_f}e^{-(q-1)h}
      \leq M^{(q-1)/q}\leq M.
\end{align}
Taking the maximum proves the claim.
\end{proof}

\begin{lemma}\label{lem:Renyi-kernel}
Suppose that $\T$ and $\That$ are transition kernels on $\cZ$, and
$\nu,\target$ are probability distributions such that $\nu\ll\target$ and
$\That_z\ll\T_z$ for $\nu$-almost every $z$. Then, for $\lambda>0$ and
$\eps\in(0,\frac12]$, it holds that
\begin{align}
    \Ren*[1+\lambda]{\That\nu}{\T\target}
    \leq \Ren[1+(1+\eps)\lambda]{\nu}{\target}+\frac{\eps}{\lambda}\log\En_{z\sim \nu}\prn*{1+\Dren*[1+\lambda]{\That_z}{\T_z}}^{1/\eps+1}.
\end{align}
\end{lemma}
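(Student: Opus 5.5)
We will work directly with the joint densities on $\cZ\times\cZ$. Let $\rho(z)\deq \d\nu/\d\target(z)$ and $r_z(z')\deq \d\That_z/\d\T_z(z')$. The marginal $\That\nu$ has density with respect to the marginal $\T\target$ given by a ratio of integrals, and data processing will reduce the problem to the joint level. The joint law $\nu\otimes\That$ (first draw $z\sim\nu$, then $z'\sim\That_z$) has density $\rho(z)\,r_z(z')$ with respect to $\target\otimes\T$, so
\begin{align}
    \Ren*[1+\lambda]{\That\nu}{\T\target}\leq \frac1\lambda\log\En_{z\sim\target}\Bigl[\rho(z)^{1+\lambda}\,\En_{\T_z}\bigl[r_z^{1+\lambda}\bigr]\Bigr]
    =\frac1\lambda\log\En_{z\sim\nu}\Bigl[\rho(z)^{\lambda}\,\bigl(1+\Dren*[1+\lambda]{\That_z}{\T_z}\bigr)\Bigr].
\end{align}
The second equality only uses the definition of $\Dren$ and the change of measure from $\target$ to $\nu$.

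We then split the product with H\"older's inequality under $\nu$, using the conjugate exponents $a=1+\eps$ and $a'=(1+\eps)/\eps=1/\eps+1$:
\begin{align}
    \En_\nu\bigl[\rho^\lambda G\bigr]\leq \bigl(\En_\nu\rho^{(1+\eps)\lambda}\bigr)^{1/(1+\eps)}\bigl(\En_\nu G^{1/\eps+1}\bigr)^{\eps/(1+\eps)},
\end{align}
where $G(z)=1+\Dren*[1+\lambda]{\That_z}{\T_z}$. The first factor is $\En_\target\rho^{1+(1+\eps)\lambda}=\exp\bigl((1+\eps)\lambda\,\Ren[1+(1+\eps)\lambda]{\nu}{\target}\bigr)$. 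After taking $\frac1\lambda\log$, this factor contributes exactly $\Ren[1+(1+\eps)\lambda]{\nu}{\target}$. The second factor contributes $\frac{\eps}{(1+\eps)\lambda}\log\En_\nu G^{1/\eps+1}$. This is at most the stated $\frac{\eps}{\lambda}\log\En_\nu G^{1/\eps+1}$, because $G\geq1$ makes the logarithm nonnegative.

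The only delicate step is the first one, namely justifying data processing for the joint-to-marginal passage. This requires $\nu\otimes\That\ll\target\otimes\T$, which follows from $\nu\ll\target$ together with $\That_z\ll\T_z$ for $\nu$-a.e.\ $z$. It also requires the conditional densities $r_z$ to be chosen measurably in $z$, which is standard via the disintegration and Radon--Nikodym theorems. The remaining steps are routine computations. The restriction $\eps\le\frac12$ is not really needed; it merely ensures that the exponents are well defined.
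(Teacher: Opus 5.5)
Your proof is correct and is essentially the paper's argument: apply data processing from the joint laws $\nu(\d z)\That_z(\d z')$ and $\target(\d z)\T_z(\d z')$ to their second marginals, then use H\"older under $\nu$ with exponents $1+\eps$ and $(1+\eps)/\eps$, and finally drop the factor $1/(1+\eps)$ in front of the second logarithm, which is justified because $1+\Dren*[1+\lambda]{\That_z}{\T_z}\geq 1$. One small correction to your side remark: the exponents are well defined for every $\eps>0$, so the restriction $\eps\leq\frac12$ is not used anywhere in the argument, not even for that purpose.
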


\scedit{
\begin{proof}
Let $r\deq\d\nu/\d\target$ and
\begin{align}
    a(z)\deq1+\Dren*[1+\lambda]{\That_z}{\T_z}\,.
\end{align}
Consider the joint laws $\nu(\d z)\That_z(\d z')$ and
$\target(\d z)\T_z(\d z')$. Data processing under the projection $(z,z')\mapsto z'$ and then H\"older's inequality with conjugate exponents $1+\eps$ and $(1+\eps)/\eps$ give
\begin{align}
    \Ren*[1+\lambda]{\That\nu}{\T\target}
    &\leq\frac1\lambda\log\En_{z\sim\nu}\brk{r(z)^\lambda a(z)}\\
    &\leq\frac{1}{\lambda(1+\eps)}
        \log\En_{z\sim\nu}r(z)^{\lambda(1+\eps)}
        +\frac{\eps}{\lambda(1+\eps)}
        \log\En_{z\sim\nu}a(z)^{1+1/\eps}\\
    &\leq\Ren*[1+(1+\eps)\lambda]{\nu}{\target}
        +\frac\eps\lambda\log\En_{z\sim\nu}a(z)^{1+1/\eps}\,.
\end{align}
This is the desired estimate.
\end{proof}
}

\subsection{Truncated algorithm}

\begin{theorem}\label{thm:trunc-alg}
Let $q\geq 2$, $0<\delta\leq \frac{\Delta}{2}$, $0<p\leq \min\crl*{1,\frac{\delta}{q}}$, \scedit{and $K\in\mathbb N$.} Suppose that $\T$ and $\That$ are transition kernels on $\cZ$, and $\target$ is stationary under $\T$. Let $\nu$ be an initial distribution over $\cZ$ with $\Ren[2]{\nu}{\target}\leq \Delta$, and let $\cE$ be an event such that $\That_z\ll\T_z$ for $\target$-almost every $z\in\cE$ and
\begin{align}
    1-\target(\cE)\leq \screplace{\prn*{\frac{p}{8K}}^3e^{-\Delta}}{\prn*{\frac{p}{64K}}^3e^{-2\Delta}}\,.
\end{align}
Further, we assume that for any $\nu'$ supported on $\cE$ with $\Ren[1.5]{\nu'}{\target}\leq 2\Delta$,
\begin{align}
    \En_{z\sim \nu'}\prn*{1+\Dren*[q]{\That_z}{\T_z}}^{5K}\leq e^\delta.
\end{align}

We consider the following algorithm:
\begin{itemize}
    \item Set $\nu_0=\nu(\cdot|\cE)$ and sample $Z^0\sim \nu_0$.
    \item Sample $\Zbar^k\sim \That_{Z^{k-1}}$. If $\Zbar^k\in \cE$, set $Z^k=\Zbar^k$; Otherwise, declare failure and set $Z^\ell=\perp$ for $\ell\geq k$.
\end{itemize}
Let $\pihat$ be the distribution of $Z^K$ conditioned on the algorithm succeeding.
Then it holds that $\Dren*[q/2]{\pihat}{\T^K\nu}\leq e^{\delta}-1$. Further, the failure probability of the algorithm is bounded by $p$.
\end{theorem}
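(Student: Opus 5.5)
We will prove \cref{thm:trunc-alg} by comparing the truncated chain with an idealized chain that runs $\T$ but is killed when it leaves $\cE$. The analysis has two parts: a change-of-measure bound for the killed kernels, and a bound on the failure probability.

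\textbf{Killed kernels.} Let $\T^\cE_z(\cdot)\deq \T_z(\cdot\cap\cE)$ and $\That^\cE_z(\cdot)\deq\That_z(\cdot\cap\cE)$ be the sub-Markov kernels on $\cE$. Let $\nu_k$, $\nuhat_k$ be the unnormalized laws of $Z^k$ on the success event under $\T^\cE$ and $\That^\cE$, both started at $\nu_0$. Then $\pihat=\nuhat_K/\nuhat_K(\cE)$.

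\textbf{Step 1 (pathwise change of measure).} Write the joint path densities $\prod_k \frac{\d\That_{z_{k-1}}}{\d\T_{z_{k-1}}}(z_k)$. Jensen/data processing on the path space gives
\begin{align}
    1+\Dren*[q/2]{\nuhat_K}{\nu_K}\lesssim \En_{\text{path}\sim \T^\cE}\prod_{k}\prn*{\frac{\d\That_{z_{k-1}}}{\d\T_{z_{k-1}}}(z_k)}^{q/2}\,,
\end{align}
suitably normalized. We then peel off the factors one at a time, starting from the last step. Conditionally on $z_{k-1}$, the inner expectation of the $k$-th factor is at most $1+\Dren[q]{\That_{z_{k-1}}}{\T_{z_{k-1}}}$ (after Cauchy--Schwarz to separate exponents $q/2$ and $q$). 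Iterating, using H\"older across the $K$ factors with exponent $5K$, reduces everything to
\begin{align}
    \max_k \En_{z\sim\nu'_k}\bigl(1+\Dren[q]{\That_z}{\T_z}\bigr)^{5K}\,,
\end{align}
where $\nu'_k$ is the normalized law of $Z^{k-1}$ restricted to $\cE$.

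\textbf{Step 2 (verifying the hypothesis for $\nu'_k$).} To apply the assumption we need $\Ren[1.5]{\nu'_k}{\target}\le 2\Delta$. Since $\T$ preserves $\target$, data processing bounds the unnormalized law by $\T^{k-1}\nu_0$. Its density relative to $\target$ is controlled by that of $\nu_0$, and $\Ren[2]{\nu_0}{\target}\le\Delta+\log(1/\nu(\cE))$. The failure-probability bound below shows the normalizations are at least $1/2$, so only $O(1)$ is lost, giving $\Ren[1.5]\le 2\Delta$. With the hypothesis in hand, Step 1 yields $\Dren[q/2]{\pihat}{\T^K\nu}\le e^\delta-1$. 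The final normalization step compares $\nu_K$ with $\T^K\nu$ and uses the small failure probability together with $p\le\delta/q$.

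\textbf{Step 3 (failure probability).} Under the ideal chain started from $\nu$, each $Z^k$ has law $\T^k\nu$. By H\"older with $\Ren[2]{\cdot}{\target}\le\Delta$,
\begin{align}
    \Pr(Z^k\notin\cE)\le e^{\Delta/2}\,\target(\cE^c)^{1/2}\,.
\end{align}
A union bound over $k\le K$ gives a total ideal failure probability well below $p$, using the cubic smallness of $1-\target(\cE)$. To pass from the ideal chain to the actual $\That$ chain, we use the R\'enyi bound from Step 1 and H\"older once more: an ideal event of probability $\epsilon$ has probability at most $\epsilon^{1-1/q'}e^{O(\delta)}$ under the actual chain. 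The cube in the assumption on $\target(\cE^c)$ absorbs these losses.

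\textbf{Main obstacle.} The delicate part is Step 1: the peeling must be organized so that every conditional expectation is taken under a law supported on $\cE$ with controlled $\Ren[1.5]$. Otherwise the hypothesis cannot be invoked. The circular dependence between the failure bound and the normalizations should be resolved by first bounding the ideal failure probability independently of Step 1.
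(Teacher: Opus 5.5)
The genuine gap is in Step 3, the transfer of the failure probability from the ideal chain to the actual one. Only order-$2$ warmness of $\nu$ is available, so the ideal chain gives $\T^k\nu_0(\cE^c)\lesssim e^{\Delta/2}\target(\cE^c)^{1/2}\lesssim(p/64K)^{3/2}e^{-\Delta/2}$. Transferring this with your order-$q/2$ path bound via H\"older gives a per-step failure bound of order $(p/64K)^{\frac32(1-\frac2q)}$. Summed over $K$ steps, this is $\le p$ uniformly in $K$ only if $q\ge6$, and at $q=2$ the order-$1$ bound carries no information.

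No end-to-end H\"older transfer from the ideal chain can handle $q<3$. The hypothesis controls one-step divergences only at order $q$, so path-level control is at most of order $q$. The cube, however, only absorbs a total H\"older exponent $\frac12(1-\frac1\lambda)\ge\frac13$ on $\target(\cE^c)$, which requires path-level order $\lambda\ge3$. For $q>2$ your accuracy claim inherits the gap, because normalizing by the success event $S$ needs $\widehat{\mathbb{P}}(S)^{-q/2}\le e^{O(\delta)}$.

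The missing idea is to track the \emph{actual} chain's pre-truncation marginals $\nubar_k$ against $\target$ directly, which is what the paper does. Apply \cref{lem:Renyi-kernel} with reference $\target=\T\target$, $\lambda=\ell_{k+1}$ and $(1+\eps)\lambda=\ell_k$, where $\ell_k=(1+\frac{1}{4K})^{-k}\in(\frac34,1]$. Each step then costs at most $\delta/(3K)$. Here the hypothesis is invoked under the truncated law $\nuhat_k$, which is supported on $\cE$ and has $\Ren[1+\ell_k]{\nuhat_k}{\target}<2\Delta$ by induction, at order $1+\ell_{k+1}\le2\le q$ and moment $4K+1\le5K$. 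This keeps $\Ren[3/2]{\nubar_k}{\target}<2\Delta$ for all $k$, hence $\nubar_k(\cE^c)\le e^{2\Delta/3}\target(\cE^c)^{1/3}\le p/(64K)$. This is exactly where the cube enters.

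With that failure bound in hand, your Steps 1--2 are a valid alternative to the paper's accuracy induction on $\Ren[q_k]{\nuhat_k}{\T^k\nu}$, once Step 1 is organized correctly. A literal factor-by-factor peel does not iterate, because $g_k\deq\frac{\mathrm{d}\That_{z_{k-1}}}{\mathrm{d}\T_{z_{k-1}}}(z_k)$ is correlated with the functions of $z_k$ that later peels produce. Instead, set $a(z)\deq1+\Dren[q]{\That_z}{\T_z}$ and let $\mathbb{P}$ be the ideal path law from $\nu_0$. Then use one global Cauchy--Schwarz:
$\En_{\mathbb{P}}[\mathbbm{1}_S\prod_k g_k^{q/2}]\le\bigl(\En_{\mathbb{P}}[\mathbbm{1}_S\prod_k g_k^q/a(z_{k-1})]\bigr)^{1/2}\bigl(\En_{\mathbb{P}}[\mathbbm{1}_S\prod_k a(z_{k-1})]\bigr)^{1/2}$.
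The first factor is at most $1$, since it is dominated by a supermartingale stopped at the exit from $\cE$. H\"older over the $K$ factors in the second then places every one-step moment under $\T^{k-1}\nu_0$ restricted to $\cE$, whose warmness follows from data processing.

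One small repair in Step 2: losing an additive $O(1)$ when normalizing does not give $\Ren[3/2]{\nu'_k}{\target}\le2\Delta$ when $\Delta$ is small. Use instead that the relevant masses are $1-O((p/K)^{3/2})$, which the cube assumption supplies.
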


\begin{proof}
Let $\nubar_k$ be the distribution of $\Zbar^k$ conditional on
$Z^{k-1}\neq\perp$, and let $\nuhat_k$ be the distribution of $Z^k$
conditional on $Z^k\neq\perp$. Thus $\nuhat_0=\nu(\cdot\mid\cE)$,
$\nubar_0=\nu$, and $\nuhat_k=\nubar_k(\cdot\mid\cE)$.
For every $r>1$ and probability measure $\varpi$,
\begin{align}\label{eq:conditioning-cost}
    \Ren*[r]{\nubar_k(\cdot\mid\cE)}{\varpi}
    \leq \Ren*[r]{\nubar_k}{\varpi}
    +
    \frac{r}{r-1}\log\frac{1}{\nubar_k(\cE)}.
\end{align}

If $\Ren[3/2]{\rho}{\target}\leq2\Delta$, H\"older's inequality and the
assumption on $\target(\cE^c)$ give
\begin{align}\label{eq:trunc-tail}
    \rho(\cE^c)
    \leq e^{2\Delta/3}\target(\cE^c)^{1/3}
    \leq \frac{p}{64K}.
\end{align}
The same conclusion holds for $\rho=\nu$, using
$\Ren[2]{\nu}{\target}\leq\Delta$. Put
\begin{align}
    a\deq1+\frac{1}{4K},\qquad
    \ell_k\deq a^{-k}.
\end{align}
Since $a^K\leq e^{1/4}<4/3$, we have $\ell_k>3/4$. In particular,
\eqref{eq:conditioning-cost} and \eqref{eq:trunc-tail} show that the cost of conditioning in the every instance below is at most $p/(20K)$.

We first prove by induction that
\begin{align}\label{eq:trunc-warm-invariant}
    \Ren*[1+\ell_k]{\nuhat_k}{\target}
    \leq \Delta+\frac{k+1}{2K}(p+\delta)<2\Delta.
\end{align}
The case $k=0$ follows from \eqref{eq:conditioning-cost}. If the claim
holds at $k$, the transition kernel inequality above, monotonicity of
the divergence in its order, and $4K+1\leq5K$ imply
\begin{align}
    \Ren*[1+\ell_{k+1}]{\nubar_{k+1}}{\target}
    &\leq \Ren*[1+\ell_k]{\nuhat_k}{\target}
      +\frac{\delta}{4K\ell_{k+1}}\\
    &\leq \Ren*[1+\ell_k]{\nuhat_k}{\target}
      +\frac{\delta}{3K}.
\end{align}
This bound is below $2\Delta$, so \eqref{eq:trunc-tail} applies to
$\nubar_{k+1}$. Adding the conditioning cost proves
\eqref{eq:trunc-warm-invariant}.

For the accuracy estimate, define
\begin{align}
    q_k\deq1+(q-1)a^{-k},\qquad
    A_k\deq\Ren*[q_k]{\nuhat_k}{\T^k\nu}.
\end{align}
Again $q_k-1>3(q-1)/4$, and
\begin{align}
    A_0=\log\frac{1}{\nu(\cE)}\leq\frac{p}{60K}.
\end{align}
The same transition calculation, now at order $q_{k+1}$, followed by
\eqref{eq:conditioning-cost}, gives
\begin{align}
    A_{k+1}
    \leq A_k+\frac{\delta}{3K(q-1)}+\frac{p}{20K}.
\end{align}
Consequently,
\begin{align}
    (q-1)A_K
    \leq \frac{p(q-1)}{60K}+\frac{\delta}{3}
      +\frac{p(q-1)}{20}<\delta,
\end{align}
where we used $p(q-1)\leq\delta$. Since $q_K\geq q/2$, monotonicity
of R\'enyi divergence yields, for $q>2$,
\begin{align}
    \Dren*[q/2]{\nuhat_K}{\T^K\nu}
    \leq \exp\prn*{\frac{q/2-1}{q-1}\delta}-1
    \leq e^\delta-1.
\end{align}
For $q=2$, the conclusion is immediate from the convention
$\Dren[1]{\rho}{\mu}=0$. Finally, \eqref{eq:trunc-tail} and a union
bound give
\begin{align}
    \PP(\mathrm{fail})\leq
    \sum_{k=1}^K\nubar_k(\cE^c)\leq\frac{p}{64}\leq p.
\end{align}
\end{proof}

\section{Proofs from \cref{sec:ULA}}\label{app:ULA}

To apply \cref{lem:FORS-error}, we define
\begin{align}
    \wstar(\bZ)\deq \log \frac{\d \bP}{\d \bQ}=-\int_{0}^T \tri{\mu(X_t), \dBt}-\frac12\int_0^T \nrm{\mu(X_t)}^2\dt,
\end{align}
and recall that we define $w(\bZ)\deq \En_{t\sim \unif([0,T])} W(t;\bZ)$.

We state the following lemmas.

\begin{lemma}\label{lem:ULA-bias}
There are universal constants $c,C>0$ such that, if
$T\leq c/\sqrt{\beta}$ and $\gamma T\leq 1$, then the following holds.
For
\begin{align}
    1\leq\lambda\leq c
    \min\crl*{\frac{1}{h\beta T\sqrt d},\frac{1}{h\beta^2T^3}},
\end{align}
it holds that
\begin{align}
    \En_{\bZ\sim \bP_z} \exp\prn*{ \lambda \abs{\wstar(\bZ)-w(\bZ)}}
    \leq \exp\prn*{
    \frac{C\lambda h\beta}{\gamma}
    \prn*{d^{-1/2}+\beta T^2}R(z)}.
\end{align}
\end{lemma}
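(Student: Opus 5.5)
The plan is to write $\wstar-w$ explicitly as a sum of local stochastic-integral errors, re-express it under $\bP_z$, and reduce every term to an exponential moment of $\sup_{t\le T}\nrm{P_t}^2$. That moment is in turn controlled by $R(z)$.

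\emph{Step 1: exact form of $\wstar-w$.} Since $\Xi=\Unif([0,T])$ puts mass $h/T$ on each cell $[(j-1)h,jh)$, the $\frac T2\nrm{\mu(X_t)}^2$ part of $W$ averages exactly to $\frac12\int_0^T\nrm{\mu(X_t)}^2\,\dt$, so that term cancels in $\wstar-w$. For the other part, averaging gives $\sum_{j=1}^N\tri{B_T-B_{jh},\mu(X_{jh})-\mu(X_{(j-1)h})}$. Reversing the summation by parts displayed before the definition of $W$ turns this into $\sum_{j=0}^{N-1}\tri{\mu(X_{jh}),B_{(j+1)h}-B_{jh}}$; this uses $\mu(X_0)=0$ and $B_0=0$. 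Writing $\lfloor t\rfloor_h$ for the left endpoint of the cell containing $t$, we get
\begin{align}
    w(\bZ)-\wstar(\bZ)=\int_0^T\tri{\Delta_t,\dBt}\,,\qquad \Delta_t\deq\mu(X_t)-\mu(X_{\lfloor t\rfloor_h})\,.
\end{align}
Under $\bP_z$ we substitute $\dBt=\d\Bbar_t-\mu(X_t)\,\dt$, which gives a $\bP_z$-martingale $M_T=\int_0^T\tri{\Delta_t,\d\Bbar_t}$ minus the drift term $D_T=\int_0^T\tri{\Delta_t,\mu(X_t)}\,\dt$.

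\emph{Step 2: pathwise bounds.} Let $S\deq\sup_{t\le T}\nrm{P_t}^2$. Smoothness and $\dot X_t=P_t$ give
\begin{align}
    \nrm{\Delta_t}\le \frac{\beta h}{\sqrt{2\gamma}}\,S^{1/2}\,,\qquad \nrm{\mu(X_t)}\le\frac{\beta T}{\sqrt{2\gamma}}\,S^{1/2}\,.
\end{align}
Hence $\abs{D_T}\lesssim \beta^2hT^2S/\gamma$ and $\langle M\rangle_T\lesssim \beta^2h^2TS/\gamma$.

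For $\abs{M_T}$ we use the exponential martingale together with Cauchy--Schwarz:
\begin{align}
    \En e^{\lambda\abs{M_T}}\le 2\,\bigl(\En e^{2\lambda^2\langle M\rangle_T}\bigr)^{1/2}\,.
\end{align}
Combining this with Cauchy--Schwarz between $M$ and $D$, everything reduces to bounding $\En_{\bP_z}e^{\theta S}$ with
\begin{align}
    \theta\asymp\lambda\beta^2hT^2/\gamma+\lambda^2\beta^2h^2T/\gamma\,.
\end{align}

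\emph{Step 3: the moment of $S$ (main obstacle).} Under $\bP_z$ the process follows the true dynamics \eqref{eq:ULA-shift}. Using $\nrm{\nabla V(X_t)}\le\nrm{\nabla V(x)}+\beta\nrm{X_t-x}$, we get
\begin{align}
    \nrm{P_t}\le \nrm p+t\nrm{\nabla V(x)}+(\beta T^2+\gamma T)\,\sup_{s\le t}\nrm{P_s}+\sqrt{2\gamma}\,\sup_{s\le T}\nrm{\Bbar_s}\,.
\end{align}
When $T\le c/\sqrt\beta$ and $\gamma T\le1$ with $c$ small, the coefficient of $\sup\nrm{P}$ is at most $1/2$ and can be absorbed on the left (a Gr\"onwall-type argument). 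This yields
\begin{align}
    S\lesssim \nrm p^2+T^2\nrm{\nabla V(x)}^2+\gamma\sup_{s\le T}\nrm{\Bbar_s}^2\,.
\end{align}
Here $T^2\nrm{\nabla V(x)}^2\le c^2\beta^{-1}\nrm{\nabla V(x)}^2$. Moreover, the MGF bound behind \cref{lem:max-B} gives $\En\exp(\theta\gamma\sup\nrm{\Bbar}^2)\le 3^d$ whenever $\theta\gamma\le 1/(4T)$, and $d\le R(z)$. Together these give
\begin{align}
    \En_{\bP_z}e^{\theta S}\le e^{C\theta R(z)}\qquad\text{for }\theta\lesssim 1/(\gamma T)\,.
\end{align}
Care is needed so that the $\log 3\cdot d$ from the Brownian MGF becomes $O(\theta R(z))$ rather than $O(d)$. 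I would handle this by applying Jensen/H\"older to the Brownian MGF at the admissible scale $1/(\gamma T)$. This step is where the constants must be tracked most carefully.

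\emph{Step 4: matching the hypotheses.} The admissibility condition $\theta\lesssim1/(\gamma T)$ splits into two requirements. The drift part requires $\lambda\beta^2hT^2/\gamma\lesssim1/(\gamma T)$, i.e.\ $\lambda\lesssim 1/(h\beta^2T^3)$. The martingale part requires $\lambda^2\beta^2h^2T/\gamma\lesssim1/(\gamma T)$, which is implied by the first condition, $\lambda\lesssim 1/(h\beta T\sqrt d)$.

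Under these conditions the drift contributes $\exp(C\lambda h\beta\cdot\beta T^2R(z)/\gamma)$. The martingale contributes $\exp(C\lambda^2\beta^2h^2TR(z)/\gamma)$. Since $\lambda h\beta T\le c\,d^{-1/2}$, the latter is at most $\exp(C\lambda h\beta\,d^{-1/2}R(z)/\gamma)$. Multiplying the two bounds gives the stated estimate.
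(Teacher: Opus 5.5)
Your Step 1 identity $w-\wstar=\int_0^T\langle\Delta_t,\mathrm{d}B_t\rangle$, the split under $\bP_z$ into the martingale $M_T$ and the drift $D_T$, the pathwise bounds through $\sup_t\nrm{P_t}$, and the Jensen argument on the Brownian MGF all match the paper's proof and are fine. That last step gives $\En_{\bP_z}e^{\theta S}\le e^{C\theta R(z)}$ for $\theta\lesssim1/(\gamma T)$, with no constant prefactor.

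The gap is in the martingale step. The bound $\En e^{\lambda\abs{M_T}}\le 2\,(\En e^{2\lambda^2\langle M\rangle_T})^{1/2}$ carries a multiplicative constant, which becomes $\sqrt2$ after your Cauchy--Schwarz with $D_T$. Your closing sentence ``multiplying the two bounds gives the stated estimate'' silently drops it, and it cannot be absorbed. The right-hand side of the lemma is $\exp$ of a quantity that tends to $0$ as $h\to0$, and this is exactly the regime in which the lemma is used: \cref{prop:ULA-error-prop} needs $\log\En\exp(4q\abs{\wstar-w})\le\delta/2$ by taking $h$ small. Your bound, by contrast, never drops below $\sqrt2$. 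The underlying reason is that $\log\En e^{\lambda\abs{M_T}}=\lambda\En\abs{M_T}+O(\lambda^2)$ is first order in $\lambda$. A bound of the form ``constant $\times\exp(O(\lambda^2\langle M\rangle_T))$'' therefore cannot produce a right-hand side tending to $1$; you must generate a term linear in $\lambda$.

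The fix, which is what the paper does, is to replace Cauchy--Schwarz by H\"older with a large exponent. For $k\ge2$, the exponential supermartingale at parameter $k\lambda$ gives
\begin{align}
    \En e^{\lambda\abs{M_T}}\le 2^{1/k}\bigl(\En e^{k\lambda^2\langle M\rangle_T}\bigr)^{1-1/k}\,.
\end{align}
Choose $k\asymp1/(\lambda h\beta T\sqrt d)$. The prefactor then satisfies $2^{1/k}=\exp(O(\lambda h\beta T\sqrt d))\le\exp(O(\lambda h\beta R(z)/(\gamma\sqrt d)))$, using $\gamma T\le1$ and $d\le R(z)$. The exponent satisfies $k\lambda^2\langle M\rangle_T\lesssim\lambda h\beta S/(\gamma\sqrt d)$. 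Your Step 3 handles this at $\theta\asymp\lambda h\beta/(\gamma\sqrt d)$, which is admissible since $\lambda h\beta T\lesssim\sqrt d$, and it yields the $d^{-1/2}$ term. So the hypothesis $\lambda\le c/(h\beta T\sqrt d)$ is really needed to guarantee $k\ge2$, not (as in your Step 4) to convert $\lambda^2h^2\beta^2T$ into $\lambda h\beta d^{-1/2}$.

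A smaller point concerns Step 3. The coefficient $\beta T^2+\gamma T$ need not be $\le1/2$ when $\gamma T$ is close to $1$, so the absorption as written fails. Use the Duhamel formula with the factor $e^{-\gamma(t-s)}$ instead. Then only $\beta T^2/2$ multiplies $\sup\nrm{P}$, and by integration by parts the stochastic convolution is at most $(1+\gamma T)\sup_t\nrm{\Bbar_t}$. Alternatively, run an honest Gr\"onwall argument, which costs a factor $e^{O(\gamma T)}$.
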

Proof in \cref{appdx:proof-ULA-bias}.

\begin{lemma}\label{lem:ULA-clip}
There are universal constants $c,C>0$ such that, if
$T\leq c/\sqrt\beta$ and $\gamma T\leq 1$, then for any
$1\leq \lambda\leq c/(\beta T^2)$ and any $t\in[0,T)$,
\begin{align}
    \En_{\bZ\sim \bP_z} \exp\prn*{ \lambda \abs{W(t;\bZ)}}
    \leq 2\exp\prn*{ \frac{C\lambda^2\beta^2T^3}{\gamma}R(z) }.
\end{align}
\end{lemma}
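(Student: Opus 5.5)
We need to bound $\En_{\bP_z}\exp(\lambda|W(t;\bZ)|)$ for fixed $t\in[(j-1)h,jh)$. Write $-W(t;\bZ) = W_1 + W_2$ with
\begin{align}
    W_1 \deq \frac{T}{\sqrt{2\gamma}\,h}\,\tri{B_T-B_{jh},\, \nabla V(X_{jh})-\nabla V(X_{(j-1)h})}\,,\qquad W_2\deq \frac{T}{2}\,\nrm{\mu(X_t)}^2\,.
\end{align}
By Cauchy--Schwarz, $\En e^{\lambda|W|}\le (\En e^{2\lambda|W_1|})^{1/2}(\En e^{2\lambda W_2})^{1/2}$, so it suffices to bound each factor by $\sqrt2\exp(C\lambda^2\beta^2T^3R(z)/\gamma)$. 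Throughout we work under $\bP_z$, the law of true ULD with driving Brownian motion $\Bbar$. Since $B_t=\Bbar_t-\int_0^t\mu(X_s)\,\d s$, we split $B_T-B_{jh}=(\Bbar_T-\Bbar_{jh}) - \int_{jh}^T\mu(X_s)\,\d s$.

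\textbf{Deterministic path control.} I first record a high-probability (really, exponential-moment) bound on the excursion of the ULD path over $[0,T]$. Using $\beta$-smoothness, $T\le c/\sqrt\beta$ and $\gamma T\le 1$, Grönwall gives
\begin{align}
    \sup_{s\le T}\nrm{X_s-x}\lesssim T\nrm{p}+T^2\nrm{\nabla V(x)}+\sqrt{\gamma}\,T\sup_{s\le T}\nrm{\Bbar_s}\,.
\end{align}
Consequently $\nrm{\nabla V(X_s)-\nabla V(X_{s'})}\le\beta\nrm{X_s-X_{s'}}$, and for $|s-s'|\le h$ the increment is $\lesssim \beta h\,(\nrm{p}+T\nrm{\nabla V(x)}+\sqrt\gamma\sup\nrm{\Bbar})$. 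Likewise $\nrm{\mu(X_s)}\lesssim \beta\gamma^{-1/2}(T\nrm p+T^2\nrm{\nabla V(x)}+\sqrt\gamma T\sup\nrm{\Bbar})$. Note $\nrm p^2+T^2\nrm{\nabla V(x)}^2\lesssim R(z)$ since $T^2\beta\lesssim 1$; the Brownian sup contributes via \cref{lem:max-B}, with MGF $\exp(O(d))$ at scale $1/T$, and $d\le R(z)$.

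\textbf{The term $W_2$.} We have $\lambda W_2\lesssim \lambda T\beta^2\gamma^{-1}\bigl(T^2R(z)+\gamma T^2\sup\nrm{\Bbar}^2\bigr)$. The deterministic part is exactly of the form $\lambda\beta^2T^3R(z)/\gamma$, which is at most the target since $\lambda\ge1$. For the Brownian part, $\lambda\beta^2T^3\sup\nrm{\Bbar}^2$ has a finite MGF by \cref{lem:max-B} provided $\lambda\beta^2T^3\lesssim 1/T$, i.e.\ $\lambda\lesssim 1/(\beta T^2)^2$, which holds since $\lambda\le c/(\beta T^2)$ and $\beta T^2\le c$. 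Its contribution is $\exp(O(\lambda\beta^2T^4d))\le\exp(O(\lambda\beta^2T^3 d/\gamma))$ using $\gamma T\le1$.

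\textbf{The term $W_1$ (main obstacle).} The prefactor $T/h$ is large, but it is compensated by the increment of $\nabla V$ being of size $\beta h$. The difficulty is that $\Bbar_T-\Bbar_{jh}$ is correlated with that increment only through the past, so I condition on $\cF_{jh}$. Let $\Delta\deq\nabla V(X_{jh})-\nabla V(X_{(j-1)h})$, which is $\cF_{jh}$-measurable with $\nrm{\Delta}\lesssim\beta h\,G$, where $G$ denotes the path-control quantity above. Then:
\begin{itemize}
    \item The martingale part $\frac{T}{\sqrt{2\gamma}h}\tri{\Bbar_T-\Bbar_{jh},\Delta}$ is conditionally Gaussian with variance $\lesssim T^3\beta^2G^2/\gamma$. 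Its conditional MGF is therefore $\exp(C\lambda^2T^3\beta^2G^2/\gamma)$.
    \item The drift part is bounded by $\frac{T}{\sqrt\gamma h}\nrm{\Delta}\,T\sup\nrm{\mu}\lesssim \lambda T^2\beta G\cdot\beta\gamma^{-1}TG$. This is of the form $\lambda\beta^2T^3G^2/\gamma$, which is dominated by the $\lambda^2$ target because $\lambda\ge1$.
\end{itemize}
It then remains to take the expectation of $\exp(C\lambda^2\beta^2T^3G^2/\gamma)$, where $G^2\lesssim R(z)+\gamma T^2\sup\nrm{\Bbar}^2$. The Brownian piece has coefficient $\lambda^2\beta^2T^5$, which is $\lesssim 1/T$ precisely when $\lambda\lesssim 1/(\beta T^2)$. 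This is where the hypothesis on $\lambda$ is used, and \cref{lem:max-B} closes the argument, yielding the claimed $2\exp(C\lambda^2\beta^2T^3R(z)/\gamma)$ after adjusting constants. The factor $2$ absorbs the prefactors $\sqrt2$ from Cauchy--Schwarz and Gaussian MGF bounds. The care needed is in tracking that every Brownian-sup exponent stays below the threshold of \cref{lem:max-B}.
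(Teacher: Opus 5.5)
Your proposal is correct and follows essentially the same route as the paper: split $B_T-B_{jh}$ into the $\bP_z$-Brownian increment $\Bbar_T-\Bbar_{jh}$, handled by conditioning on $\cF_{jh}$ and the Gaussian MGF, plus a drift piece. That drift piece, together with $\frac{T}{2}\|\mu(X_t)\|^2$, is bounded pathwise by a quantity of order $\beta^2T^3\mathcal R^2$, and Cauchy--Schwarz and \cref{lem:max-B} then finish; the paper simply lumps the drift and $\|\mu\|^2$ terms into one $C\beta^2T^3\mathcal R^2$ bound instead of splitting off $W_2$ first. One bookkeeping slip: since $G=\|p\|+T\|\nabla V(x)\|+\sqrt\gamma\sup_{s\le T}\|\Bbar_s\|$, you have $G^2\lesssim R(z)+\gamma\sup_{s\le T}\|\Bbar_s\|^2$ with no factor $T^2$, so the Brownian coefficient in the $W_1$ exponent is $\lambda^2\beta^2T^3$, not $\lambda^2\beta^2T^5$. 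It is this corrected coefficient that is $\lesssim 1/T$ exactly when $\lambda\lesssim 1/(\beta T^2)$, as you concluded.
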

The proof is contained in \cref{appdx:proof-ULA-clip}.

\scedit{
\begin{lemma}[Finite-time simulation of ULD]\label{lem:ULA-finite-time}
Suppose that $V$ is convex and $\beta$-smooth. Let $\alpha>0$, $\gamma\asymp\sqrt\alpha$, $\kappa\deq\max\{\beta/\alpha,1\}$, $q>1$, $\delta\in(0,\frac12]$, and $K\in\mathbb N$. Set
\begin{align}
    \Delta\deq d+\Ren[2]{\nu}{\target}\,,\qquad
    \vartheta_q\deq\delta\min\crl*{1,q-1}\,,\qquad
    \Lambda_q\deq q+\log\Biggl(\frac{Kq}{\vartheta_q}\Biggr)\,.
\end{align}
If
\begin{align}
    \frac{1}{T\gamma}\gg
    \kappa^{2/3}\Delta^{1/3}\Lambda_q^{2/3}\,,
\end{align}
then there is an algorithm with output distribution $\nuhat$ such that
\begin{align}
    \Ren*[q]{\nuhat}{\cPt{KT}\nu}\leq\delta\,.
\end{align}
Moreover, for every $p\in(0,1]$, with probability at least $1-p$, the number of first-order queries is bounded by $O(K+\log(1/p))$, with a universal implicit constant.
\end{lemma}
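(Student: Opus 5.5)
The plan is to apply \cref{thm:trunc-alg} with $\T=\cPt{T}$, $\That$ the FORS-corrected kernel of \cref{sec:ULA}, and a truncation event defined by the size of $R(z)$. Concretely, set
\begin{align}
    \cE\deq\crl*{z:\ R(z)\leq 20\bigl(d+\log(1/\eta)\bigr)}\,,\qquad \eta\deq\prn*{\frac{p'}{64K}}^3e^{-2\Delta}\,,
\end{align}
with $p'\asymp \vartheta_q/q$. By \cref{lem:ULA-MGF} we then have $1-\target(\cE)\leq\eta$. On $\cE$, $R(z)\lesssim d+\Delta+\log(K q/\vartheta_q)\lesssim \Delta\Lambda_q$. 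Since $\target$ is invariant for $\cPt{T}$, the structural hypotheses of \cref{thm:trunc-alg} are satisfied, and running it at order $2q$ in place of $q$ delivers a bound on $\Dren*[q]{\pihat}{\cPt{KT}\nu}$.

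The main step is to check the moment hypothesis $\En_{z\sim\nu'}\bigl(1+\Dren*[2q]{\That_z}{\T_z}\bigr)^{5K}\leq e^{\vartheta_q}$. Because $\nu'$ is supported on $\cE$, a pointwise bound is enough: we need $\Dren*[2q]{\That_z}{\T_z}\leq \vartheta_q/(10K)$ for every $z\in\cE$. This is exactly what \cref{prop:ULA-error-prop} provides, applied with accuracy $\delta'\deq\vartheta_q/(10K)$ and order $2q$.

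It remains to verify the conditions of \cref{prop:ULA-error-prop} with $R(z)\lesssim\Delta\Lambda_q$ and $\log(1/\delta')\lesssim\Lambda_q$.
\begin{itemize}
    \item The condition \eqref{eq:ULA-T-cond} reads $T^3\lesssim \gamma/(\beta^2\Delta\Lambda_q^2)$. Since $\gamma\asymp\sqrt\alpha$ and $\beta^2/\gamma^4=\kappa^2$, this is equivalent to $(T\gamma)^{-3}\gg\kappa^2\Delta\Lambda_q^2$, which is precisely the stated hypothesis $1/(T\gamma)\gg\kappa^{2/3}\Delta^{1/3}\Lambda_q^{2/3}$.
    \item The second condition $T^2\lesssim 1/(\beta\Lambda_q)$ follows from the first, since $\kappa^{2/3}\Delta^{1/3}\Lambda_q^{2/3}\geq(\kappa\Lambda_q)^{1/2}$ once $\Delta\geq d\geq1$ and $\kappa\geq1$ (after absorbing constants).
    \item The conditions $\gamma T\leq 1$ and $T\lesssim1/\sqrt\beta$ hold as well.
\end{itemize}
Finally, $N$ (equivalently $h$) is chosen to satisfy \eqref{eq:ULA-N-cond}. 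This $N$ affects only how finely the path is discretized, not the number of gradient queries, so it is free.

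For the query count, each FORS step with $B\in[1,2]$ uses $\Poi(2B)$ estimator evaluations per trial, and each evaluation costs $O(1)$ gradient queries. The number of trials is geometric with success probability at least $e^{-2B}\gtrsim 1$. Over $K$ steps the total is a sum of $K$ independent sub-exponential variables with $O(1)$ mean, so Bernstein gives $O(K+\log(1/p))$ queries with probability $1-p$ (cf.\ \cref{thm:fors}). The truncation step adds no queries, because $R(z)$ is computed from $\nabla V(x)$, which is already queried.

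The output is conditioned on success, so we need to handle the failure event. Its probability is at most $p'\leq \vartheta_q/q$. On failure we can output anything, for example restart, and a mixture argument shows this changes the R\'enyi-$q$ divergence by $O(\vartheta_q/(q-1))\leq O(\delta)$. I expect this bookkeeping to be the main obstacle: \cref{thm:trunc-alg} only bounds the conditioned law, and converting $\Dren$ at order $q$ to $\Ren[q]$ needs $\log(1+e^{\vartheta}-1)/(q-1)\leq\delta$. This is why $\vartheta_q$ carries the factor $\min\{1,q-1\}$, and why the order must be doubled to absorb the $q/2$ loss.
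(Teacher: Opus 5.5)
Your proposal is correct and follows the paper's proof essentially step for step. You truncate to a sublevel set of $R(z)$ of $\target$-mass at least $1-(p_0/(64K))^3e^{-2\Delta}$ with $p_0\asymp\vartheta_q/q$, invoke \cref{prop:ULA-error-prop} at order $2q$ with one-step accuracy $\vartheta_q/(10K)$ (checking both step-size conditions from $1/(T\gamma)\gg\kappa^{2/3}\Delta^{1/3}\Lambda_q^{2/3}$ exactly as the paper does), and apply \cref{thm:trunc-alg} at order $2q$ with accuracy $\vartheta_q$.

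The one point to tidy is failure handling. Restarting the whole run after a declared failure reproduces the conditioned law $\pihat$ exactly, which is what the paper does, so no mixture argument is needed. By contrast, outputting an arbitrary point on failure is not controlled in general: a Dirac component makes the R\'enyi divergence to $\cPt{KT}\nu$ infinite.
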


\begin{lemma}[Composition of R\'enyi errors]\label{lem:Renyi-composition-ULA}
Let $q>1$ and $r>q$, and set
\begin{align}
    s\deq 1+\frac{(q-1)r}{r-q}\,.
\end{align}
For probability distributions $\rho$, $\mu$, and $\pi$,
\begin{align}
    \Ren[q]{\rho}{\pi}
    \leq \frac{q(r-1)}{r(q-1)}\,\Ren[r]{\rho}{\mu}
    +\Ren[s]{\mu}{\pi}\,.
\end{align}
\end{lemma}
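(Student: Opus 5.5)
The plan is a single application of Hölder's inequality after changing the reference measure from $\pi$ to $\mu$. We may assume that the right-hand side is finite. Otherwise there is nothing to prove.

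\emph{Step 1: reduce to densities.} Finiteness gives $\rho\ll\mu$ (from $\Ren[r]{\rho}{\mu}<\infty$) and $\mu\ll\pi$ (from $\Ren[s]{\mu}{\pi}<\infty$). Hence $\rho\ll\pi$, with $\frac{\d\rho}{\d\pi}=\frac{\d\rho}{\d\mu}\,\frac{\d\mu}{\d\pi}$ holding $\pi$-a.e. We change the reference measure to $\mu$:
\begin{align}
    \En_\pi\Bigl[\Bigl(\frac{\d\rho}{\d\pi}\Bigr)^q\Bigr]
    =\En_\mu\Bigl[\Bigl(\frac{\d\rho}{\d\mu}\Bigr)^q\Bigl(\frac{\d\mu}{\d\pi}\Bigr)^{q-1}\Bigr]\,.
\end{align}
This identity uses $\frac{\d\rho}{\d\pi}=0$ wherever $\frac{\d\mu}{\d\pi}=0$.

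\emph{Step 2: Hölder.} We apply Hölder's inequality under $\mu$ with the conjugate exponents $r/q$ and $r/(r-q)$, which are valid since $r>q$. This bounds the expectation above by
\begin{align}
    \Bigl(\En_\mu\Bigl[\Bigl(\frac{\d\rho}{\d\mu}\Bigr)^r\Bigr]\Bigr)^{q/r}
    \Bigl(\En_\mu\Bigl[\Bigl(\frac{\d\mu}{\d\pi}\Bigr)^{\frac{(q-1)r}{r-q}}\Bigr]\Bigr)^{(r-q)/r}\,.
\end{align}
By the choice of $s$, we have $\frac{(q-1)r}{r-q}=s-1$. Therefore
\begin{align}
    \En_\mu\Bigl[\Bigl(\frac{\d\mu}{\d\pi}\Bigr)^{s-1}\Bigr]=\En_\pi\Bigl[\Bigl(\frac{\d\mu}{\d\pi}\Bigr)^{s}\Bigr]=\exp\bigl((s-1)\Ren[s]{\mu}{\pi}\bigr)\,.
\end{align}
The first factor, in the same way, equals $\exp\bigl((r-1)\Ren[r]{\rho}{\mu}\bigr)$.

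\emph{Step 3: take logarithms.} Taking logarithms gives
\begin{align}
    (q-1)\Ren[q]{\rho}{\pi}\leq \frac{q(r-1)}{r}\Ren[r]{\rho}{\mu}+\frac{(r-q)(s-1)}{r}\Ren[s]{\mu}{\pi}\,.
\end{align}
The coefficient of the last term simplifies, since $\frac{(r-q)(s-1)}{r}=q-1$. Dividing both sides by $q-1>0$ yields exactly the claimed inequality.

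There is no genuine obstacle. The only points needing care are the choice of Hölder exponents, which is what makes the exponent $s$ appear, and the absolute-continuity bookkeeping handled in Step 1.
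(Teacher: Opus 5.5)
Your proof is correct: the change of reference measure to $\mu$, H\"older's inequality with exponents $r/q$ and $r/(r-q)$, and the simplification $(r-q)(s-1)/r=q-1$ all check out. This is essentially the paper's route. The paper just invokes the R\'enyi weak triangle inequality with H\"older exponent $r/q$, and the standard proof of that inequality is exactly the argument you wrote out.
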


\begin{proof}
This is another special case of the R\'enyi weak triangle inequality.
\end{proof}
}

\subsection{Proof of \cref{prop:ULA-error-prop} and \scedit{\cref{lem:ULA-finite-time,thm:ULA-main}}}

\begin{proof}[\pfref{prop:ULA-error-prop}]
By \cref{lem:FORS-error}, it holds that
\begin{align}
    2\log\prn*{1+\Dsys*[q]{\That_z}{\T_z}}\leq \log\En_{\bP_z}\exp\prn*{ 4q\abs{\wstar(\bZ)-w(\bZ)} }+\log \En_{\bP_z,t\sim \Unif([0,T])}\exp\prn*{ 4q\trunc[B]{W(t;\bZ)} }.
\end{align} 
By \cref{lem:ULA-bias}, as long as
\begin{align}
    h\leq \frac{\delta\gamma}{C_0q\beta R(z)\prn*{d^{-1/2}+\beta T^2}},
\end{align}
it holds that
\begin{align}
    \log\En_{\bP_z}\exp\prn*{ 4q\abs{\wstar(\bZ)-w(\bZ)} }\leq \frac{\delta}{2}.
\end{align}
Fix any $\ell\geq 1$. By \cref{lem:ULA-clip}, it holds that as long as $q\ell \leq \frac{1}{C_1\beta T^2}$,
\begin{align}
    \En_{\bP_z}\exp\prn*{ 4q\trunc[B]{W(t;\bZ)} }
    \leq&~ 1+e^{-4q\ell B}\En_{\bP_z}\exp\prn*{ 4q\ell\abs{W(t;\bZ)}} \\
    \leq&~1+2\exp\prn*{-4q\ell B+\frac{C_1q^2\ell^2\beta^2T^3}{\gamma}R(z)}.
\end{align}
Then, we can choose
\begin{align}
    \ell=\frac{1}{C_1}\min\crl*{ \frac{1}{C_1q\beta T^2}, \frac{\gamma B}{C_1q\beta^2T^3R(z)}}, 
\end{align}
so that as long as $\ell\geq 1+\frac{\log(4/\delta)}{qB}$ holds,
it holds that
\begin{align}
    \En_{\bP_z}\exp\prn*{ 4q\trunc[B]{W(t;\bZ)} }\leq 1+\frac{\delta}{2}.
\end{align}
Therefore, under \cref{eq:ULA-T-cond} and \cref{eq:ULA-N-cond}, we have
\begin{align}
    \Dsys*[q]{\That_z}{\T_z}\le e^{\delta/2}-1\leq \delta.
\end{align}
\end{proof}

\begin{proof}[\screplace{\pfref{thm:ULA-main}}{\pfref{lem:ULA-finite-time}}]
\screplace{Fix any $q\geq 2$ and $\delta\in(0,\frac{1}{2}]$.}{Fix $q>1$ and $\delta\in(0,\frac{1}{2}]$, and recall $\vartheta_q=\delta\min\crl*{1,q-1}$. Set $p_0\deq\vartheta_q/(4q)$.}
Fix the FORS threshold $B\in[1,2]$ throughout the construction.
We define
\begin{align}
    \cE=\crl*{z=(x,p): \nrm{p}^2+\beta^{-1}\nrm{\nabla V(x)}^2\leq C_0(\Delta+\screplace{\log(K/\delta)}{\log(Kq/\vartheta_q)})},
\end{align}
\screplace{where $C_0>0$ is a sufficiently large constant so that $1-\target(\cE)\leq \prn*{\frac{\delta}{8K}}^3e^{-\Delta}$.}{Increasing $C_0$ if necessary, we have
\begin{align}
    1-\target(\cE)\leq\prn*{\frac{p_0}{64K}}^3e^{-2\Delta}\,.
\end{align}}

\screplace{In the following, we define $\Rbar=(C_0+1)(\Delta+\log(K/\delta))$, so that $R(z)\leq \Rbar$ for $z\in\cE$.}{In the following, we define $\Rbar=(C_0+1)(\Delta+\log(Kq/\vartheta_q))$, so that $R(z)\leq \Rbar$ for $z\in\cE$.}
Further, by \cref{prop:ULA-error-prop}, as long as
\begin{align}\label{eq:ULA-T-cond-spec}
    T^3\leq \frac{\gamma}{C_1\beta^2 \Rbar(q+\log(Kq/\vartheta_q))}, \qquad T^2\leq \frac{1}{C\beta (q+\log(Kq/\vartheta_q))}, \qquad
    h=\frac{T}{N}\leq
    \frac{\vartheta_q\gamma}{CKq\beta \Rbar\prn*{d^{-1/2}+\beta T^2}},
\end{align}
\screplace{\sccomment{Here the target one-step error is $\delta/(4K)$ and the R\'enyi order used below is $2q$. Substituting these values into \eqref{eq:ULA-N-cond} requires $h\lesssim \delta\gamma\sqrt d/(Kq\beta\Rbar)$. The displayed condition instead has $K/\delta$, so it does not imply the one-step estimate claimed in the next sentence.}}{}
\screplace{we can guarantee that $\Dsys*[2q]{\That_z}{\T_z}\leq \frac{\delta}{4K}$ for any $z\in\cE$. Applying \cref{thm:trunc-alg} then gives the desired upper bound.}{we can guarantee that
\begin{align}
    \Dsys*[2q]{\That_z}{\T_z}\leq\frac{\vartheta_q}{10K}
\end{align}
for any $z\in\cE$. To see that the hypothesis of
\cref{lem:ULA-finite-time} implies \eqref{eq:ULA-T-cond-spec}, use
$\gamma^2\asymp\alpha$, $\beta/\alpha\leq\kappa$, and
$\Rbar\lesssim\Delta\Lambda_q$. The first detailed condition is
equivalent, up to constants, to
$(T\gamma)^{-3}\gtrsim\kappa^2\Rbar\Lambda_q$, while the second follows
from $(T\gamma)^{-2}\gtrsim\kappa\Lambda_q$; both follow from the
displayed $\kappa^{2/3}\Delta^{1/3}\Lambda_q^{2/3}$ bound because
$\kappa,\Delta,\Lambda_q\geq1$. Apply \cref{thm:trunc-alg} at order
$2q$, with accuracy parameter $\vartheta_q$ and failure parameter
$p_0$.
We obtain
\begin{align}
    \Dren*[q]{\nuhat}{\cPt{KT}\nu}\leq e^{\vartheta_q}-1\,.
\end{align}
Consequently,
\begin{align}
    \Ren*[q]{\nuhat}{\cPt{KT}\nu}
    \leq\frac{\vartheta_q}{q-1}\leq\delta\,.
\end{align}
If we restart after declaring failure, the output law remains preserved. The exponential tail bound for the sum of the FORS query counts over the $K$ steps and over the restarts gives $O(K+\log(1/p))$ queries with probability at least $1-p$.}
\end{proof}

\scedit{
\begin{proof}[\pfref{thm:ULA-main}]
Let $\nu_K\deq\cPt{KT}\nu$.

First suppose that PI holds. Apply \cref{lem:ULA-finite-time} at order $3$ and accuracy $\delta/4$. Since
\begin{align}
    3+\log(12K/\delta)\lesssim\log(K/\delta)\,,
\end{align}
the assumed step size condition gives
\begin{align}
    \Ren*[3]{\nuhat}{\nu_K}\leq\frac{\delta}{4}\,.
\end{align}
By \cref{thm:ST-PI} and the identity $\Dchis{\nu}{\target}=\exp(\Ren[2]{\nu}{\target})-1$, the assumed lower bound on $K$ ensures that
\begin{align}
    \Ren[2]{\nu_K}{\target}
    \leq\Dchis{\nu_K}{\target}\leq\frac{\delta}{2}\,.
\end{align}
Taking $q=3/2$ and $r=3$ in \cref{lem:Renyi-composition-ULA} gives
\begin{align}
    \Ren*[3/2]{\nuhat}{\target}
    \leq2\Ren*[3]{\nuhat}{\nu_K}+\Ren[2]{\nu_K}{\target}
    \leq\delta\,.
\end{align}

Now suppose that LSI holds. Apply \cref{lem:ULA-finite-time} at order $2q$ and accuracy $\delta/3$, giving
\begin{align}
    \Ren*[2q]{\nuhat}{\nu_K}\leq\frac{\delta}{3}\,.
\end{align}
Applying \cref{thm:ST-LSI} at order $2q$, and using $\Ren[2q]{\nu}{\target}\leq\Delta_q$, the assumed lower bound on $K$ yields
\begin{align}
    \Ren[2q]{\nu_K}{\target}\leq\frac{\delta}{2}\,.
\end{align}
For $q\geq2$, \cref{lem:Renyi-composition-ULA} and monotonicity of R\'enyi divergence give the simpler bound
\begin{align}
    \Ren*[q]{\nuhat}{\target}
    \leq\frac32\,\Ren*[2q]{\nuhat}{\nu_K}+\Ren[2q]{\nu_K}{\target}
    \leq\delta\,.
\end{align}
In both cases, the query bound follows from \cref{lem:ULA-finite-time}.
\end{proof}
}

\subsection{\pfref{lem:ULA-bias}}\label{appdx:proof-ULA-bias}

For simplicity, write $m(t)=\floor{t/h}h$ and
\begin{align}
    a_t\deq\mu(X_t)-\mu(X_{m(t)}).
\end{align}
Since $\d B_t=\d\Bbar_t-\mu(X_t)\,\dt$ under $\bP_z$, the discretization error is
\begin{align}
    b\deq \wstar(\bZ)-w(\bZ)
    =-\int_0^T\tri{a_t,\d\Bbar_t}
      +\int_0^T\tri{a_t,\mu(X_t)}\,\dt.
\end{align}
Set
\begin{align}
    R_0\deq\nrm{P_0}+T\nrm{\nabla V(X_0)},\qquad
    R_{\bar B}\deq\max_{0\leq t\leq T}\nrm{\Bbar_t},\qquad
    \mathcal R\deq R_{\bar B}+\frac{R_0}{\sqrt{2\gamma}}.
\end{align}
The formula for the exact ULD equation, followed by the $\beta$-Lipschitz property of $\nabla V$, gives
\begin{align}
    \max_{0\leq t\leq T}\nrm{P_t}
    \leq C\prn*{R_0+\sqrt{2\gamma}R_{\bar B}}
\end{align}
when $C\beta T^2\leq1$ and $\gamma T\leq1$. Consequently,
\begin{align}
    \nrm{a_t}\leq Ch\beta\mathcal R,\qquad
    \nrm{\mu(X_t)}\leq CT\beta\mathcal R,\qquad
    \abs*{\int_0^T\tri{a_t,\mu(X_t)}\dt}
    \leq Ch\beta^2T^2\mathcal R^2.
\end{align}

Let $M=-\int_0^T\tri{a_t,\d\Bbar_t}$. The
exponential martingale identity and the bound on $a_t$ give
\begin{align}
    \En_{\bP_z}\exp\prn*{\lambda\abs M
      -C\lambda^2h^2\beta^2T\mathcal R^2}\leq2.
\end{align}
For $k\geq2$, H\"older's inequality therefore yields
\begin{align}
    \En_{\bP_z}e^{\lambda|M|}
    \leq2^{1/k}
    \prn*{\En_{\bP_z}
      e^{C\lambda^2kh^2\beta^2T\mathcal R^2}}^{1-1/k}.
\end{align}
Choose $k=1/(C_0\lambda h\beta T\sqrt d)$; the constant in the
statement is small enough that $k\geq2$. Since
$\mathcal R^2\leq2R_{\bar B}^2+\gamma^{-1}R_0^2$, the
Brownian estimate in \cref{lem:max-B}, together with
$R_0^2\leq2R(z)$, $d\leq R(z)$, and $\gamma T\leq1$, gives
\begin{align}
    \log\En_{\bP_z}e^{\lambda|M|}
    \leq \frac{C\lambda h\beta}{\gamma\sqrt d}R(z)
\end{align}
provided $C\lambda h\beta T\sqrt d\leq1$. For the other
term, the same Brownian estimate gives
\begin{align}
    \log\En_{\bP_z}\exp\prn*{C\lambda h\beta^2T^2\mathcal R^2}
    \leq \frac{C\lambda h\beta^2T^2}{\gamma}R(z)
\end{align}
provided $C\lambda h\beta^2T^3\leq1$. Cauchy--Schwarz, with constants absorbed into $C$, proves the claim.
\qed

\subsection{\pfref{lem:ULA-clip}}\label{appdx:proof-ULA-clip}

Fix $t\in[0,T)$ and set $j\deq\floor{t/h}+1$. Define
\begin{align}
    u_j\deq \frac{\nabla V(X_{jh})-\nabla V(X_{(j-1)h})}{h\sqrt{2\gamma}}.
\end{align}
Under $\bP_z$,
\begin{align}
    B_T-B_{jh}=\Bbar_T-\Bbar_{jh}-\int_{jh}^T\mu(X_s)\d s.
\end{align}
The increment $\Bbar_T-\Bbar_{jh}$ is independent of $u_j$. With $\mathcal R$ as in the preceding proof,
\begin{align}
    \nrm{u_j}\leq C\beta\mathcal R,\qquad
    \nrm{\mu(X_t)}\leq CT\beta\mathcal R.
\end{align}
It follows that
\begin{align}
    \abs{W(t;\bZ)}
    \leq T\abs*{\tri{\Bbar_T-\Bbar_{jh},u_j}}
    +C\beta^2T^3\mathcal R^2.
\end{align}
Conditioning at time $jh$ and using the exponential martingale bound for the first term, followed by Cauchy--Schwarz, yields
\begin{align}
    \log \En_{\bP_z}\exp\prn*{\lambda\abs{W(t;\bZ)}}
    \leq \frac{\log2}{2}
    +\frac12\log\En_{\bP_z}
    \exp\prn*{C(\lambda^2+\lambda)\beta^2T^3\mathcal R^2}.
\end{align}
Since $\lambda\geq1$, the linear term is absorbed by $\lambda^2$. The condition $\lambda\leq(C\beta T^2)^{-1}$ and \cref{lem:max-B} therefore imply
\begin{align}
    \log\En_{\bP_z}
    \exp\prn*{C\lambda^2\beta^2T^3\mathcal R^2}
    \leq \frac{C\lambda^2\beta^2T^3}{\gamma}R(z),
\end{align}
which proves the result.
\qed

\begin{remark}\label{rmk:Girsanov}
Under $\bQ_z$, write
\begin{align}
    a_t\deq\frac{1-e^{-\gamma t}}{\gamma},
    \qquad b_t\deq\int_0^t a_s\d s.
\end{align}
The proposal has the explicit representation
\begin{align}
    X_t-X_0=a_t p-b_t\nabla V(x)
    +\sqrt{2\gamma}\int_0^t a_{t-s}\d B_s.
\end{align}
Here $a_t\leq T$, $b_t\leq T^2/2$, and the Gaussian term has covariance
at most $C\gamma T^3I_d$. Since
$\nrm{\mu(X_t)}\leq\beta\nrm{X_t-X_0}/\sqrt{2\gamma}$, the standard
Gaussian-square MGF bound, with $s=\lambda\beta^2T/(2\gamma)$, gives
\begin{align}
    \log\En_{\bQ_z}\exp\prn*{\lambda T\nrm{\mu(X_t)}^2}
    \leq C\lambda\gamma^{-1}\beta^2T
    \prn*{T^2\nrm p^2+T^4\nrm{\nabla V(x)}^2+\gamma T^3d},
\end{align}
provided $\lambda\beta^2T^4\leq c$. Consequently, if
$T\leq c/\sqrt\beta$ and $\gamma T\leq1$, then
\begin{align}
    \sup_{0\leq t<T}\log \En_{\bQ_z} \exp\prn*{ \lambda T\nrm{\mu(X_t)}^2 }\leq C\lambda\gamma^{-1}\beta^2T^3R(z)
\end{align}
for $0\leq\lambda\leq c/(\beta^2T^4)$. In particular, if
$T\leq c/\sqrt\beta$ and $\gamma T\leq1$, Jensen's inequality gives
\begin{align}
    \En_{\bQ_z}\exp\prn*{\int_0^T\nrm{\mu(X_t)}^2\dt}
    \leq \frac{1}{T}\int_0^T
    \En_{\bQ_z}\exp\prn*{T\nrm{\mu(X_t)}^2}\dt<\infty.
\end{align}
Thus Novikov's condition holds,
\begin{align}
    \En_{\bQ_z} \exp\prn*{ \int_0^T \nrm{\mu(X_t)}^2\dt }<+\infty,
\end{align}
and Girsanov's theorem can be applied.
\end{remark}

\subsection{Limitation of the exponential Euler proposal}\label{app:exponential-euler-limitation}\label{appdx:proof-lower-marginal}

We consider the example of a Gaussian target. Let $\Tbar$ denote the proposal transition kernel and $\T$ the exact underdamped Langevin transition kernel; thus, $\Tbar_z$ and $\T_z$ are their respective output distributions at time $T$ from $z$.

\paragraph{1D calculation}
Consider $d=1$ and $V(x)=\frac{w}{2}x^2$. The target distribution is then $\pi=\normal{0,w^{-1}\Id}\otimes\normal{0,\Id}$. Under $\bQ_z$, the dynamics are affine and hence $\Tbar_z=\normal{\bar m_T, \bar\Sigma_T}$ is Gaussian. More explicitly, set
\begin{align}
    a_T\deq \frac{1-e^{-\gamma T}}{\gamma},\qquad
    b_T\deq \frac{T}{\gamma}-\frac{1-e^{-\gamma T}}{\gamma^2}\,.
\end{align}
Then
\begin{align}
    \bar m_T
    \deq
    \En_{\Tbar_z}\binom{X_T}{P_T}
    =
    \binom{x+a_Tp-wb_Tx}{e^{-\gamma T}p-wa_Tx}\,,
\end{align}
and
\begin{align}
    \bar\Sigma_T=\begin{pmatrix}
        \frac{2}{\gamma}\,\prn*{T-\frac{2(1-e^{-\gamma T})}{\gamma}+\frac{1-e^{-2\gamma T}}{2\gamma}}
        & \frac{(1-e^{-\gamma T})^2}{\gamma} \\
        \frac{(1-e^{-\gamma T})^2}{\gamma}
        & 1-e^{-2\gamma T}
    \end{pmatrix}\,.
\end{align}
The following lower bound shows that the exponential Euler proposal can already incur a non-negligible one-step error. Note that the error below is measured \emph{marginally}, not conditionally.

\begin{lemma}[Marginal lower bound]\label{lem:warm-oscillatory-TV-min}
There are universal constants $c,\tau_0>0$ such that, if $0<\gamma\leq \frac{3}{2}\sqrt{\beta}$ and $\sqrt{\beta}T\leq\tau_0$, then there exists a probability distribution $\pi'$ satisfying
\[
    \frac12\leq \frac{\d\pi'}{\d\pi}\leq 2
\]
such that
\[
    \Dtv*{\T\pi'}{\Tbar\pi'}^2
    \geq
    c\,\min\crl*{\frac{\beta^2T^3}{\gamma},1}\,.
\]
\end{lemma}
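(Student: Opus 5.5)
We work in the 1D Gaussian case $V(x)=\frac{w}{2}x^2$ with $w=\beta$. In this case both $\T_z$ and $\Tbar_z$ are Gaussian with the \emph{same} covariance to leading order, and their means differ by an amount linear in $z$. Take $\pi'$ to be a Gaussian perturbation of $\pi$, for example $\pi'\propto \pi\cdot(1+\frac12\tanh(\langle v,z\rangle))$. A cleaner choice is a mixture $\pi'=\frac12\pi+\frac12\pi_s$, where $\pi_s$ is $\pi$ shifted by a bounded tilt so that $\frac12\le \d\pi'/\d\pi\le 2$. The simplest legitimate choice is $\d\pi'/\d\pi=1+\frac12\,\mathrm{sgn}(x)$-type bounded densities, or $1+\frac12\sin(\theta x)$. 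These satisfy the density bounds automatically.

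\textbf{Step 1: compute the one-step discrepancy.} For the exact kernel, the mean is $m_T=e^{TA}z$ with $A=\begin{pmatrix}0&1\\-w&-\gamma\end{pmatrix}$. The proposal mean is $\bar m_T$ as displayed. Taylor expanding in $T$ (valid since $\sqrt\beta T\le\tau_0$ and $\gamma\lesssim\sqrt\beta$), the momentum mean difference is
\[
m_T^P-\bar m_T^P=-w\int_0^T (X_s^{\mathrm{mean}}-x)\,\d s\approx -\frac{w T^2}{2}\,p,
\]
and the position difference is of order $wT^3p$. The covariances also differ, but only at a higher order: the exact noise enters $X$ with variance $\asymp\gamma T^3$ and is affected by $w$ at relative order $wT^2$. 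So the dominant discrepancy is a momentum-mean shift of size $\asymp\beta T^2|p|$ against momentum noise of standard deviation $\sqrt{1-e^{-2\gamma T}}\asymp\sqrt{\gamma T}$.

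\textbf{Step 2: conditional discrepancies do not average out.} Since $\T\pi=\pi=\Tbar\pi$ fails only at the path level, we must check the marginals. Under $\pi$ itself, $\Tbar\pi$ may differ from $\pi$ only slightly. The role of $\pi'$ is to break the symmetry: with $\d\pi'/\d\pi=1+\frac12\sin(\theta p)$ and $\theta\asymp 1$, the outputs $\T\pi'$ and $\Tbar\pi'$ are each $\pi$ plus a perturbation carried by the kernels. Taking a test function $f(x',p')=\sin(\theta p')$ (bounded by $1$, so it lower bounds TV up to a factor $2$), we compute $\En_{\T\pi'}f-\En_{\Tbar\pi'}f$ explicitly using Gaussian characteristic functions. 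The difference comes from the momentum mean shift $\frac{wT^2}{2}p$ and the momentum-to-position mixing. It is of order
\[
\theta\, wT^2\,\En_\pi[p\cos(\theta p)\cdot\tfrac12\sin(\theta p)\cdots],
\]
which is nonzero and of order $\beta T^2$. This gives $\mathrm{TV}\gtrsim \beta T^2$. That is only $\beta^2T^4$ after squaring, weaker than the claimed $\beta^2T^3/\gamma$.

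\textbf{Main obstacle.} The main difficulty is obtaining the sharper $\beta^2T^3/\gamma$ rate. This requires exploiting the small noise: the mean shift $\beta T^2$ should be measured against noise of standard deviation $\sqrt{\gamma T}$, giving ratio$^2=\beta^2T^3/\gamma$. To see this marginally, I would choose $\pi'$ with density oscillating at frequency $\theta\asymp(\gamma T)^{-1/2}$ in $p$, capped so that $\theta\lesssim$ whatever keeps the computation valid. The density remains bounded in $[\frac12,2]$ regardless of frequency, which is exactly why the bounded-ratio class permits this. With $f(x',p')=\sin(\theta p'+\phi)$, the Gaussian smoothing factor $e^{-\theta^2\gamma T}$ stays of order one, while the mean shift contributes $\theta\beta T^2\asymp \beta T^2/\sqrt{\gamma T}$. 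Squaring gives the bound, and the $\min\{\cdot,1\}$ handles the regime where this exceeds constants: there I would take $\theta$ smaller so that $\theta\beta T^2\asymp1$. The careful part is checking that the lower-order terms do not cancel the leading contribution. These are the exact drift $e^{-\gamma T}p$ versus the identical proposal drift, the $-wa_Tx$ terms common to both kernels, and the covariance differences of relative size $wT^2$. I would compute $\En f$ under both kernels in closed form as $\Im\,\En e^{i(\theta p'+\phi)}$ and expand in $\sqrt\beta T$, choosing the phase $\phi$ to maximize the leading term.
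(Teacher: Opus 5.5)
Your proposal has a genuine gap: the construction as specified cannot produce the $\sqrt{\beta^2T^3/\gamma}$ effect, for the two reasons below.

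\textbf{The claimed first-order term vanishes marginally.} In your ``main obstacle'' paragraph you take $\mathrm{d}\pi'/\mathrm{d}\pi=1+\frac12\sin(\theta p)$, test function $\sin(\theta p'+\phi)$, and $\theta\asymp(\gamma T)^{-1/2}$, and you claim the mean shift $-\frac{\beta T^2}{2}p$ contributes $\theta\beta T^2$ to the \emph{marginal} discrepancy. That is the size of the \emph{conditional} discrepancy. After integrating against the tilt, your own Step~2 expression (with $p'\approx p$) gives a first-order term equal, up to sign, to
$\frac{\theta\beta T^2}{4}\En_\pi[p\sin(\theta p)\cos(\theta p+\phi)]=\frac{\theta^2\beta T^2}{4}\cos\phi\,e^{-2\theta^2}$.
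This is exponentially small at $\theta\asymp(\gamma T)^{-1/2}$. No choice of phase rescues it: the low-frequency part of $\sin(\theta p)\cos(\theta p+\phi)$ is the constant $-\frac12\sin\phi$, which multiplies $\En_\pi[p]=0$.

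Equivalently, write $\sin(\theta p)\sin(\theta p')=\frac12\cos(\theta(p-p'))-\frac12\cos(\theta(p+p'))$. The second piece is damped by $e^{-c\theta^2}$. The resonant first piece feels the shift only through an $O(\theta^2(\beta^2T^4+\gamma\beta T^3))$ change in its Gaussian exponent, which is far below $\sqrt{\beta^2T^3/\gamma}$ when the latter is small.

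The missing idea is an order-one frequency gap between tilt and test function. The paper tilts at frequency $K$ and tests with $\sin(\lambda p')$, $\lambda=K-1$. The resonant piece is then a unit-frequency oscillation, and the shift enters its exponent $\|K\bar\ell-\lambda\ell\|^2=\|\bar\ell-\lambda(\ell-\bar\ell)\|^2$ \emph{linearly}, via $-2\lambda\langle\bar\ell,\ell-\bar\ell\rangle\asymp\lambda\beta T^2$. With $\lambda\asymp(\gamma T)^{-1/2}$ this is of order $\sqrt{\beta^2T^3/\gamma}$.

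\textbf{A tilt in $p$ alone fails when $\gamma\ll\beta T$, even with a gap.} The output momentum is $\bar P_T\approx e^{-\gamma T}p-\beta a_Tx+\text{noise}$. Your tilt does not involve $x$, so under $\pi'$ the term $\beta a_Tx$ is independent of $p$ with standard deviation $\asymp\sqrt\beta\,T$. Being common to both kernels, it creates no discrepancy, as you note. But it smooths the resonant piece: the damping is $e^{-c\theta^2(\gamma T+\beta T^2)}$, not $e^{-\theta^2\gamma T}$.

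This caps $\theta\lesssim\min\{(\gamma T)^{-1/2},(\sqrt\beta\,T)^{-1}\}$. The construction then yields at most $\mathrm{TV}\lesssim\sqrt\beta\,T$, i.e.\ $\mathrm{TV}^2\lesssim\beta T^2\ll\min\{\beta^2T^3/\gamma,1\}$ in that regime. For $\gamma\gtrsim\beta T$, a $p$-only tilt with a frequency gap would in fact suffice.

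The paper avoids this by tilting along the proposal's output-momentum functional itself: $h(z)=1+\frac12\sin(K\bar\ell^\top z)$ in the rescaled coordinates $(\sqrt\beta\,x,p)$. The position-to-momentum mixing is thereby built into the tilt, and only the injected noise $\bar q\asymp\gamma T$ smooths the resonant term. The non-resonant term has exponent $\gtrsim K^2$ and is negligible for $K\asymp(\gamma T)^{-1/2}$.
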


\paragraph{Proof of \cref{lem:warm-oscillatory-TV-min}}
Set
\[
    \eps\deq \frac{\gamma}{\sqrt\beta},
    \qquad
    \tau\deq \sqrt\beta T .
\]
We take $V(x)=\frac{\beta}{2}x^2$. In the rescaled coordinates $Z=(Y,P)=(\sqrt\beta X,P)$, the invariant law is
$\normal{0,\Id_2}$, and the exact dynamics are
\[
    \d Y_t=P_t\d t,\qquad
    \d P_t=-(Y_t+\eps P_t)\d t+\sqrt{2\eps}\d B_t .
\]
Let
\[
    \theta\deq \sqrt{1-\eps^2/4}.
\]
For $0<\eps<2$, the exact transition satisfies
\[
    P_\tau\mid Z_0\sim \normal{\ell^\top Z_0,q},
\]
where
\[
    \ell
    =
    \binom{
        -e^{-\eps\tau/2}\frac{\sin(\theta\tau)}{\theta}
    }{
        e^{-\eps\tau/2}
        \prn*{\cos(\theta\tau)-\frac{\eps}{2\theta}\sin(\theta\tau)}
    }
\]
and
$q=1-\norm{\ell}^2$.

For the exponential Euler proposal,
\[
    \bar P_\tau\mid Z_0\sim\normal{\bar\ell^\top Z_0,\bar q},
\]
where
\[
    \bar\ell
    =
    \binom{
        -\frac{1-e^{-\eps\tau}}{\eps}
    }{
        e^{-\eps\tau}
    },
    \qquad
    \bar q=1-e^{-2\eps\tau}.
\]

We first record the elementary estimates used below, which can be shown by applying Taylor's expansion to $\tau$.

\begin{lemma}\label{claim:ell-q-estimates}
There exist universal constants $c,C,\tau_0>0$ such that, for all
$0<\eps\leq \frac32$ and $0<\tau\leq\tau_0$,
\begin{align}
    \frac12\leq \norm{\bar\ell}\leq 2,                                      \label{eq:ellbar-norm}\\
    c\tau^2
    \leq
    -\tri*{\bar\ell,\ell-\bar\ell}
    \leq C\tau^2,                                                           \label{eq:lbar-dot-delta}\\
    \norm{\ell-\bar\ell}\leq C\tau^2,                                        \label{eq:l-delta-bound}\\
    c\eps\tau\leq q,\bar q\leq C\eps\tau,
    \qquad
    |q-\bar q|\leq C\eps\tau^3 .                                             \label{eq:q-bounds}
\end{align}
\end{lemma}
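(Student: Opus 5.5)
The plan is to expand every quantity in Taylor series in $\tau$, with remainders bounded uniformly over $\eps\in(0,\tfrac32]$. Uniformity is available because $\theta=\sqrt{1-\eps^2/4}\geq\sqrt7/4$ is bounded below on this range. Hence $\ell$, $\bar\ell$, $q$, $\bar q$ are real-analytic in $(\eps,\tau)$ on a compact neighbourhood of $[0,\tfrac32]\times\{0\}$, and all $O(\tau^k)$ remainders carry constants depending only on $\tau_0$. I would first write out the expansions to second order:
\begin{align}
    \bar\ell&=\binom{-\tau+\tfrac12\eps\tau^2+O(\tau^3)}{1-\eps\tau+\tfrac12\eps^2\tau^2+O(\tau^3)}\,,\\
    \ell&=\binom{-\tau+\tfrac12\eps\tau^2+O(\tau^3)}{1-\eps\tau+\tfrac12(\eps^2-1)\tau^2+O(\tau^3)}\,.
\end{align}
These follow from $e^{-\eps\tau/2}\sin(\theta\tau)/\theta=\tau-\tfrac12\eps\tau^2+O(\tau^3)$ and from multiplying out $e^{-\eps\tau/2}\bigl(\cos\theta\tau-\tfrac{\eps}{2\theta}\sin\theta\tau\bigr)$ using $\theta^2=1-\eps^2/4$.

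The first three estimates then follow directly from these expansions. We have $\abs{\bar\ell_1}\leq\tau$ and $\bar\ell_2=e^{-\eps\tau}\in[e^{-3\tau_0/2},1]$, so for $\tau_0$ small, $\tfrac12\leq\nrm{\bar\ell}\leq2$, which is \eqref{eq:ellbar-norm}. Next,
\begin{align}
    \ell-\bar\ell=\bigl(O(\tau^3),\,-\tfrac12\tau^2+O(\tau^3)\bigr)\,,
\end{align}
which gives \eqref{eq:l-delta-bound}. Finally, $-\tri{\bar\ell,\ell-\bar\ell}=\bar\ell_2\cdot\tfrac12\tau^2+O(\tau^3)$ with $\bar\ell_2\geq\tfrac12$. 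Shrinking $\tau_0$ to absorb the $O(\tau^3)$ term yields both sides of \eqref{eq:lbar-dot-delta}.

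For \eqref{eq:q-bounds}, the expected obstacle is that a naive expansion of $q=1-\nrm{\ell}^2$ produces error terms of order $\tau^2$ that are not proportional to $\eps$. That is useless when $\eps\ll\tau$. To avoid this, I would use the variance representations instead of expanding $q$ directly. Since $P_\tau$ is Gaussian given $Z_0$, we have
\begin{align}
    q=2\eps\int_0^\tau k(s)^2\,\d s\,,\qquad \bar q=2\eps\int_0^\tau e^{-2\eps s}\,\d s\,,
\end{align}
where $k(s)$ is the momentum component of the impulse response of the linear exact dynamics to a unit momentum kick. In fact $k(s)=\ell_2(s)$, i.e., the second component of $\ell$ evaluated at time $s$, and $\bar k(s)=e^{-\eps s}$ plays the same role for the proposal. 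Both integrals therefore factor out $\eps$ exactly. Since $k(s),\bar k(s)\in[\tfrac12,1]$ for $s\leq\tau_0$, we get $c\eps\tau\leq q,\bar q\leq C\eps\tau$. From the expansion above, $\abs{k(s)-\bar k(s)}\leq Cs^2$, so
\begin{align}
    \abs{q-\bar q}\leq2\eps\int_0^\tau\abs{k^2-\bar k^2}\,\d s\leq C\eps\int_0^\tau s^2\,\d s\leq C\eps\tau^3\,.
\end{align}
This completes \eqref{eq:q-bounds}. The one point to double-check is the identification $q=1-\nrm{\ell}^2$ with the integral formula, which holds by stationarity of $\normal{0,\Id_2}$. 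The remainder is routine bookkeeping of the uniform Taylor remainders.
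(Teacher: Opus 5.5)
Your proposal is correct and follows essentially the same route as the paper. For the first three estimates, both you and the paper use Taylor expansions of $\bar\ell$ and $\ell-\bar\ell$ that are uniform in $\eps$, since $\theta$ is bounded away from zero. For the variance bounds, both use the exact variance integrals $q=2\eps\int_0^\tau \ell_2(s)^2\,\d s$ and $\bar q=2\eps\int_0^\tau e^{-2\eps s}\,\d s$, together with an $O(s^2)$ comparison of the two integrands; this is exactly what avoids losing the factor $\eps$ that a direct expansion of $1-\nrm{\ell}^2$ would lose.
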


\begin{proof}
All remainders below are uniform over $0<\eps\leq\frac32$. Indeed,
$\theta=\sqrt{1-\eps^2/4}$ is bounded away from zero on this interval, and the functions involved are smooth after defining $(1-e^{-\eps\tau})/\eps=\tau$ at $\eps=0$. Taylor's theorem therefore gives
\begin{align}
    \bar\ell
    &=\binom{-\tau+\frac{\eps}{2}\tau^2+O(\tau^3)}
                   {1-\eps\tau+\frac{\eps^2}{2}\tau^2+O(\tau^3)},\\
    \ell-\bar\ell
    &=\binom{\frac16\tau^3+O(\tau^4)}
                   {-\frac12\tau^2+O(\tau^3)}.
\end{align}
Consequently,
\begin{align}
    \nrm{\bar\ell}=1+O(\tau),\qquad
    -\tri{\bar\ell,\ell-\bar\ell}=\frac12\tau^2+O(\tau^3),\qquad
    \nrm{\ell-\bar\ell}=O(\tau^2).
\end{align}
Choosing a sufficiently small universal $\tau_0$ proves
\eqref{eq:ellbar-norm}--\eqref{eq:l-delta-bound}.

For the variance bounds, the exact covariance formula gives
\begin{align}
    q=2\eps\int_0^\tau e^{-\eps s}
    \prn*{\cos(\theta s)-\frac{\eps}{2\theta}\sin(\theta s)}^2\d s,
    \qquad
    \bar q=2\eps\int_0^\tau e^{-2\eps s}\d s.
\end{align}
Both integrands are bounded above and below by positive universal constants for $0\leq s\leq\tau_0$, which proves $q,\bar q\asymp\eps\tau$. The two integrands have the same value and first derivative at $s=0$; their second derivatives are uniformly bounded. Taylor's theorem thus bounds their difference by $Cs^2$, and integration gives $|q-\bar q|\leq C\eps\tau^3$.
\end{proof}

Now fix a parameter $K\geq 2$ to be decided later, and $\lambda\deq K-1$.
Define
\[
    h(z)\deq 1+\frac{1}{2}\sin(K\bar\ell^\top z),
    \qquad
    \pi'\deq h\pi .
\]
Since $\bar\ell^\top Z$ is centered Gaussian under $\pi$, we have
$\En_\pi\sin(K\bar\ell^\top Z)=0$, so $\pi'$ is a probability measure.

Let $\phi(p)\deq \sin(\lambda p)$.
Since $\norm{\phi}_\infty\leq1$,
\begin{align}
    2\Dtv*{\T\pi'}{\Tbar\pi'}
    \geq&~
    \abs*{
    \En_{\T\pi'}\phi(P)
    -
    \En_{\Tbar\pi'}\phi(P)} \\
    =&~\frac12\abs*{\En_\pi\brk*{
        \sin(K\bar\ell^\top Z_0)
        \prn*{\phi(P_\tau)-\phi(\bar P_\tau)}
    } }.
\end{align}
Here we used $\En_\pi\phi(P_\tau)=\En_\pi\phi(\bar P_\tau)=0$.

For a centered Gaussian $Z\sim\normal{0,\Id_2}$ and an independent
$\xi\sim\normal{0,s^2}$, the identity
\begin{align}
    \En\sin(K\bar\ell^\top Z)\sin(\lambda(\ell_0^\top Z+\xi))
    &=
    \frac12
    \exp\prn*{
        -\frac{\norm{K\bar\ell-\lambda\ell_0}^2+\lambda^2s^2}{2}
    }
    -
    \frac12
    \exp\prn*{
        -\frac{\norm{K\bar\ell+\lambda\ell_0}^2+\lambda^2s^2}{2}
    }
\end{align}
holds for every vector $\ell_0$ and variance $s^2$. Applying this with
$(\ell_0,s^2)=(\ell,q)$ and $(\ell_0,s^2)=(\bar\ell,\bar q)$, we obtain
\[
    A_1\deq \En_\pi\brk*{
        \sin(K\bar\ell^\top Z_0)
        \prn*{\phi(P_\tau)-\phi(\bar P_\tau)}
    }
    =
    \frac12\prn*{e^{-E_-/2}-e^{-\bar E_-/2}}
    -
    \frac12\prn*{e^{-E_+/2}-e^{-\bar E_+/2}},
\]
where
\begin{align}
    E_\pm\deq \norm{K\bar\ell\pm\lambda\ell}^2+\lambda^2q,
    \qquad
    \bar E_\pm\deq \norm{K\bar\ell\pm\lambda\bar\ell}^2+\lambda^2\bar q .
\end{align}

We first analyze the ``minus'' exponents. Since $K-\lambda=1$, it holds that $\bar E_- = \norm{\bar\ell}^2+\lambda^2\bar q \leq 5$ as long as $K^2\bar q\leq 1$. 
Let $\delta\deq \ell-\bar\ell$. Then
\begin{align}
    E_- - \bar E_-
    &=
    \norm{\bar\ell-\lambda\delta}^2-\norm{\bar\ell}^2
    +\lambda^2(q-\bar q)
    \notag\\
    &=
    -2\lambda\tri{\bar\ell,\delta}
    +\lambda^2\norm{\delta}^2+\lambda^2(q-\bar q).
\end{align}
Using \eqref{eq:lbar-dot-delta}--\eqref{eq:q-bounds}, as long as $K\leq \frac{c}{2C\eps\tau}$,
\begin{align}
    E_- - \bar E_-
    &\geq
    c K\tau^2
    -
    CK^2\eps\tau^3
    \geq \frac{c}{2}K\tau^2,
\end{align}
Consequently, assuming $K\leq \min\crl*{\frac{1}{\sqrt{C\eps\tau}},\frac{c}{2C\eps\tau}}$,
\[
    \abs*{e^{-E_-/2}-e^{-\bar E_-/2}}
    =
    e^{-\bar E_-/2}\prn*{1-e^{-(E_--\bar E_-)/2}}
    \geq
    c_1\min\prn*{K\tau^2,1}.
\]

It remains to show that the ``plus'' exponents are negligible. Since
$K+\lambda=2K-1$ and $\lambda\asymp K$,
\[
    \bar E_+
    =
    (K+\lambda)^2\norm{\bar\ell}^2+\lambda^2\bar q
    \geq \frac14K^2 .
\]
Similarly, using \eqref{eq:l-delta-bound}, as long as $4C\tau^2\leq 1$,
\[
    E_+
    =
    \norm{(K+\lambda)\bar\ell+\lambda\delta}^2+\lambda^2q
    \geq (K\norm{\bar\ell}-C\lambda\tau^2)_+^2\geq \frac{1}{16}K^2.
\]
 Therefore, $e^{-E_+/2}+e^{-\bar E_+/2}\leq 2e^{-K^2/32}$.

In summary, if $2\leq K\leq \min\crl*{\frac{1}{2\sqrt{C\eps\tau}},\frac{c}{2C\eps\tau}}$, then
\[
    2\Dtv*{\T\pi'}{\Tbar\pi'}\geq \frac12|A_1|
    \geq
    c_1\min\prn*{K\tau^2,1}-2e^{-K^2/32}.
\]
Therefore, there is a constant $\tau_1>0$ such that for $\tau\leq \tau_1$, we can choose $K=\frac{1}{2\sqrt{C\eps\tau}}\in[2,\frac{c}{2C\eps\tau}]$, and it holds that
\[
    2\Dtv*{\T\pi'}{\Tbar\pi'}
    \geq
    c_2\min\prn*{\sqrt{\tau^3/\eps},1}-2e^{-c_2/(\eps\tau)}.
\]
For a sufficiently small universal constant $\tau_2>0$ and $\tau\in(0,\tau_2]$, the exponential term is absorbed uniformly over $0<\eps\leq\frac32$. Thus, for $\tau\leq\tau_2$,
\begin{align}
    \Dtv*{\T\pi'}{\Tbar\pi'}
    \geq c_3\min\prn*{\sqrt{\tau^3/\eps},1}.
\end{align}
Squaring this inequality and using $\tau^3/\eps=\beta^2T^3/\gamma$ proves the claim.
Finally, push the constructed law forward under $(Y,P)\mapsto(Y/\sqrt\beta,P)$ to conclude. %
\qed

\section{Proofs from \cref{sec:ULA2}}\label{app:ULA2}

\subsection{Analysis of FORS}

We state the following results.

\begin{lemma}\label{lem:ULA2-clip}
For $\lambda\geq1$, it holds that
\begin{align}
    \En_{\bP_z}\En_{t\sim \Unif([0,T])} \exp\prn*{\lambda \abs{W(t;\bZ)}}
    \leq&~ 2\En_{\bP_z} \En_{t\sim \Unif([0,T])}\exp\prn*{6\lambda^2T^3\nrm{\partial_t \mu_t}^2}.
\end{align}
\end{lemma}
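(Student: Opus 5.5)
The plan is to bound $\abs{W(t;\bZ)}$ pathwise by a Gaussian martingale term plus quantities controlled by $\partial_t\mu$. Under $\bP_z$ we substitute $B_T-B_t=\Bbar_T-\Bbar_t-\int_t^T\mu_s\,\d s$, where $\Bbar$ is a $\bP_z$-Brownian motion. This gives
\begin{align}
    \abs{W(t;\bZ)}\leq T\abs{\tri{\partial_t\mu_t,\Bbar_T-\Bbar_t}}+T\nrm{\partial_t\mu_t}\int_t^T\nrm{\mu_s}\,\d s+\frac T2\nrm{\mu_t}^2\,.
\end{align}
The key structural observation is that $\mu_0=0$, because $X^{L-1}_0=X_0$. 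Hence $\mu_s=\int_0^s\partial_r\mu_r\,\d r$, and with $G\deq\int_0^T\nrm{\partial_r\mu_r}^2\,\d r$, Cauchy--Schwarz gives $\nrm{\mu_s}^2\leq TG$ for all $s\in[0,T]$. Using this together with AM--GM, the last two terms are at most $c\,(T^3\nrm{\partial_t\mu_t}^2+T^2G)$ for a small numerical constant $c$.

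Next we separate the martingale part from the rest. The remainder depends on $\mu_s$ for $s>t$, so it is not $\cF_t$-measurable and we cannot simply condition on $\cF_t$. We therefore first apply Cauchy--Schwarz:
\begin{align}
    \En e^{\lambda\abs W}\leq\prn*{\En e^{2\lambda T\abs{\tri{\partial_t\mu_t,\Bbar_T-\Bbar_t}}}}^{1/2}\prn*{\En e^{2\lambda c(T^3\nrm{\partial_t\mu_t}^2+T^2G)}}^{1/2}\,.
\end{align}
For the first factor, $\partial_t\mu_t$ is $\cF_t$-measurable and $\Bbar_T-\Bbar_t$ is independent Gaussian with variance $(T-t)\leq T$ per coordinate. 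Conditioning on $\cF_t$ and using $\En e^{a\abs g}\leq 2e^{a^2/2}$ for $g$ standard normal bounds the first factor by $2\exp(2\lambda^2T^3\nrm{\partial_t\mu_t}^2)$ inside the expectation. For the second factor we use $\lambda\leq\lambda^2$, valid since $\lambda\geq1$, to put everything at scale $\lambda^2$.

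The main point is then to remove $G$, which involves the whole path. For this we use the average over $t\sim\Unif([0,T])$ on the right-hand side of the statement. We have $T^2G=T^3\cdot\frac1T\int_0^T\nrm{\partial_r\mu_r}^2\,\d r$, so Jensen's inequality for $\exp$ gives
\begin{align}
    \exp(aT^2G)\leq\En_{r\sim\Unif([0,T])}\exp(aT^3\nrm{\partial_r\mu_r}^2)\,,
\end{align}
which has the same form as the target quantity.

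Finally, we combine the $\nrm{\partial_t\mu_t}^2$ and $G$ contributions. Using $e^{a+b}\leq\frac12(e^{2a}+e^{2b})$, Hölder's inequality, and Jensen for the square root $\prn{\En\,\cdot}^{1/2}$ after averaging over $t$, every term reduces to $\En_{\bP_z}\En_t\exp(c'\lambda^2T^3\nrm{\partial_t\mu_t}^2)$. The only real obstacle is bookkeeping: the constants must land at $6$ with prefactor $2$, which requires choosing the AM--GM weights in the first step carefully. If the constants do not close, the fallback is to absorb them by enlarging the numerical factor. Any constant suffices for later use in the paper.
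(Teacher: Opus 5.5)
Your proposal is correct and follows essentially the same route as the paper. The paper also rewrites $B_T-B_t$ via $\Bbar$, uses $\mu_0=0$ and AM--GM to control the drift terms by $\nrm{\partial_t\mu_t}^2$ and $\int_0^T\nrm{\partial_s\mu_s}^2\,\d s$, splits off the non-adapted Gaussian term by Cauchy--Schwarz, and removes the path integral via Jensen over $t\sim\Unif([0,T])$ together with $e^{a+b}\leq\frac12(e^{2a}+e^{2b})$. The differences are cosmetic: the paper keeps the compensator $-\lambda^2T\nrm{\partial_t\mu_t}^2$ inside the Gaussian factor so that factor is bounded by $2$ outright, and your AM--GM constant $c$ is of order one rather than small. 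Your bookkeeping still closes: with $\nrm{\mu_s}^2\leq TG$ one gets $\sqrt2\,\En_{\bP_z}\En_t\exp(4\lambda^2T^3\nrm{\partial_t\mu_t}^2)$, which is within the stated bound.
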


Proof in \cref{appdx:proof-ULA2-clip}.

\begin{lemma}\label{lem:ULA-fine-pathwise}
Suppose $T\sqrt{\beta}\leq \frac14$, $\gamma T\leq\frac14$, and $\gamma\leq\sqrt\beta$. Let
\begin{align}
    R_B\deq \sup_{0\leq t\leq T}\nrm{\Bbar_t},
    \qquad
    R\deq 4\prn*{\nrm{P_0}+T\nrm{\nabla V(X_0)}+\sqrt{\gamma}R_B}.
\end{align}
Then, under \cref{eq:ULA2-iterative}, $\nrm{P^\ell_t}\leq R$ and $\nrm{X^\ell_t-X_0}\leq TR$ for every $\ell\geq1$ and $t\in[0,T]$. For every $\ell\geq2$,
\begin{align}\label{eq:fine-P-increment}
    \sup_{0\leq t\leq T}\nrm{P^\ell_t-P^{\ell-1}_t}
    \leq (T\sqrt{\beta})^{2\ell-2} R,
\end{align}
For every $\ell\geq1$,
\begin{align}\label{eq:fine-X-increment}
    \sup_{0\leq t\leq T}\nrm{X^\ell_t-X^{\ell-1}_t}
    \leq T(T\sqrt{\beta})^{2\ell-2} R.
\end{align}
\end{lemma}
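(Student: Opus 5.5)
The plan is to write every Picard level through the variation-of-constants formula and to treat the noise as a fixed, pathwise-bounded forcing. The one genuinely delicate point is that the iterates are driven by the proposal Brownian motion $B$, whereas $R$ is defined via $\Bbar$; so I first bound $B$ in terms of $\Bbar$ through a bootstrap over all levels at once. Throughout, write $U^\ell_t\deq X^\ell_t-X_0$, $g_0=\nabla V(X_0)$ and $H_0=\nabla^2V(X_0)$, with $\nrmop{H_0}\leq\beta$.

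\textbf{Step 1 (mild formulation).} For $\ell\geq1$,
\begin{align}
    P^\ell_t=e^{-\gamma t}p-\int_0^t e^{-\gamma(t-s)}\bigl(g_0+H_0U^{\ell-1}_s\bigr)\,\d s+\sqrt{2\gamma}\Bigl(B_t-\gamma\int_0^t e^{-\gamma(t-s)}B_s\,\d s\Bigr)\,,
    \qquad U^\ell_t=\int_0^t P^\ell_s\,\d s\,.
\end{align}
The stochastic integral has been integrated by parts, so the noise term is bounded by $2\sqrt{2\gamma}\sup_{s\leq T}\nrm{B_s}$, using $\gamma T\leq\frac14$.

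\textbf{Step 2 (relating $B$ to $\Bbar$ by bootstrap).} Under $\bP$ we have $\sqrt{2\gamma}\,B_t=\sqrt{2\gamma}\,\Bbar_t-\int_0^t\bigl(\nabla V(X_s)-g_0-H_0U^{L-1}_s\bigr)\,\d s$, where $X=X^L$. The integrand is bounded by $\beta(\nrm{U^L_s}+\nrm{U^{L-1}_s})$ by $\beta$-smoothness.

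Define
\begin{align}
    S\deq\sup_{t\leq T}\,\max_{1\leq\ell\leq L}\max\crl{\nrm{P^\ell_t},\nrm{U^\ell_t}/T}\,,
\end{align}
which is finite because the paths are continuous on a compact interval. The display above then gives $\sqrt{2\gamma}\sup\nrm{B}\leq\sqrt{2\gamma}R_B+2\beta T^2S$.

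Plugging this into Step 1, and using $\nrm{U^{\ell-1}}\leq TS$ (with $U^0\equiv0$), yields
\begin{align}
    S\leq\nrm{p}+T\nrm{g_0}+\beta T^2S+2\sqrt2\sqrt\gamma R_B+4\beta T^2S\,.
\end{align}
Since $\beta T^2\leq\frac1{16}$, the coefficient of $S$ on the right is $5\beta T^2\leq\frac{5}{16}$. Hence $S\leq\frac{16}{11}\bigl(\nrm p+T\nrm{g_0}+2\sqrt2\sqrt\gamma R_B\bigr)\leq R$, which gives the uniform bounds $\nrm{P^\ell_t}\leq R$ and $\nrm{U^\ell_t}\leq TR$.

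\textbf{Step 3 (increments by induction).} Subtracting the mild formulas of consecutive levels cancels the noise and the $p,g_0$ terms exactly:
\begin{align}
    P^\ell_t-P^{\ell-1}_t=-\int_0^t e^{-\gamma(t-s)}H_0\bigl(U^{\ell-1}_s-U^{\ell-2}_s\bigr)\,\d s\,.
\end{align}
Therefore
\begin{align}
    \sup_t\nrm{P^\ell_t-P^{\ell-1}_t}\leq\beta T\,\sup_t\nrm{X^{\ell-1}_t-X^{\ell-2}_t}\,,
    \qquad
    \sup_t\nrm{X^\ell_t-X^{\ell-1}_t}\leq T\,\sup_t\nrm{P^\ell_t-P^{\ell-1}_t}\,.
\end{align}

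The base case is $\ell=1$: here $X^1_t-X^0_t=U^1_t$, which is bounded by $TR$ from Step 2, so \eqref{eq:fine-X-increment} holds at $\ell=1$. Each further level gains a factor $\beta T^2=(T\sqrt\beta)^2$. Induction on $\ell$ then yields \eqref{eq:fine-P-increment} for $\ell\geq2$ and \eqref{eq:fine-X-increment} for all $\ell\geq1$.

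The main obstacle is Step 2. Because $B$ is expressed under $\bP$ through the very iterates being bounded, the argument is circular unless all levels are handled simultaneously. This works only because $\beta T^2\leq\frac1{16}$ makes the feedback coefficient $\frac{5}{16}$ small, and the numerical constants must be tracked carefully so that the final factor stays at most $4$.
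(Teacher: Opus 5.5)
Your architecture is right and uses the same ingredients as the paper: the mild formula with the stochastic convolution integrated by parts, the shift $\sqrt{2\gamma}\,B=\sqrt{2\gamma}\,\bar B-\int(\cdots)$, a linear fixed-point inequality, and the noise-free increment recursion. The paper orders them differently. It first bounds every level by $\frac{16}{15}\sup_t\|P^1_t\|$ via the geometric series, and only then solves a scalar inequality for $\sup_t\|B_t\|$.

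The gap is the last inequality of your Step 2, which you yourself flagged as the delicate point. Because you relaxed the noise factor to $2$, you obtain $S\le\frac{16}{11}(\|p\|+T\|g_0\|+2\sqrt2\,\sqrt\gamma R_B)$. Its coefficient on $\sqrt\gamma R_B$ is $\frac{32\sqrt2}{11}\approx4.11>4$, so $S\le R$ fails, e.g.\ when $p=0$ and $g_0=0$. The factor $2$ is too lossy for $\beta T^2\le\frac1{16}$: you would need $2\sqrt2/(1-5\beta T^2)\le4$, i.e.\ $\beta T^2\le(1-1/\sqrt2)/5\approx0.059$, and the hypothesis does not give this.

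The repair is to keep the sharper integration-by-parts bound
$\|B_t-\gamma\int_0^te^{-\gamma(t-s)}B_s\,\mathrm{d}s\|\le(1+\gamma T)\sup_{s\le T}\|B_s\|\le\frac54\sup_{s\le T}\|B_s\|$.
The feedback coefficient then becomes $\beta T^2+\frac52\beta T^2\le\frac{7}{32}$, which gives
\[
S\le\tfrac{32}{25}\bigl(\|p\|+T\|g_0\|\bigr)+\tfrac{8\sqrt2}{5}\sqrt\gamma\,R_B\le R .
\]

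There is also a smaller loose end. Your $S$ ranges only over $1\le\ell\le L$, while the uniform bounds are claimed for every $\ell\ge1$. Once $\sqrt{2\gamma}\sup_s\|B_s\|\le\sqrt{2\gamma}R_B+2\beta T^2S$ is fixed by the levels up to $L$, induct on $\ell>L$. Set $S_\ell:=\sup_t\max\{\|P^\ell_t\|,\|X^\ell_t-X_0\|/T\}$, $a:=\|p\|+T\|g_0\|+\frac{5\sqrt2}{4}\sqrt\gamma R_B$ and $\bar S:=\frac{32}{25}a$. Then $S_\ell\le a+\beta T^2S_{\ell-1}+\frac52\beta T^2\bar S\le a+\frac{7}{32}\bar S=\bar S\le R$.

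Your Step 3 needs no change: the recursion holds for all $\ell\ge2$ from the base case $\sup_t\|X^1_t-X_0\|\le TR$.
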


Proof in \cref{appdx:ULA-fine-pathwise}.

\begin{proposition}\label{prop:ULA2-clip-full}
Suppose that $\nu$ is supported on the set $\cE_{\Rbar}=\crl*{z: R(z)\leq \Rbar}$. Let $\rho=\beta T^2$, $\Delta=1+\Ren[1.5]{\nu}{\target}$, $r>0$, and $K,L\in\mathbb N$. Suppose that $T\leq \frac{1}{4\sqrt{\beta}}$, $\gamma T\leq\frac14$, $\gamma\leq\sqrt\beta$, $T^3\leq \frac{1}{\betaF\sqrt{\Rbar}}$, and
\begin{align}
    \frac{1}{Cr}\geq \beta^2 T+\gamma^{-1}\betaF \beta KT\sqrt{\Rbar}+\gamma^{-1}\betaF^2T^3\Rbar,
\end{align}
then
\begin{align}
    &~\log \En_{z\sim \nu}\prn*{ \En_{\bZ\sim \bP_z} \exp\prn*{ r\max_{0\leq t\leq T} \nrm{\partial_t \mu_t}^2 }}^K \\
    \leq&~ CrK\beta^2T+CrK\gamma^{-1}\prn*{\betaF^2T^2(d+\Delta^2)+ \beta^2 \rho^{2L-2} \Rbar+\betaF \beta T\sqrt{\Rbar}+\betaF^2T^3\Rbar+\beta^{3}\betaF^2T^8\Rbar^2}  \\
    &~ + CrK\gamma^{-1}\min\crl*{\beta^3T^2\Rbar,\,\betaF^2\beta T^4\Rbar \Delta} \\
    \leq&~ K+CrK\gamma^{-1}\prn*{\betaF^2T^2(d+\Delta^2)+ \beta^2 \rho^{2L-2} \Rbar+\beta^{3}\betaF^2T^8\Rbar^2+\min\crl*{\beta^3T^2\Rbar,\,\betaF^2\beta T^4\Rbar\Delta}},
\end{align}
\end{proposition}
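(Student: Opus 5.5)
We decompose $\sqrt{2\gamma}\,\partial_t\mu_t=\nabla^2 V(X_t)P_t-\nabla^2 V(X_0)P^{L-1}_t$ into three pieces,
\begin{align}
    \sqrt{2\gamma}\,\partial_t\mu_t
    =\bigl(\nabla^2V(X_t)-\nabla^2V(X_0)\bigr)P_t
    +\nabla^2V(X_0)\bigl(P_t-P^{L-1}_t\bigr)\,,
\end{align}
and we bound $\max_t\nrm{\partial_t\mu_t}^2$ by $\gamma^{-1}$ times the sum of the squared maxima of the two pieces. The second piece is purely pathwise. By \cref{lem:ULA-fine-pathwise}, $\nrm{P_t-P^{L-1}_t}\le (T\sqrt\beta)^{2L-4}R$, possibly after telescoping and a one-step refinement. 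Hence its square is at most $\beta^2\rho^{2L-2}R^2$ up to constants; the index will be adjusted to match $\rho^{2L-2}$. Since $R^2\lesssim R(z)+\gamma R_B^2\le \Rbar+\gamma R_B^2$, \cref{lem:max-B} controls the exponential moment of $r\gamma^{-1}\beta^2\rho^{2L-2}R^2$ as long as $r\beta^2T\lesssim1$. This produces the $\gamma^{-1}\beta^2\rho^{2L-2}\Rbar$ term and the $\beta^2T$ term. Raising the moment to the power $K$ multiplies the logarithm by $K$, because this piece is conditional on $z$ and bounded uniformly on $\cE_{\Rbar}$.

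The first piece is the delicate one. We write $\nabla^2V(X_t)-\nabla^2V(X_0)=\int_0^t\nabla^3V(X_s)[P_s]\,\d s$. A crude bound via \cref{asmp:Lip-Hess} gives $\betaF\nrm{X_t-X_0}\nrm{P_t}\le\betaF TR^2$. Squared, this is $\betaF^2T^2R^4$, which is too large. Instead we extract the Gaussian-chaos structure. We expand $P_t\approx P_s$ plus a drift of size $T\beta\cdot TR$, so the main term becomes $\int_0^t\nabla^3V(X_s)[P_s,P_s]\,\d s$. Its contribution $\betaF^2T^2(d+\Delta^2)$ should come from \cref{lem:Gaussian-chaos-chis}, whose exponential moments of $\nabla^3V(X_s)[P_s,P_s]/\betaF$ are $\exp(C\sqrt d+\Delta/3)$. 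The $K$-th power of the conditional expectation is handled as follows: we apply H\"older/Jensen to move $\En_z(\cdot)^K$ inside the time average, and we need $rK\gamma^{-1}\betaF^2T^2$ times the chaos to be subexponential. The square of the subexponential quantity yields $(\sqrt d+\Delta)^2\lesssim d+\Delta^2$ after using the change of measure at order $1.5$.

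The remaining cross terms produce the other contributions. Replacing $P_t$ by $P_s$ inside the chaos, and the Hessian drift $\nabla^2V(X_s)P_s$ in $\d P_s$, gives the $\betaF\beta KT\sqrt{\Rbar}$-type term through the second bound of \cref{lem:Gaussian-chaos-chis} and the $\beta^3\betaF^2T^8\Rbar^2$ term pathwise. The martingale part of $P_t-P_s$, which has size $\sqrt{\gamma T}$, gives $\betaF^2T^3\Rbar$ via \cref{lem:Ito-Freedman}. The $\min\{\beta^3T^2\Rbar,\betaF^2\beta T^4\Rbar\Delta\}$ term comes from the $\nabla V(X_0)$ drift component multiplying the Hessian difference. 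The first option bounds the Hessian difference by $2\beta$ together with the drift term from the ULD. The second uses Lipschitzness of the Hessian, with the factor $\Delta$ entering through the change of measure in the third bound of \cref{lem:Gaussian-chaos-chis} with $F=\nabla^2V$-differences.

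The main obstacle is the bookkeeping of $K$: the exponent $K$ sits outside the conditional expectation. For the chaos terms, which are not uniformly bounded on $\cE_{\Rbar}$, we use $\bigl(\En_{\bP_z}e^{Y}\bigr)^K\le\En_{\bP_z}e^{KY}$ and then average over $\nu$. This is why $rK$ appears linearly and why the hypothesis requires $rK\gamma^{-1}\betaF\beta T\sqrt{\Rbar}\lesssim1$. The final inequality follows by absorbing $CrK\beta^2T$ and the $\betaF\beta T\sqrt{\Rbar}$, $\betaF^2T^3\Rbar$ terms into $K$ using the stated condition on $1/(Cr)$.
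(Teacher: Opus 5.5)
Your decomposition $\sqrt{2\gamma}\,\partial_t\mu_t=(H_t-H_0)P_t+H_0(P_t-P^{L-1}_t)$, with $H_t\deq\nabla^2V(X_t)$, and your treatment of the second piece agree with the paper. Note that \cref{lem:ULA-fine-pathwise} with $\ell=L$ already gives $\nrm{P_t-P^{L-1}_t}\le\rho^{L-1}R$, so no index adjustment is needed.

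The gap is in the first piece. You square the chaos $\int_0^t\nabla^3V(X_s)[P_s,P_s]\,\d s$ and take the exponential moment of $r\gamma^{-1}$ times that square. \Cref{lem:Gaussian-chaos-chis} controls only exponential moments of the \emph{first} power of $\nrm{\nabla^3V(X_s)[P_s,P_s]}/\betaF$. This is a second-order Gaussian chaos with exponential, not sub-Gaussian, tails, so the exponential moment of any positive multiple of its square is in general infinite. For instance, $\nabla^3V(x)[\xi,\xi]=\betaF\,\xi_1^2e_1$ gives $\En e^{\lambda\betaF^2\xi_1^4}=\infty$. Your step ``the square of the subexponential quantity yields $d+\Delta^2$'' therefore yields the useless bound $+\infty$, even though $\max_t\nrm{\partial_t\mu_t}^2$ itself has finite exponential moments.

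The missing idea is to cap the chaos estimate with the trivial bound $\nrm{(H_t-H_0)P_t}\le2\beta\nrm{P_t}\lesssim\beta R_0$. Write $Y_t\deq(H_t-H_0)P_t$. On a high-probability event under $\bP_z$, the chaos route gives $\max_t\nrm{Y_t}^2\lesssim\betaF^2T^2(\iota^2+d+\Delta^2)+\cdots$, where $\iota$ collects $\log(1/\delta)$ and a $z$-dependent chaos excess. The trivial route gives $\beta^2(\Rbar+\gamma T\log(1/\delta))$. The inequality $\min\{a_1+a_2,b_1+b_2\}\le\sqrt{a_1b_1}+a_2+b_2$ then replaces $\betaF^2T^2\iota^2$ by $\betaF\beta T\sqrt{\Rbar}\,\iota$, which is linear in $\iota$ and therefore integrable.

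This is the true origin of the $\gamma^{-1}\betaF\beta T\sqrt{\Rbar}$ term. It does not come from the second bound of \cref{lem:Gaussian-chaos-chis}. That bound handles $\nabla^3V(X_s)[P_s,H_sP_s]$, which arises when differentiating $(H_t-H_0)\nabla V(X_t)$ (there is no Hessian drift in $\d P_s$). It carries an extra factor $T^2$ and is harmless.

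The same issue breaks your bookkeeping of $K$. As written, moving the $K$-th power inside by Jensen would need $rK\gamma^{-1}\betaF^2T^2$ times the \emph{squared} chaos to be integrable. That is neither true nor what is assumed, since the hypothesis puts $K$ only on $\gamma^{-1}\betaF\beta T\sqrt{\Rbar}$.

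The paper isolates the non-uniformity in $z$ into a scalar. Set $A_t=\nabla^3V(X_t)[P_t,P_t]$, $A_t''=\nabla^3V(X_t)[P_t,H_tP_t]$, and $F_t=\nabla^3V(X_t)[\nabla V(X_t)]$. Let $E_t$ be the excess of $\betaF^{-1}\nrm{A_t}+(\beta\betaF)^{-1}\nrm{A_t''}$ over $C_1(\sqrt d+\Delta)$, plus the excess of $\nrm{F_tP_t}^2/\nrmF{F_t}^2$ over $C_1\Delta$, and define $R_E(z)\deq\log\En_{\bP_z,t}\exp(c_1E_t)$. Note that $F_t$ is a function of $X_t$ alone, as \cref{lem:Gaussian-chaos-chis} requires. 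The Hessian differences $H_t-H_0$ you propose depend on $X_0$ and are not admissible there.

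\Cref{lem:Gaussian-chaos-chis} then gives $\En_\nu e^{R_E}\le2$, and Markov's inequality under $\bP_z$ controls the time integrals of $E_t$. The Brownian and Freedman bounds are uniform over $\cE_{\Rbar}$. After the linearization above, integrating the tail bound gives $\En_{\bP_z}\exp(r\max_t\nrm{\partial_t\mu_t}^2)\le\exp\bigl(Cr(\text{terms uniform in }z)+CrM_1R_E(z)\bigr)$ with $M_1=\gamma^{-1}\betaF\beta T\sqrt{\Rbar}$. After raising to the power $K$, only $e^{CrKM_1R_E(z)}$ must be averaged over $\nu$, and Jensen handles this when $CrKM_1\le1$. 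This is exactly why $K$ multiplies only that term in the hypothesis.
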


Proof in \cref{appdx:proof-ULA2-clip-full}.

\begin{corollary}\label{cor:ULA2-Renyi-onestep}
Suppose that $\nu$ is supported on the set $\cE_{\Rbar}=\crl*{z: R(z)\leq \Rbar}$. Let $\rho=\beta T^2$ and $\Delta=1+\Ren[1.5]{\nu}{\target}$.

Suppose also that $T\leq\frac{1}{4\sqrt\beta}$, $\gamma T\leq\frac14$, $\gamma\leq\sqrt\beta$, and $\betaF T^3\sqrt{\Rbar}\leq1$. Let $\delta\in(0,\frac12]$, $B=\Theta(1)$, $K\geq1$, $q\geq 2$,
\begin{align}
    &~\frac{1}{C(q^2+\log^2(K/\delta))}\geq \beta^2T^4+\gamma^{-1}\betaF \beta KT^4\sqrt{\Rbar}+\gamma^{-1}\betaF^2T^6\Rbar, \\
    &~\frac{1}{C(q^2+q\log(K/\delta))}\geq \gamma^{-1}\Bigl(\betaF^2T^5(d+\Delta^2)+ \beta^2 T^3 \rho^{2L-2} \Rbar+\beta^{3}\betaF^2T^{11}\Rbar^2 
    +\min\crl*{\beta^3T^5\Rbar,\,\betaF^2\beta T^7\Rbar\Delta}\Bigr).
\end{align}
Then 
\begin{align}
    \En_{z\sim \nu}\prn*{1+\Dsys*[q]{\That_z}{\T_z}}^K\leq1+\delta.
\end{align}
\end{corollary}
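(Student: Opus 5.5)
We apply \cref{lem:FORS-error} in its sharper form. The refined proposal's tilt is estimated exactly in expectation: $\wstar(\bZ)=\En_{t\sim\Unif([0,T])}W(t;\bZ)$, since $I=\int_0^T\tri{\partial_t\mu_t,B_T-B_t}\,\dt$ holds pathwise by integration by parts and $\mu_0=0$. So, for each $z$,
\begin{align}
    1+\Dsys*[q]{\That_z}{\T_z}\leq \En_{\bP_z}\En_{t}\exp\prn*{2q\,\trunc[B]{W(t;\bZ)}}\,.
\end{align}
We control the truncated exponential moment exactly as in the proof of \cref{prop:ULA-error-prop}. For any $\ell\geq1$, since $\trunc[B]{u}\leq\ell\abs{u}-(\ell-1)\cdots$, more simply $e^{2q\tau_B(u)}\leq 1+e^{-2q\ell B}e^{2q\ell\abs{u}}$. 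Combined with \cref{lem:ULA2-clip} at $\lambda=2q\ell$, this gives
\begin{align}
    1+\Dsys*[q]{\That_z}{\T_z}\leq 1+2e^{-2q\ell B}\,\En_{\bP_z}\exp\prn*{24q^2\ell^2T^3\max_t\nrm{\partial_t\mu_t}^2}\,.
\end{align}

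Next we raise this to the $K$-th power and average over $z\sim\nu$. Write $a(z)=2e^{-2q\ell B}G(z)$, where $G(z)$ is the path expectation above. Then $(1+a)^K\leq \exp(Ka)$, but $a$ is not uniformly small, so instead we use $(1+a)^K\leq 2^K\max(1,a^K)$. More carefully, we use $(1+a)^K\leq 1+K a(1+a)^{K-1}$ together with Hölder's inequality to reduce to $\En_\nu G(z)^K$.

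\cref{prop:ULA2-clip-full} with $r=24q^2\ell^2T^3$ controls exactly this quantity. Its hypothesis $\frac1{Cr}\geq\beta^2T+\cdots$ becomes the first displayed condition once we choose $\ell\asymp 1+\log(K/\delta)/q$, so that $q^2\ell^2\asymp q^2+\log^2(K/\delta)$. Its conclusion is
\begin{align}
    \log\En_\nu G^K\leq K+CrK\gamma^{-1}(\cdots)\,,
\end{align}
and the bracket times $rK$ is $O(K)$ by the second displayed condition. Here $q^2\ell^2\lesssim q^2+q\log(K/\delta)$ in the regime where it is needed. So $\En_\nu G^K\leq e^{CK}$, while each factor carries $e^{-2q\ell B}$. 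Choosing $\ell B\gtrsim 1+\log(K/\delta)/q$ with a large enough constant makes $Ke^{-2q\ell B}e^{C}\leq\delta/4$. We then obtain $\En_\nu(1+\Dsys*[q]{\That_z}{\T_z})^K\leq 1+\delta$, using $\En(1+a)^K\leq 1+\sum_{k\geq1}\binom Kk\En a^k$ and $\En a^k\leq (\En a^K)^{k/K}$.

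The main obstacle is the mismatch between the $K$-th power over $z$ and the per-$z$ exponential moment. We cannot take the $\max_t$ over the $K$ factors naively, because each $G(z)$ may be $e^{C}$ rather than $1+o(1)$. Only the prefactor $e^{-2q\ell B}$ makes $a$ small. The binomial expansion with Jensen's inequality, $\En a^k\leq (2e^{-2q\ell B})^k(\En G^K)^{k/K}\leq(2e^{C-2q\ell B})^k$, handles this. It yields $\sum_k\binom Kk(2e^{C-2q\ell B})^k\leq \exp(2Ke^{C-2q\ell B})-1\leq\delta$ for our choice of $\ell$. The remaining work is checking the constants that match the displayed conditions, which are exactly those of \cref{prop:ULA2-clip-full} rescaled by $r/T^3$.
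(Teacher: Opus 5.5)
Your overall route is the paper's. You use the sharp form of \cref{lem:FORS-error} (valid because $\mu_0=0$ makes the estimator exactly unbiased), the bound $e^{2q\tau_B(u)}\le 1+e^{-2q\ell B}e^{2q\ell|u|}$, \cref{lem:ULA2-clip} at $\lambda=2q\ell$, \cref{prop:ULA2-clip-full} at $r=24q^2\ell^2T^3$ (whose hypothesis is the first displayed condition), and a binomial expansion of the $K$-th power. You get the lower moments from $\En_\nu G^K$ by Jensen, whereas the paper reapplies the proposition with $K$ replaced by $j\le K$; that is an equally valid minor variation.

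The step that is wrong as written is the claim that the second condition makes $rK\gamma^{-1}(\cdots)=O(K)$, and hence $\En_\nu G^K\le e^{CK}$ with $C$ universal.
\begin{itemize}
\item With $\ell\asymp 1+\log(K/\delta)/q$ you have $q^2\ell^2\asymp(q+\log(K/\delta))^2$.
\item The condition only gives $\gamma^{-1}T^3(\cdots)\le 1/(C\,q(q+\log(K/\delta)))$.
\item So in fact $rK\gamma^{-1}(\cdots)\lesssim K\ell/C$.
\item Your assertion $q^2\ell^2\lesssim q^2+q\log(K/\delta)$ fails exactly when $\log(K/\delta)\gg q$. That is the relevant regime ($q$ fixed, $K$ large).
\end{itemize}
What the hypotheses actually give is $\En_\nu G^K\le\exp(K(1+c\ell))$, with $c$ small when the constant in the hypothesis is large.

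The argument still goes through, because this excess is linear in $\ell$, just like the clipping prefactor. Taking $c\le qB$ gives $\En_\nu a^k\le(2e^{1+c\ell-2q\ell B})^k\le(2e^{1-q\ell B})^k$. Then $q\ell B\ge 1+\log(4K/\delta)$, i.e.\ $\ell\asymp 1+\log(K/\delta)/q$ with a constant depending on $B$, closes the binomial sum.

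This is precisely how the paper uses the condition. It is calibrated to $\ell q^2$, not $\ell^2q^2$, so the moment bound contributes a factor $e^{j\ell}$. That factor is absorbed by $e^{-2jBq\ell}$, leaving $C_2^je^{-j\ell qB}$.
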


Proof in \cref{appdx:proof-ULA2-Renyi-onestep}.

\subsection{\pfref{thm:ULA2-warm}}\label{appdx:ULA2-warm}

\begin{proof}[\pfref{thm:ULA2-warm}]
Similar to the proof of \cref{thm:ULA-main}, we consider the set $\cE=\crl{z:R(z)\leq C_0(\Delta+d+\log(K/\delta))}$,
where $C_0>0$ is a sufficiently large constant so that $1-\target(\cE)\leq \prn*{\frac{p}{64K}}^3e^{-2\Delta}$.
Then, by \cref{cor:ULA2-Renyi-onestep}, we can guarantee that as long as $\nu'$ satisfies $\Ren[1.5]{\nu'}{\target}\leq 2\Delta$, we have
$$\En_{z\sim \nu'}\prn*{1+\Dsys*[2q]{\That_z}{\T_z}}^K\leq1+\delta.$$
Applying \cref{thm:trunc-alg} gives the desired result.
\end{proof}

\subsection{Proof of \cref{thm:ULA2-cold}}\label{appdx:ULA2-cold}

Let $\Tbar_z$ be the distribution of $(X_T,P_T)$ under \cref{eq:ULA2-iterative}. The following guarantee for $\Tbar$ follows from \cref{prop:ULA2-clip-full}.

\begin{corollary}\label{cor:ULA2-Renyi-onestep-crude}
Let $q\geq2$. Suppose that the hypotheses of \cref{prop:ULA2-clip-full} hold with $r=q^2T^3$, after adjusting the universal constant $C$. Then
\begin{align}
    \En_{z\sim \nu}\prn*{1+\Dsys*[q]{\Tbar_z}{\T_z}}^K\leq&~ \exp\Bigl(C+CKq^2\gamma^{-1}T^3\bigl(\betaF^2T^2(d+\Delta^2)+ \beta^2 \rho^{2L-2} \Rbar+\gamma\beta^2T \\
    &~+\betaF^2T^3\Rbar+\beta^{3}\betaF^2T^{8}\Rbar^2+\min\crl*{\beta^3T^2\Rbar,\,\betaF^2\beta T^4\Rbar\Delta}\bigr)\Bigr).
\end{align}
Further, we also have the crude bound $1+\Dsys*[q]{\Tbar_z}{\T_z}\leq e^{Cq^2\gamma^{-1}\beta^2T^3\Rbar}$ for any $z\in\cE_{\Rbar}$.
\end{corollary}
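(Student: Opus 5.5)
We compare $\Tbar_z$ and $\T_z$ directly on path space, without any estimator or clipping. By data processing on the time-$T$ marginals, $\Dsys*[q]{\Tbar_z}{\T_z}\le\Dsys*[q]{\bQ_z}{\bP_z}$. The Girsanov density gives $\log\frac{\d\bP_z}{\d\bQ_z}=\wstar(\bZ)=-\int_0^T\tri{\mu_t,\d B_t}-\frac12\int_0^T\nrm{\mu_t}^2\,\dt$. We apply \cref{lem:Renyi-to-diff} with $\rho=\bQ_z$, $f=0$ and $g=-\wstar$; here $Z_g=1$ and $Z_f=1$. This yields
\begin{align}
    1+\Dsys*[q]{\bQ_z}{\bP_z}\leq\En_{\bP_z}\exp\prn*{2q\abs{\wstar(\bZ)}}\,,
\end{align}
after moving the expectation to $\bP_z$ by symmetry of the roles in \cref{lem:Renyi-to-diff}. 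The task is therefore to bound the exponential moment of $\abs{\wstar}$ under $\bP_z$ in terms of $\nrm{\partial_t\mu_t}$.

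Under $\bP_z$ we have $\d B_t=\d\Bbar_t-\mu_t\,\dt$, so $\wstar=-\int_0^T\tri{\mu_t,\d\Bbar_t}+\frac12\int_0^T\nrm{\mu_t}^2\,\dt$. At $t=0$, $X_0^{L-1}=X_0$, so $\mu_0=0$. Hence $\nrm{\mu_t}\le T\max_s\nrm{\partial_s\mu_s}$ and $\int_0^T\nrm{\mu_t}^2\dt\le T^3\max_s\nrm{\partial_s\mu_s}^2$. The martingale term is handled as in \cref{lem:ULA2-clip}: exponential martingale plus Cauchy--Schwarz. This gives
\begin{align}
    \En_{\bP_z}e^{2q\abs{\wstar}}\le 2\prn*{\En_{\bP_z}\exp\prn*{Cq^2T^3\max_{t}\nrm{\partial_t\mu_t}^2}}^{1/2}\,,
\end{align}
and the linear drift term is absorbed because $q\ge2$. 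Raising to the $K$-th power and averaging over $z\sim\nu$ with Jensen's inequality, we obtain
\begin{align}
    \En_\nu\prn*{1+\Dsys*[q]{\Tbar_z}{\T_z}}^K\le 2^K\,\En_\nu\prn*{\En_{\bP_z}e^{r\max_t\nrm{\partial_t\mu_t}^2}}^{K}
\end{align}
with $r=Cq^2T^3$.

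The main obstacle is the $2^K$ factor. It must be avoided so that the constant stays $e^{C}$ rather than $e^{CK}$. I would fix it by not using the crude factor $2$. Instead, I would bound the martingale term's MGF by $\En\exp(\lambda^2\langle M\rangle)$ directly, through the Cauchy--Schwarz split of $e^{\lambda M-\lambda^2\langle M\rangle}e^{\lambda^2\langle M\rangle}$, whose first factor has expectation $1$ and contributes no constant. Then \cref{prop:ULA2-clip-full} with $r=q^2T^3$ (constants adjusted) bounds the log of the right side by $CrK\beta^2T+CrK\gamma^{-1}(\cdots)$. The first term equals $CKq^2T^3\gamma^{-1}\cdot\gamma\beta^2T$, which is exactly the extra $\gamma\beta^2T$ term appearing in the statement, and the remaining terms match after multiplying by $T^3$. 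The leftover absolute-constant loss from H\"older/Cauchy--Schwarz becomes the $e^C$.

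For the crude bound, we do not use the refined smoothness. Since $\nrm{\partial_t\mu_t}\le\gamma^{-1/2}\beta(\nrm{P_t}+\nrm{P_t^{L-1}})\lesssim\gamma^{-1/2}\beta R$, \cref{lem:ULA-fine-pathwise} applies. With $R^2\lesssim\nrm{p}^2+T^2\nrm{\nabla V(x)}^2+\gamma R_B^2$, we have $T^2\beta\le1$ and $R(z)\le\Rbar$, and $R_B^2$ has a Gaussian MGF at scale $T$ (\cref{lem:max-B}). Together these give $\log\En e^{rC\gamma^{-1}\beta^2R^2}\lesssim q^2\gamma^{-1}\beta^2T^3\Rbar$ for $z\in\cE_{\Rbar}$, using $\gamma T\le1$ and $d\le\Rbar$.
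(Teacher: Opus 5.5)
Your skeleton is the paper's: data processing to path space, Girsanov, $\int_0^T\nrm{\mu_t}^2\dt\le T^3\max_t\nrm{\partial_t\mu_t}^2$ from $\mu_0=0$, and \cref{prop:ULA2-clip-full} with $r\asymp q^2T^3$. The crude bound likewise uses $\nrm{\partial_t\mu_t}\lesssim\gamma^{-1/2}\beta R$ from \cref{lem:ULA-fine-pathwise} plus \cref{lem:max-B}. But the very first step breaks the argument. \cref{lem:Renyi-to-diff} puts $\abs{\wstar}$ in the exponent, and $\En_{\bP_z}e^{2q\abs{\wstar}}$ is a \emph{first-order} quantity. Write $\wstar=-M+\frac12 v$ with $M=\int_0^T\tri{\mu_t,\d\Bbar_t}$ and $v=\int_0^T\nrm{\mu_t}^2\dt$. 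Since $e^x\geq 1+x$, the quantity is at least $1+2q\En\abs{\wstar}$, which is of order $1+cq\sqrt v$ when $v$ is small (e.g.\ $M$ Gaussian). No bound of the form $\En\exp(Cq^2v)$ can dominate this. After the $K$-th power you get $(1+cq\sqrt v)^K$ (or your $2^K$) instead of $e^{C+CKq^2v}$. The same defect affects the crude bound: it would carry a prefactor $>1$ that does not vanish as $T\to0$. That makes it useless in \cref{prop:ULA2-warm-gen}, where it is raised to the power $5K$. Your proposed repair, the split $e^{\lambda M-\lambda^2\langle M\rangle}e^{\lambda^2\langle M\rangle}$, only applies to a signed exponential and says nothing about $e^{\lambda\abs{M}}$. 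It is therefore incompatible with the inequality you started from. Likewise, the $e^C$ cannot come from ``absolute-constant losses in H\"older'': any multiplicative constant per starting point becomes $C^K$ after the $K$-th power.

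The fix is to drop \cref{lem:Renyi-to-diff} and compute the two R\'enyi moments directly from the Girsanov density; this is the paper's ``straightforward'' Girsanov step. Both moments can be written under $\bP_z$: $\En_{\bQ_z}(\d\bP_z/\d\bQ_z)^q=\En_{\bP_z}e^{(q-1)\wstar}$ and $\En_{\bP_z}(\d\bQ_z/\d\bP_z)^q=\En_{\bP_z}e^{-q\wstar}$. For $s\in\{q-1,-q\}$, write $e^{s\wstar}=e^{-sM-s^2v}\,e^{(s^2+s/2)v}$. Cauchy--Schwarz, together with the supermartingale property of $e^{-2sM-2s^2v}$, gives $1+\Dsys*[q]{\bQ_z}{\bP_z}\leq(\En_{\bP_z}e^{2q^2v})^{1/2}\leq\En_{\bP_z}\exp(Cq^2T^3\max_t\nrm{\partial_t\mu_t}^2)$, with no prefactor. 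Raising this to the $K$-th power and applying \cref{prop:ULA2-clip-full} yields the statement. The $CrK\beta^2T$ term is the $\gamma\beta^2T$ term you correctly identified. The additive $C$ is where the $O(1)$ terms of that proposition go; for example, $CrK\gamma^{-1}\betaF\beta T\sqrt{\Rbar}\lesssim1$ by its hypothesis. With this per-$z$ bound in place, your crude-bound argument goes through as written.
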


Proof in \cref{appdx:proof-ULA2-Renyi-onestep-crude}.

\begin{proposition}\label{prop:ULA2-warm-gen}
\screplace{Suppose that \cref{asmp:LSI,asmp:Lip-Hess} hold,
$w$ is a constant, and $L\in\mathbb N$.}{Suppose that $V$ is convex and $\beta$-smooth, that $\mu_V$ is log-concave and satisfies LSI with constant $1/\alpha$, and that \cref{asmp:Lip-Hess} holds. Set $\gamma\asymp\sqrt\alpha$, $\kappa\deq\beta/\alpha$, and $\kappaF\deq\betaF^{2/3}/\alpha$. Let $w$ be a constant and $L\in\mathbb N$.} Suppose that a constant $c>0$
satisfies $w\geq 2+c$, and set
\begin{align}
    \Delta\deq d+2+\Ren[w]{\nu}{\target},
    \qquad \Lambda\deq\log \Delta.
\end{align}
Then as long as
\begin{align}
    M\deq \frac{1}{\gamma T}\gg \sqrt{\kappa \Lambda}+\kappa^{1/3}(\kappaF+\kappaF^{\frac{3}{10}})(\Delta+\log M)^{1/6}\Lambda,
\end{align}
there is $N=O(M\Lambda^2)$ and an algorithm from \cref{thm:trunc-alg} that uses at most $N$ queries to $\Tbar$, succeeds with probability at least $1-e^{-\Delta}$, and, conditional on success, has output distribution $\nuhat$ satisfying
\begin{align}
    \Ren*[w-c]{\nuhat}{\target}
    \leqsim&~ 1+\gamma^{-2}\bigl(\betaF^2T^4d+ (\beta^2 T^2\rho^{2L-2}+\betaF^2T^5+\betaF^2\beta T^6)(\Delta+\log M) \\
    &~\hspace{31mm}+\beta^{3}\betaF^2T^{10}(\Delta+\log M)^2+\gamma\beta^2T^3\bigr)\Lambda^3 \\
    \leqsim&~ (\Delta+\log M)^{1/3}+\kappa^{1/2}+\gamma^{-2}\beta^2T^2\rho^{2L-2}(\Delta+\log M)\Lambda^3.
\end{align}
\end{proposition}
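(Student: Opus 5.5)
The plan is to run the \emph{uncorrected} proposal kernel $\Tbar$ (the Picard-linearized SDE \cref{eq:ULA2-iterative}) for $N$ steps inside the truncated scheme of \cref{thm:trunc-alg}. We then compare the output with $(\cPt{T})^N\nu$, and finally compare that law with $\target$ via \cref{thm:ST-LSI}. No rejection step is used here. The goal is only a \emph{crude} warm start, so we can tolerate an $O(\Delta^{1/3})$-size error rather than $\delta$. This is why the one-step cost enters linearly, multiplied by the number of steps $N\asymp M\Lambda^2$.

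\textbf{Step 1: truncation set and invariant warmness.} Take $\cE=\crl{z:R(z)\le \Rbar}$ with $\Rbar\asymp\Delta+\log M$, chosen so that $1-\target(\cE)\le (p/64N)^3e^{-2\Delta}$ with $p=e^{-\Delta}$. This is possible by \cref{lem:ULA-MGF}, since $\log N\lesssim\log M+\log\Lambda$.

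\textbf{Step 2: apply \cref{cor:ULA2-Renyi-onestep-crude} with $K$ replaced by $N$.} The hypotheses of \cref{prop:ULA2-clip-full} must be checked for $r=q^2T^3$ with $q$ a constant. They reduce to
\begin{itemize}
\item $\beta^2T^4\lesssim1$, which follows from $M\gg\sqrt{\kappa\Lambda}$;
\item $\gamma^{-1}\betaF\beta NT^4\sqrt{\Rbar}\lesssim1$;
\item $\gamma^{-1}\betaF^2T^6\Rbar\lesssim1$.
\end{itemize}
After substituting $\gamma\asymp\sqrt\alpha$, $T=1/(\gamma M)$ and $N\asymp M\Lambda^2$, these become $M^3\gtrsim \kappa\kappaF^{3/2}\Rbar^{1/2}\Lambda^2$ and $M^5\gtrsim\kappaF^3\Rbar$. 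Both are implied by the term $\kappa^{1/3}(\kappaF+\kappaF^{3/10})\Rbar^{1/6}\Lambda$ in the displayed condition on $M$. The $\kappaF^{3/10}$ exponent comes from the second of these inequalities.

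This yields $\log\En_{\nu'}(1+\Dsys[q]{\Tbar_z}{\T_z})^{N}\lesssim 1+E$, where $E$ is $Nq^2\gamma^{-1}T^3(\cdots)$. Substituting $N\asymp\Lambda^2/(\gamma T)$ converts this into the $\gamma^{-2}T^2(\cdots)\Lambda^2$ shape of the claimed bound. Here $\betaF^2T^2(d+\Delta^2)$ should be read with $\Delta$ replaced by the $O(1)$ R\'enyi-$1.5$ warmness used along the chain, and the $\min$ term is bounded by its first argument.

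\textbf{Step 3: chain the one-step errors.} Use \cref{lem:Renyi-kernel} inductively, not the small-$\delta$ form of \cref{thm:trunc-alg}, because the per-step error is not small. This requires $\delta$ in \cref{thm:trunc-alg} to be of order $E$, with orders decreasing geometrically from $w$ to $w-c/2$. Conditioning costs are $O(p/N)$ by \eqref{eq:trunc-tail}. This gives $\Ren[w-c/2]{\nuhat}{(\cPt T)^N\nu}\lesssim 1+E\Lambda$, with one extra $\Lambda$ factor coming from the H\"older exponents $1/\eps$. Together with $\Lambda^2$ from Step 2, this produces the total $\Lambda^3$.

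\textbf{Step 4: finish.} By \cref{thm:ST-LSI}, $\Ren[w]{(\cPt T)^N\nu}{\target}\le 1$ once $N\gtrsim M\log(w\Delta)$. The weak triangle inequality \cref{lem:Renyi-triangle} with constant $\eps\asymp c$ then gives the first display. For the second display, substitute $T=1/(\gamma M)$ and use the $M$ lower bound. Each of the terms $\betaF^2T^4d/\gamma^2$, $\betaF^2T^5\Rbar/\gamma^2$, $\betaF^2\beta T^6\Rbar/\gamma^2$ and $\beta^3\betaF^2T^{10}\Rbar^2/\gamma^2$ (each multiplied by $\Lambda^3$) becomes $\lesssim\Rbar^{1/3}$ under that bound. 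Meanwhile $\gamma^{-1}\beta^2T^3\Lambda^3\lesssim\kappa^{1/2}$, and the Picard remainder term is left as stated.

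\textbf{Main obstacle.} I expect the hardest part to be the exponent bookkeeping in Step 2. One must verify that a single condition on $M$ simultaneously satisfies the hypotheses of \cref{prop:ULA2-clip-full} and makes every term at most $\Rbar^{1/3}$. The $N$-fold moment, with its $K$-dependent $\betaF\beta KT\sqrt{\Rbar}$ term, is the binding constraint.
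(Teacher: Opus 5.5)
Your proposal has a genuine gap in Step~2. There you read the warmness parameter in \cref{prop:ULA2-clip-full} and \cref{cor:ULA2-Renyi-onestep-crude} as an ``$O(1)$ R\'enyi-$1.5$ warmness used along the chain,'' but nothing makes it $O(1)$. The chain starts at $\nu$, and here $\Delta=d+2+\Ren[w]{\nu}{\target}\ge d$. Moreover, \cref{thm:trunc-alg} only maintains $\Ren[1.5]{\cdot}{\target}\le 2\Delta$ for the intermediate laws, and its one-step hypothesis must hold for every such law, including the initial one. Producing an $O(1)$-warm law is exactly what this proposition is for.

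With warmness $\asymp\Delta$, the term $\betaF^2T^2(d+\Delta^2)$ contributes $\gamma^{-2}\betaF^2T^4\Delta^2\Lambda^3$. After rescaling to $\alpha=1$ this is $\kappaF^3\Delta^2\Lambda^3/M^4$. The hypothesis on $M$ scales only like $\Delta^{1/6}$, so this term is not $\lesssim(\Delta+\log M)^{1/3}$. Tracking the LSI decay of the warmness step by step does not rescue a single pass, because the first $\Theta(M)$ steps already incur an error of order $\gamma^{-2}\betaF^2T^4\Delta^2$. Similarly, bounding the $\min$ by its first argument produces $\gamma^{-2}\beta^3T^4\Rbar\Lambda^3\asymp\kappa^3\Rbar\Lambda^3/M^4$. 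This is neither among the claimed terms nor small under the hypothesis. The claimed $\betaF^2\beta T^6(\Delta+\log M)$ term comes from the second argument, which is useful only once the warmness is already small.

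The missing idea is a bootstrap in the warmness. The paper first runs one stage of \cref{thm:trunc-alg} with the warmness-free crude bound $1+\Dsys[q]{\Tbar_z}{\T_z}\le e^{Cq^2\gamma^{-1}\beta^2T^3\Rbar}$. This yields warmness at most $U_0=1+2\eta_0$ with $\eta_0\asymp Kq^2\gamma^{-1}\beta^2T^3\Rbar$ (or it simply starts from $\Delta$ if $U_0>\Delta$). It then runs $I\asymp\Lambda$ stages of $K\asymp M\Lambda$ steps each. Each stage is long enough that \cref{thm:ST-LSI} contracts the divergence by a factor $1/(100\Delta)$. Stage $i$ is analyzed with its current warmness $S_i$, so \cref{lem:Renyi-composition-ULA} gives $S_{i+1}\le S_i/(100\Delta)+2\eta_i$.

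Since $S_i\le U_0$ throughout, the quadratic term satisfies $\betaF^2T^5S_i^2\le\betaF^2T^5U_0S_i$. The condition on $M$ is exactly what guarantees $2C_1Kq^2\gamma^{-1}(\betaF^2T^5U_0+\betaF^2\beta T^7\Rbar)\le 1/100$. Hence the $S_i^2$ and $S_i$ terms of $\eta_i$ are absorbed, and $S_i$ halves until it is $O(U)$.

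Because the R\'enyi order drops by a factor $1+\eps$ per stage with $\eps\asymp 1/\Lambda$, the per-stage order must be $q\asymp\Lambda$. That, and not a H\"older exponent, is where $\Lambda^3=(K/M)\,q^2$ comes from.

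Two smaller slips: the last hypothesis of \cref{prop:ULA2-clip-full} becomes $M^6\gtrsim\kappaF^3\Rbar$, not $M^5$. The $\kappaF^{3/10}$ in the condition on $M$ comes from the $\beta^3\betaF^2T^{10}\Rbar^2$ term.
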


Proof in \cref{appdx:proof-ULA2-warm-gen}. 

We finish the proof of \cref{thm:ULA2-cold}. Let
\begin{align}
    \iota&\deq 1+\log\prn*{\frac{\kappa(1+\kappaF)d}{\delta}}, \\
    A_1&\deq \sqrt\kappa,\qquad
    A_2\deq \kappaF^{3/5}d^{1/5},\qquad
    A_3\deq \kappa^{1/3}(\kappaF+\kappaF^{3/11})d^{2/11}.
\end{align}
Choose
\begin{align}\label{eq:ULA2-cold-choice}
    M\deq \frac{1}{\gamma T}=C\iota^2(A_1+A_2+A_3),
    \qquad L=\left\lceil C\log(2+\kappa d)\right\rceil.
\end{align}

Set $\Delta=d+2+\Ren[w]{\nu}{\target}=O(d\log(2\kappa))$ and apply \cref{prop:ULA2-warm-gen} with order $w$ and $c=1/4$. The step size condition holds because
\begin{align}
    \sqrt{\kappa\log \Delta}
    +\kappa^{1/3}(\kappaF+\kappaF^{3/10})(\Delta+\log M)^{1/6}\log \Delta
    \leq C\iota^{7/6}(A_1+A_3)\leq M/C.
\end{align}
Here we used $d^{1/6}\leq d^{2/11}$ and
$\kappaF+\kappaF^{3/10}\leq2(\kappaF+\kappaF^{3/11})$.
Let $\nu_\star$ be the resulting distribution and write
\begin{align}
    D_\star\deq1+\Ren*[w-1/4]{\nu_\star}{\target}.
\end{align}
Repeat this stage upon failure; its success probability gives only a constant expected overhead.
We substitute $T=(\gamma M)^{-1}$ to obtain
\begin{align}\label{eq:ULA2-cold-Dstar}
    D_\star\leq1+C\iota^5\prn*{
    \frac{\kappaF^3d}{M^4}
    +\frac{\kappa\kappaF^3d}{M^6}
    +\frac{\kappa^3\kappaF^3d^2}{M^{10}}
    +\frac{\kappa^2}{M^3}},
\end{align}
and also
\begin{align}\label{eq:ULA2-cold-Dstar-coarse}
    D_\star\leq C\iota^3(d^{1/3}+\sqrt\kappa).
\end{align}
Indeed, the omitted $M^{-5}$ term is bounded by the $M^{-4}$ term, while $M\geq C\iota^2\sqrt\kappa$ and the choice of $L$ make the iterative term's contribution to $D_\star$ at most a constant.

For the final stage, set $q=w-1/2$, $s=w-1/4$,
\begin{align}
    r\deq \frac{q(s-1)}{s-q}=4(w-1/2)(w-5/4),
\end{align}
and set
\begin{align}
    U=C\log(D_\star/\delta),\qquad
    K=\left\lceil CM\log(D_\star/\delta)\right\rceil.
\end{align}
Then $K\leq UM$ and $U+\log(K/\delta)\leq C\iota$. We apply \cref{thm:ULA2-warm} at order $r$ and accuracy $\delta^2/3$, with initial law $\nu_\star$, and take its second branch. The bound \eqref{eq:ULA2-cold-Dstar-coarse} shows directly that all terms outside this branch are at most $C\iota^2(A_1+A_2+A_3)$. For the last term, substitute \eqref{eq:ULA2-cold-Dstar} and use
\begin{align}
    ((d+D_\star)D_\star)^{1/7}\leq(dD_\star)^{1/7}+D_\star^{2/7}.
\end{align}
Balancing the four powers of $M$ in \eqref{eq:ULA2-cold-Dstar} yields
\begin{align}
    \max\Bigl\{&\kappa^{1/7}\kappaF^{3/7}d^{1/7},
    \kappa^{1/11}\kappaF^{6/11}d^{2/11},
    \kappa^{2/13}\kappaF^{6/13}d^{2/13},
    \kappa^{4/17}\kappaF^{6/17}d^{3/17},
    \kappa^{3/10}\kappaF^{3/10}d^{1/10},\\
    &\kappa^{1/7}\kappaF^{3/7},
    \kappa^{1/15}\kappaF^{3/5}d^{2/15},
    \kappa^{3/19}\kappaF^{9/19}d^{2/19},
    \kappa^{7/27}\kappaF^{1/3}d^{4/27},
    \kappa^{5/13}\kappaF^{3/13}\Bigr\}
    \leq C(A_1+A_2+A_3).\label{eq:ULA2-cold-balances}
\end{align}
To see the last inequality, the first term is $A_1^{2/7}A_2^{5/7}$; the second through fifth, as well as the eighth and ninth, are bounded by $A_3$ after splitting at $\kappaF=1$. The sixth is bounded by the first. Finally, the seventh and tenth are bounded by $A_1+A_2+A_3$ after splitting at $\kappaF=\sqrt\kappa$. The largest logarithmic factor arising from this substitution is $\iota^{12/7}$, which is absorbed by the $\iota^2$ in \eqref{eq:ULA2-cold-choice}. Thus \eqref{eq:ULA2-T-cond-warm} holds.

The exact law $\mu$ at time $KT$ satisfies $\Ren[s]{\mu}{\target}\leq\delta^2/3$ by \cref{thm:ST-LSI}, and the simulated law $\nuhat$ satisfies $\Ren*[r]{\nuhat}{\mu}\leq\delta^2/3$. Since $1+(q-1)r/(r-q)=s$ and the prefactor in \cref{lem:Renyi-composition-ULA} is at most $2$, the lemma gives $\Ren*[q]{\nuhat}{\target}\leq\delta^2$. The two stages use at most $C\iota^4(A_1+A_2+A_3)$ queries, which is the claimed bound.
\qed

\subsection{\pfref{cor:ULA2-Renyi-onestep}}\label{appdx:proof-ULA2-Renyi-onestep}
By \cref{lem:FORS-error} and \cref{lem:ULA2-clip}, we can take $\ell\geq1$ and bound
\begin{align}
    \Dsys*[q]{\That_z}{\T_z}\leq&~ \En_{\bP_z} \En_{t\sim \Unif([0,T])} \exp\prn*{2q(\abs{W(t;\bZ)}-B)_+}-1 \\
    \leq&~ e^{-2Bq\ell}\En_{\bP_z} \En_{t\sim \Unif([0,T])} \exp\prn*{2q\ell \abs{W(t;\bZ)}} \\
    \leq&~ 2e^{-2Bq\ell}\En_{\bP_z} \En_{t\sim \Unif([0,T])}\exp\prn*{24\ell^2q^2T^3\nrm{\partial_t \mu_t}^2}.
\end{align}
Then, by \cref{prop:ULA2-clip-full}, there are constants $C_0,C_2>0$ such that as long as
\begin{align}
    \frac{1}{C_0\ell^2q^2}\geq \beta^2T^4+\gamma^{-1}\betaF \beta KT^4\sqrt{\Rbar}+\gamma^{-1}\betaF^2T^6\Rbar,
\end{align}
it holds that for $1\leq j\leq K$,
\begin{align}
    \En_{z\sim \nu}\Dsys*[q]{\That_z}{\T_z}^j
    \leq&~ C_2^j\exp\Bigl(-2jBq\ell+C_0j\ell^2q^2\gamma^{-1}T^3\bigl(\betaF^2T^2(d+\Delta^2)+ \beta^2 \rho^{2L-2} \Rbar \\
    &~\hspace{38mm}+\beta^{3}\betaF^2T^{8}\Rbar^2+\min\crl*{\beta^3T^2\Rbar,\,\betaF^2\beta T^4\Rbar\Delta}\bigr)\Bigr).
\end{align}
Therefore, as long as
\begin{align}
    \frac{1}{C_0\ell q^2}\geq&~ \gamma^{-1}\Bigl(\betaF^2T^5(d+\Delta^2)+ \beta^2 T^3 \rho^{2L-2} \Rbar+\beta^{3}\betaF^2T^{11}\Rbar^2 \\
    &~\hspace{35mm}+\min\crl*{\beta^3T^5\Rbar,\,\betaF^2\beta T^7\Rbar\Delta}\Bigr),
\end{align}
it holds that for $1\leq j\leq K$, $\En_{z\sim \nu}\Dsys*[q]{\That_z}{\T_z}^j\leq C_2^je^{-j\ell qB}$. In particular, as long as $\ell qB\geq \log(2C_2K/\delta)$, we can use $(1+x)^K\leq \sum_{j=0}^K (Kx)^j$ to bound
\begin{align}
    \En_{z\sim \nu}\prn*{1+\Dsys*[q]{\That_z}{\T_z}}^K\leq \sum_{j=0}^K K^j\En_{z\sim \nu}\Dsys*[q]{\That_z}{\T_z}^j\leq \sum_{j=0}^K (\delta/2)^j\leq 1+\delta.
\end{align}
\qed

\subsection{\pfref{prop:ULA2-clip-full}}\label{appdx:proof-ULA2-clip-full}

We denote $\rho=T^2\beta$, $H_t\deq \nabla^2 V(X_t)$, $Y_t\deq (H_t-H_0)P_t$ for $t\geq 0$. Then by \cref{lem:ULA-fine-pathwise}, we have shown $\nrm{Y_t-\sqrt{2\gamma}\partial_t \mu_t}\leq \beta \nrm{P_t-P^{L-1}_t}\leq \beta \rho^{L-1} R_0$,
where
\begin{align}
    R_0\deq 4(\nrm{p}+\beta^{-1/2}\nrm{\nabla V(x)}+\sqrt{\gamma}R_B), \qquad
    R_B\deq \max_{0\leq t\leq T}\nrm{\Bbar_t}.
\end{align}
Further, for every $t\in[0,T]$, we have $\nrm{X_t-X_0}\leq TR_0$ and
\begin{align}
    \nrm{\nabla V(X_t)}\leq \nrm{\nabla V(X_0)}+\beta TR_0\leq 2\sqrt{\beta}R_0.
\end{align}

\paragraph{Step 1: Upper bound on $\nrm{Y_t}$}
Note that
\begin{align}
    \d Y_t= \nabla^3 V(X_t)[P_t,P_t]\dt+(H_t-H_0)\d P_t.
\end{align}
We denote $A_t\deq \nabla^3 V(X_t)[P_t,P_t]$, and then
\begin{align}
    \d Y_t= (A_t-\gamma Y_t-(H_t-H_0)\nabla V(X_t))\dt+\sqrt{2\gamma}(H_t-H_0)\d \Bbar_t.
\end{align}
Now, we define
\begin{align}
    S_t \deq \int_0^t e^{\gamma s}(H_s-H_0)\d \Bbar_s, \qquad
    Z_t\deq (H_t-H_0)\nabla V(X_t),
\end{align}
and by definition, we have
\begin{align}
    \nrm{Y_t}\leq \int_0^t (\nrm{A_s}+\nrm{Z_s})\d s+\sqrt{2\gamma} e^{-\gamma t} \nrm{S_t}.
\end{align}
Now, by taking the derivative recursively, we have
\begin{align}
    Z_t
    =\int_0^t \nabla^3 V(X_s)[P_s, \nabla V(X_s)] \d s+
    \int_0^t  (H_s-H_0)H_sP_s \d s,
\end{align}
and we denote $U_t\deq (H_t-H_0)H_tP_t$, so that
\begin{align}
    \d U_t=(\nabla^3 V(X_t)[P_t,H_tP_t]+(H_t-H_0)(A_t-H_t\nabla V(X_t))-\gamma U_t)\dt+\sqrt{2\gamma}(H_t-H_0)H_t\d \Bbar_t.
\end{align}
Now, define 
\begin{align}
    S_t'\deq \int_0^t e^{\gamma s}(H_s-H_0)H_s\d \Bbar_s, \qquad
    A_t'\deq \nabla^3 V(X_t)[P_t, \nabla V(X_t)], \qquad
    A_t''\deq \nabla^3 V(X_t)[P_t,H_tP_t].
\end{align}
Then, using $\nrm{\nabla V(X_t)}\leq 2\sqrt{\beta}R_0$, $\nrmop{H_t}\leq \beta$, and $\nrmop{H_t-H_0}\leq \betaF\nrm{X_t-X_0}\leq \betaF TR_0$, we can bound
\begin{align}
    \nrm{U_t}\leq \int_0^t (\nrm{A_s''}+2\beta\nrm{A_s}+2\beta^{3/2}\betaF TR_0^2)\d s+\sqrt{2\gamma} e^{-\gamma t}\nrm{S_t'},
\end{align}
and hence
\begin{align}
    \nrm{Z_t}\leq \int_0^t (\nrm{A_s'}+T\nrm{A_s''}+2\beta T\nrm{A_s}+\sqrt{2\gamma} \nrm{S_s'})\d s+2\beta^{3/2}\betaF T^3R_0^2.
\end{align}
Combining the above bounds, we can bound
\begin{align}
    \nrm{Y_t}\leq 2\int_0^t (\nrm{A_s}+T\nrm{A_s'}+T^2\nrm{A_s''}+\sqrt{\gamma} T\nrm{S_s'})\d s
    +2\beta^{3/2}\betaF T^4R_0^2+\sqrt{2\gamma} \nrm{S_t}.
\end{align}
Alternatively, $\nrm{Z_t}\leq 2\beta^{3/2}R_0$, and the first bound on $Y_t$ gives
\begin{align}
    \nrm{Y_t}\leq \int_0^t\nrm{A_s}\d s+2\beta^{3/2}TR_0+\sqrt{2\gamma}\nrm{S_t}.
\end{align}

\paragraph{Step 2: High probability upper bounds}
Fix any $t\in[0,T]$. 

(i) By \cref{lem:max-B}, it holds that \whp~under $\bP_z$, $R_B\leqsim \sqrt{T(d+\log(1/\delta))}$, and hence
\begin{align}
    R_0^2\leqsim R(z)+\gamma T(d+\log(1/\delta))
    \leqsim \Rbar+\gamma T\log(1/\delta)=:R_+^2.
\end{align}

(ii) Note that under the event of (i), it holds that
\begin{align}
    \int_0^T \nrmF{H_s-H_0}^2 \d s\leq T(\betaF TR_0)^2\leqsim \betaF^2 T^3R_+^2.
\end{align}
Since $\gamma T\leq1/4$, the factors in the quadratic variations satisfy $e^{2\gamma s}\leq e^{1/2}$. Thus, by \cref{lem:Ito-Freedman}, it holds that \whp[2\delta]~under $\bP_z$,
\begin{align}
    \max_{0\leq t\leq T}\prn*{\nrm{S_t}^2+\beta^{-2}\nrm{S_t'}^2} \leqsim
    \betaF^2 T^3R_+^2\log(1/\delta).
\end{align}

(iii) We define $F_t\deq \nabla^3 V(X_t)[\nabla V(X_t)]$ and
\begin{align}
    E_t\deq&~ \brk*{ \betaF^{-1}\nrm{A_t}+\beta^{-1}\betaF^{-1}\nrm{A_t''}-C_1(\sqrt{d}+\Delta)}_++\brk*{ \frac{\nrm{F_tP_t}^2}{\nrmF{F_t}^2}-C_1\Delta}_+,  \\
    R_E(z)\deq&~ \log \En_{\bZ\sim \bP_z, t\sim \Unif([0,T])}\exp\prn*{c_1E_t},
\end{align}
where the last quotient is defined to be zero when $\nrmF{F_t}=0$, and $C_1,c_1>0$ are appropriate constants so that \cref{lem:Gaussian-chaos-chis} applies, i.e.,
\begin{align}
    \En_{z\sim \nu} \exp\prn*{R_E(z)}\leq 2.
\end{align}
Now, by Markov's inequality, it holds that \whp~under $\bP_z$,
\begin{align}
    &~\int_0^T (\nrm{A_t}+\beta^{-1}\nrm{A_t''})\dt\leqsim \betaF T(\sqrt{d}+\Delta+R_E(z)+\log(1/\delta)), \\
    &~\int_0^T \frac{\nrm{F_tP_t}^2}{\nrmF{F_t}^2}\dt \leqsim T(1+\Delta+R_E(z)+\log(1/\delta)).
\end{align}
Note that under this event, we can choose $\eta=\frac{\betaF\sqrt{\beta}R_0}{\sqrt{1+\Delta+R_E(z)+\log(1/\delta)}}$ and
further bound (using $\nrmF{F_t}\leq \betaF\nrm{\nabla V(X_t)}\leq 2\betaF\sqrt{\beta}R_0$)
\begin{align}
    \frac{1}{T}\int_0^T \nrm{A_t'}\dt=\frac{1}{T}\int_0^T \nrm{F_tP_t}\dt \leq&~ \frac{1}{T}\int_0^T \prn*{ \frac{\eta\nrm{F_tP_t}^2}{\nrmF{F_t}^2} + \eta^{-1}\nrmF{F_t}^2}\d t \\
    \leqsim&~ \eta(1+\Delta+R_E(z)+\log(1/\delta))+\eta^{-1}(\betaF\sqrt{\beta}R_0)^2 \\
    \leqsim&~ \betaF \sqrt{\beta}R_0 \sqrt{1+\Delta+R_E(z)+\log(1/\delta)}.
\end{align}

\paragraph{Step 3: Finalizing the proof}
Recall that we denote $R_+^2\deq \Rbar+\gamma T\log(1/\delta)$ so $R_0^2\leqsim R_+^2$. 
In the following, we condition on the event of (i)--(iii), which holds \whp[4\delta]~under $\bP_z$, and denote $\iota\deq R_E(z)+\log(1/\delta)$. The refined bound from Step 1 gives
\begin{align}
    \frac{1}{\betaF T}\max_{0\leq t\leq T}\nrm{Y_t}
    \leqsim &~ \sqrt d+\Delta+\iota
    +T\sqrt{\beta R_+^2(1+\Delta+\iota)}
    +\sqrt{\gamma TR_+^2\log(1/\delta)}+\beta^{3/2}T^3R_+^2 \\
    \leqsim &~\sqrt d+\Delta+\iota+\sqrt{\beta T^2\Rbar(1+\Delta+R_E(z))} \\
    &~+\sqrt{(\gamma T +\beta T^2)\Rbar \log(1/\delta)}+\beta^{3/2}T^3\Rbar,
\end{align}
where we use $\gamma T\leq\sqrt{\beta}T\leq 1$. Therefore,
\begin{align}
    \max_{0\leq t\leq T}\nrm{Y_t}^2
    \leqsim &~\betaF^2T^2(\iota^2+d+\Delta^2)+\betaF^2\beta T^4\Rbar(1+\Delta+\iota) \\
    &~+\gamma\betaF^2T^3\Rbar\log(1/\delta)+\beta^{3}\betaF^2T^8\Rbar^2.
\end{align}

The alternative bound from Step 1 gives, on the same event,
\begin{align}
    \max_{0\leq t\leq T}\nrm{Y_t}^2
    \leqsim&~ \betaF^2T^2(\iota^2+d+\Delta^2)+\beta^3T^2\Rbar \\
    &~+\gamma\betaF^2T^3\Rbar\log(1/\delta)+\beta^2\gamma T\log(1/\delta).
\end{align}
We also have the trivial bound
\begin{align}
    \nrm{Y_t}^2\leq 4\beta^2\nrm{P_t}^2\leq 4\beta^2 R_0^2\leqsim \beta^2(\Rbar+\gamma T\log(1/\delta)).
\end{align}

We use $\min\crl{a_1+a_2,b_1+b_2}\leq \sqrt{a_1b_1}+a_2+b_2$ only to replace the common $\iota^2$ term by a term linear in $\iota$. Since $\betaF T^3\sqrt{\Rbar}\leq1$, the term $\betaF^2\beta T^4\Rbar\iota$ is absorbed by $\betaF\beta T\sqrt{\Rbar}\iota$. Applying this argument to the two estimates separately, and then retaining the better one, gives
\begin{align}
    \max_{0\leq t\leq T}\nrm{Y_t}^2
    \leqsim&~ \gamma (\beta^2T+\betaF^2T^3\Rbar)\log(1/\delta)+\betaF\beta T\sqrt{\Rbar}\,\iota+\betaF^2T^2(d+\Delta^2)+\beta^{3}\betaF^2T^8\Rbar^2 \\
    &~+\min\crl*{\beta^3T^2\Rbar,\,\betaF^2\beta T^4\Rbar\Delta}.
\end{align}
Further, we recall that $\sqrt{2\gamma}\nrm{\partial_t \mu_t}\leq \nrm{Y_t}+ \beta \rho^{L-1} R_0$, and hence
\begin{align}
    \max_{0\leq t\leq T}\nrm{\partial_t \mu_t}^2\leqsim&~ (\beta^2T+\gamma^{-1}\betaF^2T^3\Rbar)\log(1/\delta)+\gamma^{-1}\betaF \beta T\sqrt{\Rbar}\iota \\ 
    &~ +\gamma^{-1}\prn*{\betaF^2T^2(d+\Delta^2)+\beta^{3}\betaF^2T^8\Rbar^2+\beta^2 \rho^{2L-2} \Rbar} \\
    &~+\gamma^{-1}\min\crl*{\beta^3T^2\Rbar,\,\betaF^2\beta T^4\Rbar\Delta}.
\end{align}
In the following, we denote
\begin{align}
    M_0&\deq \beta^2T+\gamma^{-1}\betaF^2T^3\Rbar, \qquad M_1\deq \gamma^{-1}\betaF \beta T\sqrt{\Rbar}, \\
    G&\deq \gamma^{-1}\prn*{\betaF^2T^2(d+\Delta^2)+ \beta^{3}\betaF^2T^8\Rbar^2+\beta^2 \rho^{2L-2} \Rbar+\min\crl*{\beta^3T^2\Rbar,\,\betaF^2\beta T^4\Rbar\Delta}}.
\end{align}
Then, for $0\leq r\leq \frac{1}{C_2(M_0+M_1)}$, integrating gives
\begin{align}
    \En_{\bZ\sim \bP_z} \exp\prn*{ r\max_{0\leq t\leq T}\nrm{\partial_t \mu_t}^2 }\leq \exp\prn*{C_2r(G+M_1R_E(z)+M_0+M_1)}
\end{align}
for an absolute constant $C_2>0$. Taking expectation over $z\sim \nu$ gives
\begin{align}
    \En_{z\sim \nu}\prn*{ \En_{\bZ\sim \bP_z} \exp\prn*{ r\max_{0\leq t\leq T}\nrm{\partial_t \mu_t}^2 } }^K \leq \exp\prn*{C_3rK(G+M_0+M_1)}
\end{align}
as long as $r\leq \frac{1}{C_3KM_1}$, and
the desired upper bound follows.
\qed

\subsection{Proof of \cref{cor:ULA2-Renyi-onestep-crude}}\label{appdx:proof-ULA2-Renyi-onestep-crude}

By Girsanov's theorem, it is straightforward to verify that
\begin{align}
    1+\Dsys*[q]{\Tbar_z}{\T_z}
    \leq&~ \En_{\bP_z}\exp\prn*{q^2\int_0^T \nrm{\mu_t}^2\dt}
    \leq \En_{\bP_z}\exp\prn*{q^2T^3\max_{0\leq t\leq T}\nrm{\partial_t \mu_t}^2}.
\end{align}
The first upper bound follows from \cref{prop:ULA2-clip-full}. For the second, the proof of that proposition gives
\begin{align}
    \sqrt{2\gamma}\nrm{\partial_t \mu_t}
    \leq \nrm{Y_t}+\beta\rho^{L-1}R_0
    \leq 3\beta R_0
    \leqsim \beta\bigl(\sqrt{R(z)}+\sqrt{\gamma}R_B\bigr).
\end{align}
The result then follows from \cref{lem:max-B} and the preceding Girsanov bound.
\qed

\subsection{\pfref{lem:ULA2-clip}}\label{appdx:proof-ULA2-clip}
For any $t\in[0,T]$, we can bound
\begin{align}
    \frac{1}{T}\abs{W(t;\bZ)}
    \leq \abs{\tri{\partial_t \mu_t, B_T-B_t}}+\frac{1}{2}\nrm{\mu_t}^2
    \leq \abs{\tri{\partial_t \mu_t, \Bbar_T-\Bbar_t}}+ \nrm{\partial_t \mu_t}\int_0^T \nrm{\mu_s}\d s+ \frac{1}{2}\nrm{\mu_t}^2.
\end{align}
Note that
\begin{align}
    \En_{\bP_z}\exp\prn*{\lambda \abs{\tri{\partial_t \mu_t, \Bbar_T-\Bbar_t}}-\frac{\lambda^2T}{2}\nrm{\partial_t \mu_t}^2}\leq 2, \qquad \forall \lambda\geq 0.
\end{align}
Therefore, by the Cauchy--Schwarz inequality,
\begin{align}
    \En_{\bP_z} \exp\prn*{\frac{\lambda}{T}\abs{W(t;\bZ)}}
    \leq&~ \En_{\bP_z} \exp\prn*{\lambda \abs{\tri{\partial_t \mu_t, \Bbar_T-\Bbar_t}}+\frac{\lambda^2T}{2}\nrm{\partial_t \mu_t}^2+\frac{\lambda}{2}\nrm{\mu_t}^2+\frac{\lambda}{2T}\int_0^T \nrm{\mu_s}^2\d s} \\
    \leq&~ 2\En_{\bP_z} \exp\prn*{3\lambda^2T\nrm{\partial_t \mu_t}^2+\lambda\nrm{\mu_t}^2+\frac{\lambda}{T}\int_0^T \nrm{\mu_s}^2\d s} \\
    \leq&~ 2\En_{\bP_z} \exp\prn*{3\lambda^2T\nrm{\partial_t \mu_t}^2+2\lambda T\int_0^T \nrm{\partial_s \mu_s}^2\d s},
\end{align}
where we also used $\nrm{\mu_t}\leq \int_0^t \nrm{\partial_s \mu_s}\d s$. In the preceding displays, take the internal parameter to be $\lambda T$, where the rescaled parameter $\lambda\geq1$. Taking $t\sim \Unif([0,T])$ then gives
\begin{align}\label{pfeq:ULA2-weight-to-mu}
\begin{aligned}
    \En_{\bP_z}\En_{t\sim \Unif([0,T])} \exp\prn*{\lambda \abs{W(t;\bZ)}}
    \leq&~ 2\En_{\bP_z} \En_{t,s\sim \Unif([0,T])}\exp\prn*{3\lambda^2T^3\nrm{\partial_t \mu_t}^2+2\lambda T^3\nrm{\partial_s \mu_s}^2} \\
    \leq&~ 2\En_{\bP_z} \En_{t\sim \Unif([0,T])}\exp\prn*{6\lambda^2T^3\nrm{\partial_t \mu_t}^2}.
\end{aligned}
\end{align}
\qed

\subsection{\pfref{lem:ULA-fine-pathwise}}\label{appdx:ULA-fine-pathwise}
We regard $P^0_t\equiv0$. For $\ell\geq1$, define
\begin{align}
    \wt{X}^\ell_t\deq X^\ell_t-X^{\ell-1}_t.
\end{align}
For $\ell\geq2$, define
\begin{align}
    \wt{P}^\ell_t\deq P^\ell_t-P^{\ell-1}_t.
\end{align}
For $\ell\geq2$, the Brownian motion and the gradient terms cancel, so
\begin{align}
    \d \wt{X}^\ell_t=\wt{P}^\ell_t\d t,
    \qquad
    \d \wt{P}^\ell_t
    =
    -\gamma\wt{P}^\ell_t\d t
    -H_0\wt{X}^{\ell-1}_t\d t,
\end{align}
with $\wt{X}^\ell_0=\wt{P}^\ell_0=0$. Hence, for $\ell\geq2$,
\begin{align}
    \wt{P}^\ell_t
    =
    -\int_0^t e^{-\gamma(t-s)}H_0\wt{X}^{\ell-1}_s\d s.
\end{align}
Let
\begin{align}
    D^\ell_X\deq \sup_{0\leq t\leq T}\nrm{\wt{X}^\ell_t}\quad(\ell\geq1),
    \qquad
    D^\ell_P\deq \sup_{0\leq t\leq T}\nrm{\wt{P}^\ell_t}\quad(\ell\geq2).
\end{align}
Then, for $\ell\geq 2$,
\begin{align}\label{eq:fine-recursion}
    D^\ell_P\leq \beta T D^{\ell-1}_X,
    \qquad
    D^\ell_X\leq TD^\ell_P.
\end{align}
For the base position increment, we define $R_P^1=\sup_{0\leq t\leq T}\nrm{P^1_t}$, and
\begin{align}
    D^1_X
    =
    \sup_{0\leq t\leq T}\nrm{X^1_t-x}
    \leq
    T\sup_{0\leq t\leq T}\nrm{P^1_t}
    =TR_P^1.
\end{align}
Combining this with \eqref{eq:fine-recursion} yields, by induction, the following position bound for $\ell\geq1$ and momentum bound for $\ell\geq2$:
\begin{align}
    D^\ell_X
    \leq 
    T(T\sqrt{\beta})^{2\ell-2}R_P^1, \qquad
    D^\ell_P
    \leq 
    (T\sqrt{\beta})^{2\ell-2}R_P^1.
\end{align}
Summing the geometric series and using $\beta T^2\leq1/16$ gives
\begin{align}
    \nrm{X^\ell_t-X_0}\leq \frac{16}{15}TR_P^1,
    \qquad
    \nrm{P^\ell_t}\leq \frac{16}{15}R_P^1
\end{align}
for every $\ell\geq1$.

We next bound $R_P^1$.  
For any continuous Brownian path,
integration by parts gives, for all $t\leq T$,
\begin{align}
    \int_0^t e^{-\gamma(t-s)}\d B_s
    =B_t-\gamma\int_0^t e^{-\gamma(t-s)}B_s\d s.
\end{align}
Therefore, we denote $R_B'=\max_{0\leq t\leq T}\nrm{B_t}$, and
\begin{align}\label{eq:OU-conv-RB}
    \sup_{0\leq t\leq T}
    \nrm*{\int_0^t e^{-\gamma(t-s)}\d B_s}
    \leq (1+\gamma T)R_B'.
\end{align}
Then,
\begin{align}
    P^1_t
    =
    e^{-\gamma t}p
    -\int_0^t e^{-\gamma(t-s)}\nabla V(x)\d s
    +\sqrt{2\gamma}\int_0^t e^{-\gamma(t-s)}\d B_s,
\end{align}
and hence, by \eqref{eq:OU-conv-RB},
\begin{align}
    R_P^1=\sup_{0\leq t\leq T}\nrm{P^1_t}\leq \nrm{p}+T\nrm{\nabla V(x)}+\sqrt{2\gamma}(1+\gamma T)R_B'.
\end{align}
Next, we can bound (using $T\leq \frac{1}{4\sqrt{\beta}}\leq \frac{1}{4\gamma}$)
\begin{align}
    \sqrt{2\gamma}(R_B'-R_B)\leq&~ T\max_{t} \sqrt{2\gamma}\nrm{\mu_t}\leq T\beta\max_{t}(\nrm{X^L_t-X_0}+\nrm{X^{L-1}_t-X_0})
    \leq \frac{8T^2\beta}{3}R_P^1 \\
    \leq&~ 3T^2\beta(\nrm{p}+T\nrm{\nabla V(x)})+4T^2\beta\sqrt{2\gamma}R_B'.
\end{align}
This immediately implies
\begin{align}
    \sqrt{2\gamma}R_B'\leq 2\sqrt{2\gamma}R_B+\nrm{p}+T\nrm{\nabla V(x)}.
\end{align}
Combining this estimate with \eqref{eq:OU-conv-RB}, $\gamma T\leq1/4$, and \eqref{eq:fine-recursion} yields
\begin{align}
    R_P^1\leq \frac{15}{4}\prn*{\nrm p+T\nrm{\nabla V(x)}+\sqrt\gamma R_B}=\frac{15}{16}R.
\end{align}
Hence $\nrm{P_t^\ell}\leq R$, $\nrm{X_t^\ell-X_0}\leq TR$, and \eqref{eq:fine-P-increment}--\eqref{eq:fine-X-increment} follow.
\qed

\subsection{\pfref{prop:ULA2-warm-gen}}\label{appdx:proof-ULA2-warm-gen}
After the standard rescaling, we may assume $\alpha=1$ and
$\gamma\asymp1$. By \cref{thm:ST-LSI}, we can choose
$K=\ceil{C_0M\Lambda}$ so that
\begin{align}\label{eq:warm-gen-contraction-short}
    \Ren*[\ell]{\T^K\nu'}{\target}\leq
    \frac{1}{100\Delta}\Ren[\ell]{\nu'}{\target},
    \qquad 2\leq\ell\leq w.
\end{align}

Fix $I=\ceil{C_0\Lambda}$ and
$\eps=\log(w/(w-c))/I$, so that $(1+\eps)^I\leq w/(w-c)$.
Set $q=\max\{4,2w/\eps\}$, $p=e^{-\Delta}/(2I)$, and
\begin{align}
    \Rbar=20\brk*{d+5\Delta+3\log(128KI)},
    \qquad \cE=\crl{z:R(z)\leq\Rbar}.
\end{align}
Then \cref{lem:ULA-MGF} gives
\begin{align}\label{eq:warm-gen-cutoff-short}
    1-\target(\cE)\leq\prn*{\frac{p}{64K}}^3e^{-2\Delta}.
\end{align}

Let $C_1$ absorb the constants in
\cref{cor:ULA2-Renyi-onestep-crude}, and define
\begin{align}
    U\deq200\Bigl[qp+C_1+C_1Kq^2\gamma^{-1}\bigl(&\betaF^2T^5d
    +\beta^2T^3\rho^{2L-2}\Rbar+\gamma\beta^2T^4
    +\betaF^2T^6\Rbar\\
    &+\beta^3\betaF^2T^{11}\Rbar^2
    +\betaF^2\beta T^7\Rbar\bigr)\Bigr].
\end{align}
When $U\geq\Delta$, there is nothing to prove: We simply output $\nu$.
Therefore, assume $U<\Delta$.

Let $\nu_0=\nu$, let $\nu_i$ be the conditional success distribution
after $i$ stages of \cref{thm:trunc-alg}, set
$w_i=w(1+\eps)^{-i}$, and let $S_i$ be a deterministic upper bound on
$\Ren[w_i]{\nu_i}{\target}$.
Set
\begin{align}
    \eta_0\deq qp+C_1Kq^2\gamma^{-1}\beta^2T^3\Rbar,
    \qquad U_0\deq1+2\eta_0.
\end{align}
If $U_0\leq\Delta$, apply \cref{thm:trunc-alg} using the crude bound in
\cref{cor:ULA2-Renyi-onestep-crude}, with warmness $\Delta$, accuracy
$\eta_0$, and failure probability $p$. The composition below gives
$\Ren[w_1]{\nu_1}{\target}\leq U_0$, so choose $i_0=1$ and
$S_{i_0}=U_0$. Otherwise choose $i_0=0$ and $S_{i_0}=\Delta$. In either case,
$S_{i_0}\leq\min\{\Delta,U_0\}$.

For $i\geq i_0$, use failure probability $p$ and truncation accuracy
\begin{align}
    \eta_i\deq qp+C_1+C_1Kq^2\gamma^{-1}
    \Bigl(\betaF^2T^5(d+S_i^2)
    +\beta^2T^3\rho^{2L-2}\Rbar+\gamma\beta^2T^4
    +\betaF^2T^6\Rbar+\beta^3\betaF^2T^{11}\Rbar^2
    +\betaF^2\beta T^7\Rbar S_i\Bigr).
\end{align}
As long as $S_i\leq\Delta$ and $S_i>4U$, we have
$p\leq\eta_i/q$ and $\eta_i\leq S_i/2$, while
\eqref{eq:warm-gen-cutoff-short} supplies the required cutoff. Thus
\cref{thm:trunc-alg} gives
$\Ren*[q/2]{\nu_{i+1}}{\T^K\nu_i}\leq\eta_i$.

For
\begin{align}
    r_i\deq\frac{w_{i+1}(w_i-1)}{w_i-w_{i+1}}
    =\frac{w_i-1}{\eps}\leq\frac q2,
\end{align}
by \cref{lem:Renyi-composition-ULA}, the R\'enyi triangle inequality gives
\begin{align}
    \Ren[w_{i+1}]{\nu_{i+1}}{\target}
    \leq\frac{S_i}{100\Delta}
    +\frac{w_i}{w_i-1}\Ren*[r_i]{\nu_{i+1}}{\T^K\nu_i}
    \leq\frac{S_i}{100\Delta}+2\eta_i\deq S_{i+1}.
\end{align}
Substituting $K\asymp M\Lambda$, $q\asymp\Lambda$, and
$\Rbar\lesssim\Delta+\log M$ into the stated condition gives
\begin{align}
    2C_1Kq^2\gamma^{-1}\prn*{\betaF^2T^5U_0
    +\betaF^2\beta T^7\Rbar}\leq\frac{1}{100}.
\end{align}
Hence $S_{i+1}\leq S_i/2$ whenever $S_i>4U$.
We therefore stop at the first deterministic index with $S_i\leq4U$,
which occurs by stage $I$. Since
$w_I\geq w-c$, the total failure probability is at most
$Ip\leq e^{-\Delta}$ and $N\leq KI=O(M\Lambda^2)$. Substituting the
definitions of $K,q,\Rbar$ into $U$ gives the two stated bounds.

\section{Proofs from \cref{sec:mirror}}\label{appdx:mirror}

\subsection{Sub-exponentiality of the estimator}

\begin{lemma}\label{lem:mirror-subexp}
    Define
    \begin{align}
        \mathfrak M \deq \sup_x \norm{\nabla V(x)}_{x, *} < \infty \,.
    \end{align}
Then we have the following bound, where $*$ in the expectation denotes either the proposal measure $\mathbf Q$ or $\mathbf P$, and $C > 0$ is a sufficiently large absolute constant, for any $\lambda > 0$ and $t \in [0, h]$:
\begin{align}
    \E^* \exp \Bigl\{\lambda \abs{W(t, Y)}\} \leq 2\,\exp\bigl[C\lambda h\,\{(\beta + 2\,M_\phi\,\mathfrak M)\,d + \mathfrak M^2\} + C \lambda^2 h \mathfrak M^2\bigr]\,.
\end{align}
\end{lemma}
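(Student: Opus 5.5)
Since $W(t,Y) = -\tfrac12\bigl(f(Y_h)-f(Y_0)\bigr) + \tfrac h2\bigl(\msf L f(Y_t) - \norm{\mu_t}^2\bigr)$, I would bound the three pieces separately and recombine them with Cauchy--Schwarz (or H\"older), absorbing constants into $C$. The two drift-type terms are deterministic, so only the increment $f(Y_h)-f(Y_0)$ is genuinely random. For that increment I use the It\^o identity already derived in the text:
\begin{align}
    f(Y_h)-f(Y_0)=2\int_0^h\inner{\mu_s,\d B_s}+\int_0^h \msf L f(Y_s)\,\d s\,.
\end{align}

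The first step is pointwise bounds. Since $\nabla\cT(y)=\nabla^2\phi(x)^{-1}$ with $x=\cT(y)$, we have $\norm{\mu_t}^2=\tfrac12\norm{\nabla V(x)}_{x,*}^2\le \mathfrak M^2/2$. Next, by the chain rule $\nabla_y f = \nabla^2\phi(x)^{-1}\nabla V(x)$, and
\begin{align}
    \nabla_y^2 f=\nabla^2\phi^{-1}\nabla^2V\,\nabla^2\phi^{-1}-\nabla^2\phi^{-1}\,\nabla^3\phi\bigl[\nabla^2\phi^{-1}\nabla V,\cdot,\cdot\bigr]\,\nabla^2\phi^{-1}\,.
\end{align}
Hence $\msf L f=\tr(\nabla^2\phi\,\nabla_y^2 f)=\tr(\nabla^2\phi^{-1}\nabla^2V)-\tr\bigl(\nabla^2\phi^{-1}\nabla^3\phi[\nabla^2\phi^{-1}\nabla V,\cdot,\cdot]\bigr)$. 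Relative smoothness bounds the first trace by $\beta d$, using convexity and $\nabla^2V\preceq\beta\nabla^2\phi$. For the second, self-concordance gives the operator bound $\bigl|\nabla^3\phi[u,\cdot,\cdot]\bigr|\preceq 2M_\phi\norm u_x\nabla^2\phi$ with $u=\nabla^2\phi^{-1}\nabla V$, so $\norm u_x=\norm{\nabla V}_{x,*}\le\mathfrak M$, and the trace is at most $2M_\phi\mathfrak M d$. Altogether $|\msf L f|\le(\beta+2M_\phi\mathfrak M)d$ uniformly in $y$. The main care point is getting the sign and trace right in this chain-rule computation; the cancellation between the $\nabla^2\phi$ factors is what produces the clean $d$-dependence.

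The second step handles the randomness. Under $\bQ$, $\int_0^h\inner{\mu_s,\d B_s}$ is a continuous martingale with quadratic variation at most $h\mathfrak M^2/2$, so the exponential martingale bound gives
\begin{align}
    \E\exp\Bigl(\lambda\Bigl|\int_0^h\inner{\mu_s,\d B_s}\Bigr|\Bigr)\le 2e^{\lambda^2h\mathfrak M^2/4}\,.
\end{align}
Under $\bP$ I write $\d B_s=\d\Bbar_s-\mu_s\,\d s$. This yields the same martingale bound in $\Bbar$ plus a deterministic drift $\int_0^h\norm{\mu_s}^2\d s\le h\mathfrak M^2/2$, and that drift is exactly what produces the linear term $\lambda h\mathfrak M^2$ in the statement.

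The final step is to combine. On the exponent, $|W|\le|\int\inner{\mu,\d B}|+\tfrac h2\sup|\msf Lf|+\tfrac h2\sup|\msf Lf|+\tfrac h4\mathfrak M^2$, plus the extra $\tfrac h2\mathfrak M^2$ under $\bP$. Exponentiating and using the martingale bound gives $2\exp[C\lambda h((\beta+2M_\phi\mathfrak M)d+\mathfrak M^2)+C\lambda^2h\mathfrak M^2]$, as claimed.
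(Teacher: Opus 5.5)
Your proposal is correct and follows essentially the same route as the paper: the same decomposition of $W$, the same chain-rule computation of $\msf L f$ bounded by relative smoothness plus self-concordance (you use a Loewner-order bound where the paper sums over a $\nabla^2\phi$-orthonormal basis), and the same exponential martingale inequality for the stochastic integral. Your handling of $\mathbf P$ by writing $\d B = \d\Bbar - \mu\,\d s$ is equivalent to the paper's use of the $\mathbf P$-generator $\overline{\msf L} f = \msf L f - \inner{\nabla V,\nabla f}$, because $\inner{\nabla V,\nabla f} = 2\nrm{\mu}^2$.
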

\begin{proof}
    Consider a point $y \in \mc X^*$, and let $x \in \mc X$ denote $\cT(y)$. Then,
    \begin{align}
        \nabla f(y) = \bigl(\nabla^2 \phi(x)\bigr)^{-1}\,\nabla V(x)\,.
    \end{align}
    Applying the chain rule,
    \begin{align}
        \partial_{ij}^2 f = \sum_{k, \ell} \partial_{k, \ell}^2 V(x) \cdot (\nabla^2 \phi(x))^{-1}_{ki} \cdot (\nabla^2 \phi(x))^{-1}_{\ell j} + \sum_k \partial_k V(x)\,\partial_{ij}^2 \cT_k(y)\,.
    \end{align}
    We now evaluate $\tr\bigl(\nabla^2 \phi(x)\,\nabla^2 f(y)\bigr)$. Using the expression for $\partial_{ij}^2 f$, we have
    \begin{align}
        \tr\bigl(\nabla^2 \phi(x)\,\nabla^2 f(y)\bigr) = \inner{\nabla^2 V(x), (\nabla^2 \phi(x))^{-1}} +  \underset{R(x)}{\underbrace{\sum_{i,j,k} \partial_k V(x)\,\partial_{ij}^2 \cT_k(y)\,(\nabla^2 \phi(x))_{i,j}}}\,,
    \end{align}
    where $R(x)$ will correspond to the second term, analyzed below. For this, differentiating $(\nabla^2 \phi(x))^{-1}\,\nabla^2 \phi(x) = (\nabla^2 \phi(x))^{-1}\,\nabla \cT(y) = \Id$,
    \begin{align}
        \partial_j (\nabla \cT(y))_{k, i} = -\sum_{\ell, m, n} (\nabla \cT(y))_{k\ell}\,\partial^3_{\ell, m, n} \phi(x)\,(\nabla \cT(y))_{im}\,(\nabla \cT(y))_{jn}\,,
    \end{align}
    Thus, we can write for a $\nabla^2 \phi(x)$-orthogonal basis $\{e_k\}_{k \in [d]}$,
    \begin{align}
        R(x) = -\sum_{k=1}^d \nabla^3 \phi(x) [(\nabla^2 \phi(x))^{-1}\,\nabla V(x), e_k, e_k]\,.
    \end{align}
    We can now apply self-concordance. Since the $e_k$ have unit length in the local norm,
    \begin{align}
        \abs{R(x)} \leq 2\,\sum_{k=1}^d M_\phi\,\norm{(\nabla^2 \phi(x))^{-1}\,\nabla V(x)}_{x} = 2\,\sum_{k=1}^d M_\phi\,\norm{\nabla V(x)}_{x, *}\,.
    \end{align}

    Now, since $V$ is convex and $\beta$-relatively smooth,
    \begin{align}
        \inner{\nabla^2 V(x), (\nabla^2 \phi(x))^{-1}} \leq \beta\,d\,.
    \end{align}
    Thus, we can bound
    \begin{align}
        \abs{\tr(\nabla^2 \phi(x)\,\nabla^2 f)}\leq \beta\,d + 2\,d\,M_\phi \cdot \mathfrak M\,,
    \end{align}
    where $\mathfrak M$ is given in the lemma statement. Therefore, under the proposal measure,
    \begin{align}
        \abs{\msf L f} \leq \abs{\tr(\nabla^2 \phi(x) \cdot \nabla^2 f)} \leq \beta\,d + 2\,d\,M_\phi \cdot \mathfrak M\,.
    \end{align}

    Repeat this calculation along the true path $X_s = \cT(Y_s)$, with
    \begin{align}
        \d f(Y_s) = 2 \inner{\mu_s,\,\d B_s} + \overline{\msf L} f(Y_s) \, \d s\,,
    \end{align}
    where $\overline{\msf L}$ is the generator of the true mirror Langevin diffusion. Using Cauchy--Schwarz in the dual norm, we have
    \begin{align}
        \abs{\inner{\nabla V(x), \nabla f(y)}}
        = \abs{\inner{\nabla V(x), (\nabla^2 \phi(x))^{-1}\,\nabla V(x)}}
        \leq \mathfrak M^2\,.
    \end{align}
    Consequently,
    \begin{align}
        \abs{\overline{\msf L} f} \leq \abs{\inner{\nabla V, \nabla f}} + \abs{\tr(\nabla^2 \phi(x) \cdot \nabla^2 f)} \leq \beta\,d + 2\,d\,M_\phi \cdot \mathfrak M + \mathfrak M^2\,.
    \end{align}

    Furthermore, we have, for $C_t \deq 2\,\int_0^t \inner{\mu_s,\,\d B_s}$, the quadratic variation
    \begin{align}
        \inner{C}_h = 4\,\int_0^h \norm{\mu_s}^2 \, \d s \leq 2h\mathfrak M^2\,.
    \end{align}
    Finally, it follows that
    \begin{align}
        \abs{f(Y_h) - f(Y_0)} \leq \abs{C_h} + h\,\bigl\{ (\beta + 2\,M_\phi\,\mathfrak M)\,d + \mathfrak M^2\bigr\}\,,
    \end{align}
    and combining this with the bounds on $\norm{\mu_t}^2$ and $\msf L f$, we obtain for some constant $C > 0$
    \begin{align}
        \abs{W(t, Y)} \leq \frac{1}{2}\,\abs{C_h} + Ch\,\bigl\{ (\beta + 2\,M_\phi\,\mathfrak M)\,d + \mathfrak M^2\bigr\}\,.
    \end{align}
    Indeed, as we have bounded the quadratic variation, the exponential martingale inequality also gives
    \begin{align}
        \E \exp\Bigl( \frac{\lambda}{2}\,\abs{C_h}\Bigr) \leq 2\,\exp \Bigl(C \lambda^2 h \mathfrak M^2\Bigr)\,.
    \end{align}
    Putting this all together gives the bound.
\end{proof}

We then apply this result within the framework of \cref{lem:FORS-error}.
\begin{lemma}\label{lem:mirror-one-step-clipping}
    For any order $q \geq 2$, $\varepsilon \in (0, 1]$, under the same assumptions and notation as \cref{lem:mirror-subexp}, if we choose
    \begin{align}
        B \gtrsim h\,\bigl[(\beta + 2\,M_\phi\,\mathfrak M)\,d+ \mathfrak M^2\bigr] + hq \mathfrak M^2 + \mathfrak M \sqrt{h \log (2/\varepsilon^2)}\,.
    \end{align}
    as the clipping parameter in FORS, then
    {letting $p_y$ and $\widehat p_y$ denote the exact and FORS-corrected one-step transition laws from $Y_0=y$, respectively,}
    \begin{align}
        \bigl(1 + \overline D_q(p_y \lVert \widehat p_y)\bigr)^2 \leq 1+\varepsilon^2\,.
    \end{align}
\end{lemma}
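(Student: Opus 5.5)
Since the estimator $W(t,Y)$ is exactly unbiased for the Girsanov log-density on $[0,h]$, I will take $\wstar=w$ and apply the sharper branch of \cref{lem:FORS-error}:
\begin{align}
    1+\Dsys*[q]{p_y}{\widehat p_y}\leq \En_{\bP,\,t\sim\Unif[0,h]}\exp\bigl(2q\,\trunc[B]{W(t,Y)}\bigr).
\end{align}
It then suffices to make the right-hand side at most $1+\varepsilon^2/3$, because $(1+\varepsilon^2/3)^2\leq 1+\varepsilon^2$ for $\varepsilon\leq1$. All of the work consists of turning the moment bound of \cref{lem:mirror-subexp}, taken under $\bP$, into a tail bound for $\abs{W}$ and then integrating.

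To control the truncated quantity, I will not use a crude MGF at a single $\lambda$. Instead I will go back to the structure shown in the proof of \cref{lem:mirror-subexp}, namely $\abs{W}\leq \frac12\abs{C_h}+a$, where
\begin{align}
    a\deq Ch\bigl[(\beta+2M_\phi\mathfrak M)d+\mathfrak M^2\bigr]
\end{align}
is deterministic and $\langle C\rangle_h\leq 2h\mathfrak M^2$. Since $\trunc[B]{\cdot}$ is monotone, this gives $\trunc[B]{W}\leq \trunc[B-a]{\frac12\abs{C_h}}$. Write $s^2\deq 2h\mathfrak M^2$. Because the quadratic variation is bounded, the martingale tail gives $\bbP\bigl(\frac12\abs{C_h}\geq u\bigr)\leq 2e^{-2u^2/s^2}$, so $\frac12\abs{C_h}$ is sub-Gaussian with variance proxy of order $h\mathfrak M^2$.

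Next I will compute $\En\exp(2q\,\trunc[b]{G})$ for such a sub-Gaussian $G$, where $b\deq B-a$. Writing it as $1+\int_0^\infty 2q\,e^{2qu}\,\bbP(G\geq b+u)\,\d u$ and inserting the Gaussian tail gives a bound of the form
\begin{align}
    1+C q s\,\exp\Bigl(-\frac{b^2}{s^2}+C q^2 s^2\Bigr),
\end{align}
obtained by completing the square. The integrand is dominated near $u\approx qs^2$ when $b\gtrsim qs^2$; otherwise it is dominated at $u=0$. For this to be at most $\varepsilon^2/3$, it suffices to have $b\gtrsim q s^2+s\sqrt{\log(2/\varepsilon^2)}$, together with a mild additional $s\sqrt{\log(qs)}$-type term that is absorbed in the regime $hq\mathfrak M^2\lesssim 1$ (or into the constants of $\gtrsim$). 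Since $s^2\asymp h\mathfrak M^2$, the requirement $B\geq a+b$ is precisely the stated condition:
\begin{align}
    B\gtrsim h\bigl[(\beta+2M_\phi\mathfrak M)d+\mathfrak M^2\bigr]+hq\mathfrak M^2+\mathfrak M\sqrt{h\log(2/\varepsilon^2)}.
\end{align}

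The main obstacle is this Gaussian-tail integral, specifically keeping the $q$-dependence additive ($hq\mathfrak M^2$) rather than multiplicative, and making sure the prefactor $qs$ does not force an extra $\log q$. If that term cannot be absorbed, I will fall back on Markov's inequality in the form
\begin{align}
    \En\exp\bigl(2q\,\trunc[b]{G}\bigr)\leq 1+e^{-\lambda b}\,\En e^{\lambda G}\quad\text{with } \lambda=4q,
\end{align}
combined with the MGF bound of \cref{lem:mirror-subexp}, which is the same device used in the proof of \cref{prop:ULA-error-prop}. With $\lambda=4q$, the MGF contributes $e^{Cq^2h\mathfrak M^2}$, so the exponent $-4qb+Cq^2h\mathfrak M^2$ requires $b\gtrsim qh\mathfrak M^2+q^{-1}\log(1/\varepsilon)$. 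Together with $a$, this still implies the stated choice of $B$, up to constants, in the relevant regime.
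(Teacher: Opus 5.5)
Your main argument is correct, but it takes a different route from the paper.

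\textbf{The paper's route.} It uses the non-sharp (squared) branch of \cref{lem:FORS-error} with $w^\star=w$, so $(1+\overline D_q)^2\le \mathbb E\exp(4q(|W|-B)_+)$. It then applies the Chernoff-type bound $e^{4q(|W|-B)_+}\le 1+e^{-4qkB}e^{4qk|W|}$ with a free parameter $k\ge 1$, and uses the MGF bound of \cref{lem:mirror-subexp} as a black box. Finally it optimizes $k=\max\{1,\sqrt{\log(2/\varepsilon)}/(q\mathfrak M\sqrt h)\}$. The two cases of this maximum produce exactly the $hq\mathfrak M^2$ and $\mathfrak M\sqrt{h\log(2/\varepsilon^2)}$ terms.

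\textbf{Your route.} You use the sharper branch and open up the proof of \cref{lem:mirror-subexp} to get $|W|\le\frac12|C_h|+a$, with $a$ deterministic and $C$ a martingale of bounded quadratic variation. You then integrate the sub-Gaussian tail explicitly.

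\textbf{Comparison.} Your approach needs no tuning parameter and no squaring. The price is that it is not modular: it relies on the pathwise decomposition rather than only on the MGF. In the paper, the optimization over $k$ plays the role of your tail integral. Both give the same condition on $B$ up to constants.

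\textbf{Three small remarks.}

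First, the $s\sqrt{\log(qs)}$ term you flagged as the main obstacle is always harmless. If $qs\le 1$ it is nonpositive, and the numerical prefactor is absorbed into $\log(2/\varepsilon^2)\ge\log 2$. If $qs>1$, then $\sqrt{\log(qs)}\le qs$, so the term is at most $qs^2\asymp hq\mathfrak M^2$. No regime restriction is needed.

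Second, make sure your tail bound really has the form $e^{-b^2/s^2+Cq^2s^2}$. You can get it from $2qu\le (b+u)^2/s^2+q^2s^2$ together with $(b+u)^2\ge b^2+u^2$, or by completing the square while keeping the Gaussian tail factor. If you complete the square and bound the remaining Gaussian integral crudely, you only get $e^{-2qb+q^2s^2/2}$. That weaker bound reintroduces a $q^{-1}\log(1/\varepsilon)$ requirement.

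Third, your fallback with fixed $\lambda=4q$ would not prove the lemma as stated. It requires $B\gtrsim a+qh\mathfrak M^2+q^{-1}\log(1/\varepsilon^2)$. The stated choice does not imply this when $q\mathfrak M\sqrt h\ll\sqrt{\log(2/\varepsilon^2)}$, since it then only provides about $\mathfrak M\sqrt{h\log(2/\varepsilon^2)}$. The fix is to take $\lambda=4qk$ and optimize over $k\ge 1$, which is precisely the paper's proof.
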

\begin{proof}
    Recall that the estimator constructed above is unbiased. The resulting error is therefore determined solely by clipping. Using \cref{lem:mirror-subexp}, we have
    \begin{align}
        \E \exp\{ \lambda \abs{W(t, Y)}\} \leq 2\,\exp\bigl\{C \lambda h\,\bigl[(\beta + 2\,M_\phi\,\mathfrak M)\,d+ \mathfrak M^2\bigr] + C\lambda^2 h \mathfrak M^2\bigr\}\,,
    \end{align}
    under the reference path measure. For $\abs{W - \pclip W}$, we have the following simple inequality for any $k \geq 1$:
    \begin{align}
        \exp(4q\,\abs{W - \pclip W}) \leq 1 + \exp(-4q k B)\,\exp(4 q k\,\abs{W})\,.
    \end{align}
    Applying the preceding lemma to bound the expectation of the second term, we obtain
    \begin{align}
        (1+{\overline{D}_q(p^* \, \lVert \, \widehat p)})^2 \leq 1 + 2\,\exp(-4kqB + Chkq\,\bigl[(\beta + 2\,M_\phi\,\mathfrak M)\,d+ \mathfrak M^2\bigr]+Chk^2 q^2\mathfrak M^2\bigr)\,,
    \end{align}
    and equivalently, it suffices to choose
    \begin{align}
        B \gtrsim h\,\bigl[(\beta + 2\,M_\phi\,\mathfrak M)\,d+ \mathfrak M^2\bigr] + hkq \mathfrak M^2 + \frac{\log (2/\varepsilon)}{4kq}\,.
    \end{align}
    Take
    \begin{align}
        k = \max\Bigl\{1, \frac{\sqrt{\log (2/\varepsilon)}}{q \mathfrak M \sqrt h}\Bigr\}\,,
    \end{align}
    to obtain the bound in the result.
\end{proof}

We provide another composition lemma which will be used in the proof below.

\sredit{
\begin{lemma}[Fixed-order composition of uniformly close kernels]\label{lem:mirror-kernel-composition}
Let $P_1,\ldots,P_K$ and $\widehat P_1,\ldots,\widehat P_K$ be Markov kernels on a common measurable state space, and suppose that, for some fixed $q \geq 2$ and $\eta\geq0$,
\begin{align}
    \sup_{k\in[K]}\sup_x\Dsys*[q]{P_k(x,\cdot)}{\widehat P_k(x,\cdot)}\leq\eta.
\end{align}
If $\nu_K=\nu_0P_1\cdots P_K$ and
$\widehat\nu_K=\nu_0\widehat P_1\cdots\widehat P_K$, then
\begin{align}
    1+\Dsys*[q]{\nu_K}{\widehat\nu_K}\leq(1+\eta)^K.
\end{align}
In particular, if $\eta\leq c\varepsilon^2/K$ for a sufficiently small absolute constant $c$, then
$\Ren*[q]{\widehat\nu_K}{\nu_K}\lesssim\varepsilon^2$.
\end{lemma}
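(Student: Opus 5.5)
Since $\Dsys*[q]{\cdot}{\cdot}$ is symmetric, it suffices to control both directions; we treat one and the other is identical. We use the standard one-step chain-rule inequality for $\Dren[q]{\cdot}{\cdot}$ under Markov kernels. For $\nu_{k}=\nu_{k-1}P_k$ and $\widehat\nu_k=\widehat\nu_{k-1}\widehat P_k$, write $r(x)=\d\nu_{k-1}/\d\widehat\nu_{k-1}(x)$ and $\ell_x=\d P_k(x,\cdot)/\d\widehat P_k(x,\cdot)$. Then
\begin{align}
    \frac{\d\nu_k}{\d\widehat\nu_k}(y)=\En\bigl[r(X)\,\ell_X(y)\bigm| Y=y\bigr],
\end{align}
where $(X,Y)\sim\widehat\nu_{k-1}(\d x)\widehat P_k(x,\d y)$. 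This is simply data processing for the joint laws $\nu_{k-1}\otimes P_k$ versus $\widehat\nu_{k-1}\otimes\widehat P_k$ under the projection onto the second coordinate.

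By conditional Jensen, since $u\mapsto u^q$ is convex,
\begin{align}
    1+\Dren[q]{\nu_k}{\widehat\nu_k}
    \le \En\bigl[r(X)^q\ell_X(Y)^q\bigr]
    =\En_{X\sim\widehat\nu_{k-1}}\Bigl[r(X)^q\bigl(1+\Dren[q]{P_k(X,\cdot)}{\widehat P_k(X,\cdot)}\bigr)\Bigr].
\end{align}
The uniform hypothesis bounds the inner factor by $1+\eta$ pointwise in $X$. Hence
\begin{align}
    1+\Dren[q]{\nu_k}{\widehat\nu_k}\le(1+\eta)\bigl(1+\Dren[q]{\nu_{k-1}}{\widehat\nu_{k-1}}\bigr).
\end{align}
Running the same argument with the roles of $(P,\nu)$ and $(\widehat P,\widehat\nu)$ swapped gives the reverse direction. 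Since the per-step bound is a maximum over both directions, it also controls $\Dsys*[q]{\cdot}{\cdot}$. Inducting from $\nu_0=\widehat\nu_0$, where the divergence is $0$, yields $(1+\eta)^K$.

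For the consequence, $\Ren*[q]{\widehat\nu_K}{\nu_K}=\frac1{q-1}\log\bigl(1+\Dren[q]{\widehat\nu_K}{\nu_K}\bigr)\le\frac{K}{q-1}\log(1+\eta)\le K\eta\le c\varepsilon^2$, using $q\ge2$.

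The only delicate point is measure-theoretic. If $P_k(x,\cdot)\not\ll\widehat P_k(x,\cdot)$ for some $x$, the hypothesis fails, since the divergence is then infinite. So we may assume absolute continuity for every $x$, which makes $\ell_x$ and the joint density well defined with a jointly measurable version. I would state this and otherwise regard the proof as a routine induction; the main step is the Jensen/data-processing inequality above.
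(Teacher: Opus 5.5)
Your proof is correct and is essentially the paper's argument. The paper writes the path-space likelihood ratio as a product of one-step ratios, bounds its $q$-th moment by successive conditioning from the last transition, and projects to the final coordinate once at the end. You instead apply data processing (via conditional Jensen) at each step on the marginals to get $1+\Dren[q]{\nu_k}{\widehat\nu_k}\le(1+\eta)\bigl(1+\Dren[q]{\nu_{k-1}}{\widehat\nu_{k-1}}\bigr)$. This rests on the same one-step moment factorization, organized stepwise rather than on path space, and your derivation of the corollary is also fine.
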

\begin{proof}
Put the exact and approximate chains on path space, with common initial law $\nu_0$. In either direction, the likelihood ratio of one path law with respect to the other is the product of the $K$ conditional likelihood ratios. Successive conditioning, starting from the last transition, and the uniform one-step assumption show that its order-$q$ moment is at most $(1+\eta)^K$. The same argument with the two path laws interchanged gives the symmetric bound. Projecting a path to its final coordinate and using data processing proves the display. Finally,
$(1+\eta)^K\leq e^{K\eta}$ and
$\Ren[q]{\widehat\nu_K}{\nu_K}=(q-1)^{-1}\log(1+\Dren*[q]{\widehat\nu_K}{\nu_K})$.
\end{proof}
}

\subsection{\pfref{lem:mirror-breg}}\label{appdx:proof-mirror-breg}

    Consider an optimal coupling $\gamma$ with $X \sim \pi$ and $Y \sim \screplace{\mu_t}{\nu_t}$ for $\Breg_\phi(\pi \mmid \screplace{\mu_t}{\nu_t})$. We note that the mirror Langevin diffusion for $Y$ can be written in terms of a continuity equation with vector field
    \begin{align}
        v_t(y) = -[\nabla^2 \phi(y)^{-1}]\,\bigl\{\nabla V(y) + \nabla \log \screplace{\mu_t(y)}{\nu_t(y)} \bigr\}\,.
    \end{align}
    Then, differentiating the Bregman transport cost using these characteristics, we obtain
    \begin{align}
        \partial_t \Breg_\phi(\pi \mmid \screplace{\mu_t}{\nu_t}) &\leq \E\bigl[ \inner{v_t(y), \nabla_y D_\phi(x \mmid y)}] \\
        &= \E\bigl[ \inner{\nabla V(y) + \nabla \log \screplace{\mu_t(y)}{\nu_t(y)}, x-y}] \,,
    \end{align}
    where we used that $\nabla_y D_\phi(x \mmid y) = -\nabla^2 \phi(y)\,(x-y)$.
    
    Now, split the RHS into the potential energy and the entropy parts. Recalling that $V$ is $\alpha$-convex relative to $\phi$, integrating with respect to the optimal coupling $\gamma$ gives
    \begin{align}
        \E\inner{\nabla V(y), x-y} \leq \E_\pi V - \E_{\screplace{\mu_T}{\nu_t}} V-\alpha \Breg_\phi(\pi \mmid \screplace{\mu_t}{\nu_t})\,.
    \end{align}
    As for the entropy term, we use~\citet[Theorem 4]{ahn2021efficient}, which states for any measures $\mu, \pi$, coupling $y \sim \mu$, $x \sim \pi$ optimally according to the Bregman transport cost,
    \begin{align}
        \msf{Ent}(\pi) \geq \msf{Ent}(\mu)+ \E \inner{[\nabla_{W_2}\msf{Ent}(\mu)](y), x-y}\,,
    \end{align}
    where $\nabla_{W_2}$ denotes the Wasserstein-2 gradient of a functional over probability measures, which for the entropy $\msf{Ent}(\mu)$ is simply $\nabla \log \mu$. We take \screplace{$\mu_t$}{$\nu_t$} in place of $\mu$ above, and this shows that
    \begin{align}
        \E \inner{\nabla \log \screplace{\mu_t(y)}{\nu_t(y)}, x-y} \leq \msf{Ent}(\pi) - \msf{Ent}(\screplace{\mu_t}{\nu_t})\,.
    \end{align}
    As a result, we obtain
    \begin{align}
        \partial_t \Breg_\phi(\pi \mmid \screplace{\mu_t}{\nu_t}) \leq \E_\pi V - \E_{\screplace{\mu_t}{\nu_t}} V + \msf{Ent}(\pi) - \msf{Ent}(\screplace{\mu_t}{\nu_t}) - \alpha \Breg_\phi(\pi \mmid \screplace{\mu_t}{\nu_t})\,.
    \end{align}
    Using the definition of the \screplace{$\msf{KL}$}{$\Dklshort$} and $\msf{Ent}(\pi) =- \E_\pi V + C$, it follows that
    \begin{align}
        \partial_t \Breg_\phi(\pi \mmid \screplace{\mu_t}{\nu_t}) \leq -\screplace{\msf{KL}(\mu_t \mmid \pi)}{\Dkl{\nu_t}{\pi}} - \alpha \Breg_\phi(\pi \mmid \screplace{\mu_t}{\nu_t})\,.
    \end{align}
    This can be rearranged to yield the first result.

    For the integrated bounds, we use the fact that \screplace{$\msf{KL}$}{$\Dklshort$} is nonincreasing along the mirror Langevin flow by the data-processing inequality. \sredit{Writing $F(t)\deq\Breg_\phi(\pi\mmid\nu_t)$ and multiplying the differential inequality by $e^{\alpha t}$ gives
    \begin{align}
        e^{\alpha t}\Dkl{\nu_t}{\pi}
        \leq-\partial_t\bigl(e^{\alpha t}F(t)\bigr).
    \end{align}
    Hence, for $\alpha>0$,
    \begin{align}
        \frac{e^{\alpha T}-1}{\alpha}\Dkl{\nu_T}{\pi}
        \leq\int_0^T e^{\alpha t}\Dkl{\nu_t}{\pi}\,\d t
        \leq F(0),
    \end{align}
    where the first inequality uses monotonicity of $\Dkl{\nu_t}{\pi}$. For $\alpha=0$, the same calculation gives
    \begin{align}
        T\Dkl{\nu_T}{\pi}\leq F(0).
    \end{align}
    These are the two claimed bounds.}
\qed

\subsection{\pfref{thm:mirror}}\label{appdx:proof-mirror-FORS}
\sredit{Fix the R\'enyi order $q=3$ throughout.} To bound the R\'enyi divergence by $\varepsilon^2$, we take
    \begin{align}
        N_{\operatorname{iter}} \asymp \frac{1}{\alpha h} \log \frac{C\screplace{\chi^2(\mu_0 \mmid \pi)}{\Dchis{\nu_0}{\pi}}}{\varepsilon^2}\,.
    \end{align}
    Then, let \screplace{$\widehat \mu_{N_{\operatorname{iter} }h}$}{$\widehat\nu_{N_{\operatorname{iter}}h}$} be the law obtained by running FORS for $N_{\operatorname{iter}}$ iterations with step size $h$, and let \screplace{$\mu_{Nh}$}{$\nu_{N_{\operatorname{iter}}h}$} be the mirror Langevin law after the same duration. It follows that
    \begin{align}
        \screplace{\msf{R}_{3/2}(\widehat \mu_{N_{\operatorname{iter}}h} \mmid \pi)}{\Ren*[3/2]{\widehat\nu_{N_{\operatorname{iter}}h}}{\pi}} \lesssim \screplace{\msf{R}_3(\widehat \mu_{N_{\operatorname{iter}}h} \mmid \mu_{N_{\operatorname{iter}}h})}{\Ren*[3]{\widehat\nu_{N_{\operatorname{iter}}h}}{\nu_{N_{\operatorname{iter}}h}}} + \screplace{\msf{R}_2( \mu_{N_{\operatorname{iter}}h} \mmid \pi)}{\Ren[2]{\nu_{N_{\operatorname{iter}}h}}{\pi}}\,.
    \end{align}
    Use Lemma~\ref{lem:mirror-chi2} to bound the second term; we use the bound \screplace{$\msf R_2(\nu \, \lVert\, \pi) \leq \chi^2(\nu \, \lVert \, \pi)$}{$\Ren[2]{\nu}{\pi}\leq \Dchis{\nu}{\pi}$}.
    
    \sredit{For the first term, the Lipschitz condition in \cref{as:ahn-chewi} allows us to take $\mathfrak M\leq L$ in \cref{lem:mirror-one-step-clipping}. Apply that lemma at the fixed order $q=3$, with its accuracy parameter chosen so that $\varepsilon_{\mathrm{step}}^2\asymp\varepsilon^2/N_{\operatorname{iter}}$. It gives a uniform one-step bound $\eta\lesssim\varepsilon^2/N_{\operatorname{iter}}$. Apply \cref{lem:mirror-kernel-composition} to the dual transition kernels and then push forward by $\cT$; data processing gives
    \begin{align}
        \Ren*[3]{\widehat\nu_{N_{\operatorname{iter}}h}}{\nu_{N_{\operatorname{iter}}h}}\lesssim\varepsilon^2.
    \end{align}
    Since $\log(2/\varepsilon_{\mathrm{step}}^2)=O(\log(N_{\operatorname{iter}}/\varepsilon^2))$, the clipping threshold remains $O(1)$ provided}
    \begin{align}
        h \asymp \frac{1}{\bar d + L^2 \Lambda}\,, \qquad \Lambda \asymp \operatorname{polylog}(\varepsilon^{-2}, \screplace{\chi^2(\mu_0 \mmid \pi)}{\Dchis{\nu_0}{\pi}}, \zeta^{-1}) \,.
    \end{align}
    Then, it suffices to take $B = O(1)$.
    The query complexity then follows from \cref{thm:fors} by applying a union bound over $N_{\operatorname{iter}}$ iterations (taking the failure probability to be $\zeta/N_{\operatorname{iter}}$) and multiplying the per-iteration cost by $N_{\operatorname{iter}}$.

    Alternatively, using the bound from \cref{lem:mirror-breg} to control the \screplace{$\msf{KL}$}{$\Dklshort$} divergence in continuous time, we can obtain a guarantee in total variation distance by combining the triangle inequality with Pinsker's inequality; namely,
    \begin{align}
        \screplace{\msf{TV}(\widehat \mu_{Nh} \mmid \pi)}{\Dtv*{\widehat\nu_{N_{\operatorname{iter}}h}}{\pi}} \leq \screplace{\msf{TV}(\widehat \mu_{Nh} \mmid \mu_{Nh})}{\Dtv*{\widehat\nu_{N_{\operatorname{iter}}h}}{\nu_{N_{\operatorname{iter}}h}}} + \screplace{\msf{TV}(\mu_{Nh} \mmid \pi)}{\Dtv{\nu_{N_{\operatorname{iter}}h}}{\pi}}\,.
    \end{align}
    \sredit{The composition lemma and Pinsker's inequality bound the first term, while \cref{lem:mirror-breg} and Pinsker's inequality bound the second.} In the strongly convex case, we obtain
    \begin{align}
        N_{\operatorname{iter}} \asymp \frac{1}{\alpha h} \log \Bigl(1 + \frac{\Breg_\phi(\pi \mmid \screplace{\mu_0}{\nu_0})}{\varepsilon^2}\Bigr)\,,
    \end{align}
    
    Using the same union bounds as before, it then suffices to take
    \begin{align}
        h \asymp \frac{1}{\bar d + L^2 \Lambda}\,, \qquad \Lambda = \widetilde O\Bigl(\operatorname{polylog}(\bar d, \alpha^{-1}, \varepsilon^{-2}, \Breg_\phi(\pi \mmid \screplace{\mu_0}{\nu_0}))\Bigr)\,.
    \end{align}
    The query complexity again follows from Theorem~\ref{thm:fors}.

    \sredit{It remains to analyze the convex case $\alpha=0$. Write $R_0\deq\Breg_\phi(\pi\mmid\nu_0)$. By \cref{lem:mirror-breg} and Pinsker's inequality,
    \begin{align}
        \Dtv*{\nu_T}{\pi}^2\leq\frac12\Dkl{\nu_T}{\pi}
        \leq\frac{R_0}{2T}.
    \end{align}
    Thus it suffices to take $T=N_{\operatorname{iter}}h\asymp R_0/\varepsilon^2$, or
    \begin{align}
        N_{\operatorname{iter}}\asymp\frac{R_0}{h\varepsilon^2}.
    \end{align}
    Applying \cref{lem:mirror-one-step-clipping} with
    $\varepsilon_{\mathrm{step}}^2\asymp\varepsilon^2/N_{\operatorname{iter}}$ and then \cref{lem:mirror-kernel-composition}, exactly as above, gives
    \begin{align}
        \Dtv*{\widehat\nu_{N_{\operatorname{iter}}h}}{\nu_{N_{\operatorname{iter}}h}}\lesssim\varepsilon.
    \end{align}
    Taking $h^{-1}=\widetilde O(\bar d)$ and targeting a failure probability of $\zeta/N_{\operatorname{iter}}$ to each call of FORS gives
    \begin{align}
        N_{\operatorname{query}}
        =\widetilde O\prn*{\frac{\bar d\,R_0}{\varepsilon^2}},
    \end{align}
    where the suppressed logarithms include the dependence on
    $\varepsilon^{-2}$, $R_0$, and $\zeta^{-1}$.}
\qed

\section{\pfref{thm:FI-main}}\label{appdx:FI}
By \cref{thm:FI-reduction}, it suffices to instantiate \cref{as:alg-exist}. Let $\varpi\propto e^{-U}$ be any $1$-strongly log-concave and \sredit{$3$-log-smooth} distribution with mode at the origin. Initialize the underdamped Langevin sampler from $\nu_0=\normal{0,\Id/3}\otimes\normal{0,\Id}$. By the quadratic sandwich
\begin{align}
    U(0)+\frac12\,\nrm{x}^2\leq U(x)\leq U(0)+\frac32\,\nrm{x}^2\,,
\end{align}
\sredit{the density ratio is bounded by $3^{d/2}$. Since the momentum laws agree, this yields
\begin{align}
    \Ren*[\infty]{\nu_0}{\varpi\otimes\normal{0,\Id}}
    \leq\frac d2\log 3,
\end{align}
and therefore the order-$2$ and order-$6$ R\'enyi divergences are both $O(d)$.} For this auxiliary target, $\kappa=O(1)$ and hence \cref{thm:ULA-main}\sredit{(ii)}, with $q=3$ and \sredit{its accuracy parameter set to $\min\{1/2,\delta_{\mathrm{RGO}}^2\}$}, produces a phase-space law $\widehat\nu$ satisfying
\begin{align}
    \Ren*[3]{\widehat\nu}{\varpi\otimes\normal{0,\Id}}
    \leq\delta_{\mathrm{RGO}}^2.
\end{align}
\sredit{By data processing, its position marginal implements $\mathtt{Alg}$ at the accuracy required in \cref{as:alg-exist}. One such call uses}
\begin{align}
    \sredit{\wt{O}\prn[\big]{d^{1/3}+\log(M/p)}}
\end{align}
    queries to the first-order oracle for $U$ with \sredit{failure probability at most $p/M$}. Plugging this implementation of $\mathtt{Alg}$ into \cref{thm:FI-reduction} gives the stated complexity. The RGO subproblems appearing in the proximal sampler have condition number bounded by a universal constant, so the preceding implementation applies uniformly to every call. \sredit{Each first-order query to a translated and rescaled RGO potential requires one query to $\nabla V$. Finally, a union bound over the $M$ calls gives total failure probability at most $p$ and total first-order query complexity
\begin{align}
    \wt O\prn*{M\prn*{d^{1/3}+\log(M/p)}}.
\end{align}
For the final claim under higher-order smoothness, apply \cref{thm:ULA2-cold} with $w=7/2$. The $R_\infty$ bound above verifies its initialization requirement, and when $\kappa,\kappa_F=O(1)$ its displayed dimension dependence reduces to $\wt O(d^{1/5})$.}
\qed

\end{document}